\documentclass[letterpaper,12pt]{article}
\usepackage{amsmath}
\usepackage{amssymb}
\usepackage{amsfonts}
\usepackage{setspace}
\usepackage{bbm}
\usepackage{color}
\usepackage{enumerate}
\usepackage{mathrsfs}
\usepackage{graphicx}
\usepackage{subcaption}
\usepackage{hyperref}
\usepackage{enumitem}
\usepackage[usenames,dvipsnames,svgnames,table]{xcolor}
\usepackage{multirow}
\setlength{\parskip}{8pt}
\setlength{\parindent}{0pt}
\usepackage[margin=1.00in]{geometry}
\onehalfspacing
\frenchspacing

\usepackage{enumitem}
\setlist{nosep}

\newtheorem{theorem}{Theorem}[section]
\newtheorem{lemma}{Lemma}[section]
\newtheorem{proposition}{Proposition}[section]
\newtheorem{assumption}{Assumption}[section]
\newtheorem{condition}{Condition}[section]
\newtheorem{corollary}{Corollary}[section]

\newtheorem{remark}{Remark}[section]
\newenvironment{proof}[1][Proof]{\noindent \textbf{#1.} }{\  \rule{0.5em}{0.5em}}
\setlength {\footnotesep }{11pt}\frenchspacing

\newcommand{\mb}[1]{\mathbb{#1}}
\newcommand{\wh}[1]{\widehat{#1}}
\newcommand{\wt}[1]{\widetilde{#1}}
\newcommand{\mf}[1]{\mathbf{#1}}
\newcommand{\mr}[1]{\mathrm{#1}}
\newcommand{\mcr}[1]{\mathscr{#1}}
\newcommand{\mc}[1]{\mathcal{#1}}
\newcommand{\tr}[1]{\mathrm{trace}({#1})}
\newcommand{\ind}{1\!\mathrm{l}}
\newcommand{\ol}[1]{\overline{#1}}

\hypersetup{
     colorlinks   = true,
     citecolor    = Blue,
     urlcolor     = Black,
     linkcolor    = Blue
}

\makeatletter
\renewcommand\paragraph{\@startsection{paragraph}{4}{\z@}%
                                    {0pt \@plus1ex \@minus.2ex}%
                                    {-1em}%
                                    {\normalfont\normalsize\bfseries}}
\makeatother

\usepackage{bibunits}
\usepackage[longnamesfirst]{natbib}

\begin{document}

\defaultbibliography{refsmain}
\defaultbibliographystyle{chicago}
\begin{bibunit}

\author{%
{ Timothy M. Christensen}\thanks{%
Department of Economics, New York University, 19 W. 4th Street, 6th floor, New York, NY 10012, USA. E-mail address: \texttt{timothy.christensen@nyu.edu}
}
}

\title{%
Nonparametric Stochastic Discount Factor Decomposition\thanks{%
This paper is based on my job market paper, which was titled ``Estimating the Long-Run Implications of Dynamic Asset Pricing Models'' and dated November 8, 2013. I am very grateful to my advisors Xiaohong Chen and Peter Phillips for their advice and support. I would like to thank the co-editor Lars Peter Hansen and three anonymous referees for insightful and helpful comments. I have benefited from feedback from Caio Almeida (discussant), Jarda Borovi\v{c}ka, Tim Cogley, Jean-Pierre Florens (discussant), Nour Meddahi, Keith O'Hara, Eric Renault, Guillaume Roussellet, Tom Sargent and participants of seminars at Chicago Booth, Columbia, Cornell, Duke, Indiana, Monash, Montr\'eal, Northwestern, NYU, Penn, Princeton, Rutgers, Sydney, UCL, UNSW, Vanderbilt, Wisconsin, and several conferences. All errors are my own.
}
}

\date{First version: May 2, 2013. Revised: May 19, 2017. }

\maketitle

\begin{abstract}  
\singlespacing
\noindent 
Stochastic discount factor (SDF) processes in dynamic economies admit a permanent-transitory decomposition in which the permanent component characterizes pricing over long investment horizons. This paper introduces an empirical framework to analyze the permanent-transitory decomposition of SDF processes. Specifically, we show how to estimate nonparametrically the solution to the Perron-Frobenius eigenfunction problem of \cite{HS2009}. Our empirical framework allows researchers to (i) recover the time series of the estimated permanent and transitory components and (ii) estimate the yield and the change of measure which characterize pricing over long investment horizons. We also introduce nonparametric estimators of the continuation value function in a class of models with recursive preferences by reinterpreting the value function recursion as a nonlinear Perron-Frobenius problem. We establish consistency and convergence rates of the eigenfunction estimators and asymptotic normality of the eigenvalue estimator and estimators of related functionals. As an application, we study an economy where the representative agent is endowed with recursive preferences, allowing for general (nonlinear) consumption and earnings growth dynamics. 

\medskip 

\noindent \textbf{Keywords:} Nonparametric estimation, sieve estimation, stochastic discount factor, permanent-transitory decomposition, nonparametric value function estimation.

\end{abstract}

\pagenumbering{arabic}

\newpage
\section{Introduction}

In dynamic asset pricing models, stochastic discount factors (SDF) are stochastic processes that are used to infer equilibrium asset prices. \cite{AJ}, \cite{HS2009} and \cite{Hansen2012} show that SDF processes may be decomposed into permanent and transitory components. The {\it permanent component} is a martingale that induces an alternative probability measure which is used to characterize pricing over long investment horizons. The {\it transitory component} is related to the return on a discount bond of (asymptotically) long maturity. \cite{AJ} and \cite{BakshiChabiYo} have found that SDFs must have nontrivial permanent and transitory components in order to explain several salient features of historical returns data. \cite{QL} show that the permanent-transitory decomposition obtains even in very general semimartingale environments, suggesting that the decomposition is a fundamental feature of arbitrage-free asset pricing models.

The pathbreaking work of \cite{HS2009} links SDF decomposition in Markovian environments to a Perron-Frobenius eigenfunction problem. Specifically, \cite{HS2009} show that the permanent and transitory components are constructed from the SDF process, the Perron-Frobenius eigenfunction, and its eigenvalue. The eigenvalue  determines the average yield on long-horizon payoffs and the eigenfunction characterizes dependence of the price of long-horizon payoffs on the Markov state. The probability measure that is relevant for pricing over long investment horizons may be expressed in terms of the eigenfunction and another eigenfunction that is obtained from a time-reversed Perron-Frobenius problem. See \cite{HS2012,HS2014}, \cite{BackusChernovZin}, \cite{BHS}, and \cite{QL,QL:or} for related theoretical developments. 

This paper complements the existing theoretical literature by providing an empirical framework for extracting the permanent and transitory components of SDF processes. We show how to estimate the solution to the Perron-Frobenius eigenfunction problem of \cite{HS2009} from time-series data on state variables and a SDF process. By estimating directly the eigenvalue and eigenfunction, one can reconstruct the time series of the estimated permanent and transitory components and investigate their properties. The methodology also allows one to estimate both the yield and the change of measure which characterize pricing over long investment horizons. This approach is fundamentally different from existing empirical methods for studying the permanent-transitory decomposition, which produce bounds on various moments of the permanent and transitory components as functions of asset returns \citep{AJ,BakshiChabiYo,BC-YG,BC-YG:rec}.\footnote{Recently, \cite{QLN} used a complementary parametric approach to recover the permanent component in a parametric term structure model.}
Although presented in the context of SDF decomposition, the methodology can be applied to more general processes such as the valuation and stochastic growth processes in \cite{HansenHeatonLi}, \cite{HS2009}, and \cite{Hansen2012}.

The empirical framework is {\it nonparametric}, i.e., it does not place any tight parametric restrictions on the law of motion of state variables or the joint distribution of the state variables and the SDF process. This approach is coherent with the existing literature in which bounds on moments of the permanent and transitory components are derived without placing any parametric restrictions on the joint distribution of the SDF, state variables, and asset returns. This approach is also in line with conventional moment-based estimation methods for asset pricing models, such as GMM \citep{Hansen1982} and its various extensions.\footnote{Examples include conditional moment based estimation methodology of \cite{AiChen2003} which has been applied to estimate asset pricing models featuring habits \citep{ChenLudvigson} and recursive preferences \citep{ChenFavilukisLudvigson} and the extended method of moments methodology of \cite{GGR2011} which is particularly relevant for derivative pricing.}

In structural macro-finance models,  SDF processes (and their permanent and transitory components) are determined by both the preferences of economic agents and the dynamics of state variables. Several works have shown that standard preference and state-process specifications struggle to explain salient features of historical returns data. For instance, \cite{BackusChernovZin} find that certain specifications appear unable to generate a SDF whose permanent component is large enough to explain historical return premia without also generating unrealistically large spreads between long- and short-term yields. \cite{BakshiChabiYo} find that historical returns data  support positive covariance between the permanent and transitory components, but that positive association cannot be replicated by workhorse models such as the long-run risks model of \cite{BansalYaron}. 
Our nonparametric methodology  may be used in conjunction with parametric methods to better understand the roles of dynamics and preferences in building models whose permanent and transitory components have empirically realistic properties.

Of course, if state dynamics are treated nonparametrically then certain forward-looking components, such as the continuation value function under recursive preferences, are not available analytically. We therefore introduce nonparametric estimators of the continuation value function in models with \cite{EpsteinZin1989} recursive preferences with elasticity of intertemporal substitution (EIS) equal to unity. This class of preferences is used in prominent empirical work, such as \cite{HansenHeatonLi}, and  may also be interpreted as risk-sensitive preferences as formulated by \cite{HansenSargent1995} (see \cite{Tallarini2000}). We reinterpret the fixed-point problem solved by the value function as a {\it nonlinear} Perron-Frobenius problem. In so doing, we draw connections with the literature on nonlinear Perron-Frobenius theory following \cite{SolowSamuelson}.

As an application, we study an environment similar to that studied by \cite{HansenHeatonLi}. We assume a representative agent with \cite{EpsteinZin1989} preferences with unit elasticity of intertemporal substitution. However, instead of modeling consumption and earnings using a homoskedastic Gaussian VAR as in \cite{HansenHeatonLi}, we model consumption growth and earnings growth as a general (nonlinear) Markov process.  We recover the time series of the SDF process and its permanent and transitory components without assuming any parametric law of motion for the state. The permanent component is large enough to explain historical returns on equities relative to long-term bonds, strongly countercyclical, and highly correlated with the SDF. We also show that the permanent component induces a probability measure that tilts the historical distribution of consumption and earnings growth towards regions of low earnings and consumption growth. To understand better the role of dynamics, we compare properties of the permanent and transitory components extracted nonparametrically with permanent and transitory components implied by two benchmark parametric models for state dynamics, namely a Gaussian VAR and an AR process with stochastic volatility. We find that for certain values of preference parameters, the nonparametric permanent and transitory components can be positively correlated whereas the permanent and transitory components corresponding to the two parametric models are strongly negatively correlated. Overall, our findings suggest that nonlinear dynamics may have a useful role to play in explaining the long end of the term structure.

The sieve (or projection) estimators of the Perron-Frobenius eigenfunction and eigenvalue that we propose draw heavily on earlier work on nonparametric estimation of Markov diffusions by \cite{CHS2000} and \cite{Gobetetal} and are very simple to implement.\footnote{See \cite{DFR}, \cite{DFG}, and \cite{CFR2007} for kernel-based estimation of conditional expectation operators and \cite{LintonLewbelSrisuma2011} and \cite{EHLLS} for kernel-based estimation of marginal utility in nonparametric Euler equation models via eigenfunction methods.} We also propose sieve estimators of the continuation value function in a class of models with recursive preferences. Implementing the sieve value function estimators requires solving a low-dimensional fixed-point problem for which we propose a computationally simple iterative scheme. Both estimation procedures sidestep nonparametric estimation of the state transition distribution.

The main theoretical contributions of the paper are as follows. First, we establish consistency and convergence rates of the Perron-Frobenius eigenfunction estimators and establish asymptotic normality of the eigenvalue estimator and estimators of related functionals. The large-sample properties are established in sufficient generality that they can accommodate SDF process of either of a known functional form or containing components that have been first estimated from data (such as preference parameters and continuation value functions). Second, semiparametric efficiency bounds for the eigenvalue and related functionals are derived for the case in which the SDF is of a known functional form and the estimators are shown to attain their bounds.  Third, this paper is the first to establish consistency and convergence rates for sieve estimators of the continuation value function for a class of models with recursive preferences. Although the analysis is confined to models in which the state vector is observable, the main theoretical results on eigenfunction and continuation value function estimation apply equally to models in which components of the state are latent.

The remainder of the paper is as follows. Section \ref{s:setup} briefly reviews the theoretical framework in \cite{HS2009} and related literature and discusses the scope of the analysis and identification issues. Section \ref{s:est} introduces the estimators of the Perron-Frobenius eigenvalue, eigenfunctions, and related functionals and establishes their large-sample properties. Nonparametric continuation value function estimation is studied in Section \ref{s:recursive}. Section \ref{s:mc} presents a simulation exercise, Section \ref{s:emp} presents the empirical application and Section \ref{s:conc} concludes. Additional results on estimation and inference are deferred to the Appendix. The Supplemental Material contains proofs of all results in the main text and sufficient conditions for some assumptions appearing in the main text. An Online Appendix contains additional results on identification, further simulation results, and additional proofs.

\section{Setup}\label{s:setup}

\subsection{Theoretical framework}

This subsection summarizes the theoretical framework in \cite{AJ}, \cite{HS2009} (HS hereafter), \cite{Hansen2012}, and \cite{BHS} (BHS hereafter).  We  work in discrete time with $T$ denoting the set of non-negative integers. 

In arbitrage-free environments, there is a positive {\it stochastic discount factor} process $M = \{M_t : t \in T\}$ that satisfies:
\begin{equation} \label{e:ee}
 \mb E \Big[ \frac{M_{t+\tau}}{M_t} R_{t,t+\tau} \Big| \mc I_t \Big] = 1
\end{equation}
where $R_{t,t+\tau}$ is the (gross) return on a traded asset over the period from $t$ to $t+\tau$, $\mc I_t$ denotes the information available to all investors at date $t$, and $\mb E[\,\cdot\,]$ denotes expectation with respect to investors' beliefs (see, e.g., \cite{HansenRenault}). 
Throughout this paper, we impose rational expectations by assuming that investors' beliefs agree with the data-generating probability measure.

 \cite{AJ} introduce the \emph{permanent-transitory decomposition}:
\begin{align} \label{e:decomp}
 \frac{M_{t+\tau}}{M_t} & = \frac{M_{t+\tau}^P}{M_t^P} \frac{M_{t+\tau}^T}{M_t^T}\,.
\end{align}
The permanent component $M^P_{t+\tau}/M^P_t$ is a martingale: $\mb E[M_{t+\tau}^P/M_t^P | \mc I_t] = 1$ (almost surely). HS show that the martingale induces an alternative probability measure which is used to characterize pricing over long investment horizons. The transitory component $M_{t+\tau}^T/M_t^T$ is the reciprocal of the  return to holding a discount bond of (asymptotically) long maturity from date $t$ to date $t + \tau$. \cite{AJ} provide conditions under which the permanent and transitory components exist. \cite{QL} show that the decomposition obtains in very general semimartingale environments.

To formally introduce the framework in HS and BHS, consider a probability space $(\Omega,\mcr F, \mb P)$ on which there is a time homogeneous, strictly stationary and ergodic Markov process $X = \{X_t : t \in T\}$ taking values in $\mc X \subseteq \mb R^d$. Let $Q$ denote the stationary distribution of $X$. Let $\{ \mc F_t : t \in T\} \subseteq \mcr F$ be the filtration generated by the histories of $X$. It is assumed that $X_t$ summarizes all information relevant for asset pricing at date $t$. When we consider payoffs depending only on future values of the state and allow trading at intermediate dates, we may assume the SDF process is a {\it positive multiplicative functional} of $X$. That is, $M_t$ is adapted to $\mc F_t$, $M_t > 0$ for each $t \in T$ (almost surely) and:
\[
 \frac{M_{t + \tau}}{M_t} = M_\tau (\theta_t )
\]
with $\theta_t : \Omega \to \Omega$ the time-shift operator given by $X_\tau(\theta_t(\omega)) = X_{t+\tau}(\omega)$ for each $\tau,t \in T$ (see Section 2 of HS). Thus, $M_\tau$ is a function of $X_0,\ldots,X_\tau$ and $M_\tau(\theta_t)$ is the same function applied to $X_t,\ldots,X_{t+\tau}$. In particular, we have:
\begin{equation}
 \frac{M_{t+1}}{M_t} = m(X_t,X_{t+1})   \label{e:sdf:m}
\end{equation}
for some positive function $m$. For convenience, we occasionally refer to $m$ as the SDF.

Given the Markovianity of $X$, we may define a one-period pricing operator $\mb M$ which assigns date-$t$ prices to state-dependent payoffs at date $t+1$. That is, if $\psi(X_{t+1})$ is a payoff at date $t+1$, then its date-$t$ price is given by:
\begin{align} 
 \mb M \psi(x) & = \mb E \Big[ m(X_t,X_{t+1}) \psi(X_{t+1}) \Big| X_t = x \Big] \label{e:m:def} \,.
\end{align}
Pricing operators may be defined analogously for payoff horizons $\tau \geq 1$. The operator $\mb M_\tau$ assigning date-$t$ prices to date-$(t+\tau)$ payoffs $\psi(X_{t+\tau})$ is given by:
\begin{equation} \label{e:mtau:def}
 \mb M_\tau \psi(x) = \mb E \Big[ \frac{M_{t+\tau}}{M_t} \psi(X_{t+\tau}) \Big| X_t = x \Big]\,.
\end{equation}
It follows by Markovianity of the state and the multiplicative functional property of the SDF process that $\mb M_\tau = \mb M^\tau$ (i.e. $\mb M$ applied $\tau$ times) for each $\tau \geq 1$. Therefore, it suffices to study the one-period operator $\mb M$.

HS introduce and study the Perron-Frobenius eigenfunction problem:
\begin{equation} \label{e:pev}
 \mb M \phi(x) = \rho \phi(x)
\end{equation}
where the eigenvalue $\rho$ is a positive scalar and the eigenfunction $\phi$ is positive.\footnote{We say a function is positive (non-negative) if it is positive (non-negative) $Q$-almost everywhere.} Classical, finite-dimensional Perron-Frobenius theory says that a positive matrix has positive right and left eigenvectors corresponding to its spectral radius.\footnote{See, e.g., Theorem 8.2.2 in \cite{HornJohnson}.} The \cite{KreinRutman} theorem and its well-known extensions generalize finite-dimensional Perron-Frobenius theory to infinite-dimensional Banach spaces. To introduce formally the left eigenfunction of $\mb M$ corresponding to $\rho$, we use a time-reversed version of the Perron-Frobenius problem (\ref{e:pev}). Recall that a first-order Markov process seen in reverse time is also a first-order Markov process \cite[p. 4]{Rosenblatt1971}. Define the time-reversed operator 
\[
 \mb M^* \psi(x) = \mb E[ m(X_t,X_{t+1}) \psi(X_t)  | X_{t+1} = x]\,.
\]
In what follows, we will assume that $\mb M$ is a bounded linear operator on the Hilbert space $L^2 = \{ \psi : \mc X \to \mb R$ such that $\int \psi^2 \, \mr d Q < \infty\}$ in which case $\mb M^*$ is defined formally as the adjoint of $\mb M$. The time-reversed Perron-Frobenius problem is:
\begin{equation} \label{e:pev:star}
 \mb M^* \phi^*(x) = \rho \phi^*(x)
\end{equation}
where $\rho$ is the eigenvalue from (\ref{e:pev}) and the eigenfunction $\phi^*$ is positive. 

Given $\rho$ and $\phi$ which solve the Perron-Frobenius problem (\ref{e:pev}), HS define:
\begin{align}  \label{e:pctc}
 \frac{M_{t+\tau}^P}{M_t^P} & = \rho^{-\tau} \frac{M_{t+\tau}}{M_t} \frac{\phi(X_{t+\tau})}{\phi(X_t)} & \frac{M_{t+\tau}^T}{M_t^T} & = \rho^\tau \frac{\phi(X_t)}{\phi(X_{t+\tau})}\,.
\end{align}
It follows from (\ref{e:pev}) that  $\mb E[M_{t+\tau}^P/M_t^P | \mc F_t] = 1$ (almost surely) for each $\tau,t \in T$.
HS show that although there may exist multiple solutions to the Perron-Frobenius problem, only one solution leads to processes $M^P$ and $M^T$ that may be interpreted correctly as permanent and transitory components. Such a solution has a martingale term that induces a change of measure under which $X$ is \emph{stochastically stable}; see Condition 4.1 in BHS for sufficient conditions. Loosely speaking, stochastic stability requires that conditional expectations under the distorted probability measure converge (as the horizon increases) to an unconditional expectation $\wt{\mb E}[\,\cdot\,]$. The expectation $\wt{\mb E}[\,\cdot\,]$ will typically be different from the expectation ${\mb E}[\,\cdot\,]$ associated with the stationary distribution of $X$. 
Under stochastic stability, the one-factor representation:
\begin{equation} \label{e:lrr}
 \lim_{\tau \to \infty}  \rho^{-\tau} \mb M_\tau \psi(x)  = \wt {\mb E} \left[ \frac{\psi(X_t)}{\phi(X_t)} \right] \phi(x) 
\end{equation}
holds for each $\psi$ for which $ \wt {\mb E} [ {\psi(X_t)}/{\phi(X_t)} ]$ is finite (see, e.g., Proposition 7.1 in HS). When a long-run approximation like (\ref{e:lrr}) holds, we may interpret $M_{t+\tau}^P/M_t^P$ and $M_{t+\tau}^T/M_t^T$ from (\ref{e:pctc}) as the permanent and transitory components of the SDF process. Moreover, the scalar $ -\log \rho$ may be interpreted as the \emph{long-run yield}. The long-run approximation (\ref{e:lrr}) also shows that $\phi$ captures state dependence of long-horizon asset prices. The function $\phi^*$ is itself of interest as it will play a role in characterizing the expectation $\wt{\mb E}[\,\cdot\,]$  and will also appear in the asymptotic variance of estimators of $\rho$.

The theoretical framework of HS may be used to characterize properties of the permanent and transitory components {analytically} by solving the Perron-Frobenius eigenfunction problem. Below, we describe an {empirical} framework to estimate the eigenvalue $\rho$ and eigenfunctions $\phi$ and $\phi^*$ from time series data on $X$ and a SDF process.

\subsection{Scope of the analysis}

The Markov state vector $X_t$ is assumed throughout to be observable to the econometrician.  However,  we do not constrain the transition law of $X$ to be of any parametric form. This approach is similar to that taken by \cite{GGR2011}, who also presume the existence of a stationary, time-homogeneous Markov state process that is observable to the econometrician but do not constrain its transition law to be of any parametric form.

We assume the SDF function $m$ is either observable or known up to some parameter which is first estimated from data on $X$ (and possibly asset returns). 

\bigskip

\paragraph{Case 1: SDF is observable} 
Here the functional form of $m$ is known ex ante. For example, consider the CCAPM with time discount parameter $\beta$ and risk aversion parameter $\gamma$ both pre-specified by the researcher. Here we would simply take $m(X_t,X_{t+1}) = \beta G_{t+1}^{-\gamma}$ provided consumption growth $G_{t+1}$ is of the form $G_{t+1} = G(X_t,X_{t+1})$ for some known function $G$. Other structural examples include models with external habits  and models with durables with pre-specified preference parameters.  

\bigskip

\paragraph{Case 2: SDF is estimated}
In this case we assume that $m(X_t,X_{t+1};\alpha_0)$ where the functional form of $m$ is known up to a parameter $\alpha_0$, which could be of several forms:
\begin{itemize}
\item A finite-dimensional vector of preference parameters in structural models (e.g. \cite{HansenSingleton1982} and \cite{HHY}) or risk-premium parameters in reduced-form models (e.g. \cite{GGR2011}).
\item A vector of parameters $\theta_0$ together with a function $h_0$, so $\alpha_0 = (\theta_0,h_0)$. One example is models with \cite{EpsteinZin1989} recursive preferences, where the continuation value function is not known when the transition law of the Markov state is modeled nonparametrically (see \cite{ChenFavilukisLudvigson} and the application in Section \ref{s:emp}). For such models,  $\theta_0$ would consist of  discount, risk aversion, and intertemporal substitution parameters and $h_0$ would be the continuation value function. Another example is \cite{ChenLudvigson} in which $\theta_0$ consists of time discount and homogeneity parameters and $h_0$ is a nonparametric internal or external habit formation component. 
\item We could also take $\alpha_0$ to be $m$ itself, in which case $\hat \alpha$ would be a nonparametric estimate of the SDF. Prominent examples include \cite{BansalViswanathan}, \cite{AitSahaliaLo}, and \cite{RosenbergEngle}.
\end{itemize}

In Case 2 we consider a two-step approach to SDF decomposition. In the first step $\alpha_0$ is estimated from time-series data on the state and possibly also  asset returns. In the second step we plug the first-stage estimator $\hat \alpha$  into the nonparametric procedure to recover $\rho$, $\phi$, $\phi^*$, and related quantities.

\subsection{Identification}\label{s:id}

In this section we present some sufficient conditions that ensure there is a unique solution to the Perron-Frobenius problems (\ref{e:pev}) and (\ref{e:pev:star}). The conditions also ensure that a long-run approximation of the form (\ref{e:lrr}) holds. Therefore, the resulting $M^P$ and $M^T$ constructed from $\rho$ and $\phi$ as in (\ref{e:pctc}) may be interpreted correctly as the permanent and transitory components. For estimation, all that we require is for the conclusions of Proposition \ref{p:id} below hold. Therefore, the following conditions could be replaced by other sets of sufficient conditions.

HS and BHS established very general identification, existence and long-run approximation results using Markov process theory. The operator-theoretic conditions that we use are more restrictive than the conditions in HS and BHS but they are convenient for deriving the large-sample theory that follows. In particular, the conditions ensure certain continuity properties of $\rho$, $\phi$ and $\phi^*$ with respect to perturbations of the operator $\mb M$. Our results are derived for the specific parameter (function) space that is relevant for estimation, whereas the results in HS and BHS apply to a larger class of functions. Connections between our conditions and the conditions in HS and BHS are discussed in detail in the Online Appendix, which also treats separately the issues of identification, existence, and long-run approximation. 

We take the cone of all positive functions in $L^2$ as the parameter space for $\phi$.  Let $\|\cdot\|$ and $\langle \cdot,\cdot \rangle$ denote the $L^2$ norm and inner product. We say that $\mb M$ is bounded if $\|\mb M\| := \sup\{\|\mb M \psi\| : \psi \in L^2 , \|\psi\| = 1 \} < \infty$ and compact if $\mb M$ maps bounded sets into relatively compact sets. Finally, let $Q \otimes Q$ denote the product measure on $\mc X^2$.

\begin{assumption}\label{a:id:0} Let the operators $\mb M$ in display (\ref{e:m:def}) $\mb M_\tau$ in display (\ref{e:mtau:def}) and  satisfy the following:
\begin{enumerate}
\item[(a)] $\mb M$ is a bounded linear operator of the form:
\[
 \mb M \psi(x_t) = \int \mc K_m (x_t,x_{t+1}) \psi (x_{t+1}) \, \mr d Q(x_{t+1})
\]
for some $\mc K_{m}: \mc X^2 \to \mb R$ that is positive ($Q \otimes Q$-almost everywhere)
\item[(b)] $\mb M_\tau$ is compact for some $\tau \geq 1$.
\end{enumerate}
\end{assumption}

{\bf Discussion of assumptions:} Part (a) are mild boundedness and positivity conditions.  If the unconditional density $f(x_t)$ and the transition density $f(x_{t+1}|x_t)$ of $X$ exist, then $\mc K_m$ is of the form:
\[
 \mc K_m (x_t,x_{t+1}) = m(x_t,x_{t+1}) \frac{f(x_{t+1}|x_t)}{f(x_{t+1})}\,.
\]
Part (b) is weaker than requiring $\mb M$ to be compact. A sufficient condition for compactness of $\mb M$ is the Hilbert-Schmidt condition $\int \int \mc K_m(x_t,x_{t+1})^2 \mr d Q(x_t) \mr d Q(x_{t+1}) < \infty$.

To introduce the identification result, let $\sigma(\mb M) \subset \mb C $ denote the spectrum of $\mb M$.\footnote{See, e.g., \cite{DunfordSchwartz}, Chapter VII.3 for definitions.} We say that $\rho$ is \emph{simple} if it has a unique eigenfunction (up to scale) and \emph{isolated} if there exists a neighborhood $N$ of $\rho$ such that $\sigma(\mb M) \cap N = \{\rho\}$. As $\phi$ and $\phi^*$ are defined up to scale, we say that $\phi$ and $\phi^*$ are unique if they are unique up to scale. Normalizing $\phi$ and $\phi^*$ so that $\mb E[\phi(X_t)\phi^*(X_t)] = 1$, we may define a probability measure $\wt Q$ that is absolutely continuous with respect to $Q$ by the change of measure:
\begin{equation} \label{e:rnderiv}
 \frac{\mr d \wt Q}{\mr d Q} = \phi\phi^* \,.
\end{equation}
Finally, let $\wt{\mb E}[\,\cdot\,]$ denote expectation under $\wt Q$, i.e. for any indicator function $\chi$ we have
\begin{equation} \label{e:test}
 \wt{\mb E}[\chi(X_t)] = \mb E[\chi(X_t) \phi(X_t) \phi^*(X_t)]
\end{equation}
where the expectation on the right-hand side is taken under the stationary distribution $Q$.

\begin{proposition}\label{p:id}
Let Assumption \ref{a:id:0} hold. Then:
\begin{enumerate}
\item[(a)] There exists positive functions $\phi,\phi^* \in L^2$ and a positive scalar $\rho$ such that $(\rho,\phi)$ solves (\ref{e:pev}) and $(\rho,\phi^*)$ solves (\ref{e:pev:star}).
\item[(b)] The functions $\phi$ and $\phi^*$ are the unique positive solutions (in $L^2$) to (\ref{e:pev}) and (\ref{e:pev:star}).
\item[(c)] The eigenvalue $\rho$ is simple and isolated and it is the largest eigenvalue of $\mb M$.
\item[(d)] The representation (\ref{e:lrr}) holds for all $\psi \in L^2$ with $\wt{\mb E}[\,\cdot\,]$ defined in (\ref{e:test}).
\end{enumerate}
\end{proposition}

\bigskip

Parts (a) and (b) are existence and identification results, respectively. This is a well-known extension of the classical Krein-Rutman theorem \cite[Theorems V.5.2 and V.6.6]{Schaefer1974}. Recently, similar operator-theoretic results have been applied to study identification in semi/nonparametric Euler equation models (see \cite{EscancianoHoderlein}, \cite{LintonLewbelSrisuma2011}, \cite{Chenetal2012}, and \cite{EHLLS}).
Identification under slightly weaker but related conditions is studied in \cite{Christensen-idpev}.

Part (c)  guarantees that $\rho$ is isolated and simple, which is used extensively in the derivation of the large sample theory. Part (d) says that $\rho$ and $\phi$ are the relevant eigenvalue-eigenfunction pair for constructing the permanent and transitory components and links the expectation $\wt{\mb E}$ to $\phi^*$. Note, in particular, that $\wt {\mb E} [ \psi(X_t)/\phi(X_t) ] = \mb E[\psi(X_t) \phi^*(X_t)]$. Estimating $\phi$ and $\phi^*$ directly allows one to estimate the Radon-Nikodym derivative of $\wt Q$ with respect to $Q$.

\section{Estimation}\label{s:est}

This section introduces the estimators of the Perron-Frobenius eigenvalue $\rho$ and eigenfunctions $\phi$ and $\phi^*$ and presents the large-sample properties of the estimators.

\subsection{Sieve estimation}

We follow \cite{CHS2000} and \cite{Gobetetal} in using a sieve approach in which the infinite-dimensional eigenfunction problem is approximated by a low-dimensional matrix eigenvector problem. Let $b_{k1},\ldots,b_{kk} \in L^2$ be a dictionary of linearly independent basis functions (e.g. polynomials, splines, wavelets, or a Fourier basis) and let $B_k \subset L^2$ denote the linear subspace spanned by $b_{k1},\ldots,b_{kk}$. The sieve dimension $k < \infty$ is a smoothing parameter chosen by the econometrician and should increase with the sample size. 

To describe the approximation, let $\Pi_k : L^2 \to B_k$ denote the orthogonal projection onto $B_k$. Consider the projected eigenfunction problem: 
\begin{equation} \label{e:symprob} 
 (\Pi_k \mathbb M) \phi_k = \rho_k \phi_k 
\end{equation}
where $\rho_k$ is the largest real eigenvalue of $\Pi_k \mathbb M$ and $\phi_k : \mc X \to \mb R$ is its eigenfunction. Under regularity conditions, the problem (\ref{e:symprob}) has a unique solution for all $k$ large enough (see Lemma \ref{lem:exist}). As the function $\phi_k$ belongs to the space $B_k$, we have that  $\phi_k(x) = b^k(x)' c_k$ for a vector $c_k \in \mb R^k$, where $b^k(x) = (b_{k1}(x),\ldots,b_{kk}(x))'$. The eigenfunction problem (\ref{e:symprob}) may be written in matrix form as:
\[
  \mf G^{-1}_k \mf M_k^{\phantom {1}} c_k = \rho_k  c_k 
\]
where the $k \times k$ matrices $\mf G_k$ and $\mf M_k$ are given by:
\begin{eqnarray}
 \mf G_k &= &  \mb E[b^k(X_t) b^k(X_t)'] \label{e:gmat} \\
 \mf M_k &=&  \mb E[b^k(X_t) m(X_t,X_{t+1}) b^k(X_{t+1})'] \label{e:mmat}
\end{eqnarray}
and where $\rho_k$ is the largest real eigenvalue of $\mf G_k^{-1} \mf M_k^{\phantom{1}}$ and $c_k$ is its eigenvector (we assume throughout that $\mf G_k$ is nonsingular). We refer to $\phi_k(x) = b^k(x)'c_k$ as the \emph{approximate solution} for $\phi$. The approximate solution for $\phi^*$ is $\phi_k^*(x) =  b^k(x)'c_k^{*}$ where $c_k^*$ is the eigenvector of $\mf G^{-1}_k \mf M_k^{\prime}$ corresponding to $\rho_k$. Together, $(\rho_k,c_k^{\phantom *},c_k^*)$ solve the generalized eigenvector problem:\
\begin{align} \label{e:gev}
 \mf M_k c_k & = \rho_k \mf G_k c_k &
 c_k^{*\prime} \mf M_k & = \rho_k c_k^{*\prime} \mf G_k
\end{align}
where $\rho_k$ is the largest real generalized eigenvalue of the pair $(\mf M_k,\mf G_k)$. We suppress dependence of $\mf M_k$ and $\mf G_k$ on $k$ hereafter to simplify notation.

To estimate $\rho$, $\phi$ and $\phi^*$, we solve the sample analogue of (\ref{e:gev}), namely:
\begin{align} \label{e:est}
 \wh{\mf M} \hat c & = \hat \rho  \wh{\mf G} \hat c &
 \hat c^{* \prime} \wh{\mf M} & = \hat \rho \hat c^{* \prime} \wh{\mf G} 
\end{align}
where $\wh{\mf M}$ and $\wh{\mf G}$ are defined below and where $\hat \rho$ is the maximum real generalized eigenvalue of the matrix pair $(\wh{\mf M},\wh{\mf G})$. The estimators $\hat \rho$, $\hat c$ and $\hat c^*$ may be computed simultaneously using, for example, the \texttt{eig} routine in Matlab. The estimators of $\phi$ and $\phi^*$ are:
\begin{align*}
 \hat \phi(x) & = b^k(x)' \hat c &
 \hat \phi^*(x) & = b^k(x)' \hat c^*\,.
\end{align*}
Under the regularity conditions below, the eigenvalue $\hat \rho$ and its right- and left-eigenvectors $\hat c$ and $\hat c^*$ will be unique with probability approaching one (see Lemma \ref{lem:exist:hat}). 

Given a time series of data $\{X_0,X_1,\ldots,X_n\}$, a natural estimator of $\mf G$ is: 
\begin{equation}\label{e:ghat}
 \wh{\mf G} = \frac{1}{n} \sum_{t=0}^{n-1} b^k(X_t)b^k(X_t)'\,.
\end{equation}
We consider two possibilities for estimating $\mf M$.

\bigskip

\paragraph{Case 1: SDF is observable} 
First, consider the case in which the function $m(X_t,X_{t+1})$ is specified by the researcher. In this case:
\begin{equation}\label{e:mhat1}
 \wh{\mf M} = \frac{1}{n} \sum_{t=0}^{n-1} b^k(X_t)m(X_t,X_{t+1})b^k(X_{t+1})'\,.
\end{equation}

\bigskip

\paragraph{Case 2: SDF is estimated} 
Now suppose that the SDF is of the form $m(X_t,X_{t+1};\alpha_0)$ where the functional form of $m$ is known up to the parameter $\alpha_0$ which is to be estimated first from the data on $X$ and possibly also asset returns. Let $\hat \alpha$ denote this first-stage estimator. In this case:
\begin{equation} \label{e:mhat2}
 \wh{\mf M} = \frac{1}{n} \sum_{t=0}^{n-1} b^k(X_t)  m(X_t,X_{t+1};\hat \alpha)b^k(X_{t+1})'\,.
\end{equation}

\subsubsection{Other functionals}

Recall that the {\it long-run yield} is $y \equiv -\log \rho$. We may estimate $y$ using:
\begin{equation} \label{e-rhohat}
 \hat y = - \log \hat \rho \,.
\end{equation}

Another functional of interest is the {\it entropy} of the permanent component, namely $L \equiv \log \mb E[M^P_{t+1}/M^P_t] - \mb E[\log(M^P_{t+1}/M^P_t)]$, which is bounded from below by the expected  excess return of any traded asset relative to a discount bond of (asymptotically) long maturity \cite[Proposition 2]{AJ}. Previous empirical work has estimated this bound from data on equity returns and proxies for holding period returns on long-maturity discount bonds (see, e.g., \cite{AJ} and \cite{BakshiChabiYo}). Here we take a complementary approach by assuming the SDF process is identifiable and estimate the entropy of its permanent component directly.

In Markovian environments, the entropy has the simple form $L = \log \rho - \mb E[ \log m(X_t,X_{t+1})]$ (see \cite{Hansen2012} and \cite{BackusChernovZin}). Given $\hat \rho$, a natural estimator of $L$ is:
\begin{equation} \label{e:Lhat1}
 \hat L =  \displaystyle \log \hat \rho - \frac{1}{n} \sum_{t=0}^{n-1} \log m(X_t,X_{t+1})
\end{equation}
in Case 1; in Case 2 we replace $m(X_t,X_{t+1})$ by $m(X_t,X_{t+1};\hat \alpha)$ in (\ref{e:Lhat1}). 
The size of the permanent component may also be measured by other types of statistical discrepancies besides entropy (e.g. Cressie-Read divergences) which may be computed from the time series of the permanent component recovered empirically using $\hat \rho$ and $\hat \phi$. We confine our attention to entropy because the theoretical literature has typically used entropy to measure the size of SDFs and their permanent components over different horizons (see, e.g., \cite{Hansen2012} and \cite{BackusChernovZin}) and for sake of comparison with the empirical literature on bounds.

\subsection{Consistency and convergence rates}

Here we establish consistency of the estimators and derive the convergence rates of the eigenfunction estimators under mild regularity conditions.

\begin{assumption}\label{a:id}
$\mb M$ is bounded and conclusions (a)--(d) of Proposition \ref{p:id} hold.
\end{assumption}

\begin{assumption}\label{a:bias}
$\|\Pi_k \mb M - \mb M\| =o(1)$. 
\end{assumption}

Let $\mf G^{-1/2}$ denote the inverse of the positive definite square root of $\mf G$ and let $\mf I$ denote the $k \times k$ identity matrix. Define the ``orthogonalized'' matrices $\mf M^o = \mf G^{-1/2} \mf M \mf G^{-1/2}$, $\wh{\mf G}^o = \mf G^{-1/2} \wh{\mf G} \mf G^{-1/2}$, and $\wh{\mf M}^o = \mf G^{-1/2} \wh{\mf M} \mf G^{-1/2}$. Let $\|\cdot\|$ also denote the Euclidean norm when applied to vectors and the operator norm (largest singular value) when applied to matrices. Note that $\wh{\mf G}^o$ and $\wh{\mf M}^o$ are a proof device and do not need to be calculated in practice.

\begin{assumption}\label{a:var}
$\|\wh {\mf G}^o - \mf I\| = o_p(1)$  and $\|\wh{\mf M}^o - {\mf M}^o\| = o_p(1)$.
\end{assumption}

{\bf Discussion of assumptions:} Assumption \ref{a:bias} requires that the space $B_k$ be chosen such that it approximates well the range of $\mb M$ (as $k \to \infty$). This assumption also implicitly requires that $\mb M$ is compact, as has been assumed previously in the literature on sieve estimation of eigenfunctions (see, e.g., \cite{Gobetetal}).\footnote{If $\mb M$ is not compact but $\mb M_{\tau}$ is compact for some $\tau \geq 2$, then one can apply the estimators to $\mb M_{\tau}$ in place of $\mb M$ and estimate the solution $(\rho^\tau,\phi)$ to $\mb M_\tau \phi = \rho^\tau \phi$ and similarly for $\phi^*$. Large-sample properties of the estimators of $\rho^\tau$,  $\phi$ and $\phi^*$ would then follow directly from Theorems \ref{t:rate}--\ref{t:asydist:2b}.} Assumption \ref{a:var} ensures that the sampling error in estimating ${\mf G}^{-1} \mf M$ vanishes asymptotically. This condition implicitly restricts the maximum rate at which $k$ can grow with $n$. Appendix \ref{ax:est:mat} in the Supplementary Material presents some sufficient conditions for Assumption \ref{a:var}.

Before presenting the main result on convergence rates, we first we introduce sequences of constants that bound the approximation bias  and sampling error. As eigenfunctions are only normalized up to scale, impose the normalizations $\|\phi\| = 1$ and $\|\phi^*\| = 1$. Define:
\begin{equation} \label{e:deltas}
 \delta_k^{\phantom *} = \| \Pi_k \phi - \phi\| \quad \mbox{and} \quad \delta_k^* = \| \Pi_k \phi^* - \phi^*\|\,.
\end{equation}
Here $\delta_k^{\phantom *}$ and $\delta_k^*$ measure the bias incurred by approximating $\phi$ and $\phi^*$ by elements of $B_k$.  Bounds for $\delta_k^{\phantom *}$ and $\delta_k^*$ are available for commonly used bases when $\phi$ and $\phi^*$ belong a H\"older, Sobolev or Besov class (see, e.g., \cite{Chen2007}).
Let $\tilde c_k = \mf G^{1/2} c_k$ and $\tilde c_k^* = \mf G^{1/2} c_k^*$ and normalize $c_k$ and $c_k^*$ so that $\|\tilde c_k\| = \|\tilde c_k^*\| = 1$ (this is  equivalent to $\|\phi_k\| =\|\phi_k^*\| = 1$). Under Assumption \ref{a:var}, we may choose positive sequences $\eta_{n,k}$ and $\eta_{n,k}^*$ which are both $o(1)$, so that:
\begin{equation} \label{e:etas}
 \|((\wh{\mf G}^o)^{-1}\wh{\mf M}^o - {\mf M}^o) \tilde c_k^{\phantom *}\| = O_p(\eta_{n,k}^{\phantom *}) \quad \mbox{and} \quad \|((\wh{\mf G}^o)^{-1}\wh{\mf M}^{o\prime} - {\mf M}^{o\prime}) \tilde c_k^*\| = O_p(\eta_{n,k}^{ *})\,.
\end{equation}
Appendix \ref{ax:est:mat} presents bounds on $\eta_{n,k}$ and $\eta_{n,k}^*$.

\begin{theorem}\label{t:rate}
Let Assumptions \ref{a:id}--\ref{a:var} hold. Then: 
\begin{enumerate} 
\item[(a)] $|\hat \rho - \rho| = O_p( \delta_k + \eta_{n,k})$
\item[(b)] $\|\hat \phi - \phi\| = O_p( \delta_k + \eta_{n,k})$
\item[(c)] $\|\hat \phi^* - \phi^*\| = O_p( \delta_k^* + \eta_{n,k}^*)$
\end{enumerate}
where $\delta_k^{\phantom *}$ and $\delta_k^*$ are defined in (\ref{e:deltas}) and $\eta_{n,k}^{\phantom *}$ and $\eta_{n,k}^*$ are defined in (\ref{e:etas}). The convergence rates for $\hat \phi$ and $\hat \phi^*$ should be understood to hold under the scale normalizations $\|\phi\| = 1$, $\|\hat \phi\| = 1$, $\| \phi^* \| = 1$ and $\|\hat \phi^* \| = 1$ and sign normalizations $\langle \phi,\hat \phi \rangle \geq 0$ and $\langle \phi^*, \hat \phi^* \rangle \geq 0$.
\end{theorem}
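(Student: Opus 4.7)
The plan is to decompose each error into a deterministic sieve-approximation bias and a stochastic sampling component, and to bound each using perturbation theory for simple isolated eigenvalues of non-selfadjoint compact operators (and their matrix analogues). Throughout I write $(\rho_k,\phi_k,\phi_k^*)$ for the population sieve solutions of (\ref{e:symprob}) and its transpose, normalized so that $\|\phi_k\| = \|\phi_k^*\| = 1$, $\langle \phi,\phi_k\rangle \geq 0$, and $\langle \phi^*,\phi_k^*\rangle \geq 0$.

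\textbf{Step 1 (population sieve error).} Since $\rho$ is simple and isolated in $\sigma(\mb M)$ (Proposition \ref{p:id}(c)) and $\|\Pi_k \mb M - \mb M\| = o(1)$ (Assumption \ref{a:bias}), the rank-one Riesz spectral projector $P$ attached to $\rho$ perturbs continuously when $\mb M$ is replaced by $\Pi_k \mb M$. Non-selfadjoint perturbation calculus via the contour-integral representation of $P$ (cf.\ Chapter VII of \cite{DunfordSchwartz}) gives, for all $k$ large, existence and uniqueness of the sieve eigentriplet $(\rho_k,\phi_k,\phi_k^*)$ with $\rho_k$ the largest real eigenvalue of $\Pi_k \mb M$ (consistent with Lemma \ref{lem:exist}). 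The refinement that produces the rate $\delta_k$ rather than the cruder $\|\Pi_k \mb M - \mb M\|$ rests on the identity
\[
 (\Pi_k \mb M - \mb M)\phi = \Pi_k(\rho \phi) - \rho \phi = -\rho (I - \Pi_k)\phi,
\]
whose $L^2$ norm equals $\rho\,\delta_k$. Feeding this into the first-order perturbation formula yields $|\rho_k - \rho| = O(\delta_k)$ and $\|\phi_k - \phi\| = O(\delta_k)$; the parallel argument applied to the transposed problem (using that $\rho$ is a simple isolated eigenvalue of $\mb M^*$ as well) gives $\|\phi_k^* - \phi^*\| = O(\delta_k^*)$.

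\textbf{Step 2 (sampling error and combining).} I switch to the orthonormalized coordinates defined by $\mf G^{-1/2}$: elements of $B_k$ correspond to vectors in $\mb R^k$, the operator $\Pi_k \mb M|_{B_k}$ is represented by the matrix $\mf M^o$, and its sample counterpart by $(\wh{\mf G}^o)^{-1}\wh{\mf M}^o$. In these coordinates $\tilde c_k$ and $\tilde c_k^*$ are the unit-norm right and left eigenvectors of $\mf M^o$ at the simple isolated eigenvalue $\rho_k$ (by Step 1). Assumption \ref{a:var} together with (\ref{e:etas}) says that $(\wh{\mf G}^o)^{-1}\wh{\mf M}^o \to \mf M^o$ in operator norm and, crucially, that the perturbation acts on the specific vectors $\tilde c_k,\tilde c_k^*$ with magnitudes $O_p(\eta_{n,k})$ and $O_p(\eta_{n,k}^*)$ respectively. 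Matrix perturbation theory then delivers $|\hat\rho - \rho_k| = O_p(\eta_{n,k})$, $\|\hat\phi - \phi_k\| = O_p(\eta_{n,k})$, and $\|\hat\phi^* - \phi_k^*\| = O_p(\eta_{n,k}^*)$. Combining with Step 1 via the triangle inequality, and flipping the signs of $\hat\phi, \hat\phi^*$ if necessary to satisfy the stated sign conventions, proves parts (a)--(c).

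\textbf{Main obstacle.} The chief difficulty, relative to selfadjoint treatments such as \cite{Gobetetal}, is that variational characterizations and orthogonality of eigenspaces are unavailable, so every perturbation bound must be mediated by the Riesz projector $P$, whose norm is governed by $\mb E[\phi(X_t)\phi^*(X_t)]$ and by the spectral gap separating $\rho$ from the rest of $\sigma(\mb M)$. I need to verify that both quantities transfer to uniform bounds on the analogous quantities for $\Pi_k \mb M$ at $\rho_k$, so that the perturbation constants in Step 2 do not degrade with $k$. The first holds because $\mb E[\phi(X_t)\phi^*(X_t)] = 1$ by Proposition \ref{p:id}(d); the second because the gap of $\Pi_k \mb M$ at $\rho_k$ converges to the positive gap of $\mb M$ at $\rho$ under Assumption \ref{a:bias}.
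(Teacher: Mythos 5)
Your proposal is correct and follows essentially the same route as the paper: the result is obtained by combining a deterministic bias bound (the paper's Lemma \ref{lem:bias}, proved via the Riesz spectral projector, the resolvent bound along a contour separating $\rho$, and the identity $(\Pi_k \mb M - \mb M)\phi = -\rho(I-\Pi_k)\phi$) with a sampling-error bound in the $\mf G^{1/2}$-orthonormalized coordinates (Lemma \ref{lem:var}), then applying the triangle inequality. Your discussion of why the perturbation constants do not degrade with $k$ matches the paper's uniform resolvent bound $\sup_{z\in\Gamma}\|\mc R(\Pi_k\mb M,z)\| = O(1)$ and the $O(1)$ bound on $|1/\langle c_k, c_k^*\rangle_{\mf G}|$.
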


It is worth noting that Theorem \ref{t:rate} holds for $\hat \rho$, $\hat \phi$ and $\hat \phi^*$ calculated from any estimators $\wh{\mf G}$ and $\wh{\mf M}$ that satisfy Assumption \ref{a:var}. Indeed, Theorem \ref{t:rate} is sufficiently general that it applies to models with latent state vectors without modification: all that is required is that one can construct estimators of $\mf G$ and $\mf M$ that satisfy Assumption \ref{a:var}.

Theorem \ref{t:rate} displays the usual bias-variance tradeoff encountered in nonparametric estimation. The bias terms $\delta_k^{\phantom *}$ and $\delta_k^*$ will be decreasing in $k$ (since $\phi$ and $\phi^*$ are approximated over increasingly rich subspaces as $k$ increases). On the other hand, the variance terms $\eta_{n,k}^{\phantom *}$ and $\eta_{n,k}^*$ will typically be increasing in $k$ (larger matrices) and decreasing in $n$ (more data). Choosing $k$ to balance the bias and variance terms will yield the best convergence rate. As an illustration, we now establish the convergence rates of $\hat \phi$ and $\hat \phi^*$ in Case 1, where $\wh{\mf G}$ and $\wh{\mf M}$ are as in (\ref{e:ghat}) and (\ref{e:mhat1}), under standard conditions from the statistics literature on optimal convergence rates. Although the following conditions are not necessarily appropriate in an asset pricing context, the result is informative about the convergence properties of $\hat \phi$ and $\hat \phi^*$. Let $W^p = \{ f \in L^2 : \sum_{|a| \leq p} \|D^a f\| < \infty\}$ with $D^a f = \frac{\partial^{a_1 + \ldots + a_d}}{\partial^{a_1}x_1 \cdots \partial^{a_d}x_d} f$ and $|a| = a_1 + \ldots + a_d$ denote a Sobolev space of smoothness $p \in \mb N$ equipped with the norm $\|f\|_{W^p} = \sum_{|a| \leq p} \|D^a f\|$.

\begin{corollary}\label{c:rate}
Let Assumption \ref{a:id} and the following conditions hold: (i) $\mc X \subset \mb R^d$ is compact and rectangular; (ii) $Q$ has a continuous density bounded away from zero; (iii) $\phi,\phi^* \in W^p$ and $\mb M$ is a bounded linear operator from $L^2$ into $W^{\bar p}$ for some $p \geq \bar p > 0$; (iv)  $B_k$ is spanned by tensor-product B-splines of order $\nu > p$ with equally spaced interior knots; (v) $\mb E[m(X_0,X_1)^r] < \infty$ for some $r > 2$; (vi) $k^{2+2/r}/n = o(1)$; (vii) $X$ is exponentially rho-mixing. 
Then: Assumptions \ref{a:bias} and \ref{a:var} hold and we may take $\delta_k,\delta_k^*=O(k^{-p/d})$, and $\eta_{n,k},\eta_{n,k}^*=O(k^{(r+2)/2r}/\sqrt n)$. Choosing $k \asymp n^{\frac{rd}{2rp+(2+r)d}}$ yields:
\begin{align*}
 \|\hat \phi - \phi\| & =  O_p(n^{-\frac{rp}{2rp+(2+r)d}}) & \|\hat \phi^* - \phi^*\| & =  O_p(n^{-\frac{rp}{2rp+(2+r)d}})\,.
\end{align*}
If $m$ is bounded, the rates become $n^{-p/(2p+d)}$ which is the optimal $L^2$-norm rate for nonparametric regression estimators when the regression function belongs to $W^p$. 
\end{corollary}

\newpage

Sieve methods may also be used to numerically compute $\rho$, $\phi$, and $\phi^*$  in models for which analytical solutions are unavailable. For such models, the matrices $\mf M$ and $\mf G$ may be computed directly (e.g. via simulation or numerical integration) and $\rho_k$, $\phi_k$ and $\phi_k^*$ can be obtained by solving (\ref{e:gev}). Lemma \ref{lem:bias} gives the rates $|\rho_k - \rho| = O(\delta_k)$, $\|\phi_k - \phi\| = O(\delta_k^{\phantom *})$, and $\|\phi_k^* - \phi^*\| = O(\delta_k^*)$.

We close this subsection with a remark relating $\delta_k$ and $\delta_k^*$ under an additional condition on the sieve basis $B_k$. Assumption \ref{a:bias} implies that $\mb M$ is compact. Therefore, $\mb M$ has a singular value decomposition $\{(\mu_n, \varphi_n, g_n) : n \in \mb N\}$ where $\{\mu_n : n \in \mb N\}$ are the nonzero singular values of $\mb M$ arranged in non-increasing order (i.e. $\mu_n \geq \mu_{n+1} \searrow 0$) and $\{\varphi_n : n \in \mb N\}$ and $\{g_n : n \in \mb N\}$ are orthonormal bases for $L^2$ with $\mb M \varphi_n = \mu_n g_n$ and $\mb M^* g_n = \mu_n \varphi_n$ for each $n \in \mb N$ (see, for example, Chapter 15.4 in \cite{Kress}).

\begin{remark}\label{rmk:delta}
Let Assumption \ref{a:bias} hold and let $B_k$ span the linear subspaces generated by $\{ \varphi_n : 1 \leq n \leq k\}$ and $\{g_n : 1 \leq n \leq k\}$. Then: $\delta_k^{\phantom *}$ and $\delta_k^*$ are both $O(\mu_{k+1})$.
\end{remark} 

For example, if $X$ is a scalar Gaussian AR(1), $m(X_t,X_{t+1})$ is exponentially affine in $(X_t,X_{t+1})$, and the basis functions are Hermite polynomials then $\delta_k^{\phantom *}$ and $\delta_k^*$ are $O(e^{-ck})$ for some $c > 0$. Similar spanning assumptions are often made in the literature on sieve estimation of nonparametric instrumental variables models (see, e.g., \cite{BCK}). 

\subsection{Asymptotic normality}

In this section we establish the asymptotic normality of $\hat \rho$. The semiparametric efficiency bound in Case 1 is also derived and $\hat \rho$ is shown to be efficient in this case. Related results on asymptotic normality and semiparametric efficiency of the estimator of the entropy of the permanent component are presented in Appendix \ref{ax:inf}.

\subsubsection{Asymptotic normality in Case 1}

To establish asymptotic normality of $\hat \rho$, we derive the representation:
\begin{equation}\label{e:ale:1}
 \sqrt n (\hat \rho - \rho) = \frac{1}{\sqrt n} \sum_{t=0}^{n-1} \psi_\rho(X_t,X_{t+1})  + o_p(1)
\end{equation}
where the influence function $\psi_\rho$ is given by: 
\begin{equation} \label{e:inf:def}
 \psi_\rho(x_0,x_1) = \phi^*(x_0) m(x_0,x_1) \phi(x_1) - \rho \phi^*(x_0) \phi(x_0)
\end{equation}
with $\phi$ and $\phi^*$ normalized so that $\|\phi\| = 1$ and $\langle \phi, \phi^* \rangle = 1$.
The process $\{\psi_\rho(X_t,X_{t+1}) : t \in T\}$ is a martingale difference sequence (relative to the filtration $\{\mc F_t : t \in T\}$). Therefore, the asymptotic distribution of $\hat \rho$ follows from (\ref{e:ale:1}) by a central limit theorem for martingale differences. To formalize this argument, we make the following assumption.

\begin{assumption}\label{a:asydist}
Let the following hold:
\begin{enumerate}
\item[(a)] $\delta_k = o(n^{-1/2})$ and $\delta_k^* = o(n^{-1/2})$
\item[(b)] $ \|\wh {\mf G}^o - \mf I\|  = o_p(n^{-1/4})$ and $\|\wh{\mf M}^o - {\mf M}^o\| = o_p(n^{-1/4})$
\item[(c)] $\mb E[(\phi^*(X_t)m(X_t,X_{t+1})\phi(X_{t+1}))^2] < \infty$.
\end{enumerate}
\end{assumption}

{\bf Discussion of assumptions:}  Assumption \ref{a:asydist}(a) is an under-smoothing condition which ensures that the leading bias terms $\sqrt n(\rho_k - \rho)$ and higher-order bias terms involving $\phi_k^{\phantom *}$, $\phi_k^*$, and $\rho_k$ are asymptotically negligible. Assumption 3.4(b) ensures that $\wh{\mf G}$ and $\wh{\mf M}$ converge fast enough that $\sqrt n(\hat \rho - \rho_k)$ may be written in an asymptotically linear form similar to (\ref{e:ale:1})-(\ref{e:inf:def}) but with $\phi_k^{\phantom *}$, $\phi_k^*$, and $\rho_k$ in place of $\phi$, $\phi^*$, and $\rho$. This result, in view of the asymptotic negligibility of the leading and higher-order bias terms under Assumption \ref{a:asydist}(a), leads to the representation (\ref{e:ale:1}). Sufficient conditions for Assumption \ref{a:asydist}(b) are presented in Appendix \ref{ax:est:mat}. Assumption 3.4(c) allows a CLT for square-integrable martingale differences to be applied to the martingale difference sequence $\{\psi_\rho(X_t,X_{t+1}) : t \in T\}$. 
Let $V_\rho = \mb E[\psi_\rho(X_0,X_1)^2]$.

\begin{theorem}\label{t:asydist:1}
Let Assumptions \ref{a:id}--\ref{a:asydist} hold. Then: the asymptotic linear expansion (\ref{e:ale:1}) holds and $\sqrt n (\hat \rho - \rho) \to_d N(0,V_\rho)$.
\end{theorem}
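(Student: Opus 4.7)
The plan is to combine a first-order perturbation expansion for $\hat\rho$ about the projected eigenvalue $\rho_k$ with the bias bound $|\rho_k - \rho| = O(\delta_k) = o(n^{-1/2})$ (Lemma \ref{lem:bias} together with Assumption \ref{a:asydist}(a)), and then apply a martingale CLT to the leading sum. I will work throughout in the orthogonalized coordinates already introduced in the paper: with $\tilde c_k = \mf G^{1/2} c_k$ and $\tilde c_k^* = \mf G^{1/2} c_k^*$, the matrix $\mf M^o$ has $(\rho_k, \tilde c_k)$ as right eigenpair and $(\rho_k, \tilde c_k^*)$ as left eigenpair, while $\hat\rho$ is an eigenvalue of $(\wh{\mf G}^o)^{-1}\wh{\mf M}^o$. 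The normalization $\langle \phi_k^{\phantom*}, \phi_k^*\rangle = 1$ (equivalently $\tilde c_k^{*\prime}\tilde c_k = 1$) is consistent with $\langle \phi, \phi^*\rangle = 1$ and is admissible for all large $k$ by Lemma \ref{lem:exist}.

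By Proposition \ref{p:id}(c) and Lemma \ref{lem:exist}, $\rho_k$ is a simple, isolated eigenvalue of $\mf M^o$ with a spectral gap bounded below uniformly in $k$. Standard first-order perturbation theory for a simple eigenvalue then yields
\[
\hat\rho - \rho_k = \tilde c_k^{*\prime}\bigl[(\wh{\mf G}^o)^{-1}\wh{\mf M}^o - \mf M^o\bigr]\tilde c_k + R_n, \qquad R_n = O_p\bigl((\|\wh{\mf G}^o - \mf I\| + \|\wh{\mf M}^o - \mf M^o\|)^2\bigr) = o_p(n^{-1/2}),
\]
where the remainder rate uses Assumption \ref{a:asydist}(b). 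Applying $\mf M^o\tilde c_k = \rho_k\tilde c_k$ and the Neumann expansion $(\wh{\mf G}^o)^{-1} = \mf I - (\wh{\mf G}^o - \mf I) + O_p(\|\wh{\mf G}^o - \mf I\|^2)$, then undoing the orthogonalization, reduces the leading quadratic form to
\[
c_k^{*\prime}(\wh{\mf M} - \mf M)c_k - \rho_k\, c_k^{*\prime}(\wh{\mf G} - \mf G)c_k + o_p(n^{-1/2}).
\]
Expanding with the definitions (\ref{e:ghat})--(\ref{e:mhat1}) and using the eigen-identities $c_k^{*\prime}\mf M c_k = \rho_k$ and $c_k^{*\prime}\mf G c_k = 1$ produces precisely $n^{-1}\sum_{t=0}^{n-1}\psi_{k,\rho}(X_t, X_{t+1})$. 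Replacing $\psi_{k,\rho}$ by $\psi_\rho$ via Assumption \ref{a:asydist}(c) and absorbing $\rho_k - \rho$ delivers the asymptotic linear expansion (\ref{e:ale:1}).

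For the distributional step, the Markov property combined with $\mb M\phi = \rho\phi$ from Proposition \ref{p:id} gives
\[
\mb E[\psi_\rho(X_t, X_{t+1}) \mid \mc F_t] = \phi^*(X_t)\,\mb E[m(X_t,X_{t+1})\phi(X_{t+1}) \mid X_t] - \rho\phi^*(X_t)\phi(X_t) = 0,
\]
so $\{\psi_\rho(X_t, X_{t+1})\}$ is a stationary ergodic martingale difference sequence with respect to $\{\mc F_t\}$. Assumption \ref{a:asydist}(d) yields $V_\rho < \infty$, and Billingsley's stationary-ergodic martingale CLT delivers $n^{-1/2}\sum_{t=0}^{n-1}\psi_\rho(X_t, X_{t+1}) \to_d N(0, V_\rho)$, from which the theorem follows.

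The hard step is the eigenvalue perturbation itself, because the operator $(\wh{\mf G}^o)^{-1}\wh{\mf M}^o$ is nonselfadjoint and its eigenprojector must be built from both $\tilde c_k$ and $\tilde c_k^*$. The cleanest way to bound $R_n$ uniformly in $k$ is a contour-integral (Riesz projection) representation of the perturbed eigenprojector on a small circle around $\rho_k$ whose radius is controlled by the uniform spectral gap from Proposition \ref{p:id}(c), combined with a Neumann series expansion of the resolvent in the perturbation $(\wh{\mf G}^o)^{-1}\wh{\mf M}^o - \mf M^o$; this yields the quadratic remainder bound and, crucially, keeps all constants independent of $k$.
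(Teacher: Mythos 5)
Your proposal is correct and follows essentially the same route as the paper: decompose $\hat\rho-\rho$ into the bias $\rho_k-\rho$ (killed by the undersmoothing condition and Lemma \ref{lem:bias}) plus the sampling term, linearize the latter via a Riesz-projection/resolvent perturbation expansion around the simple isolated eigenvalue $\rho_k$ (this is exactly Lemma \ref{lem:expansion}, with your quadratic remainder matching the paper's $O_p(\eta_{n,k,1}\times(\eta_{n,k,1}\vee\eta_{n,k,2}))$ bound, both $o_p(n^{-1/2})$ under Assumption \ref{a:asydist}(b)), identify $c_k^{*\prime}(\wh{\mf M}-\rho_k\wh{\mf G})c_k$ with the empirical average of $\psi_{k,\rho}$, swap to $\psi_\rho$ via Assumption \ref{a:asydist}(c), and close with Billingsley's martingale-difference CLT. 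No gaps.
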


It follows directly from Theorem \ref{t:asydist:1} that $\sqrt n(\hat y - y) \to_d N(0,\rho^{-2} V_\rho)$.

We conclude by deriving the semiparametric efficiency bounds for Case 1. We require a further technical condition to characterize the tangent space (see Appendix \ref{ax:inf}).

\begin{theorem}\label{t:eff}
Let Assumptions \ref{a:id}--\ref{a:asydist} and \ref{a:eff} hold. Then: the semiparametric efficiency bound for $\rho$ is $V_\rho$ and $\hat \rho$ is semiparametrically efficient.
\end{theorem}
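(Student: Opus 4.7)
The plan is to compute the semiparametric efficiency bound for $\rho$ via the H\'ajek-LeCam approach, viewing $\rho$ as a functional of the one-step transition density $f(x'|x)$ of the Markov state (with $m$ known in Case 1) and identifying the efficient influence function as the projection of the canonical gradient onto the appropriate tangent set. Combined with the asymptotic linear representation already supplied by Theorem \ref{t:asydist:1}, this will establish both the bound $V_\rho$ and the efficiency of $\hat\rho$.

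First I would fix a smooth one-parameter submodel $\epsilon \mapsto f_\epsilon(\cdot|\cdot)$ through the truth at $\epsilon = 0$ with score $s(x,x') = \partial_\epsilon \log f_\epsilon(x'|x)|_{\epsilon=0}$. Differentiating $\int f_\epsilon(x'|x)\,\mr d x' \equiv 1$ gives $\mb E[s(X_t,X_{t+1})|X_t] = 0$ almost surely, and the tangent set at $f$ is the $L^2$-closure of the collection of all such conditional-mean-zero scores; the score associated with the stationary initial distribution contributes only a single $O_p(1)$ term to the sample log-likelihood and is asymptotically negligible. Next I would establish pathwise differentiability of $\rho_\epsilon := \rho(f_\epsilon)$ and compute its derivative. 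Writing $\mb M_\epsilon$ for the operator associated with $f_\epsilon$, I differentiate the eigenvalue equation $\mb M_\epsilon \phi_\epsilon = \rho_\epsilon \phi_\epsilon$ at $\epsilon = 0$ and take the inner product with the true left eigenfunction $\phi^*$. The identity $\langle \phi^*,(\mb M-\rho)\psi\rangle = \langle(\mb M^*-\rho)\phi^*,\psi\rangle = 0$ (Proposition \ref{p:id}) kills the term containing $\partial_\epsilon \phi_\epsilon|_{\epsilon=0}$, and with the normalization $\langle \phi^*,\phi\rangle = 1$ this yields
\begin{equation*}
\partial_\epsilon \rho_\epsilon|_{\epsilon=0} = \langle \phi^*,(\partial_\epsilon \mb M_\epsilon|_{\epsilon=0})\phi\rangle = \mb E[\phi^*(X_t)\, m(X_t,X_{t+1})\, \phi(X_{t+1})\, s(X_t,X_{t+1})].
\end{equation*}
Since $\mb E[s(X_t,X_{t+1})|X_t] = 0$, I can subtract $\rho\phi^*(X_t)\phi(X_t)$ inside the expectation without changing its value, obtaining $\partial_\epsilon \rho_\epsilon|_{\epsilon=0} = \mb E[\psi_\rho(X_t,X_{t+1})\, s(X_t,X_{t+1})]$ with $\psi_\rho$ as in (\ref{e:inf:def}).

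To conclude I would observe that $\mb E[\psi_\rho(X_t,X_{t+1})|X_t] = \phi^*(X_t)(\mb M \phi)(X_t) - \rho\phi^*(X_t)\phi(X_t) = 0$ by the eigenfunction equation, so $\psi_\rho$ itself lies in the tangent set. The projection of the Riesz representer of the continuous linear map $s \mapsto \partial_\epsilon \rho_\epsilon|_{\epsilon=0}$ onto the tangent space therefore coincides with $\psi_\rho$, and the convolution theorem gives the semiparametric efficiency bound $\mb E[\psi_\rho(X_t,X_{t+1})^2] = V_\rho$. Efficiency of $\hat\rho$ is then immediate from Theorem \ref{t:asydist:1}.

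The main obstacle is making the eigenvalue perturbation rigorous: $\mb M_\epsilon$ acts on the varying reference space $L^2(Q_\epsilon)$, so one must verify jointly that $\epsilon \mapsto \rho_\epsilon$ is differentiable at $0$ and that $\phi_\epsilon$, $\phi^*_\epsilon$ remain smooth perturbations of $\phi$, $\phi^*$ in a neighborhood of the truth. Assumption \ref{a:eff} in the appendix is intended to supply the regularity (differentiability of $\epsilon \mapsto f_\epsilon$ in an appropriate $L^2$ sense, square-integrability of the score under the true law, and a uniform spectral gap around $\rho$) needed to invoke a Kato-style eigenvalue perturbation argument and thereby turn the formal calculation above into a rigorous derivation of the bound.
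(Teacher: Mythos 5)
Your proposal is correct and follows essentially the same route as the paper: the paper also takes the tangent set to be the conditional-mean-zero scores $\{h : \mb E[h(X_t,X_{t+1})|X_t]=0\}$ (justified via uniform ergodicity, following Bickel--Kwon), computes the pathwise derivative by a Kato eigenvalue perturbation paired with $\phi^*$ --- using an exponential-tilting submodel with bounded score so the perturbed operator acts on the fixed space $L^2$, which disposes of the ``varying reference space'' obstacle you flag --- and recenters $\phi^*(x_0)m(x_0,x_1)\phi(x_1)$ by its conditional mean to obtain $\psi_\rho$ as the efficient influence function, whence $V_\rho$ is the bound and efficiency of $\hat\rho$ follows from Theorem \ref{t:asydist:1}. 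The one small inaccuracy is your reading of Assumption \ref{a:eff}: it is uniform ergodicity of the state process, used only to characterize the tangent set, not to supply smoothness of the submodel or a spectral gap (isolation and simplicity of $\rho$ already come from Proposition \ref{p:id}(c)).
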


\subsubsection{Asymptotic normality in Case 2}

For Case 2, we obtain the following expansion (under regularity conditions):
\begin{equation}\label{e:ale:2}
 \sqrt n (\hat \rho - \rho) = \frac{1}{\sqrt n} \sum_{t=0}^{n-1} \Big( \psi_\rho(X_t,X_{t+1}) + \psi_{\alpha,k}(X_t,X_{t+1}) \Big) + o_p(1)
\end{equation}
where $\psi_\rho$ is from display (\ref{e:inf:def}) with $m(x_0,x_1) = m(x_0,x_1;\alpha_0)$ and where:
\begin{equation} \label{e:inf:alpha:def}
 \psi_{\alpha,k}(x_0,x_1) = \phi^*_k(x_0) \big( m(x_0,x_1;\hat \alpha) - m(x_0,x_1;\alpha_0) ) \phi_k^{\phantom *}(x_1) \,.
\end{equation}
The expansion (\ref{e:ale:2}) shows that the asymptotic distribution of $\hat \rho$ and related functionals will depend on the properties of the first stage estimator $\hat \alpha$. The following regularity conditions are deliberately general so as to accommodate a wide class of estimators.

We first suppose that $\alpha_0$ is a finite-dimensional parameter and the plug-in estimator $\hat \alpha$ is root-$n$ consistent and asymptotically normal.  Let $\psi_{\rho,t} = \psi_\rho(X_t,X_{t+1})$.

\begin{assumption}\label{a:parametric}
Let the following hold:
\begin{enumerate}
\item[(a)]  $\sqrt n (\hat \alpha - \alpha_0) = \frac{1}{\sqrt n} \sum_{t=0}^{n-1} \psi_{\alpha,t} + o_p(1)$ for some $\mb R^{{d_\alpha}}$-valued random process $\{\psi_{\alpha,t} : t \in T\}$
\item[(b)] $\frac{1}{\sqrt n} \sum_{t=0}^{n-1} (\psi_{\rho,t}^{\phantom \prime},\psi_{\alpha,t}')' \to_d N(0,V_{[\mr{2a}]}) $ for some finite matrix $V_{[2a]}$
\item[(c)] $m(x_0,x_1;\alpha)$ is continuously differentiable in $\alpha$ on a neighborhood $N$ of $\alpha_0$ for all $(x_0,x_1) \in \mc X^2$ and there exists some function $\bar m : \mc X^2 \to \mb R$ with $\mb E[\bar m(X_t,X_{t+1})^s ] < \infty$ for some $s\geq 2$ such that:
\[
 \sup_{\alpha \in N} \left\| \frac{\partial m(x_0,x_1;\alpha)}{\partial \alpha} \right\| \leq \bar m(x_0,x_1) \quad \mbox{for all $(x_0,x_1) \in \mc X^2$.}
\]
\item[(d)] $\mb E[(\phi(X_t)\phi^*(X_t))^{s/(s-1)}] < \infty$.
\end{enumerate}
\end{assumption}

Let $h_{[\mr{2a}]} = (1\,,\, \mb E[\phi^*(X_t) \phi(X_{t+1})\frac{\partial m(X_t,X_{t+1};\alpha_0)}{\partial \alpha'}])'$ and define $V_\rho^{[\mr{2a}]} = h_{[\mr{2a}]}'V_{[\mr{2a}]}^{\phantom \prime} h_{[\mr{2a}]}^{\phantom \prime}$.

\begin{theorem}\label{t:asydist:2a}
Let Assumptions \ref{a:id}--\ref{a:parametric} hold. Then: $\sqrt n (\hat \rho - \rho) \to_d  N(0,V_\rho^{[\mr{2a}]})$.
\end{theorem}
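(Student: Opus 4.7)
The plan is to start from the expansion (\ref{e:ale:2}) and linearize the $\psi_{\alpha,k}$ summand in $\hat\alpha-\alpha_0$, reducing the problem to a continuous functional of the joint sum $n^{-1/2}\sum_t (\psi_{\rho,t},\psi_{\alpha,t}')'$ whose limit is given by the CLT in Assumption \ref{a:parametric}(b). Theorem \ref{t:asydist:1} already handles the $\psi_{\rho,t}$ part (via Assumptions \ref{a:id}--\ref{a:asydist}); the new work is confined to the first-stage plug-in correction.

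First I would show that, on the event $\{\hat\alpha \in N\}$ (which has probability approaching one by root-$n$ consistency of $\hat\alpha$), a pointwise mean-value expansion and Assumption \ref{a:parametric}(c) give
\[
 \frac{1}{\sqrt n}\sum_{t=0}^{n-1} \psi_{\alpha,k}(X_t,X_{t+1}) \;=\; A_{n,k}\, \sqrt n(\hat\alpha - \alpha_0) \;+\; R_n,
\]
where
\[
 A_{n,k} = \frac{1}{n}\sum_{t=0}^{n-1} \phi_k^*(X_t)\, \frac{\partial m(X_t,X_{t+1};\alpha_0)}{\partial \alpha'}\, \phi_k^{\phantom *}(X_{t+1})
\]
and $R_n$ is the Taylor remainder. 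The remainder is controlled by bounding $\sup_{\alpha \in N}\|\partial m/\partial\alpha\|$ via Assumption \ref{a:parametric}(c) together with consistency of $\hat\alpha$ and a stationary-ergodic law of large numbers applied to the product with $\phi_k^{\phantom *}\phi_k^*$ (which are bounded in $L^2$ by Lemma \ref{lem:bias}), yielding $R_n = o_p(1) \cdot O_p(1) = o_p(1)$.

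Second, I would show $A_{n,k} \to_p A := \mb E\!\left[\phi^*(X_t)\frac{\partial m(X_t,X_{t+1};\alpha_0)}{\partial \alpha'}\phi(X_{t+1})\right]$. The deterministic bias $\mb E[A_{n,k}] - A$ vanishes because $\phi_k \to \phi$ and $\phi_k^* \to \phi^*$ in $L^2$ (Lemma \ref{lem:bias}); H\"older's inequality combined with the $L^s$-moment bound on $\partial m/\partial\alpha$ and the integrability strengthening $\mb E[(\phi\phi^*)^{s/(s-1)}]<\infty$ in Assumption \ref{a:parametric}(d) turns the $L^2$-convergence into convergence of the mixed expectation, and $\delta_k^* = O(n^{-\omega})$ with $\omega > 1/s$ ensures the rate is more than adequate. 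The stochastic part $A_{n,k} - \mb E[A_{n,k}]$ is $o_p(1)$ by the ergodic theorem applied to the triangular array of bounded-moment summands.

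Combining these two steps with the asymptotic linear expansion of $\hat\alpha$ in Assumption \ref{a:parametric}(a) and Slutsky,
\[
 \sqrt n(\hat\rho - \rho) = \frac{1}{\sqrt n}\sum_{t=0}^{n-1} \bigl( \psi_{\rho,t} + A\, \psi_{\alpha,t} \bigr) + o_p(1) = h_{[\mr{2a}]}' \cdot \frac{1}{\sqrt n}\sum_{t=0}^{n-1} \bigl( \psi_{\rho,t}^{\phantom\prime}, \psi_{\alpha,t}' \bigr)' + o_p(1).
\]
The joint CLT in Assumption \ref{a:parametric}(b), together with the Cram\'er--Wold device, then yields $\sqrt n(\hat\rho - \rho) \to_d N(0, h_{[\mr{2a}]}' V_{[\mr{2a}]}^{\phantom\prime} h_{[\mr{2a}]}^{\phantom\prime}) = N(0, V_\rho^{[\mr{2a}]})$. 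The main obstacle is the triangular-array convergence $A_{n,k} \to_p A$, where the integrand depends on $n$ through the sieve index $k$; this is exactly why Assumption \ref{a:parametric}(d) couples a $L^{s/(s-1)}$-integrability strengthening on $\phi\phi^*$ with a polynomial-in-$n$ rate on $\delta_k^*$, enabling a H\"older interpolation between the $L^2$ sieve rates and the $L^s$ control on the score of $m$.
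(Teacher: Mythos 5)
Your proposal is correct in outline and follows essentially the same route as the paper: linearize $m(\cdot;\hat\alpha)-m(\cdot;\alpha_0)$ by a mean-value expansion, pass from the sieve eigenfunctions to $\phi,\phi^*$, apply the ergodic theorem to the Jacobian term, and conclude via Slutsky and the joint CLT of Assumption \ref{a:parametric}(b) with the vector $h_{[\mr{2a}]}$. The only substantive difference is organizational: the paper first replaces $\phi_k^*\phi_k^{\phantom*}$ by $\phi^*\phi$ against the \emph{full} difference $m_t(\hat\alpha)-m_t(\alpha_0)$ (its term $T_{2,n}$) and then linearizes with the fixed functions $\phi^*,\phi$, whereas you linearize first and carry the $k$-dependent coefficient $A_{n,k}$.

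One step in your sketch is looser than it looks, and the role of the rate condition in Assumption \ref{a:parametric}(d) is misattributed. For $A_{n,k}\to_p A$ you only need $o_p(1)$, and the $L^1$ bound $\mb E|\phi_{k,t}^*\phi_{k,t+1}^{\phantom*}-\phi_t^*\phi_{t+1}^{\phantom*}|=O(\delta_k^{\phantom*}+\delta_k^*)$ from Lemma \ref{lem:bias} combined with H\"older against the fixed score $\partial m(\alpha_0)/\partial\alpha$ suffices; no polynomial rate on $\delta_k^*$ is required there. Where the condition $\delta_k^*=O(n^{-\omega})$ with $\omega>1/s$ actually bites is in the remainder you call $R_n$ (equivalently the paper's $T_{2,n}$): the summand $\phi_{k,t}^*\phi_{k,t+1}^{\phantom*}\bigl(\partial m_t(\tilde\alpha)/\partial\alpha-\partial m_t(\alpha_0)/\partial\alpha\bigr)$ is a triangular array in $k$, so the dominated-convergence plus uniform-ergodic-theorem argument (which the paper applies only to the fixed functions $\phi^*\phi$ in its $T_{1,n}$) does not apply directly. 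The fix is the paper's device of a uniform-in-$t$ tail bound, $\max_{t\le n-1}\sup_{\alpha\in N}\|\partial m_t(\alpha)/\partial\alpha\|=O_p(n^{1/s})$ (union bound plus Markov with the $s$-th moment), paired with $\tfrac1n\sum_t|\phi_{k,t}^*\phi_{k,t+1}^{\phantom*}-\phi_t^*\phi_{t+1}^{\phantom*}|=O_p(\delta_k^{\phantom*}+\delta_k^*)$; the product is $o_p(1)$ precisely because $\omega>1/s$ and $\delta_k=o(n^{-1/2})$. Without that splitting, the "H\"older interpolation" you invoke would require $L^2$ control of $\phi_k^*\phi_k^{\phantom*}-\phi^*\phi$ (i.e.\ fourth moments of the eigenfunctions uniformly in $k$), which is not among the stated assumptions.
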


We now suppose that $\alpha_0$ is an  infinite-dimensional parameter. The parameter space is $\mc A \subseteq \mb A$ (a Banach space) equipped with some norm $\|\cdot\|_{\mc A}$. This includes the case in which (1) $\alpha$ is a function, i.e. $\alpha = h$ with $\mb A = \mb H$ a function space, and  (2)  $\alpha$ consists of both finite-dimensional and function parts, i.e. $\alpha = (\theta,h)$ with $\mb A = \Theta \times \mb H$ with $\Theta \subseteq \mb R^{\dim (\theta)} $. For example, under recursive preferences the vector $\theta$ could consist of discount, risk-aversion and EIS parameters and $h$ could be the continuation value function.

Inference in this case involves the (typically nonlinear) functional $\ell : \mc A \to \mb R$, given by: 
\[
 \ell(\alpha) = \mb E[\phi^*(X_t) \phi(X_{t+1}) m(X_t,X_{t+1};\alpha)]\,.
\]
We focus on the case in which $\ell(\alpha_0)$ is root-$n$ estimable.
We say the functional $\ell : \mc A \to \mb R$ is \emph{pathwise differentiable} at $\alpha_0$ if  $\lim_{\tau \to 0^+} (\ell(\alpha_0 + \tau[\alpha - \alpha_0]) - \ell(\alpha_0))/\tau$ exists for every fixed $\alpha \in \mc A$. If so, we denote the derivative by $\dot \ell_{\alpha_0}[\alpha - \alpha_0]$. Define $\mc G = \{ g_\alpha : \alpha \in \mc A\}$ where $g_\alpha(x_t,x_{t+1}) =\phi^*(x_t) \phi(x_{t+1}) (m(x_t,x_{t+1};\alpha) - m(x_t,x_{t+1};\alpha_0))$. Let $\mc Z_n$ denote the centered empirical process on $\mc G$. We say $\mc G$ is \emph{Donsker} if $\sum_{t \in \mb Z} \mr{Cov}(g(X_0,X_1),g(X_t,X_{t+1}))$ is absolutely convergent over $\mc G$ to a non-negative quadratic form $\mb K(g,g)$ and there exists a sequence of Gaussian processes $\mc Z^{(n)}$ indexed by $\mc G$ with covariance function $\mb K$ and a.s. uniformly continuous sample paths such that $\sup_{g \in G} | \mc Z_n (g) - \mc Z^{(n)}(g)| \to_p 0$ as $n \to \infty$ (see \cite{DoukhanMassartRio}). Finally, let $\|\cdot\|_p$ denote the $L^p$ norm $\|\psi\|_p = (\int |\psi|^p \, \mr d Q )^{1/p}$ for any $1 \leq p < \infty$ (note that $\|\cdot\|_2 = \|\cdot\|$ in our earlier notation).

\begin{assumption}\label{a:nonpara}
Let the following hold:
\begin{enumerate}
\item[(a)] $\mc G$ is Donsker
\item[(b)] $\ell$ is pathwise differentiable at $\alpha_0$ and $| \ell(\alpha) - \ell(\alpha_0) - \dot \ell_{\alpha_0} [\alpha - \alpha_0]| = O(\|\alpha - \alpha_0\|^2_{\mc A})$
\item[(c)] $\sqrt n \dot \ell_{\alpha_0}[\hat \alpha - \alpha_0] = \frac{1}{\sqrt n } \sum_{t=0}^{n-1} \psi_{\ell,t} + o_p(1)$ for some $\mb R$-valued random process $\{ \psi_{\ell,t} : t \in T\}$, $\|\hat \alpha - \alpha_0\|_{\mc A} = o_p(n^{-1/4})$, and $\mb K(g_{\hat \alpha},g_{\hat \alpha}) = o_p(1)$
\item[(d)] $\frac{1}{\sqrt n} \sum_{t=0}^{n-1} (\psi_{\rho,t},\psi_{\ell,t})'  \to_d N(0,V_{[\mr{2b}]})$ for some finite matrix $V_{[2b]}$
\item[(e)] $\mb E[ \sup_{\alpha \in \mc A} m(X_t,X_{t+1};\alpha)^s] < \infty$ and either $\|\phi_k\|_{2s/(s-2)} = O(1)$ and $\|\phi^*\|_{2s/(s-2)} < \infty$ or $\|\phi_k^*\|_{2s/(s-2)} = O(1)$ and $\|\phi\|_{2s/(s-2)} < \infty$ holds for some $s > 2$.

\end{enumerate}
\end{assumption}

{\bf Discussion of assumptions:}  Sufficient conditions for the class $\mc G$ to be Donsker are well known (see, e.g., \cite{DoukhanMassartRio}). 
Parts (b) and (c) are standard conditions for inference in nonlinear semiparametric models (see, e.g., Theorem 4.3 in \cite{Chen2007}). Part (d) is a mild CLT condition and part (e) is a mild higher-than-second-moments condition.

For the following theorem, let $h_{[\mr{2b}]} = (1,1)'$ and define $V_\rho^{[\mr{2b}]} = h_{[\mr{2b}]}'V_{[\mr{2b}]}^{\phantom \prime} h_{[\mr{2b}]}^{\phantom \prime}$.

\begin{theorem}\label{t:asydist:2b}
Let Assumptions \ref{a:id}--\ref{a:asydist} and \ref{a:nonpara} hold. Then: $\sqrt n (\hat \rho - \rho) \to_d  N(0,V_\rho^{[\mr{2b}]})$.
\end{theorem}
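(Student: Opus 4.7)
The plan is to verify the asymptotic linear expansion (\ref{e:ale:2}) and then reduce the nuisance-estimation term $\frac{1}{\sqrt n}\sum_t \psi_{\alpha,k}(X_t,X_{t+1})$ to the influence function $\psi_{\ell,t}$ furnished by Assumption \ref{a:nonpara}(b). First I would establish (\ref{e:ale:2}) by following line-by-line the perturbation-theoretic expansion of the largest generalized eigenvalue of $(\wh{\mf M},\wh{\mf G})$ about $(\mf M,\mf G)$ used in the proof of (\ref{e:ale:1}), but applied to the Case-2 matrix $\wh{\mf M}$, which substitutes $m(\cdot,\cdot;\hat\alpha)$ for $m$. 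Assumption \ref{a:asydist}(a)-(c) together with consistency of $\hat\alpha$ control all higher-order terms; the extra piece produced by the $\hat\alpha$-perturbation of $\wh{\mf M}$ is precisely $\frac{1}{n}\sum_t \psi_{\alpha,k,t}$.

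Next, write $g_\alpha(x_0,x_1) = \phi^*(x_0)\bigl(m(x_0,x_1;\alpha)-m(x_0,x_1;\alpha_0)\bigr)\phi(x_1)$ and define $g_{\alpha,k}$ analogously with $(\phi_k^*,\phi_k)$ in place of $(\phi^*,\phi)$, so that $\frac{1}{\sqrt n}\sum_t \psi_{\alpha,k,t} = \sqrt n\, \mb P_n g_{\hat\alpha,k}$. The key step is
\[
\sqrt n\, \mb P_n g_{\hat\alpha,k} \;=\; \sqrt n\, \mb P_n g_{\hat\alpha} + o_p(1),
\]
which I would establish by decomposing $g_{\hat\alpha,k}-g_{\hat\alpha} = (\phi_k^*-\phi^*)\Delta m\,\phi_k + \phi^*\Delta m\,(\phi_k-\phi)$, where $\Delta m = m(\cdot;\hat\alpha)-m(\cdot;\alpha_0)$. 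H\"older's inequality with the integrability exponents in Assumption \ref{a:nonpara}(e), combined with $\|\phi_k-\phi\| = O(\delta_k)$ and $\|\phi_k^*-\phi^*\|=O(\delta_k^*)$ from Lemma \ref{lem:bias} and the hypotheses $\delta_k,\delta_k^*=o(n^{-1/2})$, bounds the mean of $g_{\hat\alpha,k}-g_{\hat\alpha}$ by $o_p(n^{-1/2})$. The centered part $\sqrt n(\mb P_n-\mb E)(g_{\hat\alpha,k}-g_{\hat\alpha})$ is handled by a maximal inequality for exponentially $\beta$-mixing sequences applied to a shrinking envelope.

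Then decompose
\[
\sqrt n\, \mb P_n g_{\hat\alpha} \;=\; \sqrt n\,(\mb P_n-\mb E) g_{\hat\alpha} + \sqrt n\,\bigl(\ell(\hat\alpha)-\ell(\alpha_0)\bigr).
\]
The empirical-process term is $o_p(1)$ by stochastic equicontinuity: Assumption \ref{a:nonpara}(d) makes $\mc G$ Donsker, and $\|g_{\hat\alpha}\|_{2+\epsilon}=o_p(1)$ from Assumption \ref{a:nonpara}(b) lets the empirical process concentrate on a shrinking $L^{2+\epsilon}$-ball. For the second piece, Assumption \ref{a:nonpara}(a) gives $\ell(\hat\alpha)-\ell(\alpha_0) = \dot\ell_{\alpha_0}[\hat\alpha-\alpha_0] + O(\|\hat\alpha-\alpha_0\|_{\mc A}^2)$, and the $n^{-1/4}$-consistency of $\hat\alpha$ in Assumption \ref{a:nonpara}(b) turns the quadratic remainder into $o_p(n^{-1/2})$, after which Assumption \ref{a:nonpara}(b) delivers $\sqrt n\,\dot\ell_{\alpha_0}[\hat\alpha-\alpha_0] = \frac{1}{\sqrt n}\sum_t \psi_{\ell,t}+o_p(1)$. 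Assembling everything,
\[
\sqrt n(\hat\rho-\rho) \;=\; \frac{1}{\sqrt n}\sum_{t=0}^{n-1}\bigl(\psi_{\rho,t}+\psi_{\ell,t}\bigr) + o_p(1),
\]
and the joint martingale-type CLT in Assumption \ref{a:nonpara}(c), evaluated against $h_{[\mr{2b}]}=(1,1)'$, yields the stated $N(0,V_\rho^{[\mr{2b}]})$ limit.

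The main obstacle is the swap step $\sqrt n\, \mb P_n g_{\hat\alpha,k} = \sqrt n\, \mb P_n g_{\hat\alpha} + o_p(1)$: one must simultaneously show that the $L^1$-mean of $g_{\hat\alpha,k}-g_{\hat\alpha}$ shrinks faster than $n^{-1/2}$ and that the empirical fluctuation around this mean vanishes at the same rate, while allowing $\hat\alpha$ to be only $n^{-1/4}$-consistent. This amounts to a genuinely new Case-2 analogue of the Case-1 bias control $\Delta_{\psi,n,k}=o_p(n^{-1/2})$, and balancing the nonparametric eigenfunction rates $\delta_k,\delta_k^*$ against the moment hypotheses on $m(\cdot;\alpha)$, $\phi$, $\phi^*$ and the $\beta$-mixing bound is the delicate piece of the argument.
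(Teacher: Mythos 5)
Your proposal is correct and follows essentially the same route as the paper: the same linear expansion with the extra term $\tfrac{1}{\sqrt n}\sum_t \psi_{\alpha,k,t}$, the same swap of $(\phi_k^*,\phi_k)$ for $(\phi^*,\phi)$ controlled by the eigenfunction rates and the moment/envelope conditions in Assumption \ref{a:nonpara}(e) (the paper handles this term, its $T_{1,n}$, by a crude $L^1$ bound as in Corollary \ref{c:inf:1}, so your separate maximal inequality for the centered fluctuation is more machinery than needed), the same stochastic-equicontinuity argument via the Donsker property of $\mc G$ and $\|g_{\hat\alpha}\|_{2+\epsilon}=o_p(1)$ for the centered empirical process, and the same linearization of $\ell$ followed by the joint CLT in Assumption \ref{a:nonpara}(c).
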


\section{Value function recursion as a nonlinear Perron-Fro-\break benius problem} \label{s:recursive}

This section describes how to estimate nonparametrically the continuation value function and SDF in a class of models with recursive preferences by solving a \emph{nonlinear} Perron-Frobenius eigenfunction problem. We focus on models in which a representative agent has \cite{EpsteinZin1989} recursive preferences with unit elasticity of intertemporal substitution (EIS). This class of preferences may also be interpreted as risk-sensitive preferences as formulated by \cite{HansenSargent1995} (see \cite{Tallarini2000}).
After describing the setup, we present some regularity conditions for local identification. We then introduce the estimators and derive their large-sample properties.

\subsection{Setup}

Under Epstein-Zin preferences, the date-$t$ utility of the representative agent is defined via the recursion:
\[
 V_t = \left\{ (1-\beta) C_t^{1-\theta} + \beta \mb E[V_{t+1}^{1-\gamma}|\mathcal F_t]^{\frac{1-\theta}{1-\gamma}} \right\}^{\frac{1}{1-\theta}}
\]
where $C_t$ is date-$t$ consumption, $1/\theta$ is the EIS, $\beta \in (0,1)$ is the time discount parameter, and $\gamma > 1$ is the relative risk aversion parameter. We maintain the assumption of a Markov state process $X$. Let consumption growth, namely $G_{t+1} =  C_{t+1}/C_t$, be a measurable function of $(X_t,X_{t+1})$. \cite{HansenHeatonLi} show that the scaled continuation value $V_t/C_t$ may be written as $V(X_t) $ where:
\begin{equation} \label{e:ezrecur}
 V(X_t) = \bigg\{ (1-\beta)  + \beta \mb E\Big[  \left(V(X_{t+1})G_{t+1}\right)^{1-\gamma} \Big|X_t \Big]^{\frac{1-\theta}{1-\gamma}} \bigg\}^{\frac{1}{1-\theta}} \,.
\end{equation}
With unit EIS (i.e. $\theta = 1$) the fixed point equation (\ref{e:ezrecur}) reduces to:
\begin{equation} \label{e:fp:prelim}
 v(X_t) =\frac{\beta}{1-\gamma} \log \mb E \Big[e^{  (1-\gamma) (v(X_{t+1}) +\log G_{t+1} ) } \Big|X_t \Big]
\end{equation}
with $v(x)= \log V(x)$. Analytical solutions for $v$ are typically only available when the conditional moment generating function of the Markov state is exponentially affine and $\log G_{t+1}$ is affine in $(X_t,X_{t+1})$. Assuming frictionless markets, the SDF is:
\begin{align} \label{e:efn:nl:0}
 \frac{M_{t+1}}{M_t} & = \beta G_{t+1}^{-1} \frac{(V_{t+1})^{1-\gamma}}{\mb E[(V_{t+1})^{1-\gamma}|X_t]}\,.
\end{align}
The dynamics of $X$ determine both the value function and the conditional expectation in the denominator of the SDF. The value function and conditional expectation are therefore unknown when the dynamics of $X$ are treated nonparametrically.

Consider the following reformulation of the fixed-point problem in display (\ref{e:fp:prelim}) as a nonlinear Perron-Frobenius problem. Setting $h(x) = \exp(\frac{1-\gamma}{\beta} v(x))$ and rearranging, we obtain the fixed-point equation $\mb T h = h
$, where:
\begin{align*}
 \mb T \psi (x) & = \mb E\Big[ G_{t+1}^{1-\gamma} \big|\psi(X_{t+1}) \big|^\beta \Big| X_t = x \Big] \,.
\end{align*}
As we seek a positive solution, taking an absolute value inside the conditional expectation in the preceding display does not change the fixed point. Dividing  $\mb T h = h$ by $\|h\|$ and using the fact that $\mb T$ is positive homogeneous of degree $\beta$, we obtain the nonlinear Perron-Frobenius problem:
\begin{equation} \label{e:efn:nl}
 \mb T \chi (x) = \lambda \chi (x)
\end{equation}
where $\chi(x) = h(x)/\|h\|$ is a positive eigenfunction of $\mb T$ and $\lambda = \|h\|^{1-\beta}$ is its eigenvalue. Throughout this section we normalize the eigenfunction $\chi$ to have unit norm. Unlike with linear operators, here changing the scaling of $h$ changes the corresponding eigenvalue: $c \chi$ is a positive eigenfunction of $\mb T$ with eigenvalue $c^{\beta-1} \lambda$ for any $c > 0$.

Reformulation of the recursion as a nonlinear Perron-Frobenius problem also leads to a convenient representation of the SDF. Rewriting the SDF from display (\ref{e:efn:nl:0}) in terms of $h$, we obtain:
\begin{align*} 
 \frac{M_{t+1}}{M_t} & = \beta G_{t+1}^{-\gamma} \frac{(h(X_{t+1}))^\beta}{\mb T h(X_t)}\,.
\end{align*}
Rescaling by $\|h\|$ and using (\ref{e:efn:nl}) yields:
\begin{align} \label{e:rec:sdf}
 \frac{M_{t+1}}{M_t} 
 & = \frac{\beta}{\lambda} G_{t+1}^{-\gamma} \frac{ \big(\chi(X_{t+1}) \big)^\beta}{  \chi(X_t) } \,.
\end{align} 
In what follows, we show how to estimate $\chi$ and $\lambda$ from time-series data on $X$. The estimates $\hat \chi$ and $\hat \lambda$ can be plugged into (\ref{e:rec:sdf})  to obtain nonparametric estimates of the SDF process (i.e. without assuming a parametric law of motion for $X$).

\subsection{Local identification}

In this section we provide sufficient conditions for local identification of the fixed point $h$ and its corresponding eigenfunction $\chi$. We establish the results for the parameter (function) space $L^2$ because it is convenient for sieve estimation. One cannot establish (global) identification using contraction mapping arguments because $\mb T$ is not a contraction on $L^2$.\footnote{Suppose that $\mb T$ has a positive fixed point $h \in L^2$. The function $\bar h \equiv 0$ is also a fixed point. Therefore, $\mb T$ is not a contraction on $L^2$ (else the Banach contraction mapping theorem would yield a unique fixed point).} Some of the regularity conditions we require for estimation are sufficient for $\mb T$ to satisfy a local ergodicity property which, in turn, is sufficient for local identification.

To describe the local ergodicity property, first choose some (nonzero) function $\psi \in L^2$ and set $\chi_1(\psi) = \psi$. Then consider the sequence defined iteratively by:
\[
 {\chi_{n+1}(\psi)} = \frac{\mb T \chi_n(\psi) }{\| \mb T \chi_n(\psi)\|}
\]
for $n \geq 1$. Proposition \ref{p:nl} below shows that the sequence $\chi_n(\psi)$ converges to $\chi$ for any starting value $\psi$ in a suitably defined region. 
This is similar to various ``stability'' results in the literature on balanced growth following \cite{SolowSamuelson}.\footnote{The literature on infinite-dimensional Perron-Frobenius theory has typically dealt with function spaces for which cone of non-negative functions has nonempty interior (see \cite{Krause} for a recent overview). The non-negative cone in $L^2$ has empty interior.  If $\mc X$ is bounded then these previous results may be used to derive (global) identification conditions in the space $C(\mc X)$. However, bounded support seems inappropriate for common choices of state variable, such as consumption growth and dividend growth.} There, $\mb T : \mb R^K \to \mb R^K$ is a homogeneous input-output system, $\chi_n \in \mb R^K$ lists the proportions of commodities in the economy in period $n$, and $\mb T \chi_n$ is normalized by its $\ell^1$ norm so that $\chi_{n+1} := \mb T \chi_n/\|\mb T \chi_n\|_{\ell^1}$ lists the proportions in period $n+1$. ``Stability'' concerns convergence of the sequence $\chi_n$ to a positive eigenvector $\chi$ of $\mb T$ (representing balanced growth proportions).

Write $\mb T = \mb G \mb F$ where $\mb F$ is the nonlinear operator $\mb F \psi(x) = |\psi(x)|^\beta$ and $\mb G$ is the linear operator:
\[
 \mb G \psi(x) =  \mb E\Big[ G_{t+1}^{1-\gamma} \psi(X_{t+1})  \Big| X_t = x \Big] \,.
\]
The operator $\mb T$ is bounded (respectively, compact) on $L^2$  whenever $\mb G$ is bounded (compact) on $L^2$ (see Chapter 5 of \cite{KZPS}). We say that $\mb G$ is {\it positive} if $\mb G \psi$ is positive for any non-negative $\psi \in L^2$ that is not identically zero. Positivity of $\mb G$ ensures that the sequence $\chi_n(\psi)$ is well defined and that any nonzero fixed point of $\mb T$ is positive.
 We say that $\mb T$ is \emph{Fr\'echet differentiable} at $h$ if there exists a bounded linear operator $\mb D_h : L^2 \to L^2$ such that:
\[
 \| \mb T(h + \psi) - \mb T h - \mb D_h \psi\| = o( \|\psi\| ) \quad \mbox{ as $\|\psi\| \to 0$.}
\]
If it exists, the Fr\'echet derivative $\mb D_h$ of  $\mb T$ is given by:
\[
 \mb D_h \psi(x) = \mb E \Big[ \beta G_{t+1}^{1-\gamma} h(X_t)^{\beta-1} \psi(X_{t+1}) \Big| X_t = x \Big] \,.
\]
Let $r(\mb D_h)$ denote the spectral radius of $\mb D_h$.

\begin{proposition}\label{p:nl}
Let $\mb G$ be positive and bounded and let $\mb T$ be Fr\'echet differentiable at $h$ with $r(\mb D_h) < 1$. Then: there exists finite positive constants $C,c$ and a neighborhood $N$ of $\chi$ such that:
\[
 \| \chi_{n+1}(\psi) - \chi\| \leq C e^{-cn}
\]
for any initial point $\psi$ in the cone $\{ a N : a \in \mb R, a \neq 0\}$. 
\end{proposition}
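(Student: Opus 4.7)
The plan is to exploit the positive homogeneity of $\mb T$ of degree $\beta$, namely $\mb T(c\psi) = |c|^\beta \mb T\psi$ for $c \in \mb R$ (which follows because $\mb F$ is homogeneous of degree $\beta$ and $\mb G$ is linear), to reduce the claim about the normalized nonlinear iteration for $\chi$ to a standard local stability statement for the unnormalized iteration $\mb T^n$ at the fixed point $h = \lambda^{1/(1-\beta)} \chi$ of $\mb T$.

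The first step is a direct algebraic reduction. Writing $\psi = a \eta$ with $a \neq 0$ and $\eta \in N$, homogeneity together with an easy induction gives $\chi_{n+1}(\psi) = \mb T^n \eta / \|\mb T^n \eta\|$ for every $n \geq 1$, so that the scale $a$ drops out after one step and the cone structure of the initial condition is respected. The denominators never vanish: since $|\eta|^\beta$ is a nonzero non-negative element of $L^2$, positivity of $\mb G$ forces $\mb T^k \eta$ to be strictly positive for each $k \geq 1$.

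The second and main step is a standard local contraction argument for $\mb T$ at $h$. Because $\mb T h = h$, $\mb T$ is Fr\'echet differentiable at $h$, and $r(\mb D_h) < 1$, Gelfand's formula $r(\mb D_h) = \lim_n \|\mb D_h^n\|^{1/n}$ supplies a constant $b_0 \in (r(\mb D_h),1)$ and an equivalent norm $|||\cdot||| := \sum_{n \geq 0} b_0^{-n} \|\mb D_h^n (\cdot)\|$ on $L^2$ under which $|||\mb D_h||| \leq b_0$. Combining the Fr\'echet expansion $|||\mb T \psi - h||| \leq |||\mb D_h (\psi - h)||| + o(|||\psi - h|||)$ with any choice of $b \in (b_0,1)$ yields an $L^2$-neighborhood $U$ of $h$ on which $|||\mb T \psi - h||| \leq b \, |||\psi - h|||$, from which $\|\mb T^n \tilde h - h\| \leq C b^n$ follows for every $\tilde h \in U$ by iteration and norm equivalence.

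The third step transfers this back to $\chi$ and produces the rate. Choose $N$ to be an $L^2$-ball about $\chi$ small enough that $cN \subseteq U$, where $c = \lambda^{1/(1-\beta)}$ so that $c \chi = h$. Given $\eta \in N$, set $\tilde \eta = c \eta \in U$; homogeneity gives $\mb T^n \tilde \eta = c^{\beta^n} \mb T^n \eta$, so the normalizations cancel and $\mb T^n \tilde \eta / \|\mb T^n \tilde \eta\| = \mb T^n \eta / \|\mb T^n \eta\| = \chi_{n+1}(\psi)$. Since $\mb T^n \tilde \eta \to h$ in $L^2$ with $\|\mb T^n \tilde \eta\|$ eventually bounded below by $\|h\|/2$, the elementary estimate $\|v/\|v\| - h/\|h\|\| \leq 2 \|v - h\|/\|v\|$ converts the geometric rate for $\mb T^n \tilde \eta \to h$ into $\|\chi_{n+1}(\psi) - \chi\| \leq C' b^n$, which is the claim after reindexing. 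The main obstacle is the second step: upgrading the spectral-radius hypothesis $r(\mb D_h) < 1$ to a genuine contraction of the nonlinear map $\mb T$ on a neighborhood of $h$, which requires both the equivalent-norm trick and uniform control of the Fr\'echet remainder; the rest is bookkeeping of the homogeneity-induced cancellations.
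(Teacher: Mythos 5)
Your proof is correct, and its architecture is the same as the paper's: reduce the normalized iteration to the unnormalized one via degree-$\beta$ homogeneity, establish local geometric convergence of $\mb T^n$ to the fixed point $h$, and transfer back to $\chi$ with the elementary estimate for normalized vectors. The one place you genuinely depart from the paper is in proving the local stability step (the paper's Lemma \ref{lem:fpiter}). The paper picks $m$ with $\|\mb D_h^m\| < a^m$ via Gelfand's formula, gets a contraction for $\mb T^m$ from the chain rule, and then interpolates the remaining $n - km$ steps using a H\"older-continuity bound $\|\mb T\psi_1 - \mb T\psi_2\| \leq (1+\|\mb G\|)^{1/(1-\beta)}\|\psi_1-\psi_2\|^{\beta}$ that exploits boundedness of $\mb G$ and the $\beta$-homogeneity; this is what produces the somewhat awkward exponents $\epsilon^{\beta^{(n-km)}}(a^{km})^{\beta^{(n-km)}}$ in their final bound. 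You instead renorm $L^2$ with $|||\psi||| = \sum_{n\ge 0} b_0^{-n}\|\mb D_h^n\psi\|$ so that $\mb D_h$ becomes a genuine one-step contraction, absorb the Fr\'echet remainder on a small $|||\cdot|||$-ball (which is then automatically $\mb T$-invariant), and iterate. Your route avoids the H\"older interpolation entirely and yields a cleaner bound $C b^n$ directly; the paper's route avoids introducing an auxiliary norm. Both are standard and both deliver the stated conclusion, so this is a legitimate alternative proof of the key lemma rather than a gap.
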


We say that $\chi$ is {\it locally identified} if there exists a neighborhood $N$ of $\chi$ such that $\chi$ is the unique eigenfunction of $\mb T$ belonging to $N \cap S_1$ where $S_1$ denotes the unit sphere in $L^2$ (recall we normalize eigenfunctions of $\mb T$ to have unit norm). Similarly, we say that $h$ is locally identified if $h$ is the unique fixed point of $\mb T$ belonging to some neighborhood $N'$ of $h$. To see why local identification follows from Proposition \ref{p:nl}, suppose $\bar \chi$ is a positive eigenfunction of $\mb T$ belonging to $N \cap S_1$. Proposition \ref{p:nl} implies that $\| \chi_{n+1}(\bar \chi) - \chi \| = \|\bar \chi - \chi\| \leq C e^{-cn}$ for each $n$, hence $\bar \chi = \chi$. Local identification of $h$ follows similarly.

\begin{corollary}\label{c:local-id}
$h$ and $\chi$ are locally identified under the conditions of Proposition \ref{p:nl}.
\end{corollary}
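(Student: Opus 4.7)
The plan is to argue directly from Proposition \ref{p:nl}, handling $\chi$ first and then deducing local identification of $h$ via the relation between $h$ and $\chi$ together with degree-$\beta$ homogeneity of $\mb T$.

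For $\chi$, let $N$ be the neighborhood supplied by Proposition \ref{p:nl}. Suppose $\bar\chi \in N \cap S_1$ is a positive eigenfunction of $\mb T$ with eigenvalue $\bar\lambda$. Positivity of $\mb G$ together with positivity of $\bar\chi$ makes $\mb T\bar\chi$ strictly positive almost everywhere, hence $\bar\lambda > 0$. Initializing the iteration from $\psi = \bar\chi$ gives $\chi_1(\bar\chi) = \bar\chi$ and, by induction, $\chi_{n+1}(\bar\chi) = \mb T \bar\chi/\|\mb T\bar\chi\| = \bar\lambda\bar\chi/\bar\lambda = \bar\chi$, using $\|\bar\chi\| = 1$. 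Since $\bar\chi$ lies in the cone $\{aN : a \in \mb R, a \neq 0\}$, Proposition \ref{p:nl} yields $\|\bar\chi - \chi\| = \|\chi_{n+1}(\bar\chi) - \chi\| \leq C b^n$ for every $n$. Letting $n \to \infty$ forces $\bar\chi = \chi$, which is the local identification of $\chi$.

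For $h$, I would take any positive fixed point $\bar h$ of $\mb T$ in a sufficiently small neighborhood $N'$ of $h$ and set $\bar\chi = \bar h/\|\bar h\|$. Since $\mb T$ is homogeneous of degree $\beta$, $\bar h = \mb T \bar h = \|\bar h\|^\beta \mb T \bar\chi$, so $\mb T \bar\chi = \|\bar h\|^{1-\beta}\bar\chi$, making $\bar\chi$ a positive unit-norm eigenfunction of $\mb T$ with eigenvalue $\|\bar h\|^{1-\beta}$. The map $\psi \mapsto \psi/\|\psi\|$ is continuous at the nonzero point $h$, so shrinking $N'$ ensures $\bar\chi \in N \cap S_1$. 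The first step then gives $\bar\chi = \chi$, and equating eigenvalues, $\|\bar h\|^{1-\beta} = \lambda = \|h\|^{1-\beta}$, with $\beta < 1$, yields $\|\bar h\| = \|h\|$ and therefore $\bar h = \|h\|\chi = h$.

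The bulk of the work is already absorbed into Proposition \ref{p:nl}; the substantive observation is that any unit-norm positive eigenfunction is automatically a fixed point of the normalized iteration, so the geometric convergence in Proposition \ref{p:nl} collapses it to $\chi$. I do not anticipate a serious obstacle, but the only delicate point is the passage from a neighborhood of $h$ in $L^2$ to a neighborhood of $\chi$ in $S_1$, which requires using both continuity of normalization away from zero and the strict inequality $\beta < 1$ to recover $\|\bar h\|$ from $\|h\|$.
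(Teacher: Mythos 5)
Your proposal is correct. The $\chi$ part coincides with the paper's argument: a unit-norm positive eigenfunction is a stationary point of the normalized iteration, so Proposition \ref{p:nl} forces it onto $\chi$. For the $h$ part you take a genuinely different route. The paper works directly with the unnormalized convergence result (Lemma \ref{lem:fpiter} in the appendix): if $h'$ is a fixed point of $\mb T$ in the neighborhood $N$ of $h$, then $\mb T^n h' = h'$, so $\|h' - h\| = \|\mb T^n h' - h\| \leq C b^n \to 0$ and $h' = h$ in one line, with no positivity or normalization needed. You instead reduce to the $\chi$ case by projecting $\bar h$ onto the unit sphere, identify $\bar h/\|\bar h\|$ with $\chi$, and then recover the scale from the eigenvalue relation $\|\bar h\|^{1-\beta} = \lambda = \|h\|^{1-\beta}$ using $\beta \in (0,1)$. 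Both arguments are valid; the paper's is shorter and applies to arbitrary (not necessarily positive) fixed points in $N$, whereas yours needs $\bar h$ positive so that $\bar h/\|\bar h\|$ qualifies as a positive eigenfunction — a harmless restriction here, since $\mb T$ maps into non-negative functions and positivity of $\mb G$ makes every non-zero fixed point strictly positive (as the paper notes), while $\bar h \equiv 0$ is excluded by shrinking $N'$. Your version does have the minor virtue of making explicit how the scale of $h$ is pinned down by $\lambda$ through the homogeneity of $\mb T$, which the paper leaves implicit.
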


In fact, local identification of $\chi$ and positive homogeneity of $\mb T$ imply that $h$ is the unique fixed point of $\mb T$ in the cone $\{ a (N \cap S_1) : a \in \mb R , a \neq 0\}$. 

Existence and (global) identification of value functions in models with recursive preferences has been studied previously (see \cite{MarinacciMontrucchio}, \cite{HS2012}, and references therein). The most closely related work to ours is \cite{HS2012}, who study existence and uniqueness of value functions for Markovian environments in $L^1$ spaces (whose cones of non-negative functions also have empty interior). \cite{HS2012} provide conditions under which a fixed point may exist when the EIS is equal to unity but do not establish its uniqueness. Their existence conditions are based, in part, on existence of a positive eigenfunction of the operator $\mb G$. 

There is also a connection between Corollary \ref{c:local-id} and the literature on local identification of nonlinear, nonparametric econometric models. We can write $\mb T h = h$ as the conditional moment restriction:
\[
 \mb E\Big[ G_{t+1}^{1-\gamma} \big|h(X_{t+1}) \big|^\beta - h(X_t) \Big| X_t \Big] = 0
\]
(almost surely). The conditions of Proposition \ref{p:nl} ensure that the above moment restriction is Fr\'echet differentiable at $h$ with derivative $\mb D_h - I$. The condition $r(\mb D_h) < 1$ implies that $\mb D_h - I$ is invertible on $L^2$. The conditions in Proposition \ref{p:nl} are therefore similar to the differentiability and rank conditions that \cite{Chenetal2012} use to study local identification in nonlinear conditional moment restriction models.

\subsection{Estimation}

We again use a sieve approach to reduce the infinite-dimensional problem to a low-dimensional (nonlinear) eigenvector problem.  
Consider the projected fixed-point problem: 
\begin{equation} \label{e:pfpe}
 (\Pi_k \mathbb T) h_k = h_k 
\end{equation}
where $\Pi_k : L^2 \to B_k$ is the orthogonal projection onto the sieve space defined in Section \ref{s:est}. Lemma \ref{lem:fp:exist} in the Appendix guarantees existence of a solution $h_k$ to (\ref{e:pfpe}) on a neighborhood of $h$ for all $k$ sufficiently large.
 As $h_k \in B_k$, we have $h_k = b^k(x)' v_k$ for some vector $v_k \in \mb R^k$ which solves:
\begin{equation} \label{e:fp:vec}
 \mf G_k^{-1} \mf T_k^{\phantom {-1}}\!\!\! v_k = v_k
\end{equation}
where $\mf T_k v = \mb E[ b^k(X_{t}) G_{t+1}^{1-\gamma} |b^k(X_{t+1})'v |^\beta ]$. To simplify notation we drop dependence of $\mf G_k$ and $\mf T_k$ on $k$ hereafter. 
For estimation, we solve a sample analogue of (\ref{e:fp:vec}), namely:
\begin{equation} \label{e:fp:sample}
 \wh{\mf G}^{-1} \wh{\mf T} \hat v = \hat v
\end{equation}
where $\wh{\mf G}$ is defined in display (\ref{e:ghat}) and $\wh{\mf T} : \mb R^k \to \mb R^k$ is given by: 
\[
 \wh{\mf T}v = \frac{1}{n} \sum_{t=0}^{n-1}  b^k(X_{t}) G_{t+1}^{1-\gamma} |b^k(X_{t+1})'v |^\beta \,.
\]
Under the regularity conditions below, a solution $\hat v$ on a neighborhood of $v_k$ necessarily exists wpa1 (see Lemma \ref{lem:fphat:exist} in the Appendix).
The estimators of $h$, $\chi$ and $\lambda$ are:
\begin{align} \label{e:fpest}
 \hat h(x) & = b^k(x)' \hat v &
 \hat \chi(x) & = \frac{b^k(x)' \hat v}{(\hat v' \wh{\mf G} \hat v)^{1/2}} &
 \hat \lambda & = (\hat v' \wh{\mf G} \hat v)^{\frac{1-\beta}{2}} \,.
\end{align}
The estimators $\hat \chi$ and $\hat \lambda$ can then be plugged into display (\ref{e:rec:sdf}) to obtain an estimate of the SDF consistent with preference parameters $(\beta,\gamma)$ and the observed law of motion of the state.

\begin{assumption} \label{a:fp:exist}
Let the following hold:
\begin{enumerate}
\item[(a)] $\mb T$ has a unique positive fixed point $h \in L^2$
\item[(b)] $\mb G$ is positive and compact
\item[(c)] $\mb T$ is Fr\'echet differentiable at $h$ with $r(\mb D_h) < 1$.
\end{enumerate}
\end{assumption}

\begin{assumption} \label{a:fp:bias} Let the following hold:
\begin{enumerate}
\item[(a)] $\|\Pi_k \mb D_h - \mb D_h\| = o(1)$  
\item[(b)] $\sup_{\psi \in L^2 : \|\psi\| \leq c} \| \Pi_k \mb T\psi - \mb T\psi\| = o(1)$ for each $c > 0$.
\end{enumerate}
\end{assumption}

Let $\wh{\mf G}^o$ be as in Assumption \ref{a:var}. Let $\mf T^o v = \mf G^{-1/2} \mf T (\mf G^{-1/2} v)$ and $\wh{\mf T}^o v = \mf G^{-1/2} \wh{\mf T} (\mf G^{-1/2} v)$. Note that $\wh{\mf G}^o$ and $\wh{\mf T}^o$ are a proof device and do not need to be calculated in practice.

\begin{assumption} \label{a:fp:var}
$\| \wh{\mf G}^o - \mf I \| = o_p(1)$ and $\sup_{v \in \mb R^k : \|  v\| \leq c} \| \wh{\mf T}^ov - \mf T^o v\| = o_p(1)$ for each $c > 0$.
\end{assumption}

{\bf Discussion of assumptions:} Assumption \ref{a:fp:exist}(a) is a global identification assumption, parts (b) and (c) imposes some mild structure on $\mb T$ which ensures that fixed points of $\mb T$ are continuous under perturbations. Assumption \ref{a:fp:bias}(a)(b) are analogous to Assumption \ref{a:bias}.  Assumption \ref{a:fp:var} is similar to Assumption \ref{a:var} and restricts the rate at which the sieve dimension $k$ can grow with $n$; sufficient conditions are presented in Appendix \ref{ax:est:mat:fp}.

Let $\tau_k = \|\Pi_k h - h\|$ denote the bias in approximating $h$ by an element of the sieve space. Assumption \ref{a:fp:bias}(b) implies that $\tau_k = o(1)$. To control the sampling error, fix any small $\varepsilon > 0$. By Assumption \ref{a:fp:var} we may choose a sequence of positive constants $\nu_{n,k}$ with  $\nu_{n,k} = o(1)$ such that: 
\begin{equation} \label{e:nudef}
 \sup_{v \in \mb R^k : \|  v'b^k- h\| \leq \varepsilon} \| (\wh{\mf G}^o)^{-1} \wh{\mf T}^ov - \mf T^o v\| = O_p(\nu_{n,k})\,.
\end{equation}
Appendix \ref{ax:est:mat:fp} presents bounds on $\nu_{n,k}$.

\begin{theorem} \label{t:fpest}
Let Assumptions \ref{a:fp:exist}--\ref{a:fp:var} hold. Then:
\begin{enumerate}
\item[(a)] $|\hat \lambda - \lambda| = O_p( \tau_k + \nu_{n,k})$
\item[(b)] $\|\hat \chi - \chi\| = O_p( \tau_k + \nu_{n,k})$
\item[(c)] $\|\hat h - h\| = O_p( \tau_k + \nu_{n,k})$.
\end{enumerate}
\end{theorem}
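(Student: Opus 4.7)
The strategy parallels the proof of Theorem \ref{t:rate}: decompose
\[
 \|\hat h - h\| \le \|\hat h - h_k\| + \|h_k - h\|
\]
where $h_k$ is the deterministic sieve solution to $(\Pi_k \mb T) h_k = h_k$, bound the first piece by $O_p(\nu_{n,k})$ and the second by $O(\tau_k)$, and then transfer these rates to $\hat\lambda$ and $\hat\chi$ by a plug-in argument. Throughout I work in the orthogonalized coordinates $\tilde v = \mf G^{1/2} v$, so that $\|b^k(\cdot)' v\| = \|\tilde v\|$; in these coordinates the projected problem reads $\mf T^o \tilde v_k = \tilde v_k$ with $\tilde v_k = \mf G^{1/2} v_k$ and the sample problem reads $(\wh{\mf G}^o)^{-1} \wh{\mf T}^o \hat{\tilde v} = \hat{\tilde v}$ with $\hat{\tilde v} = \mf G^{1/2} \hat v$.

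For the bias step, write $F_k(\psi) = \Pi_k \mb T \psi - \psi$ on $B_k$. Continuous Fr\'echet differentiability (Assumption \ref{a:fp:exist}(c)) together with $\mb T h = h$ yield
\[
 F_k(\Pi_k h) = \Pi_k\big(\mb T(\Pi_k h) - \mb T h\big) = \Pi_k \mb D_h(\Pi_k h - h) + o(\tau_k) = O(\tau_k).
\]
The derivative of $F_k$ at $\Pi_k h$, viewed as a map on $B_k$, is $\Pi_k \mb D_{\Pi_k h}|_{B_k} - I_{B_k}$. Assumptions \ref{a:fp:bias}(a) and \ref{a:fp:exist}(c) give $\|\Pi_k \mb D_{\Pi_k h} - \mb D_h\| \to 0$, and since $r(\mb D_h) < 1$ the operator $I - \mb D_h$ is invertible on $L^2$; a standard norm-perturbation argument (leveraging compactness of $\mb D_h$, which is inherited from compactness of $\mb G$ via the Hammerstein decomposition $\mb T = \mb G \mb F$) shows that $I_{B_k} - \Pi_k \mb D_{\Pi_k h}|_{B_k}$ is invertible with uniformly bounded inverse for all $k$ sufficiently large. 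Newton-Kantorovich then produces a fixed point $h_k \in B_k$ of $\Pi_k \mb T$ with $\|h_k - \Pi_k h\| = O(\tau_k)$, hence $\|h_k - h\| = O(\tau_k)$.

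For the variance step, define $\hat F_k^o(\tilde v) = (\wh{\mf G}^o)^{-1}\wh{\mf T}^o \tilde v - \tilde v$ on $\mb R^k$. Using $\mf T^o \tilde v_k = \tilde v_k$ and display (\ref{e:nudef}),
\[
 \hat F_k^o(\tilde v_k) = (\wh{\mf G}^o)^{-1}\wh{\mf T}^o \tilde v_k - \mf T^o \tilde v_k = O_p(\nu_{n,k}).
\]
The Jacobian of $\hat F_k^o$ at $\tilde v_k$ differs from the orthogonalized matrix representation of $\Pi_k \mb D_{h_k}|_{B_k} - I_{B_k}$ by an $o_p(1)$ term (Assumption \ref{a:fp:var}), so by the invertibility conclusion from the bias step it has uniformly bounded inverse with probability approaching one. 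The inverse function theorem delivers $\hat{\tilde v}$ with $\|\hat{\tilde v} - \tilde v_k\| = O_p(\nu_{n,k})$; since $\|\hat h - h_k\| = \|\hat{\tilde v} - \tilde v_k\|$ this proves part (c). For part (b), $\chi = h/\|h\|$, $\hat\chi = \hat h/\sqrt{\hat v'\wh{\mf G}\hat v}$, the identity $\hat v'\wh{\mf G}\hat v = \hat{\tilde v}'\wh{\mf G}^o \hat{\tilde v}$ together with $\wh{\mf G}^o \to_p \mf I$ and $\|h\| > 0$ give $\hat v'\wh{\mf G}\hat v \to_p \|h\|^2$, and continuity of $x \mapsto x/\|x\|$ on $L^2 \setminus \{0\}$ yields $\|\hat\chi - \chi\| = O_p(\|\hat h - h\|)$. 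Part (a) follows in the same way via the map $x \mapsto \|x\|^{1-\beta}$.

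The main technical obstacle is the uniform-in-$k$ invertibility of the finite-dimensional Jacobian $I_{B_k} - \Pi_k \mb D_h|_{B_k}$ and its sample analog. Because the hypothesis $r(\mb D_h) < 1$ controls only the spectrum (not the norm) of $\mb D_h$, invertibility of $I - \mb D_h$ alone is insufficient; one must upgrade it to a norm-stable inversion of the sieve Jacobian that absorbs both the deterministic approximation in Assumption \ref{a:fp:bias}(a) and the stochastic perturbation in Assumption \ref{a:fp:var}. Compactness of $\mb D_h$ is the crucial ingredient that makes this upgrade go through; once it is in place, the remainder of the argument is a pair of essentially routine inverse function theorem applications.
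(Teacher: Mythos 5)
Your bias step is sound and essentially reproduces the paper's Lemmas \ref{lem:fp:exist}--\ref{lem:bias:fp} (the paper packages the projection-method Newton--Kantorovich argument as Theorem 19.1 of \cite{Kras}); note only that the uniform invertibility of $I_{B_k} - \Pi_k \mb D_{\Pi_k h}|_{B_k}$ comes directly from Assumption \ref{a:fp:bias}(a) (i.e.\ $\|\Pi_k\mb D_h - \mb D_h\| = o(1)$ in operator norm) together with invertibility of $I - \mb D_h$ via a Neumann-series perturbation bound --- compactness of $\mb D_h$ is not the operative ingredient there. The transfer of the rates from $\hat h$ to $\hat\lambda$ and $\hat\chi$ is also fine.

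The gap is in the variance step. You apply the inverse function theorem to the sample map $\hat F_k^o(\tilde v) = (\wh{\mf G}^o)^{-1}\wh{\mf T}^o\tilde v - \tilde v$ and assert that its Jacobian at $\tilde v_k$ is within $o_p(1)$ of the population sieve derivative ``by Assumption \ref{a:fp:var}.'' But Assumption \ref{a:fp:var} controls only $\sup_{\|v\|\le c}\|\wh{\mf T}^o v - \mf T^o v\|$, i.e.\ uniform closeness of the \emph{maps}; uniform convergence of functions gives no control of their derivatives. Worse, the sample map involves $v\mapsto |b^k(X_{t+1})'v|^\beta$ with $\beta\in(0,1)$, whose derivative $\beta|b^k(X_{t+1})'v|^{\beta-1}$ blows up wherever $b^k(X_{t+1})'v$ is near zero, so the sample Jacobian need not be uniformly invertible (or even uniformly bounded) on a neighborhood of $v_k$ with probability approaching one. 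This is exactly why the paper does not use the inverse function theorem here: existence of $\hat v$ near $v_k$ is obtained by a topological degree argument (rotation of compact vector fields, Lemma \ref{lem:fphat:exist}), which requires only that the uniform deviation of $\wh{\mf G}^{-1}\wh{\mf T}$ from $\mf G^{-1}\mf T$ on the boundary of a ball be smaller than $\inf\|\psi - \Pi_k\mb T\psi\|$ there, and the rate is then extracted (Lemma \ref{lem:var:fp}) from the identity
\[
(\mf I - \mf G^{-1}\mf D_{h_k})(v_k - \hat v) = \mf G^{-1}\mf T\hat v - \wh{\mf G}^{-1}\wh{\mf T}\hat v - \big(\mf G^{-1}\mf T\hat v - \mf G^{-1}\mf T v_k - \mf G^{-1}\mf D_{h_k}(\hat v - v_k)\big),
\]
which involves only the \emph{population} derivative and the Fr\'echet differentiability assumed in \ref{a:fp:exist}(c) (and which is also where the extra $o_p(\tau_k)$ term in Lemma \ref{lem:var:fp} comes from, since the linearization is anchored at $h$ rather than $h_k$). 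To repair your argument you would either need an additional assumption controlling the sample Jacobian uniformly near $v_k$, or you should replace the IFT step by a degree/Brouwer-type existence argument followed by the linearization identity above.
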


The convergence rates obtained in Theorem \ref{t:fpest} again exhibit a bias-variance tradeoff. The bias terms $\tau_k$ are decreasing in $k$, whereas the variance term $\nu_{n,k}$ is typically increasing in $k$ but decreasing in $n$. Choosing $k$ to balance the terms will lead to the best convergence rate.

For implementation, we propose the following iterative scheme based on Proposition \ref{p:nl}. Set $z_1 = \wh{\mf G}^{-1} (\frac{1}{n} \sum_{t=0}^{n-1} b^k(X_t))$, then  calculate:
\begin{align*}
 a_{k} & = \frac{z_k}{(z_k' \wh{\mf G} z_k^{\phantom \prime})^{1/2}} &
 z_{k+1} & = \wh{\mf G}^{-1} \wh{\mf T} a_k
\end{align*}
for $k \geq 1$. If the sequence $\{(a_k,z_k) : k \geq 1\}$ converges to $(\hat a,\hat z)$ (say), we then set: 
\begin{align*}
 \hat h(x) & = \hat \lambda^{\frac{1}{1-\beta}}b^k(x)' \hat a & 
 \hat \chi(x) & = b^k(x)' \hat a & 
 \hat \lambda = (\hat z' \wh{\mf G} \hat z^{\phantom \prime})^{1/2}\,.
\end{align*}
This iterative scheme proved to be a computationally efficient procedure for solving the sample fixed-point problem (\ref{e:fp:sample}) in the simulations and empirical application.

\section{Simulation evidence}\label{s:mc}

The following Monte Carlo experiment illustrates the performance of the estimators in consumption-based models with power utility and recursive preferences. The state variable is log consumption growth, i.e. $X_t = g_t$, which evolves as a Gaussian AR(1) process:
\[
 g_{t+1} - \mu = \kappa (g_t - \mu) + \sigma e_{t+1}\,,  \quad e_t \sim \mbox{ i.i.d. N$(0,1)$.}
\]
The parameters for the simulation are $\mu = 0.005$, $\kappa = 0.6$, and $\sigma = 0.01$. The data are constructed to be somewhat representative of quarterly growth in U.S. real per capita consumption of nondurables and services (for which $\kappa \approx 0.3$ and $\sigma \approx 0.005$). However, we make the consumption growth process twice as persistent  to produce more nonlinear eigenfunctions and twice as volatile   to produce a more challenging estimation problem. 

We consider a power utility design in which $m(X_t,X_{t+1}) = \beta G_{t+1}^{-\gamma}$ and a design with recursive preferences with unit EIS, whose SDF is presented in display (\ref{e:rec:sdf}). For both designs we set $\beta = 0.994$ and $\gamma = 15$. The parameterization $\beta = 0.994$ and $\gamma = 10$ is typically used in calibrations of long-run risks models; here we take $\gamma = 15$ so that the eigenfunctions and continuation value function are more nonlinear. For each design we generate 50000 samples of length 400, 800, 1600, and 3200. Results reported in this section use a Hermite polynomial basis of dimension $k=8$.  Further experimentation with other sieve dimensions showed that the results were reasonably insensitive to the dimension of the sieve space. Similar results were obtained using B-splines (see Appendix \ref{s:mc:supp} in the Online Appendix). 

We estimate $\phi$, $\phi^*$, $\rho$, $y$, and $L$ for both designs and $\chi$ and  $\lambda$ for the recursive preference design. We use the estimator $\wh{\mf G}$ in (\ref{e:ghat}) for both preference specifications. For power utility we use the estimator $\wh{\mf M}$ in (\ref{e:mhat1}). For recursive preferences we first estimate $(\lambda,\chi)$ using the method described in the previous section, then construct the estimator $\wh{\mf M}$ as in display (\ref{e:mhat2}), using:
\[
 m(X_t,X_{t+1};\hat \lambda,\hat \chi) = \frac{\beta}{\hat \lambda} G_{t+1}^{-\gamma} \frac{ \big(\hat \chi(X_{t+1}) \big)^\beta}{ \hat \chi(X_t) } 
\]
based on the first-stage estimators $(\hat \lambda,\hat \chi)$ of $(\lambda,\chi)$.
We impose the scale normalizations  $\frac{1}{n} \sum_{t=0}^{n-1} \hat \phi(X_t)^2 = 1$,  $\frac{1}{n} \sum_{t=0}^{n-1} \hat \phi(X_t)\hat\phi^*(X_t) = 1$, and  $\frac{1}{n} \sum_{t=0}^{n-1} \hat \chi(X_t)^2 = 1$.

The bias and RMSE of the estimators are presented in Tables \ref{tab:mc1} and \ref{tab:mc2}.\footnote{To calculate the RMSE of $\hat \phi$, $\hat \phi^*$, and $\hat \chi$, for each replication we calculate the $L^2$ distance between the estimators and their population counterparts, then take the average over the MC replications. To calculate the bias we take the average of the estimators across the simulations to produce $\bar \phi(x)$, $\bar \phi^*(x)$, and $\bar \chi(x)$ (say), then compute the $L^2$ distance between $\bar \phi$, $\bar \phi^*$ and $\bar \chi$ and the true $\phi$, $\phi^*$ and $\chi$. The use of the ``bias'' here is not to be confused with the bias term in the convergence rate calculations: here ``bias'' of an estimator refers to the distance between the parameter and the average of its estimates across the simulations. Bias for $\hat \rho$, $\hat y$, $\hat L$, and $\hat \lambda$ is the average of the estimates across simulations minus the true parameter values.} 
Table \ref{tab:mc1} shows that $\phi$, $\phi^*$ and $\chi$ may be estimated with small bias and RMSE using a reasonably low-dimensional sieve. Table \ref{tab:mc2} presents similar results for $\hat \rho$, $\hat y$, $\hat L$ and $\hat \lambda$. The RMSEs for $\hat \phi$ and $\hat \rho$ under recursive preferences are typically smaller than the RMSEs for $\hat \phi$ and $\hat \rho$ under power utility, even though with recursive preferences the continuation value must be first estimated nonparametrically. In contrast, the RMSE for $\hat \phi^*$ is larger under recursive preferences, which is likely due to the fact that $\phi^*$ is much more curved for that design (as evident from comparing the vertical scales Figures \ref{fig:mc:sfig2} and \ref{fig:mc:sfig4}). The results in Table \ref{tab:mc1} also show that $\chi$ may be estimated with a reasonably small degree of bias and RMSE in moderate samples. 

Figures \ref{fig:mc:sfig1}--\ref{fig:mc:sfig5} also present (pointwise) confidence intervals for $\phi$, $\phi^*$ and $\chi$ computed across simulations of different sample sizes. For each figure, the true function lies approximately in the center of the pointwise confidence intervals, and the widths of the intervals shrink noticeably as the sample size $n$ increases. Corresponding plots using a B-spline basis are presented in the Online Appendix and are very similar to Figures \ref{fig:mc:sfig1}--\ref{fig:mc:sfig5}.

\begin{table}[t]
\small
\centering 
\begin{tabular}{cc|cc|ccc} 
\hline \hline
 & & \multicolumn{2}{c|}{Power Utility} & \multicolumn{3}{c}{Recursive Preferences} \\
 & $n$ &$\hat \phi$  &$\hat \phi^*$& $\hat \phi$& $\hat \phi^*$& $\hat \chi$ \\ \hline 
\multirow{4}{*}{Bias} 	& 400 & 0.0144 & 0.0129 & 0.0027 & 0.0247 & 0.0119 \\ 
 						& 800 & 0.0115 & 0.0129 & 0.0020 & 0.0187 & 0.0090 \\
 						& 1600& 0.0084 & 0.0104 & 0.0016 & 0.0128 & 0.0062 \\ 
 						& 3200& 0.0058 & 0.0077 & 0.0014 & 0.0095 & 0.0040 \\ \hline 
\multirow{4}{*}{RMSE} 	& 400 & 0.1136 & 0.1683 & 0.0458 & 0.4068 & 0.1034 \\ 
 						& 800 & 0.0872 & 0.1060 & 0.0413 & 0.3513 & 0.0760 \\ 
 						& 1600& 0.0681 & 0.0837 & 0.0361 & 0.1763 & 0.0577 \\ 
 						& 3200& 0.0552 & 0.0677 & 0.0317 & 0.1591 & 0.0455  \\ \hline \hline
\end{tabular} \vskip 4pt
\parbox{5.0in}{\caption{\label{tab:mc1} \small
Simulation results for $\hat \phi$, $\hat \phi^*$ and $\hat \chi$ with a Hermite Polynomial sieve of dimension $k =8$.}} 
\end{table}

\begin{table}[t]
\small
\centering 
\begin{tabular}{cc|ccc|cccc}
\hline \hline
 & & \multicolumn{3}{c|}{Power Utility} & \multicolumn{4}{c}{Recursive Preferences} \\
 & $n$ & $\hat \rho$ & $\hat y$    & $\hat L$    & $\hat \rho$ & $\hat y$    & $\hat L$ & $\hat \lambda$ \\ \hline 
\multirow{4}{*}{Bias} 	& 400 & 0.0035 &-0.0029 & 0.0029 & 0.0010 &-0.0008 & 0.0034 & 0.0040 \\
						& 800 & 0.0027 &-0.0024 & 0.0024 & 0.0011 &-0.0010 & 0.0027 & 0.0022 \\
						& 1600& 0.0020 &-0.0018 & 0.0018 & 0.0010 &-0.0008 & 0.0020 & 0.0014 \\ 
						& 3200& 0.0014 &-0.0013 & 0.0012 & 0.0010 &-0.0009 & 0.0016 & 0.0009 \\ \hline 
\multirow{4}{*}{RMSE} 	& 400 & 0.0358 & 0.0338 & 0.0282 & 0.0216 & 0.0179 & 0.0420 & 0.1005 \\ 
 						& 800 & 0.0264 & 0.0251 & 0.0214 & 0.0217 & 0.0172 & 0.0299 & 0.0318 \\ 
 						& 1600& 0.0204 & 0.0192 & 0.0168 & 0.0190 & 0.0151 & 0.0227 & 0.0179 \\ 
 						& 3200& 0.0159 & 0.0149 & 0.0133 & 0.0192 & 0.0155 & 0.0204 & 0.0123 \\ \hline \hline
\end{tabular} \vskip 4pt
\parbox{5.0in}{\caption{\label{tab:mc2} \small
Simulation results for $\hat \rho$, $\hat y$, $\hat L$ and $\hat \lambda$ with a Hermite Polynomial sieve of dimension $k =8$.}} 
\end{table}

\begin{figure}[p]
\centering
\begin{subfigure}{.5\textwidth}
  \centering
  \includegraphics[width=\linewidth]{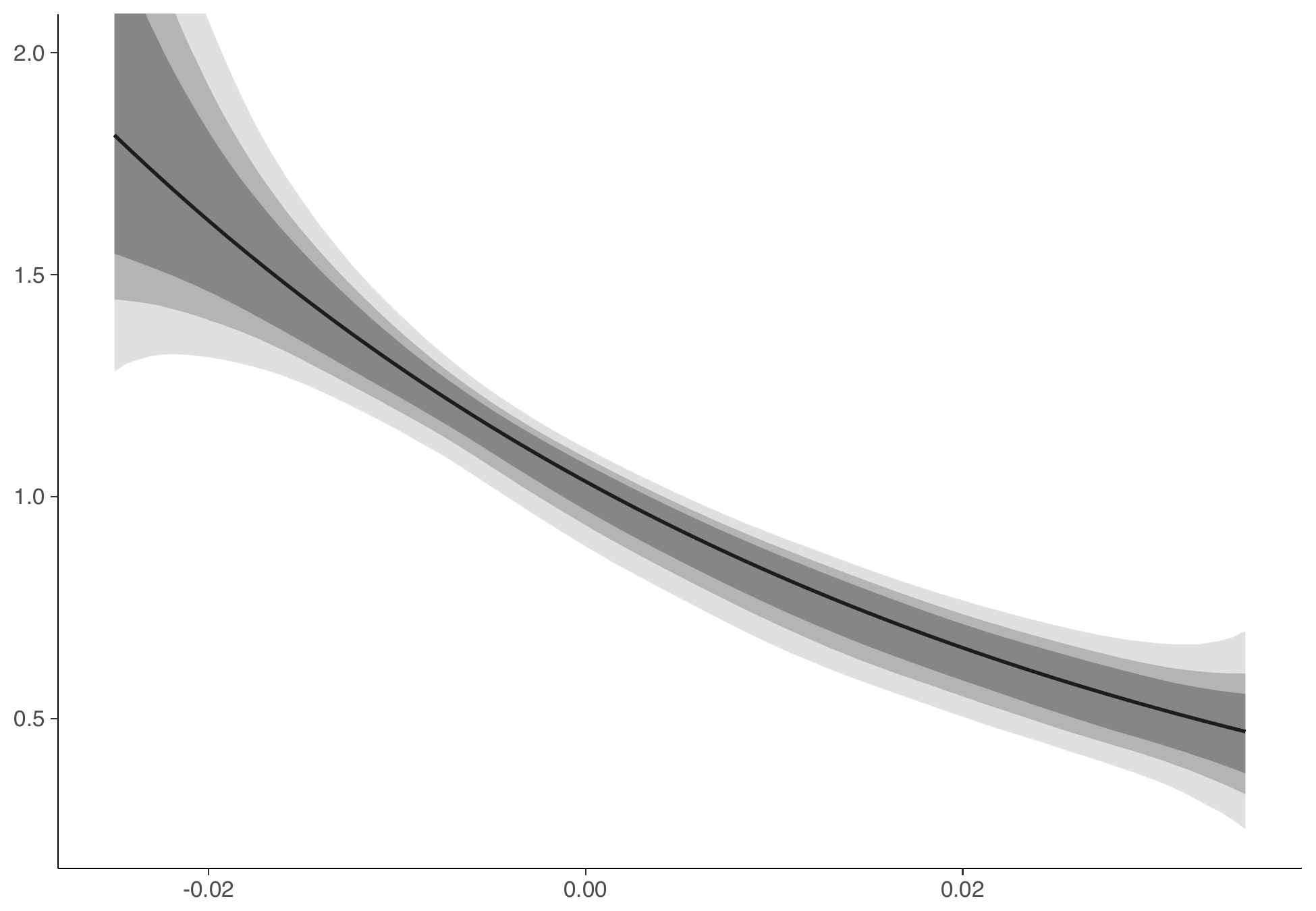}
  \caption{\footnotesize $\hat \phi(x)$ for power utility}
  \label{fig:mc:sfig1}
\end{subfigure}%
\begin{subfigure}{.5\textwidth}
  \centering
  \includegraphics[width=\linewidth]{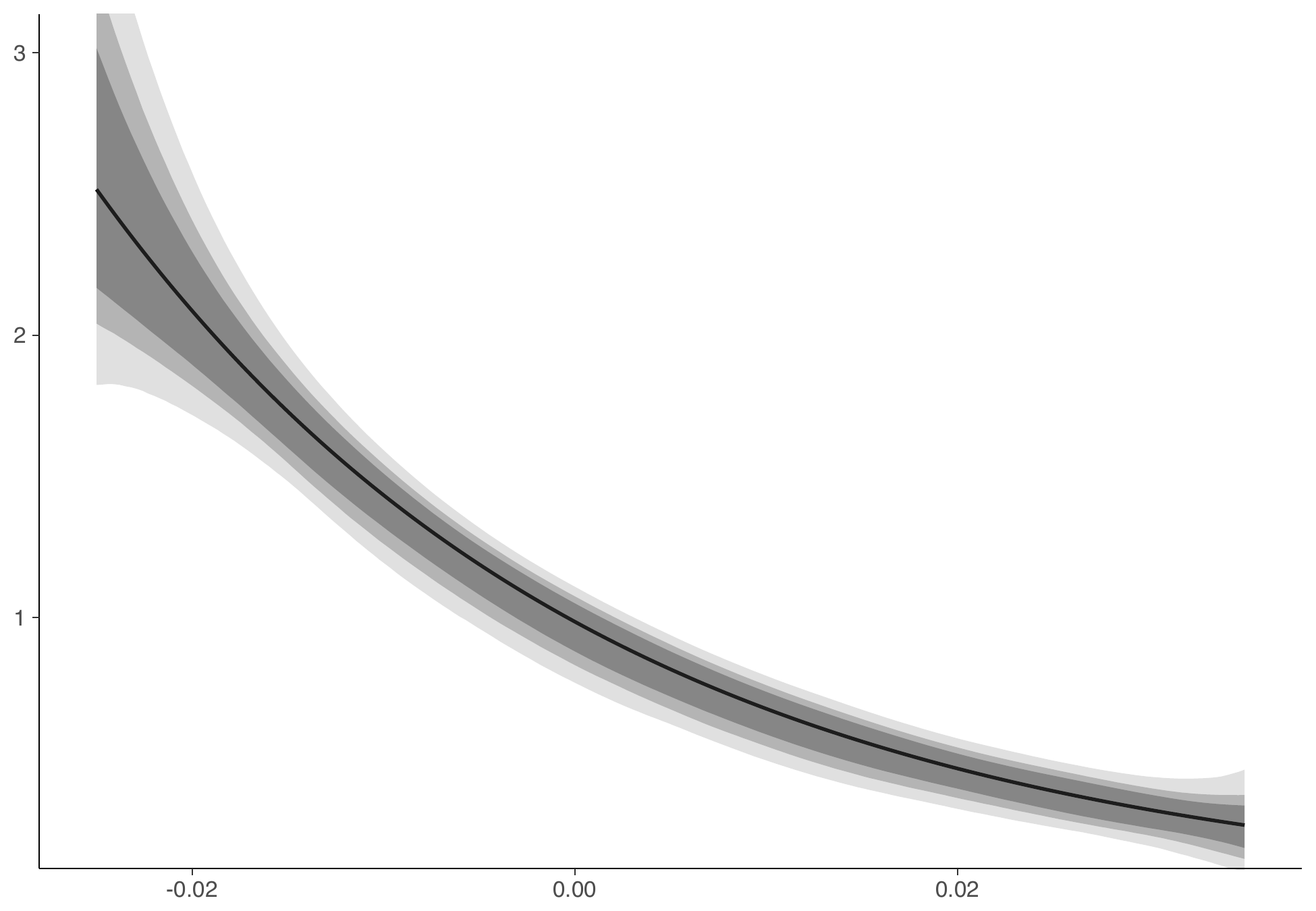}
  \caption{\footnotesize $\hat \phi^*(x)$ for power utility}
  \label{fig:mc:sfig2}
\end{subfigure} \\[10pt]
\begin{subfigure}{.5\textwidth}
  \centering
  \includegraphics[width=\linewidth]{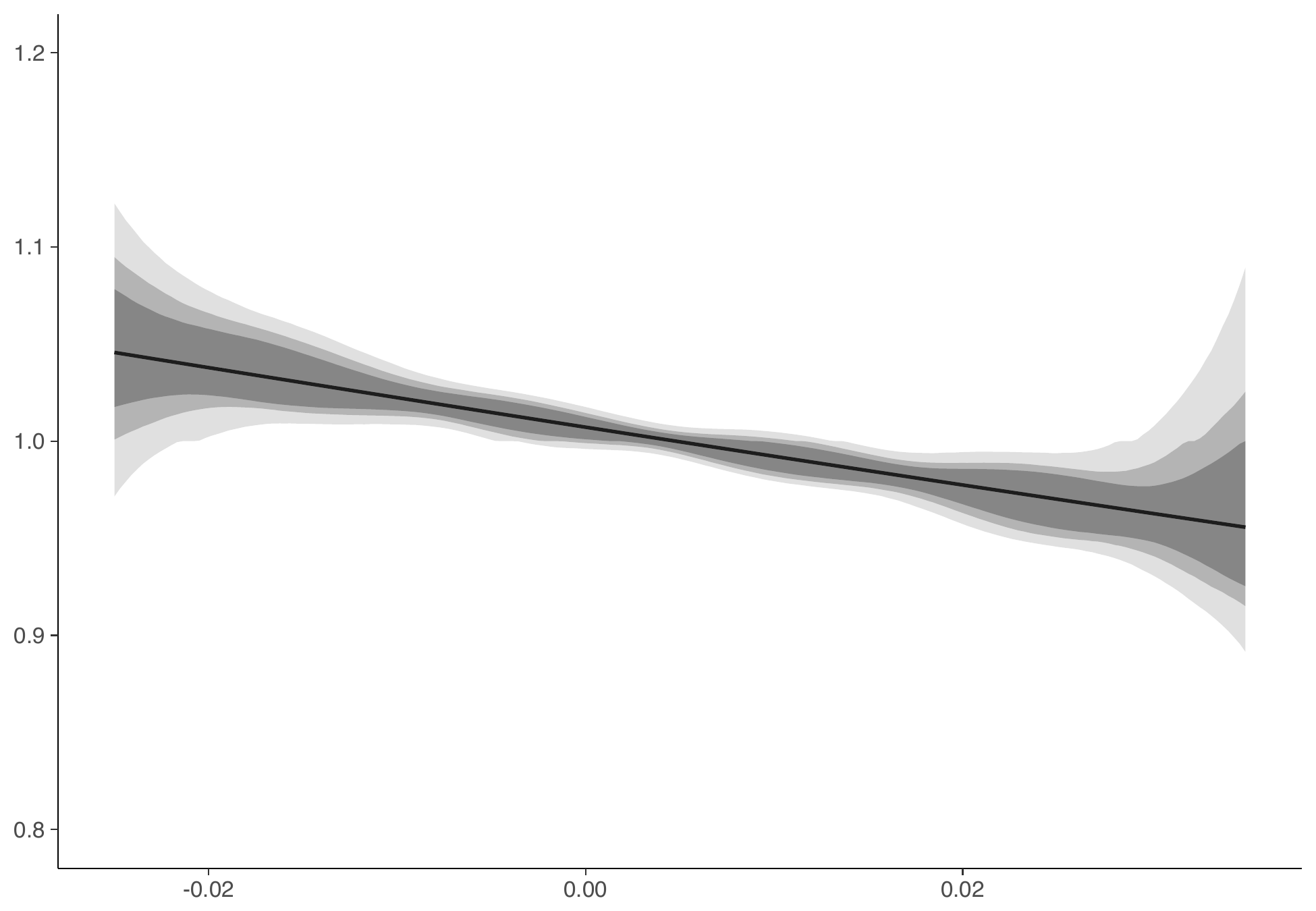}
  \caption{\footnotesize $\hat \phi(x)$ for recursive preferences}
  \label{fig:mc:sfig3}
\end{subfigure}%
\begin{subfigure}{.5\textwidth}
  \centering
  \includegraphics[width=\linewidth]{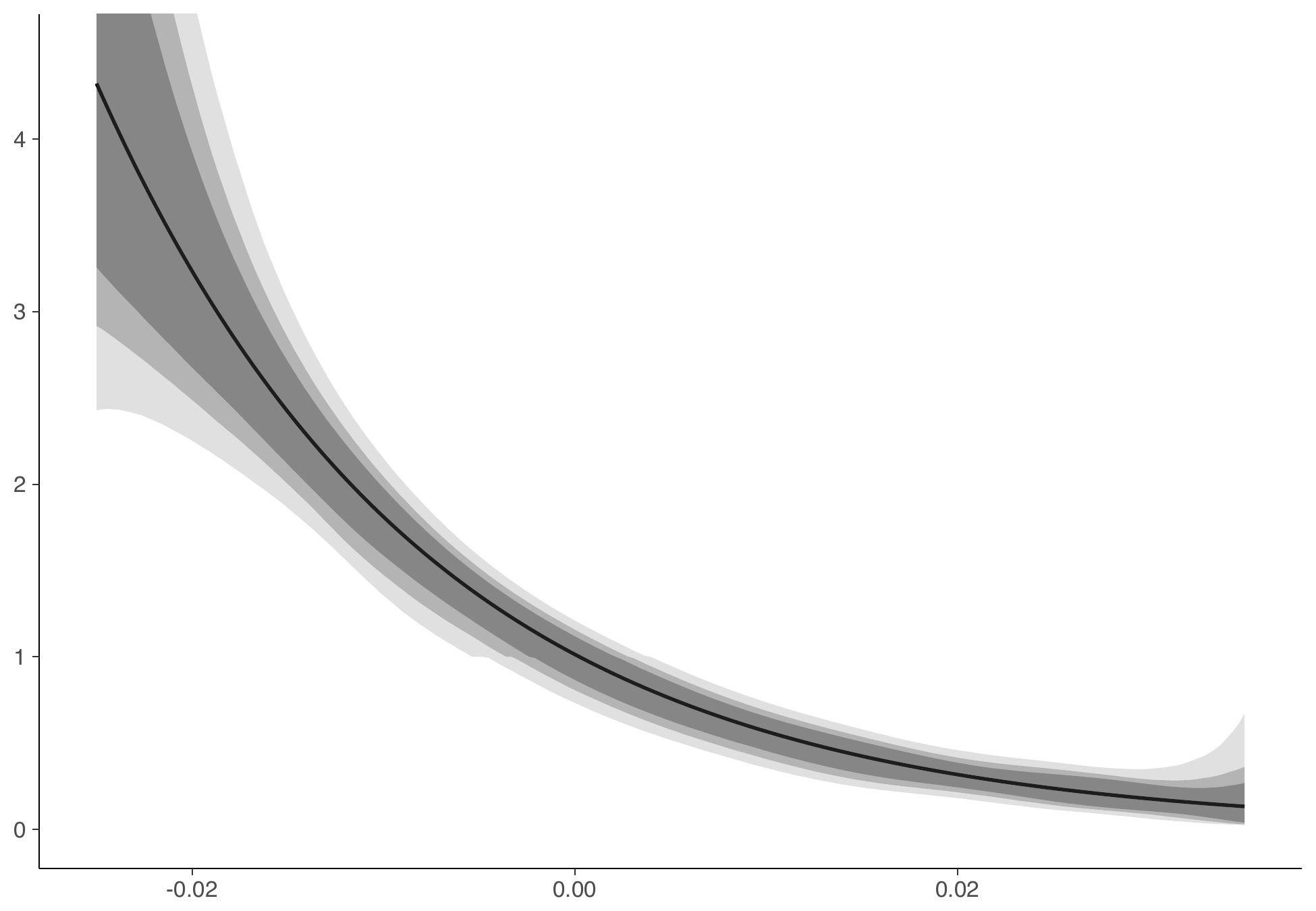}
  \caption{\footnotesize $\hat \phi^*(x)$ for recursive preferences}
  \label{fig:mc:sfig4}
\end{subfigure} \\[10pt]
\begin{subfigure}{.5\textwidth}
  \centering
  \includegraphics[width=\linewidth]{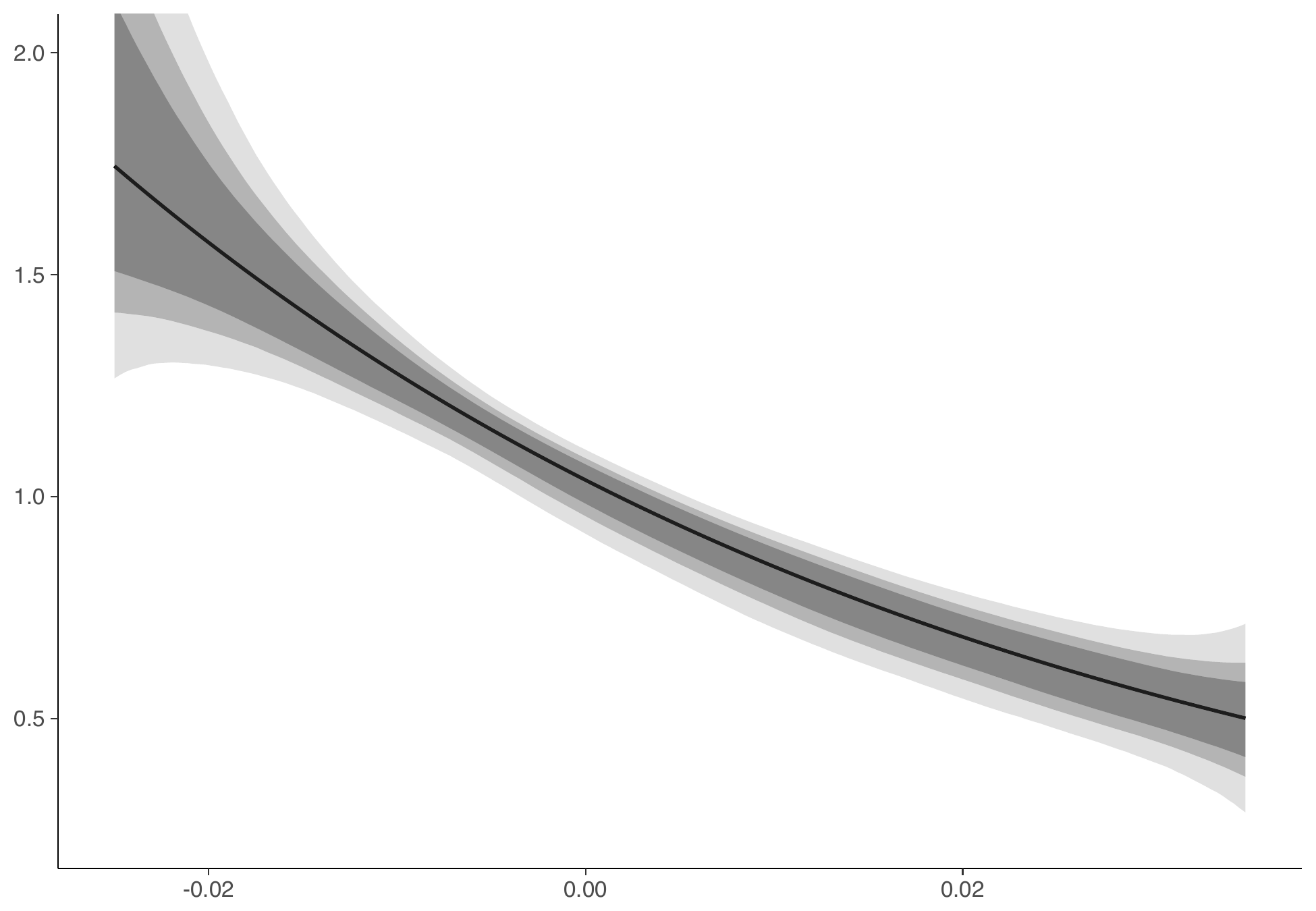}
  \caption{\footnotesize $\hat \chi(x)$ for recursive preferences}
  \label{fig:mc:sfig5}
\end{subfigure}
\begin{center}
\parbox{5.0in}{\caption{ \small Simulation results for a Hermite polynomial basis with $k=8$. Panels (a)--(d) display pointwise 90\% confidence intervals for $\phi$ and $\phi^*$ across simulations (light, medium and dark correspond to $n= 400$, $800$, and $1600$ respectively; the true $\phi$ and $\phi^*$ plotted as solid lines). Panel (e) displays results for the positive eigenfunction $\chi$ of the continuation value operator.}}\label{fig:mc}
\end{center}
\end{figure}

\section{Empirical application}\label{s:emp}

In this section we study an economy similar to that in \cite{HansenHeatonLi}. We assume a representative agent with \cite{EpsteinZin1989} recursive preferences with unit EIS and specify a two-dimensional state process in consumption and earnings growth. Our analysis may be summarized as follows. First, with discount and risk aversion parameters estimated from asset returns data ($\hat \beta \approx 0.985$ and $\hat \gamma \approx 24.5$), we show that this bivariate specification is able to generate a permanent component which implies a long-run equity premium (i.e. return on assets relative to long-term discount bonds) of approximately 2\% per quarter. Second, we document the business cycle properties of the permanent and transitory components. Third, we describe the wedge required to tilt the distribution of the state to that which is relevant for long-run pricing. Finally, we show that, unlike the linear-Gaussian case, allowing for flexible treatment of the state process can lead to different behavior of long-run yields and different signs of correlation between the permanent and transitory components for different preference parameters. This suggests that nonlinearities in dynamics can be important in explaining the long end of the yield curve.

All data are quarterly and span the period 1947:Q1 to 2016:Q1 (277 observations). Data on consumption, dividends, inflation, and population are sourced from the National Income and Product Accounts (NIPA) tables. Real per capita consumption  and dividend growth series are formed by taking seasonally adjusted consumption of nondurables and services (Table 2.3.5, lines 8 plus 13) and dividends (Table 1.12, line 16), deflating by the personal consumption implicit price deflator  (Table 2.3.4, line 1), then converting to per capita growth rates using population data (Table 2.1, line 40). The resulting state variable is $X_t = (g_t,d_t)$ where $g_t$ and $d_t$ are real per capita consumption and dividend growth in quarter $t$, respectively. Similar results are obtained replacing $d_t$ with real per capita growth in corporate earnings (using after-tax profits from line 15 of Table 1.12) and with real per capita growth in a four-quarter geometric moving average of dividends, as in  \cite{HansenHeatonLi}.

We also use data on seven asset returns, namely the returns on the six value-weighted portfolios sorted on size and book-to-market values (sourced from Kenneth French's website) and the 90-day Treasury bill rate. All asset returns series are converted to real returns using the implicit price deflator for personal consumption expenditures.

We estimate the preference parameters $(\beta,\gamma)$ and the pair $(\lambda,\chi)$ using the data on $X_t$ and the time series of seven asset returns. This falls into the setup of ``Case 2'' with $\alpha = (\beta,\gamma,\lambda,\chi)$. We estimate the parameters $(\beta,\gamma)$ using a series conditional moment estimation procedure \citep{AiChen2003}. This methodology was used recently in a similar context by \cite*{ChenFavilukisLudvigson}.\footnote{The differences between our estimator and that of \cite{ChenFavilukisLudvigson} are as follows. First, we focus on the EIS $=1$ case whereas \cite{ChenFavilukisLudvigson} treat the EIS as a free parameter. Second, we exploit the eigenfunction representation of the continuation value recursion. Third, we ``profile out'' continuation value function estimation by solving for $(\lambda,\chi)$ separately from estimating the preference parameters. Therefore, our criterion function depends only on $(\beta,\gamma)$. In contrast, \cite{ChenFavilukisLudvigson} jointly estimate the preference parameters and the continuation value function. Fourth, the continuation value is a function of the Markov state in our analysis whereas the continuation value function in \cite{ChenFavilukisLudvigson} depends on contemporaneous consumption growth and the lagged continuation value.} For each $(\beta,\gamma)$, we estimate the solution to the nonlinear eigenfunction problem, namely $(\hat \lambda_{(\beta,\gamma)},\hat \chi_{(\beta,\gamma)})$, using the procedure introduced in Section \ref{s:recursive}. Here we make explicit the dependence of $(\lambda,\chi)$  on $\beta $ and $\gamma$, since different preference parameters will correspond to different continuation value functions. We then form:
\[
 m(X_t,X_{t+1};(\beta,\gamma,\hat \lambda_{(\beta,\gamma)},\hat \chi_{(\beta,\gamma)})) = \frac{\beta}{\hat \lambda_{(\beta,\gamma)}} G_{t+1}^{-\gamma} \frac{ \big(\hat \chi_{(\beta,\gamma)}(X_{t+1}) \big)^\beta}{  \hat \chi_{(\beta,\gamma)}(X_t) }\,.
\]
Let $\mf R_{t+1}$ denote a vector of (gross) asset returns from time $t$ to $t+1$ and $\mf 1$ and $\mf 0$ denote conformable vectors of ones and zeros. As the Euler equation $\mb E[m(X_t,X_{t+1}) \mf R_{t+1} - \mf 1 | X_t] = \mf 0$ holds conditionally, we instrument the generalized residuals, namely:
\[
 m(X_t,X_{t+1};(\beta,\gamma,\hat \lambda_{(\beta,\gamma)},\hat \chi_{(\beta,\gamma)})) \mf R_{t+1} - \mf 1\,,
\]
by basis functions of $X_t$ to form a criterion function which exploits the conditional nature of the Euler equation. This leads to the criterion function:
\[
 L_n(\beta,\gamma)  =  \frac{1}{n} \sum_{t=0}^{n-1} \| l_n(X_t,\beta,\gamma)\|^2 
\]
where
\begin{align*}
 l_n(x,\beta,\gamma) & = \left( \frac{1}{n} \sum_{t=0}^{n-1} \Big( m \big(X_t,X_{t+1};(\beta,\gamma,\hat \lambda_{(\beta,\gamma)},\hat \chi_{(\beta,\gamma)})\big) \mf R_{t+1} - \mf 1 \Big)b^k(X_t) ' \right)   \wh{\mf G}^{-} b^k(x) \,.
\end{align*}
We minimize $L_n(\beta,\gamma)$ to obtain $(\hat \beta,\hat \gamma)$ and we set $\hat \alpha = (\hat \beta,\hat \gamma,\hat \lambda_{(\hat \beta,\hat \gamma)},\hat \chi_{(\hat \beta,\hat \gamma)})$. We then estimate $\rho$, $\phi$, $\phi^*$ and related quantities using the estimator $\wh{\mf M}$ in display (\ref{e:mhat2}) for this choice of $\hat \alpha$. 

To implement the procedure, we use the same basis functions for estimation of  $(\rho$, $\phi$, $\phi^*)$  and $(\lambda,\chi)$. We form fifth-order univariate Hermite polynomial bases for the $g_t$ and $d_t$ series. We then construct a tensor product basis from the univariate bases, discarding any tensor-product polynomials whose total degree is order six or higher. The resulting sparse basis has dimension 15 whereas a tensor product basis would have dimension 25. We sometimes compare with the univariate state process $X_t = g_t$ for which we use an eighth-order Hermite polynomial basis. We instrument the generalized residuals with a lower-dimensional vector of basis functions to form the criterion function $L_n$ (dimension 6 in the univariate case and 10 in the bivariate case). Similar results are obtained with different dimensional bases.

\begin{table}[t]
\centering 
\begin{tabular}{c|cc|ccc} 
\hline \hline
 & $X_t = (g_t,d_t)$ & $X_t = g_t$ & \phantom{$X_t = (g_t,d_t)$} & $X_t = (g_t,d_t)$ & \phantom{$X_t = (g_t,d_t)$} \\ \hline & & &  \\[-12pt]
$\hat \rho$   & $\underset{( 0.9733, 0.9902)}{ 0.9812}$ & $\underset{(0.9726, 0.9893)}{ 0.9817}$ & $\underset{(0.9851,0.9872)}{0.9859}$ & $\underset{(0.9850,0.9881)}{0.9861}$ & $\underset{(0.9842,0.9913)}{0.9860}$ \\
$\hat y $     & $\underset{( 0.0098, 0.0270)}{ 0.0190}$ & $\underset{(0.0107, 0.0277)}{ 0.0184}$ & $\underset{(0.0129,0.0150)}{0.0142}$ & $\underset{(0.0120,0.0151)}{0.0140}$ & $\underset{(0.0087,0.0159)}{0.0141}$ \\
$\hat L $     & $\underset{( 0.0000, 0.0381)}{ 0.0193}$ & $\underset{(0.0000, 0.0426)}{ 0.0215}$ & $\underset{(0.0090,0.0185)}{0.0128}$ & $\underset{(0.0146,0.0295)}{0.0203}$ & $\underset{(0.0198,0.0435)}{0.0297}$ \\ 
\hline & & &  \\[-12pt] 
$\hat \beta$  & $\underset{( 0.9784, 0.9926)}{ 0.9851}$ & $\underset{(0.9771, 0.9921)}{ 0.9853}$ & 0.99 & 0.99 & 0.99 \\
$\hat \gamma$ & $\underset{( 0.6850, 44.7570)}{24.4712}$ & $\underset{(0.0000, 50.4619)}{27.4838}$ & 20 & 25 & 30  \\ 
$\hat \lambda$& $\underset{( 0.8146, 0.9922)}{ 0.8999}$ & $\underset{(0.7927, 0.9888)}{ 0.8872}$ & $\underset{(0.9008,0.9324)}{0.9154}$ & $\underset{(0.8789,0.9205)}{0.8983}$ & $\underset{(0.8579,0.9111)}{0.8834}$ \\ \hline  \hline
\end{tabular} \vskip 4pt
\parbox{5.0in}{\caption{\label{tab:emp} \small
Left panel: Estimates of $\rho$, $y$ and $L$ corresponding to $(\hat \beta,\hat \gamma,\hat \lambda,\hat \chi)$. Right panel: estimates  of $\rho$, $y$ and $L$ corresponding to pre-specified $(\beta,\gamma)$ and estimated $(\hat \lambda,\hat \chi)$. 90\% bootstrap confidence intervals are in parentheses.}} 
\end{table}

Table \ref{tab:emp} presents the estimates and bootstrapped 90\% confidence intervals.\footnote{We resample the data 1000 times using the stationary bootstrap with an expected block length of six quarters. In the left panel we re-estimate $\beta$, $\gamma$, $\lambda$, $\chi$, $\rho$, $y$, and $L$ for each bootstrap replication. We discard the tiny fraction of replications in which the estimator of $(\beta,\gamma)$  failed to converge. In the right panel we fix $\beta$ and $\gamma$ and re-estimate $\lambda$, $\chi$, $\rho$, $y$, and $L$ for each bootstrap replication. } There are several notable aspects. First, both state specifications generate a permanent component whose entropy is consistent with a return premium of around 2\% per quarter relative to the long bond, which is in the ballpark of empirically reasonable estimates. Second, the estimated long-run yield of around 1.9\% per quarter is too large, which is explained by the low value of $\hat \beta$. Third, the estimated entropy of the SDF, namely $\log (\frac{1}{n} \sum_{t=0}^{n-1} m(X_t,X_{t+1};\hat \alpha)) - \frac{1}{n} \sum_{t=0}^{n-1}  \log (m(X_t,X_{t+1};\hat \alpha))$ is $0.0191$ for the bivariate specification and $0.0208$ for the univariate specification.
Therefore, the estimated horizon dependence (the difference between the entropy of the permanent component and the entropy of the SDF) is within the bound of $\pm$0.1\% per month that \cite{BackusChernovZin} argue is required to match the spread in average yields between short- and long-term  bonds. 
Finally, the estimates of $\gamma$ are quite imprecise, in agreement with previous studies (e.g., \cite{ChenFavilukisLudvigson}). The confidence intervals for $\rho$, $y$ and $L$ in the left panel of Table \ref{tab:emp} are rather wide which reflects, in large part, the uncertainty in estimating $\beta$ and $\gamma$. Experimentation with different sieve dimensions resulted in estimates of $\gamma$ between $20$ and $30$.\footnote{\cite{ChenFavilukisLudvigson} obtain $\hat \gamma \approx 60$ using aggregate consumption data and $\hat \gamma \approx 20$ using stockholder consumption data. Further, with stockholder data their estimated EIS is not significantly different from zero. This suggests that our estimates of $\gamma$ and maintained assumption of a unit EIS are empirically plausible.} The right panel of Table \ref{tab:emp} presents estimates of $\rho$, $y$ and $L$ fixing $\beta = 0.99$ and $\gamma = 20$, $25$, and $30$ ($\chi$ and $\lambda$ are still estimated nonparametrically). It is clear that the resulting confidence intervals are much narrower once the uncertainty from estimating $\beta$ and $\gamma$ is shut down.

We now turn to analyzing the time-series properties of the permanent and transitory components. The upper two panels of Figure \ref{fig:ts} display time-series plots for the bivariate state specification of the SDF $m(X_t,X_{t+1};\hat \alpha)$ and its permanent component, constructed as:
\begin{align*}
 \frac{\hat M_{t+1}^P}{\hat M_t^P} & = \hat \rho^{-1} m(X_t,X_{t+1};\hat \alpha) \frac{\hat \phi(X_{t+1})}{\hat \phi(X_t)} \,.
\end{align*}
As can be seen, the SDF and its permanent component evolve closely over time and exhibit strong counter-cyclicality. The transitory component (not plotted) is small, consistent with the literature on bounds which finds that the transitory component should be substantially smaller than the permanent component. The correlation of the permanent component series $\hat M_{t+1}^P/\hat M_t^P$ and GDP growth is approximately $-0.39$ whereas the correlation of the transitory component series $\hat M_{t+1}^T/\hat M_t^T$ and GDP growth is approximately $0.05$. The lower panels of Figure \ref{fig:ts} display time-series plots of the SDF and permanent component obtained under power utility using the same $(\hat \beta,\hat \gamma)$ as in the recursive preference specification. This panel shows that the permanent component, which is similar to that obtained under recursive preferences, is much more volatile than the SDF series. The large difference between the SDF and permanent component series under power utility is due to a very volatile transitory component, which implies a counterfactually large spread in average yields between short- and long-term bonds  (\cite{BackusChernovZin}).

\begin{figure}[t]
\centering
\includegraphics[width=0.9\linewidth]{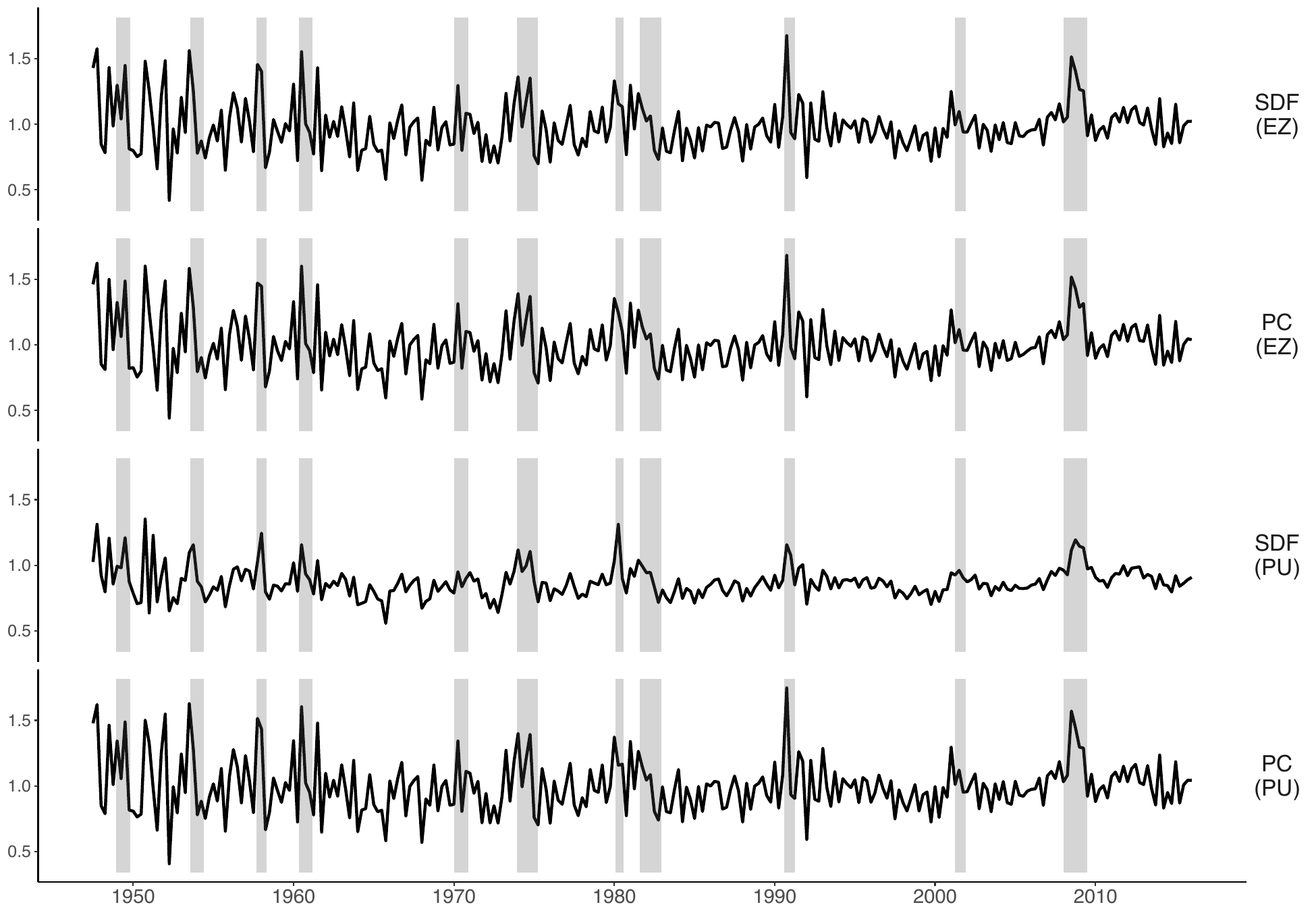}
\parbox{5.0in}{\caption{\label{fig:ts} \small Recovered time series of the SDF and and its permanent component (PC) when $X_t = (g_t,d_t)$. Upper panels are with \cite{EpsteinZin1989} recursive preferences with unit EIS (EZ), lower panels are with power utility (PU). Both use the estimated preference parameters $(\hat \beta,\hat \gamma) = (0.985,24.471)$. Shaded areas denote NBER recession periods.}}
\end{figure}

To understand further the long-run pricing implications of the estimates of $\rho$, $\phi$ and $\phi^*$, Figures \ref{fig:emp:phi_1}--\ref{fig:emp:phistar_2} plot the estimated $\phi$ and $\phi^*$ under recursive preferences for both the bivariate and univariate state specifications. It is evident from the vertical scales in Figures \ref{fig:emp:phi_1} and \ref{fig:emp:phi_2} that both estimates of $\phi$ are relatively flat, which explains the small transitory component obtained with recursive preferences. However, the estimated $\phi^*$ has a pronounced downward slope in $g$. The estimated $\phi^*$ in the bivariate specification is also downward-sloping in $d$ at low levels of consumption growth. Recall that Proposition \ref{p:id} shows that $\phi\phi^*$ is the Radon-Nikodym derivative of  $\wt Q$ with respect to $Q$. Figures \ref{fig:emp:rn_1}--\ref{fig:emp:rn_2} plot the estimated change of measure for the two specifications of the state process. As the estimate of $\phi$ is relatively flat, the estimated change of measure is characterized largely by $\hat \phi^*$. In the bivariate case, the distribution $\wt Q$ assigns relatively more mass to regions of the state space in which there is low dividend and consumption growth than the stationary distribution $Q$, and relatively less mass to regions with high consumption growth.

\begin{figure}[p]
\centering
\begin{subfigure}{.5\textwidth}
  \centering
  \includegraphics[width=\linewidth]{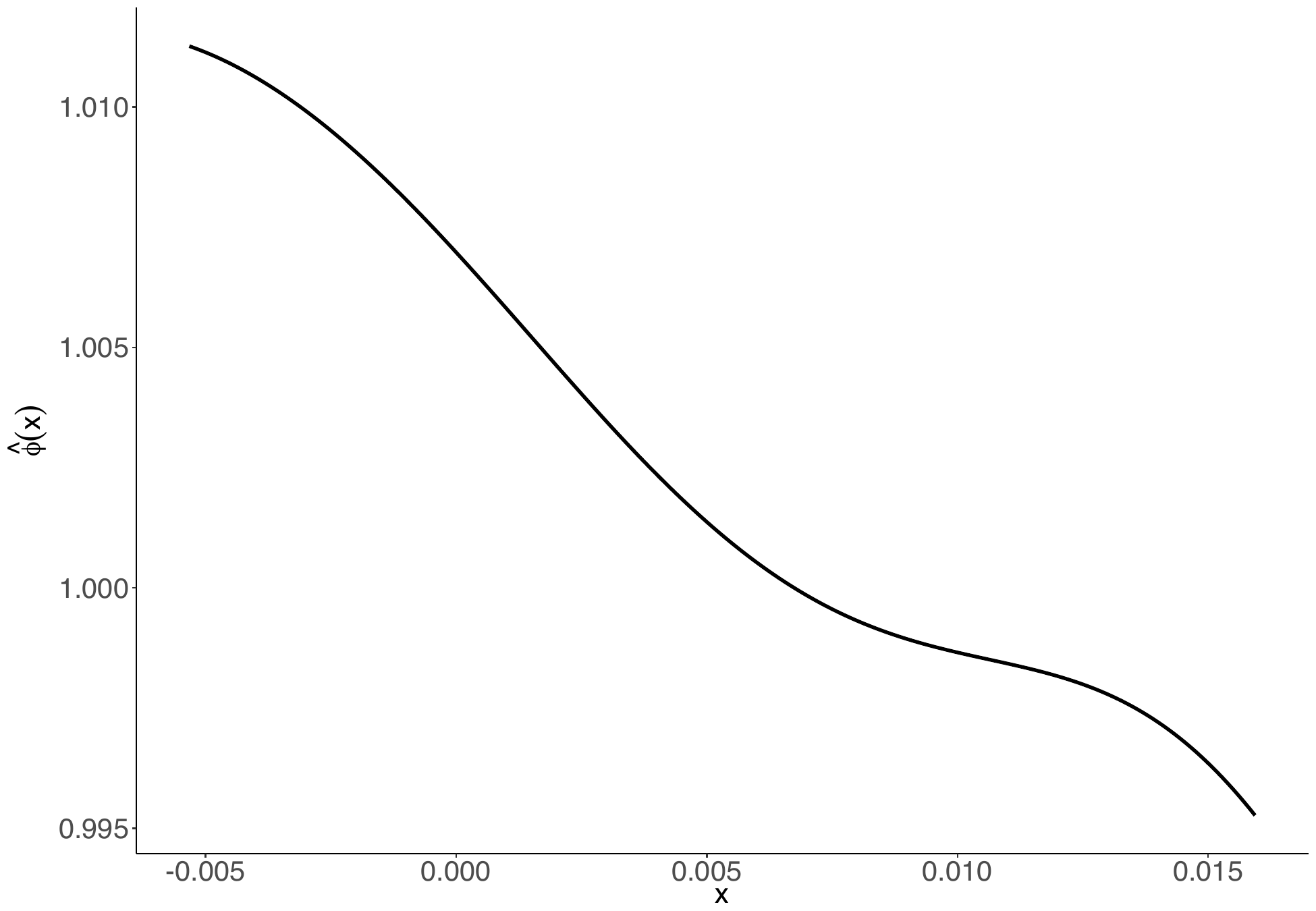}
  \caption{\footnotesize Plot of $\hat \phi(x)$ for $X_t = g_t$}
  \label{fig:emp:phi_1}
\end{subfigure}%
\begin{subfigure}{.5\textwidth}
  \centering
  \includegraphics[width=\linewidth]{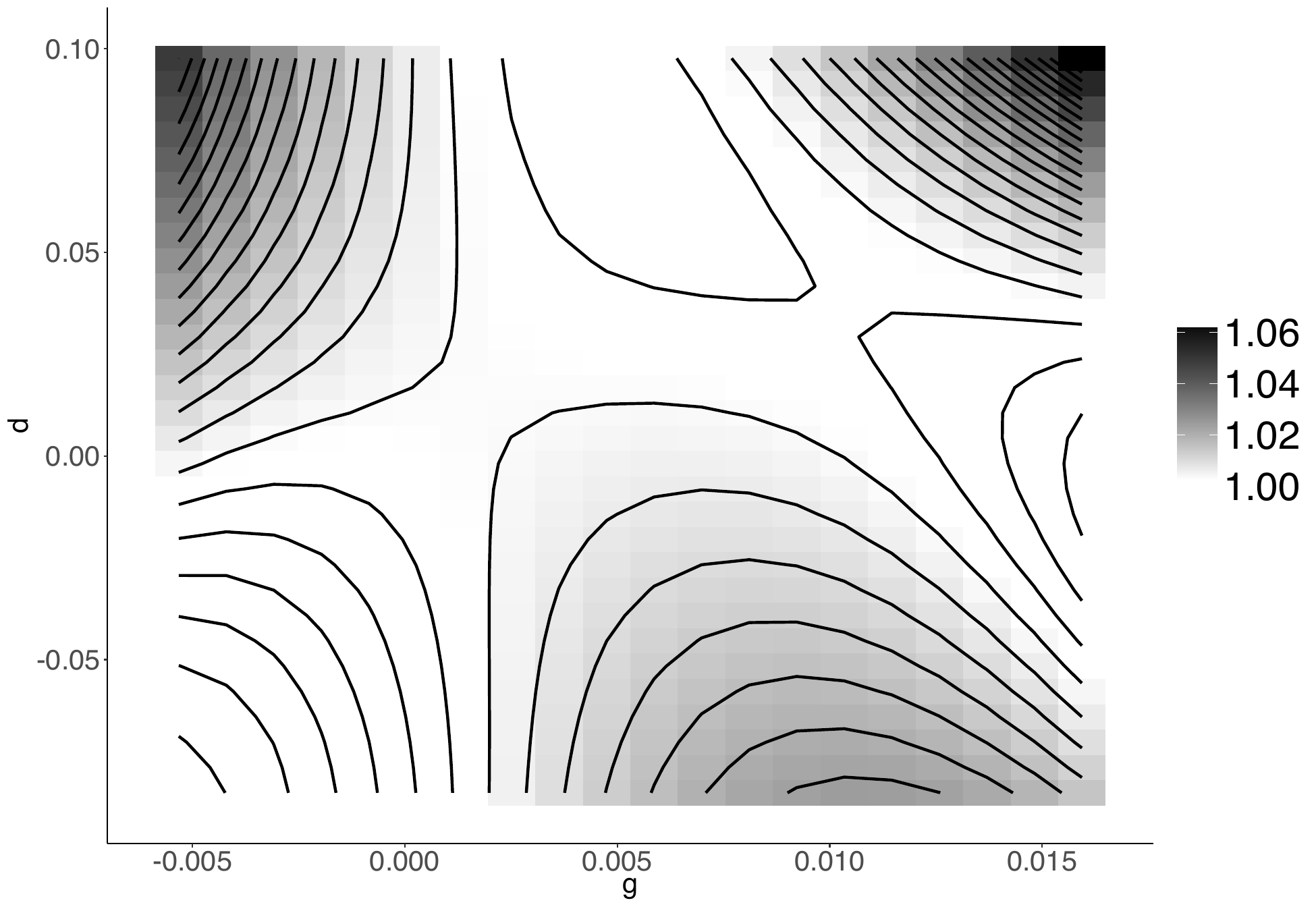}
  \caption{\footnotesize Contour plot of $\hat \phi(x)$ for $X_t = (g_t,d_t)$}
  \label{fig:emp:phi_2}
\end{subfigure} \\[10pt]
\begin{subfigure}{.5\textwidth}
  \centering
  \includegraphics[width=\linewidth]{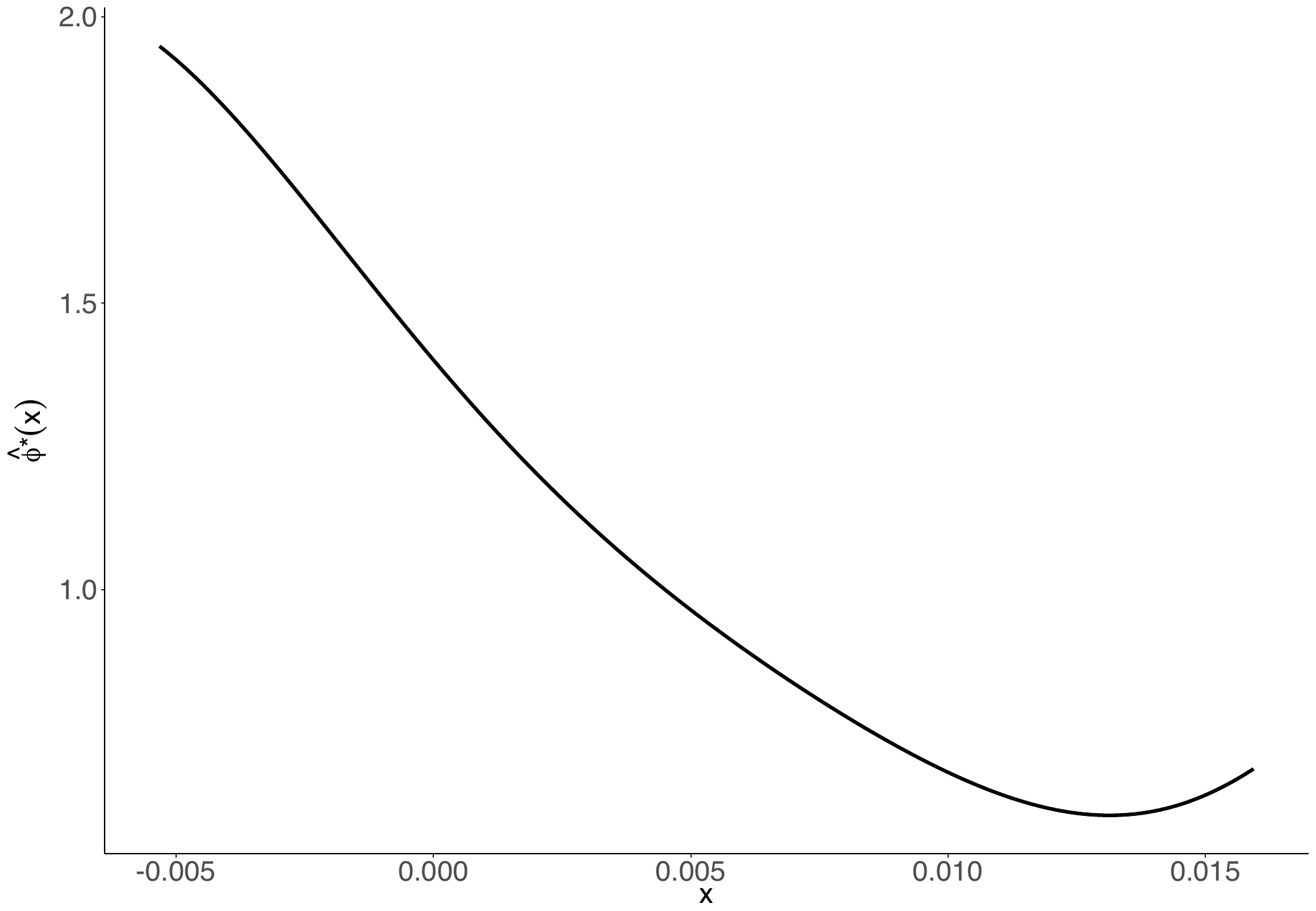}
  \caption{\footnotesize Plot of $\hat \phi^*(x)$ for $X_t = g_t$}
  \label{fig:emp:phistar_1}
\end{subfigure}%
\begin{subfigure}{.5\textwidth}
  \centering
  \includegraphics[width=\linewidth]{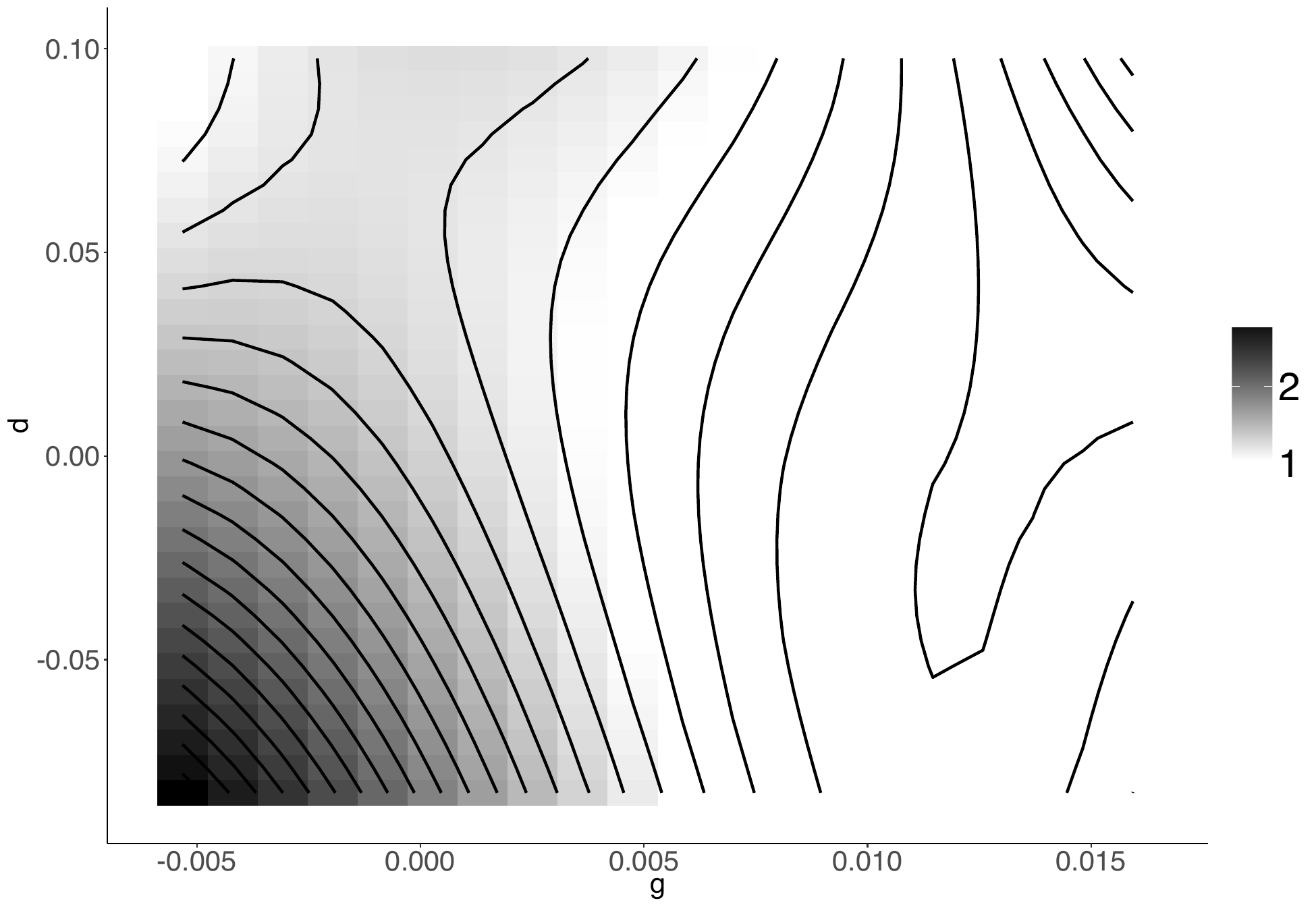}
  \caption{\footnotesize Contour plot of $\hat \phi^*(x)$ for $X_t = (g_t,d_t)$}
  \label{fig:emp:phistar_2}
\end{subfigure} \\[10pt]
\begin{subfigure}{.5\textwidth}
  \centering
  \includegraphics[width=\linewidth]{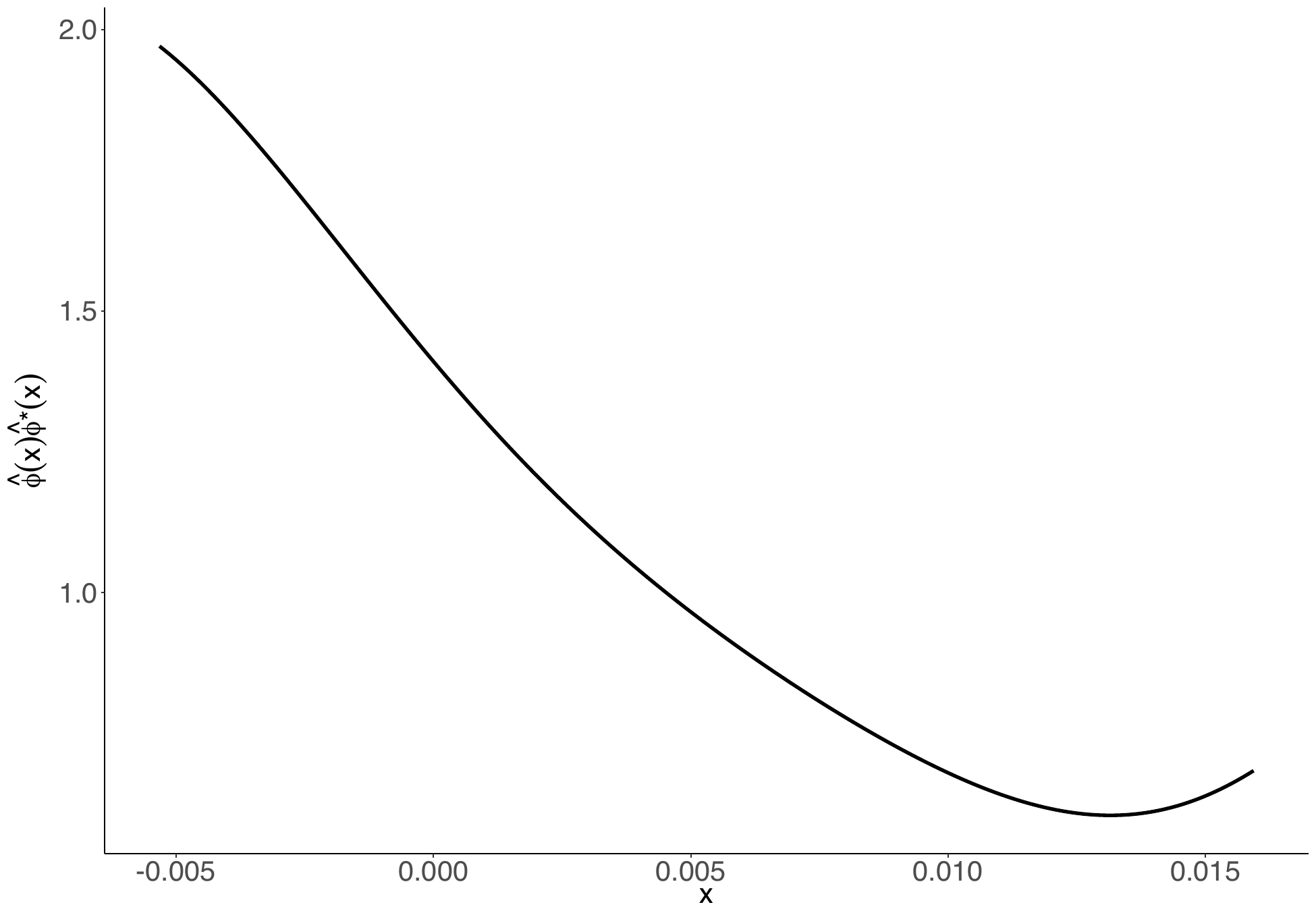}
  \caption{\footnotesize Plot of $\hat \phi(x)\hat \phi^*(x)$ for $X_t = g_t$}
  \label{fig:emp:rn_1}
\end{subfigure}%
\begin{subfigure}{.5\textwidth}
  \centering
  \includegraphics[width=\linewidth]{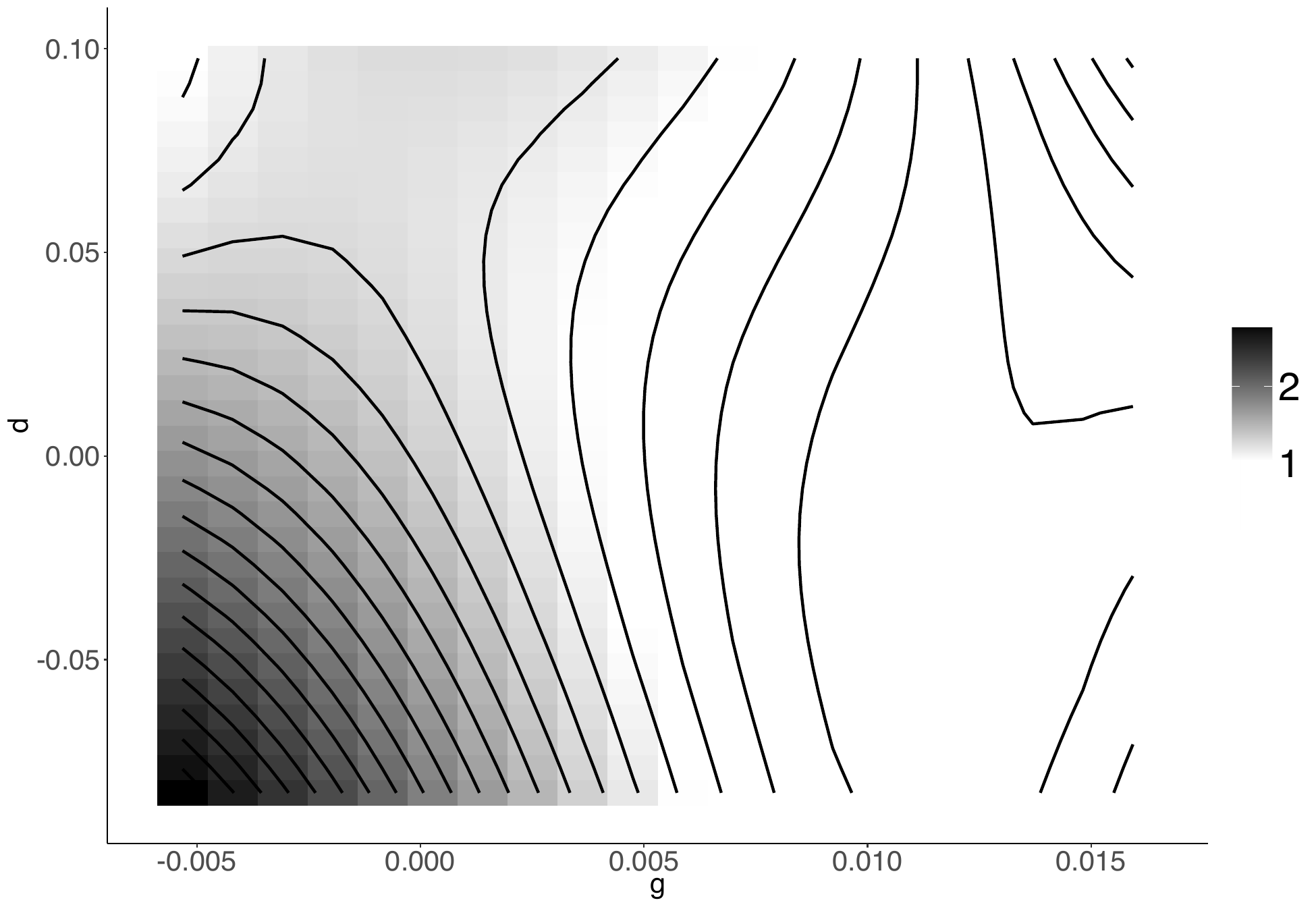}
  \caption{\footnotesize Contour plot of $\hat \phi(x) \hat \phi^*(x)$ for $X_t = (g_t,d_t)$}
  \label{fig:emp:rn_2}
\end{subfigure}
\begin{center}
\parbox{5.0in}{\caption{\label{fig:emp} \small Plots of $\hat \phi$ (upper panels), $\hat \phi^*$ (middle panels) and the estimated change of measure  $\hat \phi(x) \hat \phi^*(x)$ between the stationary distribution $Q$ and the distribution $\wt Q$ corresponding to $\wt{\mb E}$ (lower panels) under recursive preferences using the estimated preference parameters in the left panel of Table \ref{tab:emp}.}}
\end{center}
\end{figure}

Finally, we investigate the role of nonlinearities and non-Gaussianity in explaining certain features of the long-end of the term structure. Figure \ref{fig:emp:yield} presents nonparametric estimates of (a) the (quarterly) long-run yield and (b) the correlation between the logarithm of the permanent and transitory components, namely $\hat m_{t+1}^P = \log (\hat M^P_{t+1}/\hat M^P_t)$ and $\hat m^T_{t+1} = \log (\hat M^T_{t+1}/\hat M^T_t)$, recovered from the data on $X_t = (g_t,d_t)$ with $\beta = 0.994$ and $\gamma$ increased from $\gamma = 1$ to $\gamma = 35$. The nonparametric estimates are presented alongside estimates for two parametric specifications of the state process. The first assumes $X_t = (g_t,d_t)$ is a Gaussian VAR(1), i.e. $X_t - \mu = A (X_t - \mu) + e_{t+1}$ where the $e_{t+1}$ are i.i.d. $N(0,\Sigma)$. The second is a Gaussian AR(1) for log consumption growth with stochastic volatility:
\[
 g_{t+1} - \mu  = \kappa ( g_{t} - \mu ) + \sqrt{v_t} e_{t+1}\,,  \quad e_{t+1} \sim \mbox{ i.i.d. N$(0,1)$}
\]
where  $\{v_t\}$ is a first-order autoregressive gamma process (a discrete-time version of the Feller square-root process; see \cite{GourierouxJasiak2006}) so the state vector is $X_t = (g_t,v_t)$. We refer to the second specification as SV-AR(1). The long-run yield and the correlation  between $m^P_{t+1} = \log(M_{t+1}^P/M_t^P)$ and $m^T_{t+1} = \log(M_{t+1}^T/M_t^T)$ were obtained analytically as functions of $\beta$, $\gamma$, and the estimates of the VAR(1) and SV-AR(1) parameters.\footnote{The VAR(1) parameters are estimated by OLS. The SV-AR(1)  parameters are estimated via indirect inference using an AR(1) with GARCH(1,1) errors as an auxiliary model. Derivations and further details on estimation are in the supplementary material.}

\begin{figure}[t]
\centering
\begin{subfigure}{.5\textwidth}
  \centering
  \includegraphics[width=\linewidth]{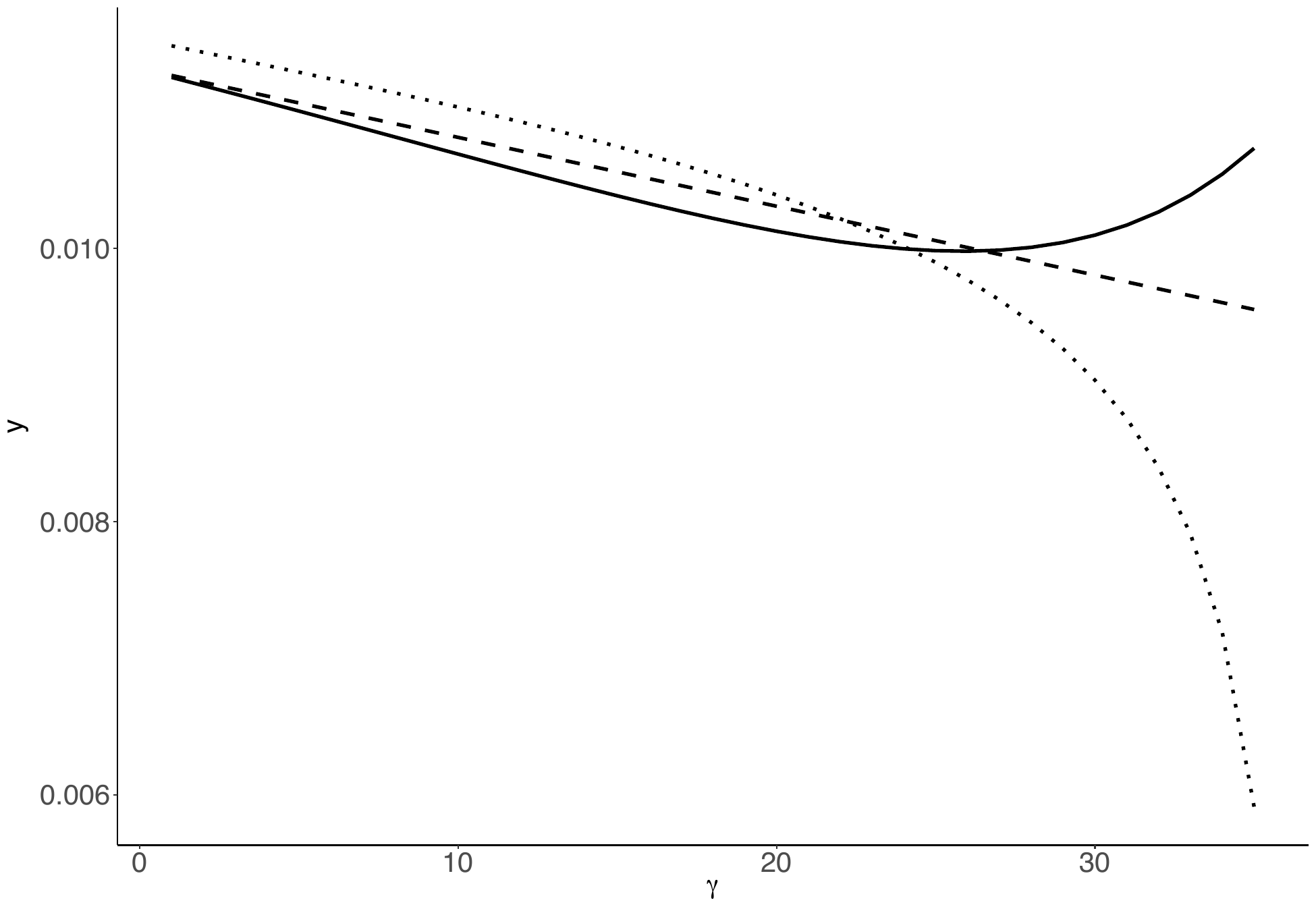}
  \caption{\footnotesize Long-run yield}
  \label{fig:emp:yld}
\end{subfigure}%
\begin{subfigure}{.5\textwidth}
  \centering
  \includegraphics[width=\linewidth]{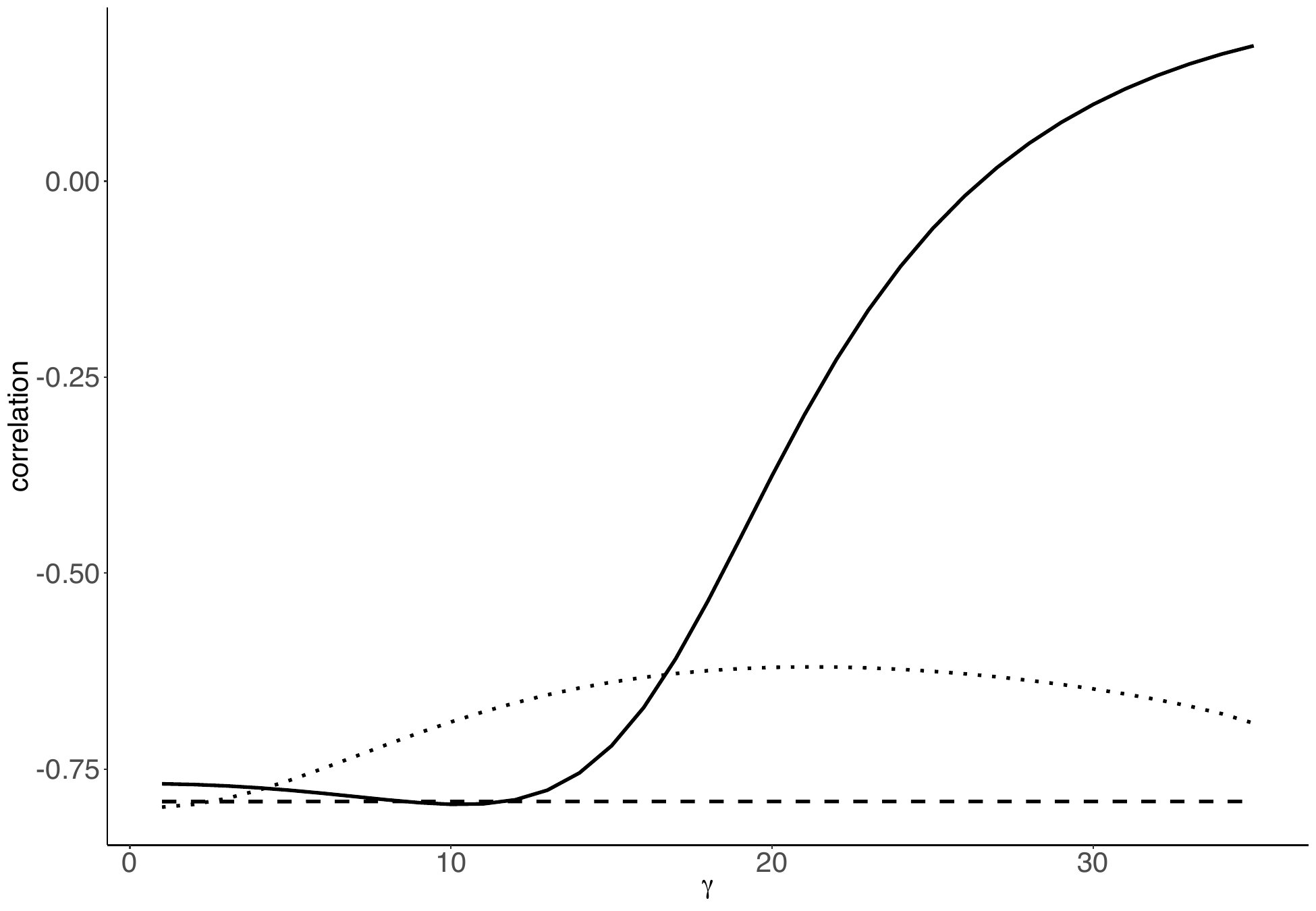}
  \caption{\footnotesize Correlation between $\hat m^P_{t+1}$ and $\hat m^T_{t+1}$}
  \label{fig:emp:cor}
\end{subfigure}
\begin{center}
\parbox{5.0in}{\caption{\label{fig:emp:yield} \small Solid lines: nonparametric estimates of the quarterly long-run yield and correlation between $\hat m_{t+1}^P$ and $\hat m^T_{t+1}$ under recursive preferences with $\beta = 0.994$ for different $\gamma$ with $X_t = (g_t,d_t)$. Also displayed: parametric estimates obtained from fitting a Gaussian VAR(1) to $X_t = (g_t,d_t)$ (dashed lines) and fitting a SV-AR(1) to $g_t$ (dotted lines).}}
\end{center}
\end{figure}

Figure \ref{fig:emp:yld} shows that the nonparametric estimates of the long-run yield are non-monotontic, whereas the parametric estimates are monotonically decreasing. This non-monotonicity is not apparent in the nonparametric estimates using $X_t = g_t$. It is also clear that the nonparametric estimates of the long-run yield are much larger for high $\gamma$ than the SV-AR(1) model.

Figure \ref{fig:emp:cor} displays the sample correlation of the nonparametric estimates $\hat m^P_{t+1}$ and $\hat m^T_{t+1}$ of the log permanent and transitory component series for different values of $\gamma$. This is compared with the correlation of the log permanent and transitory components $m_{t+1}^P$ and $m_{t+1}^T$ for the two parametric state process specifications. The estimated correlation of the nonparametric estimates is negative for low to moderate values of $\gamma$, but becomes positive for larger values of $\gamma$. Similar results are obtained using lower- and higher-dimensional bases. In contrast, the correlations for the parametric state process specifications are around the same level as the nonparametric estimates for low values of $\gamma$ but remain negative for larger values of $\gamma$. A recent literature has emphasized the role of positive dependence between the permanent and transitory components in explaining excess returns of long-term bonds \citep{BakshiChabiYo,BC-YG,BC-YG:rec}. Positive dependence also features in models in which the term structure of risk prices is downward sloping (see the example presented in section 7.2 in \cite{BorovickaHansen}). However, positive dependence is known to be difficult to generate via conventional preference specifications in workhorse models with exponentially-affine dynamics. Although the correlation is estimated imprecisely for large values of $\gamma$, this finding at least suggests that nonlinearities in state dynamics may have a role to play in explaining salient features of the long end of the yield curve.

\section{Conclusion} \label{s:conc}

This paper introduces an empirical framework to analyze the permanent and transitory components of SDF processes in the long-run factorization of \cite{AJ}, \cite{HS2009}, and \cite{Hansen2012}. We show how to estimate nonparametrically the solution to the Perron-Frobenius eigenfunction problem of \cite{HS2009} from time-series data on state variables and a SDF process. Our empirical framework allows researchers to (i) recover the time series of the estimated permanent and transitory components and investigate their properties and (ii) estimate the yield and the change of measure which characterize pricing over long investment horizons. This represents a useful contribution relative to existing empirical works which have established bounds on various moments of the permanent and transitory components as functions of asset returns, but have not extracted the components directly from data. The methodology is nonparametric in that it does not impose tight parametric restrictions on the dynamics of the state variables or the joint distribution of the state variables and the SDF process.

The main theoretical contributions of the paper are as follows. First, we establish consistency and convergence rates of the Perron-Frobenius eigenfunction estimators. Second, we establish asymptotic normality and some efficiency properties of the eigenvalue estimator and estimators of related functionals. Third, we introduce nonparametric estimators of the continuation value function in a class of models with recursive preferences by reinterpreting the value function recursion as a nonlinear Perron-Frobenius problem and we establish consistency and convergence rates of the value function estimators.

The econometric methodology may be extended and applied in several different ways. First, the methodology can be applied to study more general multiplicative functional processes such as the valuation and stochastic growth processes in \cite{HansenHeatonLi}, \cite{HS2009}, and \cite{Hansen2012}. Second, the methodology can be applied to models with latent state variables. The main consistency and convergence rate results (Theorems \ref{t:rate} and \ref{t:fpest}) are sufficiently general that they apply equally to such cases. Finally, our analysis was conducted within the context of structural models in which the SDF process was linked tightly to preferences. A further extension would be to apply the methodology to SDF processes which are extracted flexibly from panels of asset returns data.

\appendix

\section{Additional results on estimation}\label{ax:est}

\subsection{Bias and variance calculations for Theorem \ref{t:rate}} \label{ax:est:cgce}

The results in this subsection draw heavily on arguments from \cite{Gobetetal}.
The first result shows that the approximate solutions $\rho_k$, $\phi_k^{\phantom *}$ and $\phi_k^*$ from the eigenvector problem (\ref{e:gev}) are well defined and unique (i.e. up to sign and scale normalization) for all $k$ sufficiently large. 

\begin{lemma} \label{lem:exist}
Let Assumptions \ref{a:id} and \ref{a:bias} hold. Then: there exists $K \in \mb N$ such that for all $k \geq K$, the maximum eigenvalue $\rho_k$ of the eigenvector problem (\ref{e:gev}) is real and simple, and hence $(\mf M, \mf G)$ has unique right- and left-eigenvectors $c_k^{\phantom *}$ and $c_k^*$ corresponding to $\rho_k$.
\end{lemma}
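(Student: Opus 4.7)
The strategy is to apply spectral perturbation theory to the norm convergence $\|\Pi_k \mb M - \mb M\| \to 0$ provided by Assumption \ref{a:bias}. Observe first that this hypothesis forces $\mb M$ to be compact, as it is the operator-norm limit of the finite-rank operators $\Pi_k \mb M$ (whose range is $B_k$). Proposition \ref{p:id}(c) then gives that $\rho$ is a simple isolated eigenvalue, and compactness together with the ``largest eigenvalue'' property yields a number $\delta_0 > 0$ such that $\sigma(\mb M) \setminus \{\rho\} \subseteq \{\lambda \in \mb C : |\lambda - \rho| \geq \delta_0, \ |\lambda| \leq \rho\}$.

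Fix $r \in (0,\delta_0)$ and let $\Gamma = \{\lambda : |\lambda - \rho| = r\}$, which is symmetric about the real axis and lies in the resolvent set of $\mb M$. The resolvent $(\lambda I - \mb M)^{-1}$ is uniformly bounded on $\Gamma$, so a standard Neumann-series perturbation argument shows that for all $k$ large enough $\Gamma \subset \varrho(\Pi_k \mb M)$ and $\sup_{\lambda \in \Gamma} \|(\lambda I - \Pi_k \mb M)^{-1} - (\lambda I - \mb M)^{-1}\| \to 0$. Hence the Riesz spectral projections
\[
 P = \frac{1}{2\pi i} \oint_\Gamma (\lambda I - \mb M)^{-1} \, d\lambda, \qquad P_k = \frac{1}{2\pi i} \oint_\Gamma (\lambda I - \Pi_k \mb M)^{-1} \, d\lambda
\]
satisfy $\|P_k - P\| \to 0$. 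Simplicity of $\rho$ gives $\mr{rank}(P) = 1$; because the rank of a projection is stable under sufficiently small norm perturbations, $\mr{rank}(P_k) = 1$ for $k$ large. Therefore $\Pi_k \mb M$ has exactly one eigenvalue $\tilde \rho_k$ inside $\Gamma$, with algebraic multiplicity one.

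Restricted to its invariant subspace $B_k$, the operator $\Pi_k \mb M$ is represented in the basis $b^k$ by the real matrix $\mf G^{-1} \mf M$, so non-real eigenvalues must come in complex-conjugate pairs. Since $\Gamma$ is symmetric about the real axis and contains only the single eigenvalue $\tilde \rho_k$, we must have $\tilde \rho_k = \overline{\tilde \rho_k} \in \mb R$. By upper semi-continuity of the spectrum under norm convergence, every other eigenvalue of $\Pi_k \mb M$ remains within arbitrarily small distance of $\sigma(\mb M) \setminus \{\rho\}$ for $k$ large, and in particular has real part bounded above by some $\rho - \delta' < \rho$. Since $\tilde \rho_k \to \rho$, it follows that $\tilde \rho_k$ is the largest real eigenvalue of $\mf G^{-1} \mf M$, and hence coincides with $\rho_k$. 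Algebraic simplicity of $\rho_k$ in turn forces the right- and left-eigenspaces of $\mf G^{-1} \mf M$ at $\rho_k$ to be one-dimensional, giving uniqueness of $c_k$ and $c_k^*$ up to scale.

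\textbf{Main obstacle.} The delicate point is that norm convergence of operators does not in general imply convergence of spectra; this is overcome by localizing around the isolated eigenvalue $\rho$ via the Riesz contour integral and exploiting rank stability of projections. A secondary subtlety is ruling out real eigenvalues of $\mf G^{-1} \mf M$ that might accumulate from real parts of complex eigenvalues of $\mb M$ on the circle $|\lambda| = \rho$; the isolation of $\rho$ combined with compactness of $\mb M$ is exactly what eliminates this possibility.
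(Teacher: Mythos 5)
Your proposal is correct and follows essentially the same route as the paper: isolate the simple eigenvalue $\rho$ with a contour $\Gamma$, use $\|\Pi_k \mb M - \mb M\| \to 0$ together with a uniform resolvent bound on $\Gamma$ to conclude that $\Pi_k \mb M$ has exactly one simple eigenvalue inside $\Gamma$ (the paper cites Theorem IV.3.18 of Kato where you re-derive the Neumann-series/rank-stability argument), and deduce reality from the conjugate-pair symmetry of the real matrix $\mf G^{-1}\mf M$. Your explicit verification that the eigenvalue inside $\Gamma$ is in fact the \emph{largest real} generalized eigenvalue (via upper semi-continuity of the spectrum and the geometry of $\{|\lambda|\le\rho\}\cap\{|\lambda-\rho|\ge\delta_0\}$) is a point the paper's proof leaves implicit, so no gap.
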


\begin{lemma} \label{lem:bias}
Let Assumptions \ref{a:id} and \ref{a:bias} hold. Then:
\begin{enumerate} 
\item[(a)] $|\rho_k - \rho| = O(\delta_k^{\phantom *})$ 
\item[(b)] $\|\phi_k - \phi\| = O(\delta_k^{\phantom *})$
\item[(c)] $\|\phi_k^* - \phi^*\| = O(\delta_k^*)$
\end{enumerate}
where $\delta_k^{\phantom *}$ and $\delta_k^*$ are defined in display (\ref{e:deltas}). The rates should be understood to hold under the scale normalizations $\|\phi\| = 1$, $\| \phi_k\| = 1$, $\| \phi^*_{\phantom k} \| = 1$, and $\| \phi^*_k \| = 1$ and the sign normalizations $\langle \phi_k,\phi \rangle \geq 0$ and $\langle \phi_k^*,\phi^* \rangle \geq 0$.
\end{lemma}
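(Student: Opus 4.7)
The plan is to apply standard spectral perturbation theory for compact operators to the sequence $\mb{M}_k := \Pi_k \mb{M}$, using Riesz spectral projections. By Proposition~\ref{p:id} (under Assumption~\ref{a:id}), $\rho$ is a simple, isolated eigenvalue of the compact operator $\mb{M}$ with one-dimensional eigenspace $\mathrm{span}(\phi)$. Assumption~\ref{a:bias} gives $\|\mb{M}_k - \mb{M}\| \to 0$, and Lemma~\ref{lem:exist} ensures that for large $k$, $\rho_k$ is simple and $\phi_k^{\phantom{*}}$, $\phi_k^*$ are unique. Fix a small positively oriented circle $\Gamma \subset \mathbb{C}$ around $\rho$ excluding the rest of $\sigma(\mb{M})$; for $k$ large $\Gamma$ also isolates $\rho_k$ from $\sigma(\mb{M}_k)\setminus\{\rho_k\}$, and the rank-one Riesz projections
\[
 P = \tfrac{1}{2\pi i}\oint_\Gamma (z - \mb{M})^{-1}\,dz, \qquad P_k = \tfrac{1}{2\pi i}\oint_\Gamma (z - \mb{M}_k)^{-1}\,dz
\]
project onto $\mathrm{span}(\phi)$ and $\mathrm{span}(\phi_k)$ respectively.

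The key refinement over the generic $O(\|\mb{M}_k - \mb{M}\|)$ bound is the observation that $\mb{M}\phi = \rho\phi$ implies
\[
 \|(\mb{M}_k - \mb{M})\phi\| = \|\Pi_k(\rho\phi) - \rho\phi\| = \rho\,\delta_k.
\]
Applying the resolvent identity to $\phi$ and using $(z-\mb{M})^{-1}\phi = (z-\rho)^{-1}\phi$ gives
\[
 (z - \mb{M}_k)^{-1}\phi - (z - \mb{M})^{-1}\phi = (z-\rho)^{-1}(z-\mb{M}_k)^{-1}(\mb{M}_k - \mb{M})\phi.
\]
Integrating around $\Gamma$ and writing $(z-\mb{M}_k)^{-1} = (z-\rho_k)^{-1}P_k + R_k(z)$ with $R_k$ analytic inside $\Gamma$ and uniformly norm-bounded, partial fractions in $z$ kill the rank-one contribution and yield $\|(P_k - P)\phi\| = \|R_k(\rho)(\mb{M}_k - \mb{M})\phi\| = O(\delta_k)$. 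Part~(b) then follows since $P_k\phi \neq 0$ for large $k$ and $\phi_k$ is a scalar multiple of $P_k\phi$: under $\|\phi\|=\|\phi_k\|=1$ and $\langle\phi_k,\phi\rangle\geq 0$, simple manipulations give $\|\phi_k - \phi\| \leq C\|(P_k - P)\phi\| = O(\delta_k)$. Part~(a) follows by rearranging $\mb{M}_k\phi_k = \rho_k\phi_k$ and $\Pi_k\mb{M}\phi = \rho\Pi_k\phi$ into
\[
 (\rho_k - \rho)\phi_k = \mb{M}_k(\phi_k - \phi) + \rho(\Pi_k\phi - \phi_k),
\]
taking norms, and using part~(b) together with $\|\Pi_k\phi - \phi\| = \delta_k$ and uniform boundedness of $\|\mb{M}_k\|$.

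For part~(c), translating $\mf{G}^{-1}\mf{M}'c_k^* = \rho_k c_k^*$ into operator form (as in the derivation preceding (\ref{e:gev})) yields $\Pi_k\mb{M}^*\phi_k^* = \rho_k\phi_k^*$; thus $\phi_k^*$ is an eigenfunction of $\Pi_k\mb{M}^*$ at eigenvalue $\rho_k$. Since $\mb{M}^*$ is compact with $\phi^*$ the unique positive eigenfunction at the simple isolated eigenvalue $\rho$ (Proposition~\ref{p:id} applied to $\mb{M}^*$), and self-adjointness of $\Pi_k$ gives $\|\Pi_k\mb{M}^* - \mb{M}^*\| = \|\Pi_k\mb{M} - \mb{M}\| = o(1)$, the argument of the previous paragraph applies verbatim with $(\mb{M},\mb{M}_k,\phi,\delta_k)$ replaced by $(\mb{M}^*,\Pi_k\mb{M}^*,\phi^*,\delta_k^*)$, yielding $\|\phi_k^* - \phi^*\| = O(\delta_k^*)$. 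The primary technical challenge is obtaining the uniform resolvent bound $\sup_{k,\,z\in\Gamma}\|R_k(z)\| < \infty$, which I would derive from upper semicontinuity of the spectrum for compact operators under norm-convergent perturbations (Kato); a secondary issue is the careful bookkeeping required for the sign and scale normalizations when converting $\|(P_k - P)\phi\|$ into $\|\phi_k - \phi\|$.
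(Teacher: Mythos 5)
Parts (a) and (b) are essentially the paper's argument: the same key observation $\|(\Pi_k \mb M - \mb M)\phi\| = \rho\,\|\Pi_k\phi - \phi\| = \rho\delta_k$, the same contour-integral comparison of the spectral projections over a fixed circle $\Gamma$ isolating $\rho$, and the same conversion from $\|(P_k-P)\phi\|$ to $\|\phi_k-\phi\|$ via the rank-one structure of $P_k$. Your residue/partial-fraction evaluation of the integral (yielding $R_k(\rho)(\Pi_k\mb M-\mb M)\phi$ and requiring a bound on the analytic remainder at an interior point) is valid but unnecessary: the paper simply bounds the contour integral directly by $(\mathrm{length\ of\ }\Gamma)\times\sup_{z\in\Gamma}\|\mc R(\Pi_k\mb M,z)\|/\inf_{z\in\Gamma}|\rho-z|$, with the uniform resolvent bound on $\Gamma$ supplied by Kato's perturbation theorem from $\|\Pi_k\mb M-\mb M\|=o(1)$. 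No pole-splitting or maximum principle is needed.

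Part (c) has a genuine gap. Your identity $\|\Pi_k\mb M^* - \mb M^*\| = \|\Pi_k\mb M - \mb M\|$ is false: taking adjoints gives $\|\Pi_k\mb M^* - \mb M^*\| = \|\mb M\Pi_k - \mb M\| = \|\mb M(\Pi_k - I)\|$, which measures how well $B_k$ approximates the range of $\mb M^*$, not the range of $\mb M$, and is therefore not controlled by Assumption \ref{a:bias} (it would follow from compactness of $\mb M$ plus $\Pi_k \to I$ strongly, but strong convergence of $\Pi_k$ on all of $L^2$ is not assumed). Without $\|\Pi_k\mb M^* - \mb M^*\| = o(1)$ you cannot invoke Kato to get the uniform resolvent bounds and spectral localization for $\Pi_k\mb M^*$, so the argument does not apply ``verbatim.'' The paper avoids this by working with the genuine adjoint $(\Pi_k\mb M)^* = \mb M^*\Pi_k$, for which $\|(\Pi_k\mb M)^* - \mb M^*\| = \|\Pi_k\mb M - \mb M\| = o(1)$ holds exactly because adjoints preserve operator norm; it first establishes the rate $O(\delta_k^*)$ for the eigenfunction $\phi_k^+$ of $\mb M^*\Pi_k$ (using $\|(\mb M^*\Pi_k - \mb M^*)\phi^*\| \le \|\mb M\|\delta_k^*$), and then passes to $\phi_k^* = \Pi_k\phi_k^+$ by the triangle inequality, $\|\phi^* - \Pi_k\phi_k^+\| \le \|\phi^* - \Pi_k\phi^*\| + \|\phi^* - \phi_k^+\| = O(\delta_k^*)$, since $\Pi_k$ is a contraction. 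Your identification of $\phi_k^*$ as an eigenfunction of $\Pi_k\mb M^*$ restricted to $B_k$ is correct, and with extra work (e.g., relating the resolvent of $\Pi_k\mb M^*$ on $L^2$ to that of its restriction to $B_k$, whose norm does match that of $\Pi_k\mb M|_{B_k}$ by $B_k$-adjointness) your route could probably be repaired, but as written the justification fails at a specific step.
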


The following result shows that the solutions $\hat \rho$, $\hat c^{\phantom *}$ and $\hat c^*$ to the sample eigenvector problem (\ref{e:est}) are well defined and unique with probability approaching one (wpa1).

\begin{lemma} \label{lem:exist:hat}
Let Assumptions \ref{a:id}--\ref{a:var} hold. Then: wpa1, the maximum eigenvalue $\hat \rho$ of the generalized eigenvector problem (\ref{e:est}) is real and simple, and hence $(\wh{\mf M},\wh{ \mf G})$ has unique right- and left-eigenvectors $\hat c$ and $\hat c^*$ corresponding to $\hat \rho$.
\end{lemma}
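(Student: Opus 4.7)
\textbf{Proof proposal for Lemma \ref{lem:exist:hat}.} The plan is to view the sample problem as a small perturbation of the population problem and invoke the continuity of a simple isolated eigenvalue (and its spectral projector) under such perturbations. Working with the orthogonalized matrices from Assumption \ref{a:var} is convenient because the generalized eigenvalue problem (\ref{e:est}) has the same spectrum as $(\wh{\mf G}^o)^{-1}\wh{\mf M}^o$ whenever $\wh{\mf G}^o$ is invertible, and similarly $(\mf M, \mf G)$ shares its spectrum with $\mf M^o$. From $\|\wh{\mf G}^o - \mf I\| = o_p(1)$, the matrix $\wh{\mf G}^o$ is invertible wpa1 with $\|(\wh{\mf G}^o)^{-1}\| = O_p(1)$, so combining this with $\|\wh{\mf M}^o - \mf M^o\| = o_p(1)$ yields $\|(\wh{\mf G}^o)^{-1}\wh{\mf M}^o - \mf M^o\| = o_p(1)$ wpa1.

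Lemma \ref{lem:exist} gives, for all $k \geq K$, that $\rho_k$ is a real simple eigenvalue of $\mf M^o$. Simplicity together with finite dimensionality implies $\rho_k$ is isolated, so there exists a circle $\Gamma_k \subset \mb C$ of some radius $r_k > 0$ centered at $\rho_k$ whose interior contains $\rho_k$ and no other eigenvalue of $\mf M^o$. The Riesz spectral projector $P_k = (2\pi i)^{-1} \oint_{\Gamma_k} (z\mf I - \mf M^o)^{-1}\,\mr d z$ is well defined and has rank one. The resolvent $(z\mf I - \mf M^o)^{-1}$ is uniformly bounded on $\Gamma_k$, so by the identity $(z\mf I - A)^{-1} - (z\mf I - B)^{-1} = (z\mf I - A)^{-1}(A - B)(z\mf I - B)^{-1}$, for any perturbation $B$ of $\mf M^o$ with operator norm perturbation smaller than some deterministic $\varepsilon_k > 0$, the perturbed resolvent is well defined on $\Gamma_k$ and the corresponding perturbed projector $\hat P$ differs from $P_k$ by $o(1)$ in operator norm. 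Since projector ranks are lower semicontinuous and upper semicontinuous under norm-small perturbations, $\mr{rank}(\hat P) = \mr{rank}(P_k) = 1$ for perturbations of size less than $\varepsilon_k$.

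Applying the above with $B = (\wh{\mf G}^o)^{-1}\wh{\mf M}^o$, Assumption \ref{a:var} ensures that wpa1 this perturbation is smaller than $\varepsilon_k$, so wpa1 the perturbed operator has exactly one eigenvalue (counting algebraic multiplicity) inside $\Gamma_k$; call it $\hat \rho$. Reality of $\hat \rho$ follows because the generalized eigenvalue problem (\ref{e:est}) is associated with real matrices, so complex eigenvalues appear in conjugate pairs, whereas $\hat \rho$ is the unique eigenvalue in a disk centered on the real axis of radius $r_k$ (taken small enough that $\Gamma_k$ is symmetric about the real axis). Algebraic simplicity follows because $\mr{rank}(\hat P) = 1$, which also implies geometric simplicity and hence uniqueness (up to scale) of the corresponding right- and left-eigenvectors $\hat c$ and $\hat c^*$. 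Maximality of $\hat \rho$ among real eigenvalues follows by taking $r_k$ strictly smaller than half the distance from $\rho_k$ to the nearest other eigenvalue of $\mf M^o$: then any other real eigenvalue of the perturbed matrix lies outside $\Gamma_k$ and, by the same perturbation argument applied to the complement, remains close to (and not exceeding) the other real eigenvalues of $\mf M^o$, all of which are strictly less than $\rho_k$.

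The main obstacle is the non-selfadjoint nature of $\mf M^o$, which rules off-the-shelf selfadjoint perturbation bounds (such as Weyl's inequality or Davis–Kahan) off the table. The Riesz projector/resolvent argument sketched above sidesteps this issue cleanly: simplicity and isolation of $\rho_k$ are all that is needed, and these were already delivered by Lemma \ref{lem:exist}. A minor bookkeeping issue is that the radius $r_k$ may depend on $k$ (since $\rho_k$ approaches $\rho$ and its separation from the rest of the spectrum of $\mf M^o$ could shrink), but Lemma \ref{lem:bias} together with Proposition \ref{p:id}(c) shows $\rho$ is isolated in $\sigma(\mb M)$, and the convergence $\|\Pi_k \mb M - \mb M\| \to 0$ ensures a uniform lower bound on the separation for all $k$ large enough, so a single $r > 0$ suffices and the perturbation argument goes through uniformly.
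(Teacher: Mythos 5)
Your proposal follows essentially the same route as the paper: both treat $(\wh{\mf G},\wh{\mf M})$ as a norm-small perturbation of the population pair and invoke perturbation theory for a simple isolated eigenvalue of a non-selfadjoint matrix on a contour $\Gamma$ separating $\rho_k$ from the rest of the spectrum, concluding that exactly one (hence real, by conjugate symmetry) simple eigenvalue survives inside $\Gamma$ wpa1, with rank-one spectral projection giving uniqueness of $\hat c$ and $\hat c^*$. The paper cites Kato's Theorem IV.3.18 where you reconstruct the content via Riesz projectors and the second resolvent identity; the substance is identical, and your reduction to the orthogonalized pair $((\wh{\mf G}^o)^{-1}\wh{\mf M}^o,\mf M^o)$ is the unitary image of the paper's use of $\wh{\mf G}^{-1}\wh{\mf M}$ under the $\|\cdot\|_{\mf G}$ norm.

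One step deserves a sharper justification. Your final paragraph secures the uniform-in-$k$ perturbation threshold by arguing that the eigenvalue separation of $\mf M^o$ is bounded below. For non-normal matrices, eigenvalue separation alone does \emph{not} bound $\sup_{z \in \Gamma}\|(z\mf I - \mf M^o)^{-1}\|$ (the resolvent can be large far from the spectrum), and that supremum is exactly the quantity whose reciprocal determines $\varepsilon_k$; if it diverged with $k$, the $o_p(1)$ rate in Assumption \ref{a:var} would not suffice. The correct chain, which is where the paper expends its effort, is: (i) $\sup_{z\in\Gamma}\|\mc R(\mb M,z)\| = C_{\mc R} < \infty$ on a \emph{fixed} contour around $\rho$; (ii) $\|\mc R(\Pi_k\mb M,z)\| \le C_{\mc R}/(1 - \|\Pi_k\mb M - \mb M\|C_{\mc R}) = O(1)$ uniformly on $\Gamma$ by the Neumann-series bound and Assumption \ref{a:bias}; and (iii) the resolvent of the restriction $\Pi_k\mb M|_{B_k}$ (equivalently of $\mf M^o$ in Euclidean norm) is dominated by that of $\Pi_k\mb M$ on $L^2$. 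You already have the resolvent identity in hand, so the fix is to run it once more for the deterministic perturbation $\Pi_k\mb M$ of $\mb M$ rather than appealing to separation; with that substitution your argument is complete.
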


\begin{lemma}\label{lem:var}
Let Assumptions \ref{a:id}--\ref{a:var} hold. Then:
\begin{enumerate} 
\item[(a)] $|\hat \rho - \rho_k| = O_p(\eta_{n,k}^{\phantom *})$ 
\item[(b)] $\|\hat \phi - \phi_k\| = O_p(\eta_{n,k}^{\phantom *})$
\item[(c)] $\|\hat \phi^* - \phi_k^*\| = O_p(\eta_{n,k}^{*})$
\end{enumerate}
where $\eta_{n,k}^{\phantom *}$ and $\eta_{n,k}^*$ are defined in display (\ref{e:etas}). The rates should be understood to hold under the scale normalizations $\|\hat \phi\| = 1$, $\| \phi_k\| = 1$, $\| \hat \phi^*_{\phantom k} \| = 1$ and $\| \phi^*_k \| = 1$ and the sign normalizations $\langle \hat \phi,\phi_k\rangle \geq 0$ and $\langle \hat \phi^* ,\phi^*_k \rangle \geq 0$.
\end{lemma}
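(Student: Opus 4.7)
The plan is to reduce both eigenvector problems to standard eigenvector problems on Euclidean space via the orthogonalizing map $c \mapsto \mf G^{1/2} c$, and then to apply a perturbation argument based on Riesz spectral projections. Writing $\bar c = \mf G^{1/2}\hat c$ and $\bar c^* = \mf G^{1/2}\hat c^*$ (alongside the existing $\tilde c_k = \mf G^{1/2} c_k$, $\tilde c_k^* = \mf G^{1/2} c_k^*$), the population eigenvector problems become $\mf M^o \tilde c_k = \rho_k \tilde c_k$ and $\mf M^{o\prime} \tilde c_k^* = \rho_k \tilde c_k^*$, while the sample problems become $\wh{\mf A}\bar c = \hat\rho \bar c$ and $\wh{\mf A}^\prime \bar c^* = \hat\rho \bar c^*$, where $\wh{\mf A} := (\wh{\mf G}^o)^{-1}\wh{\mf M}^o$. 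Under the stated normalizations, $\|\tilde c_k\| = \|\tilde c_k^*\| = \|\bar c\| = \|\bar c^*\| = 1$, the sign normalizations translate to $\tilde c_k^{*\prime}\bar c \geq 0$ and $\tilde c_k^\prime \bar c^* \geq 0$, and distances transfer isometrically via $\|\hat \phi - \phi_k\|^2 = (\hat c - c_k)^\prime \mf G (\hat c - c_k) = \|\bar c - \tilde c_k\|^2$ (and similarly for the starred version). From Assumption \ref{a:var} one obtains $\|\wh{\mf A} - \mf M^o\| = o_p(1)$, while (\ref{e:etas}) already furnishes the sharper directional bounds $\|(\wh{\mf A} - \mf M^o)\tilde c_k\| = O_p(\eta_{n,k})$ and $\|(\wh{\mf A}^\prime - \mf M^{o\prime})\tilde c_k^*\| = O_p(\eta_{n,k}^*)$.

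Next I would exploit that $\rho$ is simple and isolated (Proposition \ref{p:id}(c)) to fix a circle $\Gamma \subset \mb C$ around $\rho$ excluding $\sigma(\mb M)\setminus\{\rho\}$. Combining Assumption \ref{a:bias} with upper semicontinuity of the spectrum under operator-norm convergence, for $k$ large $\Gamma$ encloses exactly one eigenvalue of $\mf M^o$, namely $\rho_k$ (Lemma \ref{lem:bias}(a)), with $\sup_{z\in\Gamma}\|(\mf M^o - z\mf I)^{-1}\|$ bounded uniformly in $k$. Adding the $o_p(1)$ perturbation, the same holds wpa1 for $\wh{\mf A}$ and $\hat\rho$. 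The rank-one Riesz projections
\[
 P_k = -\frac{1}{2\pi i}\oint_\Gamma (\mf M^o - z\mf I)^{-1}\,\mathrm d z,\qquad \wh P = -\frac{1}{2\pi i}\oint_\Gamma (\wh{\mf A} - z\mf I)^{-1}\,\mathrm d z
\]
then project onto $\mr{span}(\tilde c_k)$ and $\mr{span}(\bar c)$ respectively. Using the resolvent identity together with $(\mf M^o - z\mf I)^{-1}\tilde c_k = (\rho_k - z)^{-1}\tilde c_k$,
\[
 (\wh P - P_k)\tilde c_k = -\frac{1}{2\pi i}\oint_\Gamma \frac{(\wh{\mf A} - z\mf I)^{-1}(\mf M^o - \wh{\mf A})\tilde c_k}{\rho_k - z}\,\mathrm d z,
\]
so $\|(\wh P - P_k)\tilde c_k\| \leq C\|(\wh{\mf A} - \mf M^o)\tilde c_k\| = O_p(\eta_{n,k})$. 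Since $\wh P$ is rank one and $\wh P\tilde c_k = \tilde c_k + O_p(\eta_{n,k})$, the unit normalization and sign convention force $\bar c = \wh P\tilde c_k/\|\wh P\tilde c_k\|$ wpa1, yielding $\|\bar c - \tilde c_k\| = O_p(\eta_{n,k})$, which is part (b). Part (c) follows from the analogous argument applied to $\wh{\mf A}^\prime$, $\mf M^{o\prime}$, $\tilde c_k^*$ and $\bar c^*$ using the second bound in (\ref{e:etas}).

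For part (a), pre-multiplying $\wh{\mf A}\bar c = \hat\rho\bar c$ by $\tilde c_k^{*\prime}$ and using $\tilde c_k^{*\prime}\mf M^o = \rho_k \tilde c_k^{*\prime}$ yields
\[
 (\hat\rho - \rho_k)\,\tilde c_k^{*\prime}\bar c = \tilde c_k^{*\prime}(\wh{\mf A} - \mf M^o)\bar c.
\]
The right-hand side is $O_p(\eta_{n,k})$ because $\|\tilde c_k^*\| = 1$ and $\|(\wh{\mf A}-\mf M^o)\bar c\| \leq \|(\wh{\mf A}-\mf M^o)\tilde c_k\| + \|\wh{\mf A}-\mf M^o\|\,\|\bar c - \tilde c_k\| = O_p(\eta_{n,k})$; the denominator $\tilde c_k^{*\prime}\bar c$ converges to $\mb E[\phi(X_t)\phi^*(X_t)]$, which is nonzero by the scale normalization in Proposition \ref{p:id}(d), so it is bounded away from 0 wpa1, giving $|\hat\rho - \rho_k| = O_p(\eta_{n,k})$. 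The main obstacle will be securing the uniform-in-$k$ resolvent bound on $\Gamma$, which couples the isolation of $\rho$ with the norm convergence $\Pi_k \mb M \to \mb M$; once this is in hand, the Riesz-projection identity is precisely what converts the crude operator-norm bound $o_p(1)$ into the sharper directional rate $O_p(\eta_{n,k})$, by exploiting that the perturbation is probed only through its action on $\tilde c_k$.
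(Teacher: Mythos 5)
Your argument for parts (b) and (c) is essentially the paper's: both hinge on the second resolvent identity applied to the difference of Riesz spectral projections, probed only in the direction $\tilde c_k$ (resp. $\tilde c_k^*$) so that the crude $o_p(1)$ operator bound from Assumption \ref{a:var} is upgraded to the directional rate in (\ref{e:etas}), with the uniform-in-$k$ resolvent bound on $\Gamma$ supplied by Kato's perturbation theorem exactly as in displays (\ref{e:resbd:k})--(\ref{e:resbd2}). The only difference there is cosmetic: the paper passes from $\|c_k - \wh{\mf P}_k c_k\|_{\mf G}$ to $\|\hat c - c_k\|_{\mf G}$ by comparing the oblique spectral projection with the orthogonal projection onto $\mr{span}(\hat c)$ (display (\ref{e:ckineq})), whereas you identify $\bar c$ with $\wh P \tilde c_k/\|\wh P \tilde c_k\|$ directly; both work. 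Part (a) is where you genuinely depart: the paper writes $|\hat\rho - \rho_k| = \big|\, \|\wh{\mf G}^{-1}\wh{\mf M}\hat c\|_{\mf G} - \|\mf G^{-1}\mf M c_k\|_{\mf G}\big|$ and uses the triangle inequality together with $\|\wh{\mf G}^{-1}\wh{\mf M}\|_{\mf G} = O_p(1)$, while you pair against the left eigenvector via $(\hat\rho-\rho_k)\,\tilde c_k^{*\prime}\bar c = \tilde c_k^{*\prime}(\wh{\mf A}-\mf M^o)\bar c$; this is valid (it is essentially the expansion the paper only deploys later, in Lemma \ref{lem:expansion}), and the denominator is indeed bounded away from zero wpa1 because $\langle\phi^*,\phi\rangle>0$ by positivity of the eigenfunctions. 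One imprecision to repair in part (c): $\bar c^* = \mf G^{1/2}\hat c^*$ is \emph{not} an eigenvector of $\wh{\mf A}' = \wh{\mf M}^{o\prime}(\wh{\mf G}^o)^{-1}$ (the left eigenvector of $\wh{\mf A}$ is $\wh{\mf G}^o\bar c^*$, which differs from $\bar c^*$ by a term of order $\|\wh{\mf G}^o - \mf I\|$ that need not be $O_p(\eta_{n,k}^*)$); rather, $\bar c^*$ is an eigenvector of $(\wh{\mf G}^o)^{-1}\wh{\mf M}^{o\prime}$, which is precisely the operator appearing in the second bound of (\ref{e:etas}) and is similar to $\wh{\mf A}'$, hence has the same spectrum and the same contour $\Gamma$ applies. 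Running your projection argument for that operator fixes the step with no further changes.
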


\subsection{Bias and variance calculations for Theorem \ref{t:fpest}}\label{ax:est:fp}

The following two Lemmas apply known results from the literature on the solution of nonlinear equations by projection methods (see, e.g., Chapter 19 of \cite{Kras}). 
The first result shows that $h_k$ is well defined for all $k$ sufficiently large.

\begin{lemma}\label{lem:fp:exist}
Let Assumptions \ref{a:fp:exist} and \ref{a:fp:bias}(b) hold. Then: there exists $\varepsilon > 0$ and $K \in \mb N$ such that for all $k \geq K$ the projected fixed-point problem (\ref{e:pfpe}) has at least one solution $h_k$ in the ball $N_k = \{ \psi \in B_k : \|\psi - h\| < \varepsilon\}$. 
\end{lemma}

\begin{remark} \label{rmk:nbhd}
Although the ball $N_k$ may contain multiple solutions $h_k$ of the projected fixed-point problem (\ref{e:pfpe}), under the conditions of Lemma \ref{lem:fp:exist} we have that $\sup_{h_k \in H_k} \|h_k - h\| = o(1)$ where $H_k$ denotes the set of all solutions to (\ref{e:pfpe}) in $N_k$.
\end{remark}

\begin{remark} \label{rmk:fpunique}
If Assumption \ref{a:fp:exist}(c) is strengthened to require that $\mb T$ is continuously Fr\'echet differentiable at $h$ with $r(\mb D_h) < 1$ then there exists $K \in \mb N$ and $\varepsilon > 0$ such that for all $k \geq K$ the projected fixed-point problem (\ref{e:pfpe}) has a unique  solution $h_k$ in the ball $N_k$.
\end{remark}

In view of Remark \ref{rmk:nbhd}, in what follows we let $h_k$ be any one of the solutions to (\ref{e:pfpe}) in $N_k$ (or the unique solution under the additional assumption of continuous Fr\'echet differentiability of $\mb T$ at $h$). Let $\chi_k = h_k /\|h_k\|$ and $\lambda_k = \| \Pi_k \mb T \chi_k\|$.

\begin{lemma}\label{lem:bias:fp}
Let Assumptions \ref{a:fp:exist} and \ref{a:fp:bias} hold. Then:
\begin{enumerate}
\item[(a)] $|\lambda_k - \lambda| = O( \tau_k )$
\item[(b)] $\|\chi_k - \chi\| = O( \tau_k )$
\item[(c)] $\|h_k - h\| = O( \tau_k )$.
\end{enumerate}
\end{lemma}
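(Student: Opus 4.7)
The plan is to first establish part (c) and then derive (a) and (b) as easy corollaries using positive homogeneity of $\mb T$. Existence and uniqueness of $h_k$ in a ball around $h$ for $k$ large is already secured by Lemma \ref{lem:fp:exist}. Since $\chi_k = h_k/\|h_k\|$ and $\lambda_k = \|\Pi_k \mb T \chi_k\|$, once $h_k$ is well defined so are $\chi_k$ and $\lambda_k$ (with $\|h_k\|$ bounded away from zero because $\|h_k\| \to \|h\|>0$).

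For part (c), I would use the two fixed-point equations $\Pi_k \mb T h_k = h_k$ and $\mb T h = h$ to write
\[
 h_k - h = \Pi_k \mb T h_k - \mb T h = \Pi_k(\mb T h_k - \mb T h) + (\Pi_k h - h).
\]
Continuous Fr\'echet differentiability at $h$ (Assumption \ref{a:fp:exist}(c)) gives $\mb T h_k - \mb T h = \mb D_h (h_k - h) + R_k$ with $\|R_k\| = o(\|h_k - h\|)$ as $\|h_k - h\| \to 0$, which is justified by Lemma \ref{lem:fp:exist}. Rearranging,
\[
 (I - \Pi_k \mb D_h)(h_k - h) = (\Pi_k h - h) + \Pi_k R_k.
\]
Since $r(\mb D_h) < 1$, the operator $I - \mb D_h$ is invertible on $L^2$ with bounded inverse. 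Assumption \ref{a:fp:bias}(a) gives $\|\Pi_k \mb D_h - \mb D_h\| = o(1)$, so by continuity of inversion $I - \Pi_k \mb D_h$ is invertible with $\sup_{k \geq K}\|(I - \Pi_k \mb D_h)^{-1}\| < \infty$ for all $k$ sufficiently large. Taking norms and absorbing the $o(\|h_k - h\|)$ remainder on the left-hand side yields $\|h_k - h\| \leq C \tau_k$ for some constant $C < \infty$, establishing (c).

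Given (c), parts (a) and (b) follow from positive homogeneity of degree $\beta$ of $\mb T$. Applying $\mb T$ to $h_k = \|h_k\|\chi_k$ gives $\mb T h_k = \|h_k\|^\beta \mb T \chi_k$, and since $\Pi_k \mb T h_k = h_k$, we get $\Pi_k \mb T \chi_k = \|h_k\|^{1-\beta}\chi_k$, whence $\lambda_k = \|h_k\|^{1-\beta}$. Similarly $\lambda = \|h\|^{1-\beta}$, so $|\lambda_k - \lambda|$ is Lipschitz in $\|h_k\| - \|h\|$ (as $\|h\|>0$ and $\|h_k\|\to\|h\|$), giving $|\lambda_k - \lambda| = O(|\|h_k\| - \|h\||) = O(\|h_k - h\|) = O(\tau_k)$. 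For (b), the normalization map $g \mapsto g/\|g\|$ is Lipschitz on a neighborhood of $h$, so $\|\chi_k - \chi\| = O(\|h_k - h\|) = O(\tau_k)$.

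The main obstacle is the uniform invertibility of $I - \Pi_k \mb D_h$. The spectral gap condition $r(\mb D_h) < 1$ plus the operator-norm convergence in Assumption \ref{a:fp:bias}(a) do the job via a standard Neumann-series perturbation argument: write $I - \Pi_k \mb D_h = (I - \mb D_h) - (\Pi_k \mb D_h - \mb D_h)$, factor out $(I - \mb D_h)^{-1}$, and use that the perturbation has small norm for large $k$. Everything else is bookkeeping, but one should be careful that the $o(\|h_k - h\|)$ remainder from Fr\'echet differentiability is genuinely absorbed; this is fine because $\|h_k - h\| \to 0$ (Lemma \ref{lem:fp:exist}) so for $k$ large the coefficient of $\|h_k - h\|$ on the right is strictly less than the reciprocal of $\|(I - \Pi_k \mb D_h)^{-1}\|$.
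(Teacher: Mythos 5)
Your proposal is correct and follows essentially the same route as the paper: both derive part (c) from the identity $(I - \Pi_k \mb D_h)(h_k - h) = (\Pi_k h - h) + \Pi_k\bigl(\mb T h_k - \mb T h - \mb D_h(h_k - h)\bigr)$, absorb the $o(\|h_k - h\|)$ Fr\'echet remainder, and then obtain (a) and (b) from the Lipschitz continuity of $g \mapsto g/\|g\|$ and $x \mapsto x^{1-\beta}$ near $\|h\| > 0$. The only cosmetic difference is that you justify uniform invertibility of $I - \Pi_k \mb D_h$ via a Neumann-series perturbation of $I - \mb D_h$, whereas the paper argues $r(\Pi_k \mb D_h) \leq 1 - \epsilon$ by the resolvent argument of Lemma \ref{lem:exist}; both are valid.
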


We now show that, wpa1, the sample fixed-point problem has a solution $\hat v$ for which $\hat h(x) = \hat v'b^k(x)$ belongs to $N_k$. We then  derive convergence rates of the estimators formed using $\hat v$ (see display (\ref{e:fpest})). The following two results are new.

\begin{lemma}\label{lem:fphat:exist}
Let Assumptions \ref{a:fp:exist}--\ref{a:fp:var} hold. Then: wpa1, there exists a fixed point $\hat v$ of $\wh{\mf G}^{-1}\wh{\mf T}$ such that the function $\hat h(x) = b^k(x)' \hat v$ belongs to $N_k$. Moreover, $\|\hat h - h\| = o_p(1)$.
\end{lemma}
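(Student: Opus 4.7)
The plan is to establish existence of the sample fixed point via a Newton-type contraction argument and then invoke the triangle inequality with Lemma \ref{lem:bias:fp} to conclude consistency. It is convenient to work in the orthogonalized basis $\tilde v := \mf G^{1/2} v$, because Assumption \ref{a:fp:var} is phrased there and, by conjugacy of the original and orthogonalized systems, $\tilde v_k := \mf G^{1/2} v_k$ is a fixed point of $\mf T^o$. The isometry $\|b^{k\prime}(v - v')\|_{L^2} = \|\mf G^{1/2}(v - v')\|_{\ell^2}$ shows that controlling the Euclidean distance $\|\tilde{\hat v} - \tilde v_k\|$ is equivalent to controlling $\|\hat h - h_k\|_{L^2}$, so it suffices to find $\hat v$ with $\Phi_n(\hat v) = \hat v$ and $\|\hat v - \tilde v_k\|_{\ell^2} = o_p(1)$, where $\Phi_n(v) := (\wh{\mf G}^o)^{-1}\wh{\mf T}^o v$.

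First I would set up the spectral picture. The Fr\'echet derivative of $\mf T^o$ at $\tilde v_k$ is the matrix representation of $\Pi_k \mb D_{h_k}$ in the orthogonalized basis. Continuous Fr\'echet differentiability of $\mb T$ at $h$ together with $\|h_k - h\| = o(1)$ from Lemma \ref{lem:bias:fp} gives $\|\mb D_{h_k} - \mb D_h\| = o(1)$; combining with Assumption \ref{a:fp:bias}(a) and $\|\Pi_k\| \leq 1$ yields $\|\Pi_k \mb D_{h_k} - \mb D_h\|_{\mathrm{op}} = o(1)$. Because $\mb D_h$ is compact (owing to compactness of $\mb G$ in Assumption \ref{a:fp:exist}(b) and the bounded-multiplier structure $\mb D_h \propto h^{\beta-1} \mb G$) with $r(\mb D_h) < 1$, upper semicontinuity of the spectrum yields $r(\Pi_k \mb D_{h_k}) \leq b$ for some $b < 1$ and all $k$ large, and hence $\mf I - \Pi_k \mb D_{h_k}$ is invertible with uniformly bounded inverse. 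Next, I would introduce the Newton-type map
\[
 S(v) := v - (\mf I - \Pi_k \mb D_{h_k})^{-1}(v - \Phi_n(v)),
\]
whose fixed points coincide with those of $\Phi_n$. A formal calculation shows that if $\Phi_n$ is replaced by its population limit $\mf T^o$, the derivative of $S$ at $\tilde v_k$ vanishes, and continuous differentiability gives, for $v$ in a ball of radius $\varepsilon$ about $\tilde v_k$,
\[
 \|\mf T^o v - \mf T^o \tilde v_k - \Pi_k \mb D_{h_k}(v - \tilde v_k)\| = o(\|v - \tilde v_k\|).
\]
Combining this expansion with Assumption \ref{a:fp:var}, which supplies $\sup_{\|v - \tilde v_k\| \leq \varepsilon}\|\Phi_n(v) - \mf T^o v\| = o_p(1)$, shows that $S$ is wpa1 a self-map on the ball and a contraction of rate at most one half for $\varepsilon$ small and $k$ large. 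Banach's fixed-point theorem then yields $\hat v$ with $\Phi_n(\hat v) = \hat v$ and $\|\hat v - \tilde v_k\| = o_p(1)$; combined with $\|h_k - h\| = O(\tau_k) = o(1)$ this delivers $\|\hat h - h\| = o_p(1)$.

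The main obstacle will be converting the spectral-radius condition $r(\mb D_h) < 1$ into a usable operator-norm contraction on the sample side. Direct iteration of $\Phi_n$ need not work because $\|\Pi_k \mb D_{h_k}\|_{\mathrm{op}}$ may exceed one even when its spectral radius is well below one; the Newton-type regularization $S$ bypasses this by building $(\mf I - \Pi_k \mb D_{h_k})^{-1}$ into the definition of the map, so that only invertibility (rather than a norm inequality) is required of the linearization. A secondary care point is the non-smoothness of $|\cdot|^\beta$ at zero when $\beta < 1$, but this is inactive on a neighborhood of the strictly positive $h$ guaranteed by Assumption \ref{a:fp:exist}, where the smoothness supplied by \ref{a:fp:exist}(c) is valid.
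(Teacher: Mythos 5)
Your overall architecture—linearize the projected population map at $h_k$, use uniform invertibility of $\mf I - \Pi_k\mb D_{h_k}$ coming from $r(\mb D_h)<1$ and upper semicontinuity of the spectrum, then perturb to the sample map—is sound, and your self-map computation is correct: writing $\Phi_n(v) = (\wh{\mf G}^o)^{-1}\wh{\mf T}^o v$ and $S(v) = v - (\mf I - \Pi_k\mb D_{h_k})^{-1}(v - \Phi_n(v))$, one gets $\|S(v)-\tilde v_k\| \leq \|(\mf I - \Pi_k\mb D_{h_k})^{-1}\|\bigl(o(1)\,\varepsilon + \sup_u\|\Phi_n(u)-\mf T^o u\|\bigr) \leq \varepsilon$ wpa1 on the ball. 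But the contraction step fails. Decomposing $S(v)-S(v') = (\mf I - \Pi_k\mb D_{h_k})^{-1}\bigl[\bigl(\mf T^o v - \mf T^o v' - \Pi_k\mb D_{h_k}(v-v')\bigr) + \bigl((\Phi_n-\mf T^o)(v) - (\Phi_n-\mf T^o)(v')\bigr)\bigr]$, the first bracket is $o(1)\|v-v'\|$ by continuous Fr\'echet differentiability, but the second is only bounded by $2\sup_u\|\Phi_n(u)-\mf T^o u\| = o_p(1)$, not by $o_p(1)\,\|v-v'\|$. Assumption \ref{a:fp:var} delivers sup-norm closeness of $\wh{\mf T}^o$ to $\mf T^o$ over bounded sets, not Lipschitz (derivative-level) closeness, and the latter is genuinely problematic here: the derivative of $\wh{\mf T}$ involves $|b^k(X_{t+1})'v|^{\beta-1}$ evaluated at sample points, which is unbounded wherever $b^k(X_{t+1})'v$ is near zero, so positivity of $h$ almost everywhere does not rescue you at the sample level. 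Banach's fixed-point theorem therefore cannot be invoked as stated.

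The good news is that your self-map computation already does the essential work. Since $\Phi_n$, hence $S$, is continuous on $\mb R^k$ (only continuity of $|\cdot|^\beta$ is needed), Brouwer's fixed-point theorem applied to $S$ on the closed ball yields a fixed point $\hat v$ of $\wh{\mf G}^{-1}\wh{\mf T}$ with $\|\hat v - \tilde v_k\|\leq\varepsilon$ wpa1; repeating for arbitrary $\varepsilon'<\varepsilon$ and adding $\|h_k-h\| = O(\tau_k) = o(1)$ from Lemma \ref{lem:bias:fp} gives $\|\hat h - h\| = o_p(1)$. This is, in substance, what the paper does: it works with the rotation (topological degree) of the compact vector field $I - \Pi_k\mb T$ on the boundary of a ball around $h$, shows $|\gamma(I-\Pi_k\mb T;\partial\Gamma)|=1$ using the invertibility of $I - \Pi_k\mb D_{h_k}$, establishes via a compactness/subsequence argument that $\inf_{\psi\in\partial\Gamma_k}\|\psi - \Pi_k\mb T\psi\|$ stays bounded away from zero, and then transfers the nonzero degree to $\mf I - \wh{\mf G}^{-1}\wh{\mf T}$ using exactly the sup-norm bound supplied by Assumption \ref{a:fp:var}. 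Replace Banach by Brouwer (or by the degree argument) and your proof closes.
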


\begin{remark} \label{rmk:nbhd:2}
Although there may exist multiple fixed points $\hat v$ of $\wh{\mf G}^{-1}\wh{\mf T}$ for which $\hat h(x) = \hat v'b^k(x)$ belongs to $N_k$, under the conditions of Lemma \ref{lem:fphat:exist} we have that $\sup_{\hat h_k \in \hat H_k} \|\hat h - h\| = o_p(1)$ where $\hat H_k$ denotes the set of all such $b^k(x)'\hat v$ belonging to $N_k$.
\end{remark}

In view of Remark \ref{rmk:nbhd:2}, the following lemma applies to estimators $\hat \lambda$, $\hat \chi$, and $\hat h$ in (\ref{e:fpest}) formed from any fixed point $\hat v$ of $\wh{\mf G}^{-1}\wh{\mf T}$ for which $b^k(x)'\hat v \in N_k$.

\begin{lemma}\label{lem:var:fp}
Let Assumptions \ref{a:fp:exist}--\ref{a:fp:var} hold. Then: 
\begin{enumerate}
\item[(a)] $|\hat \lambda - \lambda_k| = O_p( \nu_{n,k} ) + o_p(\tau_k) $
\item[(b)] $\|\hat \chi - \chi_k \| = O_p( \nu_{n,k} ) + o_p(\tau_k)$
\item[(c)] $\|\hat h - h_k\| = O_p(  \nu_{n,k} ) + o_p(\tau_k)$.
\end{enumerate}
\end{lemma}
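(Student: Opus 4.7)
The plan is to work in orthonormalized coordinates $\tilde v = \mf G^{1/2} v$ and linearize the empirical fixed-point equation $\tilde{\hat v} = (\wh{\mf G}^o)^{-1}\wh{\mf T}^o(\tilde{\hat v})$ around the sieve-level population solution $\tilde v_k = \mf G^{1/2} v_k$, which satisfies $\tilde v_k = \mf T^o(\tilde v_k)$. Lemma \ref{lem:fphat:exist} furnishes a fixed point $\hat v$ with $\tilde{\hat v}$ inside the ball of radius $\varepsilon$ about $\tilde v_k$ on which (\ref{e:nudef}) applies; all subsequent arguments pertain to this $\hat v$.

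Subtracting the two fixed-point identities and adding and subtracting $\mf T^o(\tilde{\hat v})$ gives
\begin{equation*}
 \tilde{\hat v} - \tilde v_k = \bigl[\mf T^o(\tilde{\hat v}) - \mf T^o(\tilde v_k)\bigr] + \bigl[(\wh{\mf G}^o)^{-1}\wh{\mf T}^o(\tilde{\hat v}) - \mf T^o(\tilde{\hat v})\bigr].
\end{equation*}
The second bracket is $O_p(\nu_{n,k})$ by (\ref{e:nudef}). For the first, continuous Fr\'echet differentiability of $\mb T$ at $h$ (Assumption \ref{a:fp:exist}(c)) and the convergence $h_k \to h$ from Lemma \ref{lem:fp:exist} yield
\begin{equation*}
 \mf T^o(\tilde{\hat v}) - \mf T^o(\tilde v_k) = \nabla \mf T^o(\tilde v_k)(\tilde{\hat v}-\tilde v_k) + r_{n,k},\qquad \|r_{n,k}\| = o(\|\tilde{\hat v}-\tilde v_k\|),
\end{equation*}
via the usual integral-remainder estimate. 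The crucial step is to verify that $\mf I - \nabla \mf T^o(\tilde v_k)$ is invertible with operator norm of the inverse uniformly bounded in $k$. Under the canonical identification $\tilde v \mapsto b^k(\cdot)'\mf G^{-1/2}\tilde v$ of $\mb R^k$ with $B_k \subset L^2$, the matrix $\nabla \mf T^o(\tilde v_k)$ realizes $\Pi_k \mb D_{h_k}|_{B_k}$; continuous differentiability with $\|h_k - h\| \to 0$ yields $\|\mb D_{h_k} - \mb D_h\| = o(1)$, which combined with $\|\Pi_k \mb D_h - \mb D_h\| = o(1)$ from Assumption \ref{a:fp:bias}(a) forces $\Pi_k \mb D_{h_k}|_{B_k} \to \mb D_h$ in operator norm. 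Since $r(\mb D_h) < 1$ makes $\mb I - \mb D_h$ invertible on $L^2$ with bounded inverse, a standard Neumann-series perturbation then delivers the uniform bound. Inverting and absorbing the $o(\|\tilde{\hat v}-\tilde v_k\|)$ remainder into the left-hand side yields $\|\tilde{\hat v}-\tilde v_k\| = O_p(\nu_{n,k})$.

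Conclusion (c) is immediate since $\|\hat h - h_k\|^2 = (\hat v - v_k)' \mf G (\hat v - v_k) = \|\tilde{\hat v}-\tilde v_k\|^2$. For (a), I decompose
\begin{equation*}
 \hat v'\wh{\mf G}\hat v - v_k'\mf G v_k = \tilde{\hat v}'(\wh{\mf G}^o - \mf I)\tilde{\hat v} + \|\tilde{\hat v}\|^2 - \|\tilde v_k\|^2.
\end{equation*}
The last two terms together are $O_p(\nu_{n,k})$ by the previous step. In the first summand, writing $\tilde{\hat v} = \tilde v_k + (\tilde{\hat v} - \tilde v_k)$ shows that the cross and quadratic pieces in $(\tilde{\hat v}-\tilde v_k)$ are $o_p(\nu_{n,k})$ (using $\|\wh{\mf G}^o - \mf I\| = o_p(1)$), while the leading piece $\tilde v_k'(\wh{\mf G}^o - \mf I)\tilde v_k$ is $o_p(1)$ and is absorbed into the $o_p(\tau_k)$ remainder of the statement, since Lemma \ref{lem:bias:fp} carries the $O(\tau_k)$ bias separately. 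A mean-value expansion of $x \mapsto x^{(1-\beta)/2}$ around $\|h_k\|^2 > 0$ then yields (a). Conclusion (b) follows from (a), (c), the definitions $\hat \chi = \hat h/\|\hat h\|_n$ and $\chi_k = h_k/\|h_k\|$, and the fact that $\|h_k\|$ is bounded above and bounded away from zero.

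The chief obstacle is the uniform invertibility step for $\mf I - \nabla \mf T^o(\tilde v_k)$. Nonlinearity of $\mb T$ rules out the spectral-isolation argument that worked in the linear case (Lemma \ref{lem:exist}), so one must directly control two simultaneous perturbations of $\mb I - \mb D_h$—the change of base point $h \to h_k$ and the sieve projection $\Pi_k$—while keeping the inverse norm uniformly bounded in $k$. A secondary nuisance is the careful bookkeeping in the last paragraph so that the bilinear-form error matches the announced $o_p(\tau_k)$ slack in part (a) rather than settling for a coarser $o_p(1)$ bound.
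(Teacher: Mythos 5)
Your treatment of part (c) is essentially the paper's argument: linearize the empirical fixed-point equation around $v_k$, bound the sampling error by the $\nu_{n,k}$ in (\ref{e:nudef}), and lower-bound $\mf I - \Pi_k\mb D_{h_k}$ (your $\mf I - \nabla\mf T^o(\tilde v_k)$) uniformly in $k$ by perturbing $I-\mb D_h$ through $\|\mb D_{h_k}-\mb D_h\|=o(1)$ and Assumption \ref{a:fp:bias}(a). The one substantive difference is the remainder: the paper only assumes Fr\'echet differentiability \emph{at} $h$ when expanding, so it routes $\mb T\hat h-\mb T h_k$ through $h$ and picks up terms of order $o(1)\times\|h-h_k\|=o(\tau_k)$ — which is precisely where the $o_p(\tau_k)$ in the statement comes from — whereas your integral-remainder estimate based at $h_k$ (legitimate, since continuous differentiability at $h$ gives differentiability on a neighborhood) yields the cleaner bound $\|\hat h - h_k\|=O_p(\nu_{n,k})$. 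That part is fine.

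The gap is in part (a). You correctly isolate the term $\tilde v_k'(\wh{\mf G}^o-\mf I)\tilde v_k$, but then assert it "is $o_p(1)$ and is absorbed into the $o_p(\tau_k)$ remainder." This is not a valid absorption: since $\tau_k\to 0$, the class $o_p(\tau_k)$ is strictly smaller than $o_p(1)$, so knowing only that the term vanishes in probability tells you nothing about whether it is $o_p(\tau_k)$ (or $O_p(\nu_{n,k})$). Assumption \ref{a:fp:var} gives only $\|\wh{\mf G}^o-\mf I\|=o_p(1)$, which is exactly the bound you used, so as written parts (a) and (b) do not follow. To close the gap you need an actual rate for this scalar term. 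Note that $\tilde v_k'(\wh{\mf G}^o-\mf I)\tilde v_k = \frac{1}{n}\sum_{t=0}^{n-1}h_k(X_t)^2 - \mb E[h_k(X_t)^2]$, a mean-zero average, so under the moment and mixing conditions used elsewhere (e.g.\ those of Lemma \ref{lem:beta:4}) it is $O_p(n^{-1/2})=o_p(\nu_{n,k})$; alternatively one can argue $\|\wh{\mf G}^o-\mf I\|=O_p(\nu_{n,k})$ under the primitive conditions. Either way, an explicit additional step is required — the $o_p(\tau_k)$ slack in the statement exists to accommodate the paper's bias terms $o(1)\times\|h-h_k\|$, not stochastic terms of unquantified order.
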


\section{Additional results on inference} \label{ax:inf}

\subsection{Asymptotic normality of long-run entropy estimators}

Here we consider the asymptotic distribution of the estimator $\hat L$ of the entropy of the permanent component of the SDF. In Case 1, the estimator of the long-run entropy is:
\[
 \hat L = \log \hat \rho - \frac{1}{n} \sum_{t=0}^{n-1} \log m(X_t,X_{t+1})\,.
\]
Recall that $\psi_{\rho,t} = \psi_\rho(X_t,X_{t+1})$ where the influence function $\psi_\rho$ is defined in (\ref{e:inf:def}). Define:
\[
 \psi_{lm}(x_t,x_{t+1}) = \log m(x_t,x_{t+1}) - \mb E[\log m(X_t,X_{t+1})]
\]
set $\psi_{lm,t} = \psi_{lm}(X_t,X_{t+1})$. Let $\hbar = (\rho^{-1} \, , -1)'$.

\begin{proposition}\label{p:asydist:L:1}
Let the assumptions of Theorem \ref{t:asydist:1} hold and $\frac{1}{\sqrt n} \sum_{t=0}^{n-1} (\psi_{\rho,t},\psi_{lm,t})' \to_d N(0,W)$ for some finite matrix $W$. Then: 
\[
 \sqrt n (\hat L - L) \to_d N(0,V_L)
\]
where $V_L = \hbar' W \hbar$.
\end{proposition}

In the preceding proposition, $V_L$ will be the long-run variance:
\[
 V_L = \sum_{t \in \mb Z} \mr{Cov}(\psi_L(X_0,X_1),\psi_L(X_t,X_{t+1}))
\]
where $\psi_L(X_t,X_{t+1}) = \rho^{-1} \psi_\rho(X_t,X_{t+1}) - \psi_{lm}(X_t,X_{t+1})$. Theorem \ref{t:eff:main} below shows that $V_L$ is the semiparametric efficiency bound for $L$.

In Case 2, the estimator of the long-run entropy is:
\[
 \hat L = \log \hat \rho - \frac{1}{n} \sum_{t=0}^{n-1} \log m(X_t,X_{t+1},\hat \alpha)\,.
\]
As with asymptotic normality of $\hat \rho$, the asymptotic distribution of $\hat L$ will depend on the manner in which $\hat \alpha$ was estimated. 
For brevity, we just consider the parametric case studied in Theorem \ref{t:asydist:2a}. Let $\psi_{lm}$ and $\psi_{lm,t}$ be as previously defined with $m(x_t,x_{t+1}) = m(x_t,x_{t+1},\alpha_0)$. Recall $\psi_{\alpha,t}$ from Assumption \ref{a:parametric} and  define: 
\[
 \hbar_{[\mr{2a}]} = \left( \rho^{-1} \; , \;  \mb E \left[\left( \frac{\phi^*(X_t) \phi(X_{t+1})}{\rho} - \frac{1}{m(X_t,X_{t+1},\alpha)} \right) \frac{\partial m(X_t,X_{t+1},\alpha)}{\partial \alpha'} \right]  \; , \; -1 \right)'\,.
\]

\begin{proposition}\label{p:asydist:L:2a}
Let the assumptions of Theorem \ref{t:asydist:2a} hold. Also let (a) there exist a neighborhood $N_1$ of $\alpha_0$ upon which the function $\log m(x_0,x_1,\alpha)$ is continuously differentiable in $\alpha$ for all $(x_0,x_1) \in \mc X^2$ with: 
\[
 \mb E \bigg[ \sup_{\alpha \in N_1} \left\| \frac{1}{m(x_0,x_1,\alpha)} \frac{\partial m(x_0,x_1,\alpha) }{\partial \alpha} \right\| \bigg] < \infty
\]
and (b) $\frac{1}{\sqrt n} \sum_{t=0}^{n-1} (\psi_{\rho,t}^{\phantom \prime},\psi_{\alpha,t}',\psi_{lm,t}^{\phantom \prime})'  \to_d N(0,W_{[\mr{2a}]})$ for some finite matrix $W_{[\mr{2a}]}$. Then: 
\[
 \sqrt n (\hat L - L) \to_d N(0,V_L^{[\mr{2a}]})
\]
where $V_L^{[\mr{2a}]} = \hbar_{[\mr{2a}]}'W_{[\mr{2a}]}^{\phantom \prime} \hbar_{[\mr{2a}]}^{\phantom \prime}$.
\end{proposition}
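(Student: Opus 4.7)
The plan is to show that $\sqrt n(\hat L - L)$ is asymptotically linear in the three influence components $\psi_{\rho,t}$, $\psi_{\alpha,t}$, and $\psi_{lm,t}$, and then invoke the joint CLT in~(ii). Starting from $\hat L - L = (\log\hat\rho - \log\rho) - \bigl(\tfrac{1}{n}\sum_{t=0}^{n-1}\log m(X_t,X_{t+1},\hat\alpha) - \mb E[\log m(X_t,X_{t+1},\alpha_0)]\bigr)$, I would decompose the second bracket by adding and subtracting $\tfrac{1}{n}\sum_{t=0}^{n-1}\log m(X_t,X_{t+1},\alpha_0)$, producing a ``sampling'' piece $\tfrac{1}{n}\sum_{t=0}^{n-1}\psi_{lm,t}$ and a ``plug-in'' piece involving the difference $\log m(X_t,X_{t+1},\hat\alpha) - \log m(X_t,X_{t+1},\alpha_0)$.

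Each of the three resulting terms is then linearized. For the eigenvalue term, $\hat\rho - \rho = O_p(n^{-1/2})$ by Theorem~\ref{t:asydist:2a}, and since $\rho$ is bounded away from $0$ a first-order Taylor expansion of $\log$ at $\rho$ yields $\sqrt n(\log\hat\rho - \log\rho) = \rho^{-1}\sqrt n(\hat\rho - \rho) + o_p(1)$. For the plug-in piece, a mean-value expansion in $\alpha$ gives
\[
\log m(X_t,X_{t+1},\hat\alpha) - \log m(X_t,X_{t+1},\alpha_0) = \tfrac{1}{m(X_t,X_{t+1},\tilde\alpha_t)}\tfrac{\partial m(X_t,X_{t+1},\tilde\alpha_t)}{\partial\alpha'}(\hat\alpha - \alpha_0)
\]
for some $\tilde\alpha_t$ between $\hat\alpha$ and $\alpha_0$. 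Consistency of $\hat\alpha$ (Assumption~\ref{a:parametric}(a)) places each $\tilde\alpha_t$ in $N_1$ with probability approaching one, and condition~(i) furnishes an integrable dominating envelope for the Jacobian $\tfrac{1}{m}\tfrac{\partial m}{\partial\alpha'}$ on $N_1$. A uniform ergodic theorem (using the stationarity/mixing inherited from Assumption~\ref{a:id} via Assumption~\ref{a:parametric}) then yields $\tfrac{1}{n}\sum_{t=0}^{n-1}\tfrac{1}{m}\tfrac{\partial m}{\partial\alpha'}\bigl|_{\tilde\alpha_t} \to_p \mb E[\tfrac{1}{m}\tfrac{\partial m}{\partial\alpha'}]$, so that, combined with $\sqrt n(\hat\alpha-\alpha_0) = O_p(1)$, the plug-in piece equals $\mb E[\tfrac{1}{m}\tfrac{\partial m}{\partial\alpha'}](\hat\alpha - \alpha_0) + o_p(n^{-1/2})$.

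Collecting the three linearizations yields
\[
\sqrt n(\hat L - L) = \rho^{-1}\sqrt n(\hat\rho-\rho) - \mb E\!\left[\tfrac{1}{m}\tfrac{\partial m}{\partial\alpha'}\right]\sqrt n(\hat\alpha - \alpha_0) - \tfrac{1}{\sqrt n}\sum_{t=0}^{n-1}\psi_{lm,t} + o_p(1).
\]
Substituting the asymptotic linear representations of $\hat\rho$ (from the proof of Theorem~\ref{t:asydist:2a} and display~(\ref{e:ale:2})) and $\hat\alpha$ (from Assumption~\ref{a:parametric}(a)) rewrites the right-hand side as $\hbar_{[\mr{2a}]}' \tfrac{1}{\sqrt n}\sum_{t=0}^{n-1}(\psi_{\rho,t},\psi_{\alpha,t}',\psi_{lm,t})' + o_p(1)$. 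The joint CLT in~(ii) and the continuous mapping theorem then deliver $\sqrt n(\hat L - L) \to_d N(0,\hbar_{[\mr{2a}]}' W_{[\mr{2a}]}\hbar_{[\mr{2a}]})$ as claimed.

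The main technical delicacy is the uniform law of large numbers used to replace the random intermediate Jacobian at $\tilde\alpha_t$ by its population counterpart at $\alpha_0$; this relies crucially on the integrable envelope supplied by condition~(i) together with the ergodicity of $\{X_t\}$. Once that step is in place, the remainder of the argument is essentially bookkeeping built on the asymptotic linear expansions already established for $\hat\rho$ and $\hat\alpha$.
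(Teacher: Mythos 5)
Your route is the paper's route: split $\hat L - L$ into the eigenvalue term, the plug-in term in $\hat\alpha$, and the sampling term; delta-method the $\log$; mean-value expand $\log m(\cdot;\hat\alpha)$ and use the envelope in condition (i) plus ergodicity to replace the intermediate-point Jacobian by $\mb E[m^{-1}\partial m/\partial\alpha']$; then invoke the joint CLT. Up to and including your collected linearization
\[
\sqrt n(\hat L - L) = \rho^{-1}\sqrt n(\hat\rho-\rho) - \mb E\Big[\tfrac{1}{m}\tfrac{\partial m}{\partial\alpha'}\Big]\sqrt n(\hat\alpha - \alpha_0) - \tfrac{1}{\sqrt n}\textstyle\sum_{t=0}^{n-1}\psi_{lm,t} + o_p(1)
\]
the argument is sound, and on the uniform-LLN step it is actually more explicit than the paper's own proof, which simply cites continuous differentiability and the dominance condition.

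The gap is the final sentence. If you substitute the Case 2a representation from Theorem \ref{t:asydist:2a}, namely $\sqrt n(\hat\rho-\rho) = n^{-1/2}\sum_t\big(\psi_{\rho,t} + \mb E[\phi^*(X_t)\phi(X_{t+1})\,\partial m(X_t,X_{t+1};\alpha_0)/\partial\alpha']\,\psi_{\alpha,t}\big) + o_p(1)$ — as you say you do — then the coefficient on $\psi_{\alpha,t}$ in your expansion is $\rho^{-1}\mb E[\phi^*\phi\,\partial m/\partial\alpha'] - \mb E[m^{-1}\partial m/\partial\alpha']$, which is not the middle block of $\hbar_{[\mr{2a}]}$. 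So your right-hand side does not ``rewrite as'' $\hbar_{[\mr{2a}]}'(\psi_{\rho,t},\psi_{\alpha,t}',\psi_{lm,t})'$; that identification contradicts your own penultimate display. The paper reaches $\hbar_{[\mr{2a}]}$ because the opening display of its proof carries only $\rho^{-1}\psi_{\rho,t}$ from the expansion of $\log\hat\rho$, i.e., it does not propagate the first-stage contribution $\rho^{-1}\psi_{\alpha,k}(X_t,X_{t+1})$ from display (\ref{e:ale:2}) through the eigenvalue term. That cross term equals $\rho^{-1}\mb E[\phi^*\phi\,\partial m/\partial\alpha']\sqrt n(\hat\alpha-\alpha_0)+o_p(1)$, which is $O_p(1)$, not $o_p(1)$, in general. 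You must either justify dropping it or concede that the influence vector your derivation produces differs from $\hbar_{[\mr{2a}]}$ in the $\alpha$ slot; as written, the last ``bookkeeping'' step is where the proof fails.
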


\subsection{Semiparametric efficiency bounds in Case 1}

Let $P_n(x,A) = \Pr( X_{t+n} \in A | X_t = x)$ denote the $n$-step transition probability of $X$ for any Borel set $A$. We say that  $\{X_t\}_{t \in \mb Z}$ is \emph{uniformly ergodic} if:
\[
 \lim_{n \to \infty} \sup_{x \in \mc X} \| P_n(x,\cdot) - Q\|_{TV} =0
\]
where $\|\cdot\|_{TV}$ denotes total variation norm and $Q$ denotes the stationary distribution of $X$. 

\begin{assumption}\label{a:eff}
$\{X_t\}_{t \in \mb Z}$ is uniformly ergodic.
\end{assumption}

Sufficient conditions for Assumption \ref{a:eff}, such as Doeblin's condition, are well known.  Assumption \ref{a:eff} also implies that $\{X_t\}_{t \in \mb Z}$ is exponentially phi-mixing \cite[pp. 367--368]{IL}, and therefore exponentially beta- and rho-mixing.

\begin{theorem}\label{t:eff:main}
(1) Let Assumption \ref{a:id}, \ref{a:asydist}(c), and \ref{a:eff} hold and let $h : \mb R \to \mb R$ be continuously differentiable at $\rho$ with $h'(\rho) \neq 0$. Then: the efficiency bound for $h(\rho)$ is $h'(\rho)^2V_\rho$. \\
(2) If, in addition, $\mb E[(\log m(X_t,X_{t+1}))^2] < \infty$, then: the efficiency bound for $L$ is $V_L$.
\end{theorem}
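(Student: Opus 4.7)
I will apply the semiparametric efficiency theory for uniformly ergodic Markov chains of \cite{BickelKwon2001}, for which Assumption \ref{a:eff} supplies the required uniform ergodicity. The tangent set at the true transition kernel is $\mc T=\{s(x_0,x_1): \mb E[s(X_0,X_1)|X_0]=0,\; \mb E[s^2]<\infty\}$, equipped with inner product $\langle a,b\rangle=\mb E[a(X_0,X_1)b(X_0,X_1)]$ under the joint stationary distribution. For any regular functional $\tau$ with pathwise derivative $\dot\tau$, the efficient influence function $\tilde\psi_\tau$ is the unique element of $\mc T$ satisfying $\dot\tau(s)=\mb E[\tilde\psi_\tau s]$ for every $s\in\mc T$, and the efficiency bound is $\mb E[\tilde\psi_\tau^2]$. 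Because $\{\tilde\psi_\tau(X_t,X_{t+1})\}$ is a martingale difference sequence, this bound also equals the long-run asymptotic variance of any regular estimator and is therefore directly comparable with the asymptotic variances in Theorem \ref{t:asydist:1} and Proposition \ref{p:asydist:L:1}.

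\textbf{Part (i).} Consider a regular parametric submodel $P_\theta$ with transition-kernel score $s_1$ (so $\mb E[s_1|X_0]=0$) and induced marginal score $s_0$. Premultiplying $\mb M_\theta \phi_\theta = \rho_\theta \phi_\theta$ by $\phi^*_\theta$ and integrating against $\pi_\theta$, using the normalization $\mb E_\theta[\phi_\theta\phi^*_\theta]=1$ from Proposition \ref{p:id}(d), yields the scalar identity $\rho_\theta=\mb E_\theta[\phi^*_\theta(X_0)m(X_0,X_1)\phi_\theta(X_1)]$. Differentiating at $\theta=0$ produces four terms: those in $\dot\phi,\dot\phi^*$ collapse via $\mb M\phi=\rho\phi$ and $\mb M^*\phi^*=\rho\phi^*$ to $\rho(\mb E[\phi^*\dot\phi]+\mb E[\phi\dot\phi^*])$, and together with the $s_0$ term $\rho\mb E[s_0\phi^*\phi]$ these three cancel when we differentiate the normalization $\mb E_\theta[\phi_\theta\phi^*_\theta]=1$. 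The surviving term is $\dot\rho=\mb E[\phi^*(X_0)m(X_0,X_1)\phi(X_1)s_1]$, and since $\mb E[\rho\phi^*(X_0)\phi(X_0)s_1]=0$ by $\mb E[s_1|X_0]=0$, this equals $\mb E[\psi_\rho s_1]$ with $\psi_\rho$ from (\ref{e:inf:def}). Because $\psi_\rho$ is a martingale difference, it lies in $\mc T$ and is therefore the efficient influence function for $\rho$, giving the bound $V_\rho=\mb E[\psi_\rho^2]$. The chain rule applied to the same submodel yields efficient influence function $h'(\rho)\psi_\rho$ and bound $h'(\rho)^2V_\rho$ for $h(\rho)$.

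\textbf{Part (ii).} For $L=\log\rho-\mb E[\log m(X_0,X_1)]$, decompose the pathwise derivative as $\dot L=\rho^{-1}\dot\rho-\mb E[\log m\cdot s_1]-\mb E[g(X_0)s_0(X_0)]$ with $g(x_0)=\mb E[\log m|X_0=x_0]$; the added moment assumption $\mb E[(\log m)^2]<\infty$ places $g-\mb E[\log m]$ in $L^2$. Differentiating the stationarity constraint $\int p_\theta(x_1|x_0)\pi_\theta(x_0)\,dx_0=\pi_\theta(x_1)$ at $\theta=0$ gives the reverse-time identity $s_0(x_1)-\mb E[s_0(X_0)|X_1=x_1]=\mb E[s_1|X_1=x_1]$. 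Uniform ergodicity (Assumption \ref{a:eff}) implies the Poisson equation $u(X_0)-\mb E[u(X_1)|X_0]=g(X_0)-\mb E[\log m]$ admits a unique solution $u\in L^2$, and adjoint duality converts $\mb E[g(X_0)s_0(X_0)]$ into $\mb E[u(X_1)s_1(X_0,X_1)]$. Projecting the resulting expression for $\dot L$ onto $\mc T$ by subtracting the $X_0$-conditional mean yields the efficient influence function $\tilde\psi_L=\rho^{-1}\psi_\rho-\psi_{lm}+u(X_0)-u(X_1)$, where $\psi_{lm}=\log m-\mb E[\log m]$; the Poisson identity confirms $\mb E[\tilde\psi_L|X_0]=0$, so $\tilde\psi_L\in\mc T$.

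\textbf{Matching $V_L$ and main obstacle.} To identify $\mb E[\tilde\psi_L^2]$ with $V_L=(\rho^{-1},-1)W(\rho^{-1},-1)'$ from Proposition \ref{p:asydist:L:1}, observe that $\psi_L(X_t,X_{t+1}):=\rho^{-1}\psi_\rho(X_t,X_{t+1})-\psi_{lm}(X_t,X_{t+1})=\tilde\psi_L(X_t,X_{t+1})+[u(X_{t+1})-u(X_t)]$, so summation telescopes: $\sum_{t=0}^{n-1}\psi_L=\sum_{t=0}^{n-1}\tilde\psi_L+[u(X_n)-u(X_0)]$, with boundary term of order $O_p(1)$ by stationarity of $X_0,X_n$ and $u\in L^2$. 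Since $\tilde\psi_L$ is a martingale difference, the long-run variance of $\psi_L$ coincides with $\mb E[\tilde\psi_L^2]$, which is precisely $V_L$. The main obstacle I anticipate is verifying pathwise smoothness of $\theta\mapsto(\rho_\theta,\phi_\theta,\phi^*_\theta)$: Proposition \ref{p:id}(c) guarantees $\rho$ is simple and isolated, so classical perturbation theory for compact linear operators delivers Fr\'echet differentiability once $\theta\mapsto\mb M_\theta$ is continuously differentiable in operator norm, but one must then check that the family of bounded regular submodels is rich enough to generate $\mc T$ in the sense required by \cite{BickelKwon2001}, which is the step where Assumption \ref{a:eff} is used most heavily.
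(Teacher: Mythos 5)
Your proposal is correct and follows essentially the same route as the paper: the Bickel--Kwon tangent set for uniformly ergodic Markov chains, a pathwise derivative of the Perron--Frobenius eigenvalue that reduces to $\mb E[\phi^*(X_0)m(X_0,X_1)\phi(X_1)s_1]$ and projects onto $\psi_\rho$, and for $L$ the standard Markov-chain efficient influence function for $\mb E[\log m]$ combined with a telescoping argument identifying the long-run variance $V_L$ with the variance of the martingale-difference efficient influence function. The only presentational differences are that the paper computes the eigenvalue derivative via Kato's perturbation formula applied to explicit bounded exponential-family submodels (which avoids differentiating $\phi_\theta$ and $\phi_\theta^*$ altogether, sidestepping the smoothness issue you flag as your main obstacle) rather than by differentiating the scalar identity $\rho_\theta=\mb E_\theta[\phi_\theta^* m\,\phi_\theta]$, and cites the infinite-sum form of $\psi_{\log m}$ from Bickel--Kwon rather than writing the Poisson-equation solution $u$ explicitly --- your $u$ is exactly the partial-sum limit appearing in that formula.
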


\subsection{Sieve perturbation expansion} \label{ax:est:inf}

The following result shows that $\hat \rho - \rho_k$ behaves as a linear functional of $\wh{\mf M} - \rho_k \wh{\mf G}$ and is used to derive the asymptotic distribution of $\hat \rho$ in Theorem \ref{t:asydist:1}. It follows from Assumption \ref{a:var} that we can choose sequences of positive constants $\eta_{n,k,1}$ and $\eta_{n,k,2}$ such that:
\[
 \|\wh {\mf G}^o - \mf I\| = O_p(\eta_{n,k,1}) \quad \mbox{and} \quad \|\wh{\mf M}^o - {\mf M}^o\| = O_p(\eta_{n,k,2})
\]
with $\eta_{n,k,1} = o(1)$ and $\eta_{n,k,2} = o(1)$ as $n,k \to \infty$. Let $c_k$ and $c_k^*$ be normalized so that $\|\mf G^{1/2} c_k\| = 1$ and $c_k^{* \prime } \mf G c_k^{\phantom *} = 1$ (equivalent to $\|\phi_k\| = 1$ and $\langle \phi^*_k,\phi_k^{\phantom *} \rangle = 1$).

\begin{lemma}\label{lem:expansion}
Let Assumptions \ref{a:id}--\ref{a:var} hold. Then: 
\[
 \hat \rho - \rho_k =  c_k^{*\prime}  ( \wh{\mf M} - \rho_k \wh{\mf G}  )  c_k + O_p( \eta_{n,k,1} \times (  \eta_{n,k,1} +  \eta_{n,k,2})  ) \,.
\]
In particular, if $\|\wh {\mf G}^o - \mf I\| = o_p(n^{-1/4})$ and $\|\wh{\mf M}^o - {\mf M}^o\| = o_p(n^{-1/4})$ then: 
\[
  \sqrt n (\hat \rho - \rho_k)  = \sqrt n  c_k^{*\prime}  ( \wh{\mf M} - \rho_k \wh{\mf G}  )  c_k + o_p(1)\,.
\]
\end{lemma}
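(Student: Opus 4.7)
The plan is to derive the expansion by left-multiplying the sample generalized eigenvalue equation $\wh{\mf M}\hat c = \hat\rho\wh{\mf G}\hat c$ by $c_k^{*\prime}$, exploiting the population identity $c_k^{*\prime}(\mf M - \rho_k\mf G) = 0$ to extract the linear contribution. I would pass to the orthogonalized coordinates $\tilde c_k = \mf G^{1/2}c_k$, $\tilde c_k^* = \mf G^{1/2}c_k^*$, and $\hat{\tilde c} = \mf G^{1/2}\hat c$ (under which the stated normalizations become $\|\tilde c_k\| = \tilde c_k^{*\prime}\tilde c_k = 1$) and rewrite the sample equation as $(\wh{\mf G}^o)^{-1}\wh{\mf M}^o\hat{\tilde c} = \hat\rho\hat{\tilde c}$. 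Left-multiplying by $\tilde c_k^{*\prime}$ and using $\tilde c_k^{*\prime}\mf M^o = \rho_k\tilde c_k^{*\prime}$ produces the starting identity
\[
 (\hat\rho - \rho_k)\,\tilde c_k^{*\prime}\hat{\tilde c} \;=\; \tilde c_k^{*\prime}\big((\wh{\mf G}^o)^{-1}\wh{\mf M}^o - \mf M^o\big)\hat{\tilde c}.
\]

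The key algebraic simplification is the identity $(\wh{\mf G}^o)^{-1}\wh{\mf M}^o - \mf M^o = (\wh{\mf G}^o)^{-1}(\wh{\mf M}^o - \rho_k\wh{\mf G}^o) + (\rho_k\mf I - \mf M^o)$, whose second summand is annihilated by $\tilde c_k^{*\prime}$. Writing $(\wh{\mf G}^o)^{-1} = \mf I - (\wh{\mf G}^o)^{-1}(\wh{\mf G}^o - \mf I)$ then decomposes the right-hand side evaluated at $\tilde c_k$ into the linear term $\tilde c_k^{*\prime}(\wh{\mf M}^o - \rho_k\wh{\mf G}^o)\tilde c_k = c_k^{*\prime}(\wh{\mf M} - \rho_k\wh{\mf G})c_k$ plus a residual bounded in absolute value by $\|(\wh{\mf G}^o)^{-1}\|\cdot\|\wh{\mf G}^o - \mf I\|\cdot\|(\wh{\mf M}^o - \rho_k\wh{\mf G}^o)\tilde c_k\|$. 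Using $(\mf M^o - \rho_k\mf I)\tilde c_k = 0$, the last factor equals $\|(\wh{\mf M}^o - \mf M^o)\tilde c_k - \rho_k(\wh{\mf G}^o - \mf I)\tilde c_k\| = O_p(\eta_{n,k,1}\vee\eta_{n,k,2})$, which is the source of the asymmetric error bound $O_p(\eta_{n,k,1}(\eta_{n,k,1}\vee\eta_{n,k,2}))$ claimed in the lemma.

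The hard part will be controlling the cross term $\tilde c_k^{*\prime}((\wh{\mf G}^o)^{-1}\wh{\mf M}^o - \mf M^o)(\hat{\tilde c} - \tilde c_k)$ (and the denominator $\tilde c_k^{*\prime}\hat{\tilde c} = 1 + O_p(\eta_{n,k})$) with comparable sharpness, since a direct Cauchy--Schwarz bound combined with $\|\hat{\tilde c} - \tilde c_k\| = O_p(\eta_{n,k})$ from Lemma \ref{lem:var} would only yield $O_p((\eta_{n,k,1}\vee\eta_{n,k,2})^2)$. Sharpening requires exploiting the first-order perturbation expansion $\hat{\tilde c} - \tilde c_k = -R\,((\wh{\mf G}^o)^{-1}\wh{\mf M}^o - \mf M^o)\tilde c_k + \text{h.o.t.}$, where $R$ is the reduced resolvent of $\mf M^o - \rho_k\mf I$ on the range of $\mf I - \tilde c_k\tilde c_k^{*\prime}$ (which is well-defined because $\rho_k$ is simple and isolated by Proposition \ref{p:id}). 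Substituting this into the cross term and again applying the $(\wh{\mf G}^o)^{-1}(\wh{\mf M}^o - \rho_k\wh{\mf G}^o)$ structure extracts another factor of $\eta_{n,k,1}$ from the $(\wh{\mf G}^o)^{-1}-\mf I$ piece, producing the same sharp asymmetric rate.

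The second conclusion is then immediate: under the strengthened conditions $\|\wh{\mf G}^o - \mf I\| = o_p(n^{-1/4})$ and $\|\wh{\mf M}^o - \mf M^o\| = o_p(n^{-1/4})$ one may take $\eta_{n,k,1}$ and $\eta_{n,k,2}$ both to be $o(n^{-1/4})$, so the error term is $o_p(n^{-1/2})$ and multiplying through by $\sqrt n$ yields the displayed asymptotic linearization.
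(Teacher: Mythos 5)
Your core decomposition is correct and is, in substance, the same as the paper's. The paper writes $\hat\rho-\rho_k$ as a trace involving the spectral projections $\wh{\mf P}_k$ and $\mf P_k$, but $\mr{trace}(\mf P_k(\wh{\mf G}^{-1}\wh{\mf M}-\mf G^{-1}\mf M))$ is exactly the left-eigenvector contraction $\tfrac{1}{\langle c_k,c_k^*\rangle_{\mf G}}\,c_k^{*\prime}\mf G(\wh{\mf G}^{-1}\wh{\mf M}-\mf G^{-1}\mf M)c_k$ that you obtain by hitting the sample eigen-equation with $\tilde c_k^{*\prime}$, and the paper's Step 2 uses the same $(\wh{\mf G}^o)^{-1}=\mf I-(\wh{\mf G}^o)^{-1}(\wh{\mf G}^o-\mf I)$ identity you use to peel off the linear term $c_k^{*\prime}(\wh{\mf M}-\rho_k\wh{\mf G})c_k$. (One housekeeping item you elide: all your bounds involving $\tilde c_k^{*\prime}(\cdot)$ need $\|\tilde c_k^*\|=O(1)$ under the normalization $\tilde c_k^{*\prime}\tilde c_k=1$, which the paper gets from the uniform resolvent bound $|1/\langle c_k,c_k^*\rangle_{\mf G}|=O(1)$; and you need $\hat\rho,\hat c$ to exist and be simple wpa1, i.e.\ Lemma \ref{lem:exist:hat}.)

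Where I disagree with you is on the "hard part." First, it is not actually needed: the paper itself bounds the cross term by the crude Cauchy--Schwarz route, getting $O_p(\eta_{n,k}\times(\eta_{n,k}\vee\eta_{n,k,1}\vee\eta_{n,k,2}))$ for that piece, and this suffices because the only conclusion used downstream is the second display, for which any product of two rates that are each $o_p(n^{-1/4})$ is $o_p(n^{-1/2})$. Second, your proposed sharpening does not deliver what you claim. After substituting the first-order expansion $\hat{\tilde c}-\tilde c_k\approx -R\Delta\tilde c_k$ with $\Delta=(\wh{\mf G}^o)^{-1}\wh{\mf M}^o-\mf M^o$, the cross term becomes (up to sign and higher-order terms) the quadratic form $\tilde c_k^{*\prime}\Delta R\Delta\tilde c_k$; both copies of $\Delta$ carry the full estimation error of $(\wh{\mf G}^o)^{-1}\wh{\mf M}^o$, so there is no mechanism by which one of them contributes only the $\wh{\mf G}$-error rate $\eta_{n,k,1}$ --- the bound you get is $O_p(\eta_{n,k}\times\eta_{n,k}^*)$-type, not $O_p(\eta_{n,k,1}\times(\eta_{n,k,1}\vee\eta_{n,k,2}))$. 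If your goal is to reproduce the first display of the lemma verbatim, neither your argument nor the paper's literally does so; both yield a remainder of order $O_p((\eta_{n,k}\vee\eta_{n,k,1})\times(\eta_{n,k,1}\vee\eta_{n,k,2}))$ using $\eta_{n,k}\leq\eta_{n,k,1}\vee\eta_{n,k,2}$. You should either drop the reduced-resolvent step and accept the cruder (but sufficient) bound, or state the remainder in that slightly weaker form.
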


\newpage

{ 
\small 
\singlespacing
\putbib
}
\end{bibunit}

\begin{bibunit}

\newpage
\clearpage
\pagenumbering{arabic}\renewcommand{\thepage}{\arabic{page}}
\setcounter{equation}{0}
\renewcommand{\theequation}{S.\arabic{equation}}

\begin{center}
{\Large Supplement to ``Nonparametric Stochastic Discount Factor Decomposition''}

\vskip 24pt
{\large Timothy M. Christensen}

\end{center}

\vskip 8pt

This supplementary material contains sufficient conditions for several assumptions in Sections \ref{s:est} and \ref{s:recursive} and proofs of all results in the main text.

\section{Some sufficient conditions}\label{ax:suff}

This appendix presents sufficient conditions for Assumptions \ref{a:var}, \ref{a:asydist}(b) and \ref{a:fp:var} and bounds for the terms $\eta_{n,k}$ and $\eta_{n,k}^*$ in display (\ref{e:etas}) and $\nu_{n,k}$ in display (\ref{e:nudef}). Proofs of results in this appendix are contained in the Online Appendix.

\subsection{Sufficient conditions for Assumptions \ref{a:var} and \ref{a:asydist}(b)} \label{ax:est:mat}

We assume that the state process $X = \{X_t : t \in T\}$ is either beta-mixing or rho-mixing. 
The beta-mixing coefficient between two $\sigma $-algebras ${\mathcal{A}}$ and ${\mathcal{B}}$ is:
\[
 2\beta ({\mathcal{A}},{\mathcal{B}})=\sup \sum_{(i,j)\in I\times J}|{\mathbb{P}}(A_{i}\cap B_{j})-{\mathbb{P}}(A_{i}){\mathbb{P}}(B_{j})|
\]
with the supremum taken over all $\mc A$-measurable finite partitions $\{A_{i}\}_{i\in I}$ and $\mc B$-measurable finite partitions $\{B_{j}\}_{j\in J}$. The beta-mixing coefficients of $X$ are defined as:
\[
\beta_q =\sup_{t}\beta (\sigma (\ldots ,X_{t-1},X_{t}),\sigma (X_{t+q},X_{t+q+1},\ldots ))\,.
\]
We say that $X$ is \emph{exponentially beta-mixing} if $\beta_q \leq Ce^{-c q}$ for some $C,c> 0$. The rho-mixing coefficients of $X$ are defined as:
\[
 \rho_q = \sup_{ \psi \in L^2 : \mb E[\psi] = 0, \|\psi\| = 1}  \mb E\big[ \mb E[\psi(X_{t+q})|X_t]^2\big]^{1/2}\,.
\] 
We say that $X$ is \emph{exponentially rho-mixing} if $\rho_q \leq e^{-c q}$ for some $c > 0$.

We use the sequence  $\xi_k = \sup_x \| \mf G^{-1/2} b^k(x)\|$ to bound convergence rates. When $X$ has bounded rectangular support and $Q$ has a density that is bounded away from $0$ and $\infty$, $\xi_k$ is known to be $O(\sqrt k)$ for (tensor-product) spline, cosine, and certain wavelet bases and $O(k)$ for (tensor-product) polynomial series \citep{Newey1997,ChenChristensen-reg}. It is also possible to derive alternative sufficient conditions in terms of higher moments of $\| \mf G^{-1/2} b^k(X_t)\|$ (instead of $\sup_x \| \mf G^{-1/2} b^k(x)\|$) by extending arguments in \cite{Hansen2015WP} to accommodate weakly-dependent data and asymmetric matrices.

\subsubsection{Sufficient conditions in Case 1}

The first result below uses an exponential inequality for weakly-dependent random matrices from \cite{ChenChristensen-reg}.  The second extends arguments from \cite{Gobetetal}.

\begin{lemma}\label{lem:beta:1}
Let the following hold:
\begin{enumerate}
\item[(a)] $X$ is exponentially beta-mixing
\item[(b)] $\mb E[m(X_t,X_{t+1})^r] < \infty$ for some $r>2$
\item[(c)] $\xi_k^{2+4/r} (\log n)^2/n = o(1)$.
\end{enumerate}
Then: (1) Assumption \ref{a:var} holds. \\
(2) We may take $\eta_{n,k}^{\phantom *} = \eta_{n,k}^* =  \xi_k^{1+2/r} (\log n)/\sqrt n$ in display (\ref{e:etas}). \\
(3) If, in addition, $\xi_k^{4+8/r}(\log n)^4/n = o(1)$ then Assumption \ref{a:asydist}(b) holds.
\end{lemma}

\begin{lemma}\label{lem:rho:1}
Let the following hold:
\begin{enumerate}
\item[(a)] $X$ is exponentially rho-mixing
\item[(b)] $\mb E[m(X_t,X_{t+1})^r] < \infty$ for some $r>2$
\item[(c)] $\xi_k^{2+4/r} k/n = o(1)$.
\end{enumerate}
Then: (1) Assumption \ref{a:var} holds. \\
(2) We may take $\eta_{n,k}^{\phantom *} = \eta_{n,k}^* =  \xi_k^{1+2/r} /\sqrt n$ in display (\ref{e:etas}). \\
(3) If, in addition, $\xi_k^{4+8/r}k^2/n = o(1)$ then Assumption \ref{a:asydist}(b) also holds.
\end{lemma}

\subsubsection{Sufficient conditions in Case 2 with parametric first-stage}

The following lemma presents one set of sufficient conditions for Assumption \ref{a:var} and \ref{a:asydist}(b) when $\alpha_0 \in \mc A \subseteq \mb R^{{d_\alpha}}$ is a finite-dimensional parameter.

\begin{lemma}\label{lem:beta:2}
Let the conditions of Lemma \ref{lem:beta:1} hold for $m(x_0,x_1) = m(x_0,x_1;\alpha_0)$, and let:
\begin{enumerate}
\item[(a)] $\|\hat \alpha - \alpha_0\| = O_p(n^{-1/2})$
\item[(b)] $m(x_0,x_1;\alpha)$ be continuously differentiable in $\alpha$ on a neighborhood $N$ of $\alpha_0$ for all $(x_0,x_1) \in \mc X^2$ and let there exist a function $\bar m : \mc X^2 \to \mb R$ with $\mb E[\bar m(X_t,X_{t+1})^2 ] < \infty$ such that:
\[
 \sup_{\alpha \in N} \left\| \frac{\partial m(x_0,x_1;\alpha)}{\partial \alpha} \right\| \leq \bar m(x_0,x_1) \quad \mbox{for all $(x_0,x_1) \in \mc X^2$.}
\]
\end{enumerate}
Then: (1) Assumption \ref{a:var} holds. \\
(2) We may take $\eta_{n,k}^{\phantom *} = \eta_{n,k}^* =  \xi_k^{1+2/r} (\log n)/\sqrt n$ in display (\ref{e:etas}). \\
(3) If, in addition, $\xi_k^{4+8/r}(\log n)^4/n = o(1)$ then Assumption \ref{a:asydist}(b) holds.
\end{lemma}

The conditions on $k$ and bounds for $\eta_{n,k}$ and $\eta_{n,k}^*$ are the same as Lemma \ref{lem:beta:1}. Therefore, here first-stage estimation of $\alpha$ does not reduce the convergence rates of $\wh{\mf G}$ and $\wh{\mf M}$ relative to Case 1.

\subsubsection{Sufficient conditions in Case 2 with semi/nonparametric first-stage}

We now present one set of sufficient conditions for Assumptions \ref{a:var} and \ref{a:asydist}(b) when $\alpha_0 \in \mc A \subseteq \mb A$ is an  infinite-dimensional parameter and the parameter space is $\mc A \subseteq \mb A$ (a Banach space) equipped with a norm $\|\cdot\|_{\mc A}$. This includes the case in which $\alpha$ is a function, i.e. $\alpha = h$ with $\mb A = \mb H$ a function space, and the case in which $\alpha$ consists of both finite-dimensional and function parts, i.e. $\alpha = (\theta,h)$ with $\mb A = \Theta \times \mb H$ where $\Theta \subseteq \mb R^{\dim (\theta)}$. 

For each $\alpha \in \mc A$ we define $\mb M^{(\alpha)}$ as the operator $\mb M^{(\alpha)}\psi(x) = \mb E[m(X_t,X_{t+1};\alpha) \psi(X_{t+1}) | X_t = x]$ with the understanding that $\mb M^{(\alpha_0)} = \mb M$. 
Let $\mc M = \{ m(x_0,x_1;\alpha) - m(x_0,x_1;\alpha_0) : \alpha \in \mc A\}$. We say $\mc M$ has an envelope function $E$ if there exists some measurable $E : \mc X^2 \to [1,\infty)$ such that $|m(x_0,x_1)| \leq E(x_0,x_1)$ for every $(x_0,x_1) \in \mc X$ and $m \in \mc M$. Let $\mc M^* = \{ m /E : m \in \mc M\}$. The functions in $\mc M^*$ are clearly bounded by $\pm 1$. Let $N_{[\,\,]}(u,\mc M^*,\|\cdot\|_{p})$ denote the entropy with bracketing of $\mc M^*$ with respect to the $L^p$ norm $\|\cdot\|_{p}$.  Finally, let $\ell^*(\alpha) = \|\mb M^{(\alpha)} - \mb M\|$ and observe that $\ell^*(\alpha_0) = 0$.

\begin{lemma}\label{lem:beta:3}
Let the conditions of Lemma \ref{lem:beta:1} hold for $m(x_0,x_1) = m(x_0,x_1;\alpha_0)$, and let:
\begin{enumerate}
\item[(a)] $\mc M$ have envelope function $E$ with $\|E\|_{4s} < \infty$ for some $s > 1$
\item[(b)] $\log N_{[\,\,]}(u,\mc M^*,\|\cdot\|_{\frac{4sv}{2s-v}}) \leq C_{[\,\,]} u^{-2\zeta}$ for some constants $C_{[\,\,]} > 0$, $\zeta \in (0,1)$ and $v \in (1,2s)$
\item[(c)] $\ell^*(\alpha)$ is pathwise differentiable at $\alpha_0$ with $| \ell^*(\alpha) - \ell^*(\alpha_0) - \dot \ell^*_{\alpha_0} [\alpha - \alpha_0]| = O(\|\alpha - \alpha_0\|^2_{\mc A})$, $\|\hat \alpha - \alpha_0\|_{\mc A} = o_p(n^{-1/4})$ and $\sqrt n \dot \ell_{\alpha_0}^*[\hat \alpha - \alpha_0] = O_p(1)$
\item[(d)] $\xi_k^{4-\frac{2s-v}{sv}} (k \log k) /n = o(1)$, $\xi_k^{\zeta \frac{2s-v}{2sv}} = O( \sqrt {k \log k})$ and $(\log n) = O(\xi_k^{1/3})$.
\end{enumerate}
Then: (1) Assumption \ref{a:var} holds. \\
(2) We may take $\eta_{n,k}^{\phantom *} = \eta_{n,k}^* =  \xi_k^{1+2/r} (\log n)/\sqrt n + \xi_k^{2-\frac{2s-v}{2sv}}\sqrt{ (k \log k) /n}$ in display (\ref{e:etas}). \\
(3) If, in addition, $[\xi_k^{4+8/r}(\log n)^4 + \xi_k^{8-\frac{4s-2v}{sv}} (k \log k)^2 ]/n = o(1)$ then Assumption \ref{a:asydist}(b) holds.
\end{lemma}

Note that the condition $\xi_k^{\zeta \frac{2s-v}{2sv}} = O( \sqrt {k \log k})$ is trivially satisfied when $\xi_k = O(\sqrt k)$. 

\subsection{Sufficient conditions for Assumption \ref{a:fp:var}} \label{ax:est:mat:fp}

The following is one set of sufficient conditions for Assumption \ref{a:fp:var} assuming beta-mixing. Recall that $\xi_k = \sup_x \| \mf G^{-1/2} b^k(x)\|$. 

\begin{lemma}\label{lem:beta:4}
Let the following hold:
\begin{enumerate}
\item[(a)] $X$ is exponentially beta-mixing
\item[(b)] $\mb E[(G_{t+1}^{1-\gamma})^{2s}] < \infty$ for some $s > 1$
\item[(c)] $[ \xi_k^2 (\log n)^2  + \xi_k^{2+2\beta}  k]/n= o(1)$ and $(\log n)^{\frac{2s-1}{s-1}}k/n = o(1)$.
\end{enumerate}
Then: (1) Assumption \ref{a:fp:var} holds. \\
(2) We may take $\nu_{n,k} = \xi_k^{1+\beta} \sqrt{k/n} + \xi_k(\log n)/\sqrt n$ in display (\ref{e:nudef}).
\end{lemma}

\section{Proofs of results in the main text} \label{ax:proofs}

{\bf Notation:} For $v \in \mb R^k$, define: 
\[
 \|v\|_{\mf G}^2 = v' \mf G_k v
\]
or equivalently $\|v\|_{\mf G} = \|\mf G_k^{1/2} v\|$. For any matrix $\mf A \in \mb R^{k \times k}$ we define:
\[
 \|\mf A\|_{\mf G} = \sup\{ \|\mf A v\|_{\mf G} : v \in \mb R^k, \|v\|_{\mf G} = 1\}\,.
\]
We also define the inner product weighted by $\mf G_k$, namely $\langle u,v \rangle_{\mf G} = u' \mf G_k v$. The inner product $\langle \cdot,\cdot \rangle_{\mf G}$ and its norm $\|\cdot\|_{\mf G}$ are germane for studying convergence of the matrix estimators, as $(\mb R^k,\langle \cdot,\cdot\rangle_{\mf G})$ is isometrically isomorphic to $(B_k,\langle \cdot,\cdot \rangle)$. The notation $a_n \lesssim b_n$ for two positive sequences $a_n $ and $b_n$ means that there exists a finite positive constant $C$ such that $a_n \leq C b_n$ for all $n$ sufficiently large; $a_n \asymp b_n$ means $a_n \lesssim b_n$ and $b_n \lesssim a_n$.

\subsection{Proofs of results in Sections \ref{s:setup}, \ref{s:est} and \ref{s:recursive}}

\begin{proof}[Proof of Proposition \ref{p:id}]
Theorem V.6.6 of \cite{Schaefer1974} implies, in view of Assumption \ref{a:id:0}, that $\rho := r(\mb M) > 0$ and that $\mb M$ has a unique positive eigenfunction $\phi \in L^2$ corresponding to $\rho$. Applying the result to $\mb M^*$ in place of $\mb M$ guarantees existence of $\phi^* \in L^2$. This proves part (a). Theorem V.6.6 of \cite{Schaefer1974} also implies that $\rho$ is isolated and the largest eigenvalue of $\mb M$. Theorem V.5.2(iii) of \cite{Schaefer1974}, in turn, implies that $\rho$ is simple, completing the proof of part (c). Theorem V.5.2(iv) of \cite{Schaefer1974} implies that $\phi$ is the unique positive solution to (\ref{e:pev}). The same result applied to $\mb M^*$ in place of $\mb M$ guarantees uniqueness of $\phi^*$, proving part (b).  
Part (d) follows from Proposition \ref{p:lr}.
\end{proof}

\begin{proof}[Proof of Theorem \ref{t:rate}]
Immediate from Lemmas \ref{lem:bias} and \ref{lem:var}.
\end{proof}

\begin{proof}[Proof of Corollary \ref{c:rate}]
We first verify Assumption \ref{a:bias}. By Theorem 12.8 of \cite{Schumaker2007} and (ii)--(iv), for each $\psi \in L^2$ there exists a $h_k(\mb M \psi) \in B_k$ such that: 
\[
 \|\mb M \psi - h_k(\mb M \psi)\| \lesssim k^{-\bar p/d} \|\mb M \psi\|_{W^{\bar p}} \lesssim k^{-\bar p/d} \|\psi\|\,.
\]
Therefore, 
\[
 \|\mb M \psi - \Pi_k \mb M \psi\| = \|\mb M \psi - h_k(\mb M \psi)  + \Pi_k (h_k(\mb M \psi) - \mb M \psi)\| \leq 2 \|\mb M \psi - h_k(\mb M \psi)\| \lesssim k^{-\bar p/d} \|\psi\|
\]
and so $\|\mb M - \Pi_k \mb M\| = O(k^{-\bar p/d}) = o(1)$ as required. 

Similar arguments yield $\delta_k= O(k^{-p/d})$ and $\delta_k^* = O(k^{-p/d})$.

By Lemma \ref{lem:rho:1}, conditions (iv)--(vii) are sufficient for Assumption \ref{a:var} and we may take $\eta_{n,k} = \eta_{n,k}^* = k^{(r+2)/(2r)}/\sqrt n$. Choosing $k \asymp n^{\frac{rd}{2rp+(2+r)d}}$ balances bias and variance terms and we obtain the convergence rates as stated.
\end{proof}

\begin{proof}[Proof of Remark \ref{rmk:delta}]
First observe that $\mb M \phi = \sum_{n=1}^\infty \mu_n \langle \phi , \varphi_n \rangle g_n$. Taking the inner product of both sides of $\mb M \phi = \rho\phi$ with $g_n$, we obtain  $\mu_n \langle \phi,\varphi_n \rangle = \rho \langle \phi,g_n\rangle$ for each $n \in \mb N$. By Parseval's identity, $\| \phi\|^2 = \sum_{n \in \mb N} \langle \phi,\varphi_n \rangle^2 \geq \rho^2 \sum_{n \in \mb N: \mu_n > 0} \mu_n^{-2} \langle \phi,g_n \rangle^2$. Similarly, $\|\phi^*\|^2 \geq \rho^2 \sum_{n \in \mb N:\mu_n > 0} \mu_n^{-2} \langle \phi^*,\varphi_n \rangle^2$. Note that $\langle \phi,g_n \rangle = 0$ and $\langle \phi^*,\varphi_n \rangle = 0$ if $\mu_n = 0$.

As $B_k$ spans the linear subspace in $L^2$ generated by $\{g_n\}_{n=1}^k$, we have $\phi_k := \sum_{n=1}^k \langle \phi,g_n \rangle g_n \in B_k$. Therefore, assuming $\mu_{k+1} > 0$ (else the result is trivially true): 
\[
 \| \phi - \phi_k\|^2 = \sum_{n \geq k+1} \langle \phi,g_n \rangle^2 = \mu_{k+1}^2 \sum_{n \geq k+1} \frac{ \langle \phi,g_n \rangle^2 }{\mu_{k+1}^2} \leq \mu_{k+1}^2 \sum_{n \geq k+1:\mu_n > 0} \frac{ \langle \phi,g_n \rangle^2 }{\mu_n^2} \leq \mu_{k+1}^2 \frac{\|\phi\|^2}{\rho^2}\,.
\]
It follows that: 
\[
 \delta_k = \| \phi - \Pi_k \phi\| = \| \phi - \phi_k + \Pi_k (\phi_k - \phi)\| \leq 2 \| \phi - \phi_k\| = O(\mu_{k+1})\,.
\]
A similar argument gives $\delta_k^* = O(\mu_{k+1})$.
\end{proof}

Before proving Theorem \ref{t:asydist:1} we first present a lemma that controls higher-order bias terms involving $\phi_k^{\phantom *}$ and $\phi_k^*$. Define:
\[
 \psi_{k,\rho}(x_0,x_1) = \phi_k^*(x_0) m(x_0,x_1) \phi_k^{\phantom{*}}(x_1) - \rho_k \phi^*_k(x_0) \phi_k^{\phantom{*}}(x_0)
\]
with $\phi_k^{\phantom *}$ and $\phi_k^*$ normalized so that $\|\phi_k\| = 1$ and $\langle \phi_k^{\phantom *}, \phi_k^* \rangle = 1$, and:
\[
 \Delta_{\psi,n,k} = \frac{1}{n} \sum_{t=0}^{n-1} \big(  \psi_{\rho,k}(X_t,X_{t+1}) - \psi_\rho(X_t,X_{t+1}) \big)
\]
where $\psi_\rho$ is from display (\ref{e:inf:def}).

To simplify notation, let $\phi_t = \phi(X_t)$, $\phi^*_t = \phi^*(X_t)$, $\phi_{k,t} = \phi_k(X_t)$ and $\phi^*_{k,t} = \phi_k^*(X_t)$.

\begin{lemma} \label{lem:inf:1}
Assumption \ref{a:id} and \ref{a:bias} hold. Then: $\Delta_{\psi,n,k} = O_p( \delta_k^{\phantom *} + \delta_k^*)$.
\end{lemma}

\begin{proof}[Proof of Lemma \ref{lem:inf:1}]
First write 
\begin{align*}
 \Delta_{\psi,n,k}  & =  \frac{1}{n} \sum_{t=0}^{n-1} (\phi^*_{k,t} - \phi^*_t) m(X_t,X_{t+1}) \phi_{k,t+1} + \frac{1}{n} \sum_{t=0}^{n-1} \phi^*_t m(X_t,X_{t+1}) (\phi_{k,t+1} - \phi_{t+1}) \\
 & \quad - (\rho_k - \rho) \frac{1}{n} \sum_{t=0}^{n-1} \phi^*_{k,t} \phi_{k,t} - \rho \frac{1}{n} \sum_{t=0}^{n-1} (\phi^*_{k,t} \phi_{k,t} - \phi^*_t \phi_t) \quad  =: \quad \wh T_1 + \wh T_2 + \wh T_3 + \wh T_4\,.
\end{align*}
By iterated expectations:
\begin{align*}
 \mb E[ | (\phi^*_{k,t} - \phi^*_t) m(X_t,X_{t+1}) \phi_{k,t+1}|] 
 & = \langle |\phi^*_k - \phi^*|, \mb M (|\phi_k|) \rangle 
 \leq \|\phi_k^* - \phi^*\|  \| \mb M \| \| \phi_k\| = O(\delta_k^*)
\end{align*}
using Cauchy-Schwarz, boundedness of $\mb M$ (Assumption \ref{a:id}) and Lemma \ref{lem:bias} (note that the normalization $\langle \phi_k^*,\phi_k^{\phantom *} \rangle = 1$ and $\langle \phi,\phi^* \rangle = 1$ instead of $\|\phi_k^*\| = 1$ and $\| \phi^*\| = 1$ do not affect the conclusions of Lemma \ref{lem:bias}). Markov's inequality then implies $\wh T_1 = O_p(\delta_k^*)$. Similarly,
\begin{align*}
 \mb E[|\phi^*_t m(X_t,X_{t+1}) (\phi_{k,t+1} - \phi_{t+1})|] 
 = \langle \phi^* , \mb M ( | \phi_{k} - \phi|) \rangle
  \leq \| \phi^*\| \|\mb M\| \| \phi_k - \phi\| = O(\delta_k) 
\end{align*}
and so $\wh T_2 = O_p(\delta_k)$.

Since $\rho_k - \rho = O(\delta_k)$ by Lemma \ref{lem:bias}(a) and $\frac{1}{n} \sum_{t=0}^{n-1} \phi^*_{k,t} \phi_{k,t}  = O_p(1)$ follows from Lemma \ref{lem:bias}(b)(c), we obtain $\wh T_3 = O_p(\delta_k)$. Finally,
\begin{align*}
 \mb E[|\phi^*_{k,t} \phi_{k,t} - \phi^*_t \phi_t|] 
 & \leq \|\phi_k^* - \phi^*\| \| \phi_k\| + \| \phi^*\| \| \phi_k - \phi\| = O(\delta_k^{\phantom *} + \delta_k^*)
\end{align*}
again by Cauchy-Schwarz and Lemma \ref{lem:bias}. Therefore, $\wh T_4 = O_p(\delta_k^{\phantom *} + \delta_k^*)$.
\end{proof}

\begin{proof}[Proof of Theorem \ref{t:asydist:1}]
First note that:
\begin{align}
 \sqrt n ( \hat \rho - \rho) & = \sqrt n ( \hat \rho - \rho_k) + \sqrt n ( \rho_k - \rho) \notag \\
 & = \sqrt n ( \hat \rho - \rho_k) + o(1) \notag \\
 & = \sqrt n c_k^{*\prime} ( \wh{\mf M} - \rho_k \wh{\mf G}) c_k + o_p(1) \label{a:asydist:pf:1}
\end{align}
where the second line is by Assumption \ref{a:asydist}(a) and the third line is by Lemma \ref{lem:expansion} and Assumption \ref{a:asydist}(b) (under the normalizations $\|\mf G c_k\| = 1$ and $c_k^{* \prime } \mf G c_k^{\phantom *} = 1$). By identity, we may write the first term on the right-hand side of display (\ref{a:asydist:pf:1}) as:
\begin{align}
 \sqrt n c_k^{*\prime} ( \wh{\mf M} - \rho_k \wh{\mf G}) c_k 
 & = \frac{1}{\sqrt n} \sum_{t=0}^{n-1} \psi_{\rho}(X_t,X_{t+1}) + \sqrt n \times \Delta_{\psi,n,k} \notag \\
 & = \frac{1}{\sqrt n} \sum_{t=0}^{n-1} \psi_{\rho}(X_t,X_{t+1}) + o_p(1) \label{a:asydist:pf:2}
\end{align}
where the second line is by Lemma \ref{lem:inf:1} and Assumption \ref{a:asydist}(a).
The result follows by substituting (\ref{a:asydist:pf:2}) into (\ref{a:asydist:pf:1}) and applying a CLT for stationary and ergodic martingale differences (e.g. \cite{Billingsley1961}), which is valid in view of Assumption \ref{a:asydist}(c).
\end{proof}

\begin{proof}[Proof of Theorem \ref{t:eff}]
This is a consequence of Theorem \ref{t:eff:main} in Appendix \ref{ax:inf}.
\end{proof}

\begin{proof}[Proof of Theorem \ref{t:asydist:2a}]
Let $m_t(\alpha) = m(X_t,X_{t+1};\alpha)$.  By Assumption \ref{a:asydist}(a), Lemma \ref{lem:expansion} and Assumption \ref{a:asydist}(b):
\begin{align}
 \sqrt n ( \hat \rho - \rho)
 & = \frac{1}{\sqrt n } \sum_{t=0}^{n-1} \Big( \phi^*_{k,t}\phi_{k,t+1} m(X_t,X_{t+1},\hat \alpha) - \rho_k \phi^*_{k,t}\phi_{k,t} \Big) + o_p(1) \notag \\
 & = \frac{1}{\sqrt n} \sum_{t=0}^{n-1} \psi_\rho(X_t,X_{t+1})  + \frac{1}{\sqrt n} \sum_{t=0}^{n-1}  \phi_{k,t}^*\phi_{k,t+1}^{\phantom *}\big(   m_t(\hat \alpha) - m_t( \alpha_0) \big) + o_p(1) \label{e:2a:pf1} 
\end{align}
where the second equality is by Lemma \ref{lem:inf:1}.

We decompose the second term on the right-hand side of (\ref{e:2a:pf1}) as:
\begin{align}
 \frac{1}{\sqrt n} \sum_{t=0}^{n-1} \phi_{k,t}^*\phi_{k,t+1}^{\phantom *} \big( m_t(\hat \alpha) - m_t( \alpha_0) \big) \notag
 & = \frac{1}{\sqrt n} \sum_{t=0}^{n-1} \phi^*_t \phi_{t+1}^{\phantom *}\frac{\partial m_t(\alpha_0)}{\partial \alpha'} (\hat \alpha - \alpha_0) \\
 & \quad + \frac{1}{\sqrt n} \sum_{t=0}^{n-1} \phi_t^*\phi_{t+1}^{\phantom *}\Big( m_t(\hat \alpha) - m_t(\alpha_0) - \frac{\partial m_t(\alpha_0)}{\partial \alpha'} (\hat \alpha - \alpha_0) \Big) \notag \\
 & \quad + \frac{1}{\sqrt n} \sum_{t=0}^{n-1} (  \phi_{k,t}^*\phi_{k,t+1}^{\phantom *} -  \phi_t^*\phi_{t+1}^{\phantom *} ) ( m_t(\hat \alpha) - m_t(\alpha_0) ) \notag \\
 & =:  \frac{1}{\sqrt n} \sum_{t=0}^{n-1} \phi^*_t \phi_{t+1}^{\phantom *}\frac{\partial m_t(\alpha_0)}{\partial \alpha'} (\hat \alpha - \alpha_0) + \wh T_1 + \wh T_2 \label{e:2a:pf2} \,.
\end{align}
For term $\wh T_1$, whenever $\hat \alpha \in N$ (which it is wpa1) we may take a mean value expansion to obtain:
\[
 \wh T_1 = \frac{1}{n} \sum_{t=0}^{n-1} \phi_t^*\phi_{t+1}^{\phantom *}\Big( \frac{\partial m_t(\tilde  \alpha)}{\partial \alpha'}  - \frac{\partial m_t(\alpha_0)}{\partial \alpha'}\Big)  \times \sqrt n (\hat \alpha - \alpha_0) 
\]
where $\tilde \alpha$ is in the segment between $\hat \alpha$ and $\alpha_0$. It follows by routine arguments (e.g. Lemma 4.3 of \cite{NeweyMcFadden}, replacing the law of large numbers by the ergodic theorem) that:
\begin{equation} \label{e:2a:pf2:1}
 \frac{1}{ n} \sum_{t=0}^{n-1} \phi_t^*\phi_{t+1}^{\phantom *}\Big( \frac{\partial m_t(\tilde  \alpha)}{\partial \alpha}  - \frac{\partial m_t(\alpha_0)}{\partial \alpha}\Big)  = o_p(1)
\end{equation}
holds under Assumption \ref{a:parametric}(c)(d). Moreover, $\sqrt n (\hat \alpha - \alpha_0) = O_p(1)$ by Assumption \ref{a:parametric}(a)(b). Therefore, $\wh T_1 = o_p(1)$.

For term $\wh T_2$, observe that by Assumption \ref{a:parametric}(c), whenever $\hat \alpha \in N$ (which it is wpa1) we have:
\[
 |m_t(\hat \alpha ) - m_t(\alpha_0)| \leq \bar m(X_t,X_{t+1}) \times \| \hat \alpha - \alpha_0\|
\]
where $\max_{0 \leq t \leq n-1} |\bar m(X_t,X_{t+1}) | = o_p(n^{1/s})$ because $E[\bar m(X_t,X_{t+1})^s] < \infty$. Therefore, wpa1 we have:
\begin{align*}
 \wh T_2 & \leq \sqrt n \times  \frac{1}{n} \sum_{t=0}^{n-1} |  \phi_{k,t}^*\phi_{k,t+1}^{\phantom *} -  \phi_t^*\phi_{t+1}^{\phantom *} | \times \max_{0 \leq t \leq n-1} |\bar m(X_t,X_{t+1}) | \times \| \hat \alpha - \alpha_0 \| \\
 & =  \frac{1}{n} \sum_{t=0}^{n-1} |  \phi_{k,t}^*\phi_{k,t+1}^{\phantom *} -  \phi_t^*\phi_{t+1}^{\phantom *} |  \times o_p(n^{1/s})  \quad = \quad O_p( \delta_k^{\phantom *} + \delta_k^*) \times  o_p(n^{1/s}) 
\end{align*}
by similar arguments to the proof of Lemma \ref{lem:inf:1}. Finally, observe that $n^{1/s} ( \delta_k^{\phantom *} + \delta_k^* ) = o(1)$ by Assumption \ref{a:asydist}(a) and the condition $s \geq 2$. Therefore, $\wh T_2 = o_p(1)$.

Since $\wh T_1$ and $\wh T_2$ in display (\ref{e:2a:pf2}) are both $o_p(1)$, we have:
\begin{align*}
  \frac{1}{\sqrt n} \sum_{t=0}^{n-1} \phi_{k,t}^*\phi_{k,t+1}^{\phantom *} \big( m_t(\hat \alpha) - m_t( \alpha_0) \big) 
  & = \bigg( \frac{1}{ n} \sum_{t=0}^{n-1} \phi^*_t \phi_{t+1}^{\phantom *}\frac{\partial m_t(\alpha_0)}{\partial \alpha'} \bigg) \sqrt n (\hat \alpha - \alpha_0) + o_p(1) \\
  & = \mb E \left[  \phi^*(X_t) \phi(X_{t+1})\frac{\partial m(X_t,X_{t+1};\alpha_0)}{\partial \alpha'} \right] \sqrt n (\hat \alpha - \alpha_0) + o_p(1) \,.
\end{align*}
Substituting into (\ref{e:2a:pf1}) and using Assumption \ref{a:parametric}(a):
\[
 \sqrt n (\hat \rho - \rho) = \frac{1}{\sqrt n } \sum_{t=0}^{n-1} h_{[\mr{2a}]}'  \bigg( \begin{array}{c} \psi_{\rho,t} \\ \psi_{\alpha,t} \end{array} \bigg)  + o_p(1)
\]
and the result follows by Assumption \ref{a:parametric}(b).
\end{proof}

\begin{proof}[Proof of Theorem \ref{t:asydist:2b}]
We follow similar arguments to the proof of Theorem \ref{t:asydist:2a}. Here, we can decompose the second term on the right-hand side of display (\ref{e:2a:pf1}) as:
\begin{align*}
 \frac{1}{\sqrt n} \sum_{t=0}^{n-1} \phi_{k,t}^*\phi_{k,t+1}^{\phantom *} \big( m_t(\hat \alpha) - m_t( \alpha_0) \big) & = \sqrt n (\ell(\hat \alpha) - \ell(\alpha_0) )+ \wh T_1 + \wh T_2 \\
 & = \frac{1}{\sqrt n } \sum_{t=0}^{n-1} \psi_{\ell,t} + o_p(1) + \wh T_1 + \wh T_2
\end{align*}
where the second line is by Assumption \ref{a:nonpara}(b)(c), with:
\begin{align*}
 \wh T_1 & = \frac{1}{\sqrt n} \sum_{t=0}^{n-1} \Big( \phi_t^* \phi_{t+1}^{\phantom *} (m_t(\hat \alpha) - m_t(\alpha_0))  - (\ell(\hat \alpha) - \ell(\alpha_0)) \Big) \\
 \wh T_2 & = \frac{1}{\sqrt n} \sum_{t=0}^{n-1} (\phi_{k,t}^*\phi_{k,t+1}^{\phantom *} - \phi_t^* \phi_{t+1}^{\phantom *}) (m_t(\hat \alpha) - m_t(\alpha_0)) \,.
\end{align*}
 The result will follow by Assumption \ref{a:nonpara}(c)(d) provided $\wh T_1$ and $\wh T_2$ are both $o_p(1)$.

For term $\wh T_1$, notice that $\wh T_1 = \mc Z_n(g_{\hat \alpha})$ where $\mc Z_n$ denotes the centered empirical process on $\mc G$. We have $\mb K(g_{\hat \alpha},g_{\hat \alpha}) = o_p(1)$ by Assumption \ref{a:nonpara}(c). Appropriately modifying the arguments of Lemma 19.24 in \cite{vdV} (i.e. replacing the $L^2$ norm by the norm induced by $\mb K$, which is the appropriate semimetric for the weakly dependent case) gives $\mc Z_n(g_{\hat \alpha}) \to_p 0$.

For term $\wh T_2$, observe that:
\[
 \mb E[ | (\phi_{k,t}^*\phi_{k,t+1}^{\phantom *} - \phi_t^* \phi_{t+1}^{\phantom *}) (m_t(\hat \alpha) - m_t(\alpha_0)) |] \lesssim \mb E \big[ |(\phi_{k,t}^*\phi_{k,t+1}^{\phantom *} - \phi_t^* \phi_{t+1}^{\phantom *}) |^{s/(s-1)} \big]^{(s-1)/s}
\]
by Assumption \ref{a:nonpara}(e) and H\"older's inequality. We complete the proof assuming  $\|\phi_k\|_{2s/(s-2)} = O(1)$ and $\|\phi^*\|_{2s/(s-2)} < \infty$; the proof under the alternative condition in Assumption \ref{a:nonpara}(e) is analogous. By the Minkowski and H\"older's inequalities and Assumption \ref{a:nonpara}(e) we have: 
\begin{align*}
 & \mb E \big[ |(\phi_{k,t}^*\phi_{k,t+1}^{\phantom *} - \phi_t^* \phi_{t+1}^{\phantom *}) |^{s/(s-1)} \big]^{(s-1)/s} \\
 & \leq \mb E \big[ |(\phi_{k,t}^* - \phi_t^* ) \phi_{k,t+1}^{\phantom *} |^{s/(s-1)} \big]^{(s-1)/s} + \mb E \big[ |\phi_t^* (\phi_{k,t+1}^{\phantom *} - \phi_{t+1}^{\phantom *}) |^{s/(s-1)} \big]^{(s-1)/s} \\
 & \leq \| \phi_k^* - \phi^*\| \|\phi_k\|_{2s/(s-2)} + \| \phi_k - \phi\| \| \phi^*\|_{2s/(s-2)} \\
 & = O(1) \times O(\delta_k^* + \delta_k^{\phantom *}) \,.
\end{align*}
It follows by Assumption \ref{a:asydist}(a) and Markov's inequality that $\wh T_2 = o_p(1)$.
\end{proof}

The following Lemma is based on Lemma 6.10 in \cite{AGN}.

\begin{lemma}\label{lem:fpiter}
Let the conditions of Proposition \ref{p:nl} hold. Then: there exists finite positive constants $C,c$ and a neighborhood $N$\! of $h$ such that:
\[
 \| \mb T^n \psi - h\| \leq C e^{-cn} 
\]
for all $\psi \in N$.
\end{lemma}

\begin{proof}[Proof of Lemma \ref{lem:fpiter}]
Fix some constant $\bar a$ such that $r(\mb D_h) < \bar a < 1$. By the Gelfand formula, there exists $m \in \mb N$ such that $\|\mb D_h^m\| < \bar a^m$. Fr\'echet differentiability of $\mb T$ at $h$ together with the chain rule for Fr\'echet derivatives implies that:
\[
 \|\mb T^m \psi - \mb T^m h - \mb D_h^m (\psi-h)\| = o(\|\psi - h\|) \quad \mbox{as $\|\psi - h\| \to 0$}
\]
hence:
\[
 \| \mb T^m \psi - h\| \leq \| \mb D_h^m \| \| \psi - h\| + o(\| \psi - h\|) < (\bar a^m+o(1)) \times \| \psi - h\|\,.
\]
We may choose $\epsilon > 0$ and $a \in (\bar a,1)$ such that $\| \mb T^m \psi - h\| \leq a^m \| \psi - h\|$ for all $\psi \in B_\epsilon(h) : = \{ \psi \in L^2 : \| \psi - h\| < \epsilon\}$. ($B_\epsilon(h)$ is the neighborhood in the statement of the lemma.) Then for any $\psi \in B_\epsilon(h)$ and any $k \in \mb N$ we have: 
\begin{equation} \label{lem:fpiter:1}
 \| \mb T^{km} \psi - h\| \leq a^{km} \| \psi - h\|\,.
\end{equation}
It is straightforward to show via induction that boundedness of $\mb G$ and homogeneity of degree $\beta$ of $\mb T$ together imply: 
\begin{equation} \label{lem:fpiter:2}
 \| \mb T^n \psi_1 - \mb T^n \psi_2 \| \leq ( 1 + \| \mb G\|)^{\frac{1}{1-\beta}} \| \psi_1 - \psi_2\|^{\beta^n}
\end{equation}
for any $\psi_1,\psi_2 \in L^2$. 

Take any $n \geq m$ and let $k = \lfloor n/m \rfloor $. By (\ref{lem:fpiter:1}) and (\ref{lem:fpiter:2}) we have:
\begin{align*}
 \| \mb T^n \psi - h\| & = \| \mb T^{(n-km)} \mb T^{km} \psi - \mb T^{(n-km)} h\| \\
 & \leq (1+\|\mb G\|)^{\frac{1}{1-\beta}} \|\mb T^{km} \psi - h\|^{\beta^{(n-km)}} \\
 & \leq (1+\|\mb G\|)^{\frac{1}{1-\beta}}\epsilon^{\beta^{(n-km)}}(a^{ km})  ^{\beta^{(n-km)}}
\end{align*}
for any $\psi \in B_\epsilon(h)$. The result follows for suitable choice of $C$ and $c$.
\end{proof}

\begin{proof}[Proof of Proposition \ref{p:nl}]
Take $C$ and $c$ from Lemma \ref{lem:fpiter} and $B_\epsilon(h)$ from the proof of Lemma \ref{lem:fpiter}. Let $N = \{ \psi \in L^2 : \| \psi - \chi\| < \epsilon/\|h\|\}$ and note that $\{ \|h\| \psi  : \psi \in N\} = B_{\epsilon}(h)$. 

Take any $\psi \in \{ a f : f \in N, a \in \mb R \setminus \{0\}\}$. For any such $\psi$ we can write $\psi = (a /\|h\|) f^*$ where $f^* = \|h\|f  \in B_{\epsilon }(h)$. By homogeneity of $\mb T$:
\[
 \chi_{n+1}(\psi) = \frac{\mb T^n (\chi_1( \psi))}{\|\mb T^n (\chi_1( \psi))\|}  = \frac{\mb T^n (\chi_1(f^*))}{\|\mb T^n (\chi_1(f^*))\|}  = \chi_{n+1}(f^*)
\]
for each $n \geq 1$ (note positivity of $\mb G$ ensures that $\| \mb T^n f^* \| > 0$ for each $n$ and each $f^* \in N$). It follows from Lemma \ref{lem:fpiter} that:
\[
 \| \chi_{n+1}(\psi) - \chi\| = \| \chi_{n+1}(f^*) - \chi\| =  \left\| \frac{\mb T^n (f^*)}{\| \mb T^n (f^*)\|} - \frac{h}{\|h\|} \right\| \leq \frac{2}{\|h\|} \| \mb T^n (f^*) - h\| \leq \frac{2}{\|h\|} C e^{-cn}
\]
as required.
\end{proof}

\begin{proof}[Proof of Corollary \ref{c:local-id}]
The result for $\chi$ is stated in the text. For $h$, let $C$, $c$, and $B_\epsilon(h)$ be as in Lemma \ref{lem:fpiter} and its proof. 
Suppose $h'$ is a fixed point of $\mb T$ belonging to $B_\epsilon(h)$. Then by Lemma \ref{lem:fpiter}:
\[
 \| h' - h\| = \| \mb T^n h' - h\| \leq C e^{-cn} \to 0
\]
hence $h' = h$. 
\end{proof}

\begin{proof}[Proof of Theorem \ref{t:fpest}]
Immediate from Lemmas \ref{lem:bias:fp} and \ref{lem:var:fp}.
\end{proof}

\subsection{Proofs for Appendix \ref{ax:est:cgce}}

\begin{proof}[Proof of Lemma \ref{lem:exist}]
We first prove that there exists $K \in \mb N$ such that the maximum eigenvalue $\rho_k$ of the operator $\Pi_k \mb M : L^2 \to L^2$ is real and simple whenever $k \geq K$.

Under Assumption \ref{a:id}, $\rho$ is a simple isolated eigenvalue of $\mb M$. Therefore, there exists an $\epsilon > 0$ such that $|\lambda - \rho| > 2\epsilon$ for all $\lambda \in \sigma(\mb M)$. 
Let $\Gamma$ denote a positively oriented circle in $\mb C$ centered at $\rho$ with radius $\epsilon$. Let $\mc R(\mb M,z) = (\mb M - zI)^{-1}$ denote the resolvent of $\mb M$ evaluated at $z \in \mb C \setminus \sigma(\mb M)$, where $I$ is the identity operator. Note that:
\begin{equation} \label{e:crdef}
 C_{\mc R} : = \sup_{z \in \Gamma}\|\mc R(\mb M,z)\| < \infty
\end{equation}
because $\mc R(\mb M,z)$ is a holomorphic function on $\Gamma$ and $\Gamma$ is compact. 

By Assumption \ref{a:bias}, there exists $K \in \mb N$ such that:
\begin{equation} \label{e:cineq}
 C_{\mc R} \times \|\Pi_k \mb M - \mb M\|  < 1
\end{equation}
holds for all $k \geq K$. It follows by Theorem IV.3.18 on p. 214 of \cite{Kato} that whenever $k \geq K$: (i) the operator $\Pi_k \mb M$ has precisely one eigenvalue $\rho_k$ inside $\Gamma$ and $\rho_k$ is simple; (ii) $\Gamma \subset (\mb C \setminus \sigma(\Pi_k \mb M))$; and (iii) $\sigma(\Pi_k \mb M)\setminus \{\rho\}$ lies on the exterior of $\Gamma$. Note that $\rho_k$ must be real whenever $k \geq K$ because complex eigenvalues come in conjugate pairs. Thus, if $\rho_k$ were complex-valued then its conjugate would also be inside $\Gamma$, which would contradict the fact that $\rho_k$ is the unique eigenvalue of $\Pi_k \mb M$ on the interior of $\Gamma$.

Any nonzero eigenvalue of $\Pi_k \mb M$ is also en eigenvalue of $(\mf M,\mf G)$ with the same multiplicity. Therefore, the largest eigenvalue $\rho_k$ of  $(\mf M,\mf G)$ is positive and simple whenever $k \geq K$. 
\end{proof}

Let $\Pi_k \mb M|_{B_k}: B_k \to B_k$ denote the restriction of $\Pi_k \mb M$ to $B_k$.  Recall that $\phi^*_k(x) = b^k(x)'c_k^*$ where $c_k^*$ solves the left-eigenvector problem in (\ref{e:gev}). Here, $\phi_k^*$ is the eigenfunction of the adjoint $(\Pi_k \mb M|_{B_k})^* : B_k \to B_k$ corresponding to $\rho_k$. That is, $\langle (\Pi_k \mb M|_{B_k})^* \phi_k^* , \psi \rangle = \rho_k \langle \phi_k^*, \psi \rangle$ for all $\psi \in B_k$.

Another adjoint is also relevant for the next proof, namely $(\Pi_k \mb M)^* : L^2 \to L^2$ which is the adjoint of $\Pi_k \mb M$ in the space $L^2$. It follows from Lemma \ref{lem:exist} that $(\Pi_k \mb M)^*$ has an eigenfunction, say $\phi_k^+$ corresponding to $\rho_k$ whenever $k \geq K$. That is, $\langle (\Pi_k \mb M)^* \phi_k^+ , \psi \rangle = \rho_k \langle \phi_k^+, \psi \rangle$ for all $\psi \in L^2$. Notice that $\phi_k^+$ does not necessarily belong to $B_k$, so we may have that $\phi_k^* \neq \phi_k^+$.

\begin{proof}[Proof of Lemma \ref{lem:bias}]
Step 1: Proof of part (b). By Proposition 4.2 of \cite{Gobetetal} (taking $T = \mb M$, $T_\varepsilon = \Pi_k \mb M$  and $\Gamma=$ the boundary of $B(\kappa,\rho)$ in their notation), the inequality:
\[
 \|\phi - \phi_k\| \leq \mr {const} \times \|(\Pi_k \mb M - \mb M)\phi\|
\]
holds for all $k$ sufficiently large, where the constant depends only on $C_{\mc R}$. The result follows by noticing that 
\begin{equation} \label{e:pikbd}
 \|(\Pi_k \mb M - \mb M)\phi\| = \rho \times \| \Pi_k \phi - \phi\| = O(\delta_k) \,.
\end{equation}

Step 2: Proof of part (a). By Corollary 4.3 of \cite{Gobetetal}, the inequality:
\[
 |\rho - \rho_k| \leq \mr {const} \times \|(\Pi_k \mb M - \mb M)\phi\|
\]
holds for all $k$ sufficiently large, where the constant depends only on $C_{\mc R}$ and $\|\mb M\|$. The result follows by (\ref{e:pikbd}).

Step 3: Proof that $\|\phi_k^+ - \phi^*\| = O(\delta_k^*)$ under the normalizations $\| \phi^*_{\phantom k} \| = 1$ and $\| \phi^+_k \| = 1$. 
Let $P_k^*$ denote the spectral projection on the eigenspace of $(\Pi_k \mb M)^*$ corresponding to $\rho_k$. By the proof of Proposition 4.2 of \cite{Gobetetal} (taking $T = \mb M^*$, $T_\varepsilon = (\Pi_k \mb M)^*$ and $\Gamma=$ the boundary of $B(\kappa,\rho)$ in their notation and noting that $\|\mc R(\mb M^*,z)\| = \|\mc R(\mb M,\bar z)\|$ holds for all $z \in \Gamma$), the inequality:
\[
 \|\phi^* - P_k^* \phi^* \| \leq \mr{const} \times  \|((\Pi_k \mb M)^* - \mb M^*) \phi^*\|
\]
for all $k$ sufficiently large, where the constant depends only on $C_{\mc R}$. Moreover,
\[
 \|((\Pi_k \mb M)^* - \mb M^*) \phi^*\| = \| (\mb M^* \Pi_k  - \mb M^*) \phi^*\| = \| \mb M^* ( \Pi_k \phi^* - \phi^*)\| \leq \|\mb M\| \|\Pi_k \phi^* - \phi^*\| = O(\delta_k^*)
\]
by definition of $\delta_k^*$ (cf. display (\ref{e:deltas})) and boundedness of $\mb M$. Therefore, 
\begin{equation} \label{e:projbdstar}
 \|\phi^* - P_k^* \phi^* \| = O(\delta_k^*).
\end{equation}
Define $(\phi_k^+ \otimes \phi_k^{\phantom +}\!) \psi(x) = \langle \phi_k^{\phantom +},\psi \rangle \times \phi_k^+(x)$ for any $\psi \in L^2$. We use the fact that:
\[
 P_k^* = \frac{1}{\langle \phi_k^{\phantom +},\phi_k^+ \rangle}(\phi_k^+ \otimes \phi_k^{\phantom +}\!) 
\]
under the normalizations $\|\phi_k\| = 1$ and $\|\phi_k^+\| = 1$ \cite[p. 113]{Chatelin}. Then, under the sign normalization $\langle \phi^*, \phi_k^+ \rangle \geq 0$, we have:
\[
 \left\| \phi^*- \phi_k^+ \right\|^2 
 \leq 2 \| \phi^* - ( \phi^+_k \otimes \phi^+_k) \phi^* \|^2
\]
(see the proof of Proposition 4.2 of \cite{Gobetetal}). Moreover,
\[
 \| \phi^* - ( \phi^+_k \otimes \phi^+_k) \phi^* \|^2 \leq \left\| \phi^* - \bigg( \phi_k^+ \otimes \frac{\phi_k}{\langle \phi_k^{\phantom +},\phi_k^+ \rangle} \bigg) \phi^* \right\|^2 
 \equiv \|\phi^* - P_k^* \phi^* \|^2 
\]
 It follows by (\ref{e:projbdstar}) that $\|\phi^* - \phi_k^+\| = O(\delta_k^*)$.

\medskip

Step 4: Proof that $\|\phi_k^* - \phi^*\| = O(\delta_k^*)$. To relate $\phi_k^+$ to $\phi_k^*$, observe that by definition of $(\Pi_k \mb M)^*$ and $(\Pi_k \mb M|_{B_k})^*$ we must have:
\begin{align*}
 & & \mb E[\phi_k^+(X) \Pi_k \mb M \psi(X)] & =  \rho_k \mb E[\phi_k^+(X) \psi(X)]  & & \mbox{for all $\psi \in L^2$} & &  \\
 & & \mb E[\phi_k^*(X) \Pi_k \mb M \psi_k(X)] & =  \rho_k \mb E[\phi_k^*(X) \psi_k(X)] & & \mbox{for all $\psi_k \in B_k$.} & & 
\end{align*}
It follows from taking $\psi = \psi_k$ in the first line of the above display that $\Pi_k^{\phantom *} \phi^+_k = \phi^*_k$. Now by the triangle inequality and the fact that $\Pi_k$ is a weak contraction, we have:
\begin{align*}
 \|\phi^* - \phi_k^*\|  = \|\phi^*  - \Pi_k \phi^+_k\| 
 & \leq \|\phi^* - \Pi_k \phi^* \| + \|\Pi_k \phi^* - \Pi_k \phi^+_k\| \\
 & \leq \|\phi^* - \Pi_k \phi^* \| + \| \phi^* - \phi^+_k\| 
 = O(\delta_k^*) + O(\delta_k^*) 
\end{align*}
where the final equality is by  definition of $\delta_k^*$ (see display (\ref{e:deltas})) and Step 3.
\end{proof}

 The following lemma collects some useful bounds on the orthogonalized estimators.

\begin{lemma}\label{lem:matcgce}
(a) If $\wh{\mf G}$ is invertible then:
\[
 (\wh{\mf G}^o)^{-1} \wh{\mf M}^o - \mf M^o = \wh{\mf M}^o - \wh{\mf G}^o \mf M^o +  (\wh{\mf G}^o)^{-1} \Big((\wh{\mf G}^o - \mf I)^2{\mf M}^o  +  (\mf I - \wh{\mf G}^o)(\wh{\mf M}^o - \mf M^o ) \Big)\,.
\]
(b) In particular, if $\|\wh {\mf G}^o - \mf I\| \leq \frac{1}{2}$ we obtain:
\[
 \|(\wh{\mf G}^o)^{-1} \wh{\mf M}^o - \mf M^o \| \leq \|\wh{\mf M}^o - \mf M^o\| + 2 \|\wh{\mf G}^o - \mf I\|\times (\|\mf M^o\|  + \|\wh{\mf M}^o - \mf M^o\| )\,.
\]
\end{lemma}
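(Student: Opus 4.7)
For part (a), the plan is to verify the displayed identity by a direct algebraic manipulation, starting from the left-hand side. Writing $A = \wh{\mf G}^o$, $B = \wh{\mf M}^o$, and $M = \mf M^o$ to reduce notational clutter, I would begin with
\[
 A^{-1}B - M \;=\; A^{-1}(B - AM),
\]
which is immediate from $A^{-1}\cdot A = \mf I$. The next step is to split $A^{-1} = \mf I + (A^{-1} - \mf I) = \mf I + A^{-1}(\mf I - A)$, giving
\[
 A^{-1}B - M \;=\; (B - AM) + A^{-1}(\mf I - A)(B - AM).
\]
Then I would rewrite $B - AM = (B - M) + (\mf I - A)M$ inside the second term, so the second summand decomposes as $A^{-1}(\mf I - A)(B - M) + A^{-1}(\mf I - A)^2 M$. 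Since $(\mf I - A)^2 = (A - \mf I)^2$, this recovers exactly the identity asserted in part (a). The manipulation is elementary, but it takes one reading of the target identity to spot the right decomposition.

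For part (b), the plan is to take operator norms on both sides of the identity from (a) and apply submultiplicativity and the triangle inequality. Directly:
\[
 \|A^{-1}B - M\| \;\leq\; \|B - AM\| + \|A^{-1}\|\,\|A - \mf I\|^2\,\|M\| + \|A^{-1}\|\,\|A - \mf I\|\,\|B - M\|.
\]
Bounding $\|B - AM\| \leq \|B - M\| + \|A - \mf I\|\,\|M\|$ by another triangle-inequality step, and invoking the Neumann series bound $\|A^{-1}\| \leq (1 - \|A - \mf I\|)^{-1} \leq 2$ whenever $\|A - \mf I\| \leq \tfrac{1}{2}$, yields
\[
 \|A^{-1}B - M\| \;\leq\; \|B - M\| + \|A - \mf I\|\,\|M\| + 2\,\|A - \mf I\|^2\,\|M\| + 2\,\|A - \mf I\|\,\|B - M\|.
\]
Using $\|A - \mf I\| \leq \tfrac{1}{2}$ once more to absorb $2\|A - \mf I\|^2 \leq \|A - \mf I\|$ collapses the two $\|M\|$ terms into $2\|A - \mf I\|\,\|M\|$, recovering the bound in the statement after grouping terms.

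There is no real obstacle here; part (a) is a one-line identity once the correct telescoping has been identified, and part (b) is a standard norm estimate together with a Neumann-series bound. The only mildly nontrivial aspect is recognizing that the displayed decomposition in (a) is precisely what is needed so that the dominant term on the right-hand side is $\wh{\mf M}^o - \wh{\mf G}^o \mf M^o$ (which has a martingale/centered structure useful in applications of Lemma \ref{lem:matcgce}), while the remainder is genuinely second-order in $\|\wh{\mf G}^o - \mf I\|$ and $\|\wh{\mf M}^o - \mf M^o\|$.
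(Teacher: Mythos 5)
Your proof is correct and follows essentially the same route as the paper: both rest on the identity $(\wh{\mf G}^o)^{-1} = \mf I - (\wh{\mf G}^o)^{-1}(\wh{\mf G}^o - \mf I)$, the triangle inequality, and the Neumann bound $\|(\wh{\mf G}^o)^{-1}\| \leq 2$. The only cosmetic difference is ordering: the paper obtains part (b) directly from an intermediate one-step expansion and then substitutes once more to reach (a), whereas you derive (a) first and then get (b) from it, absorbing the extra $2\|\wh{\mf G}^o - \mf I\|^2\|\mf M^o\|$ term via $\|\wh{\mf G}^o - \mf I\| \leq \tfrac{1}{2}$ — both are valid.
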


\begin{proof}[Proof of Lemma \ref{lem:matcgce}]
If $\wh{\mf G}$ is invertible we have:
\begin{align*}
 (\wh{\mf G}^o)^{-1} \wh{\mf M}^o - \mf M^o  & = ( \mf I - (\wh{\mf G}^o)^{-1}(\wh{\mf G}^o - \mf I)) \wh{\mf M}^o - \mf M^o \notag \\
 & = \wh{\mf M}^o - \mf M^o - (\wh{\mf G}^o)^{-1}(\wh{\mf G}^o - \mf I) \mf M^o - (\wh{\mf G}^o)^{-1}(\wh{\mf G}^o - \mf I) (\wh{\mf M}^o - \mf M^o)\,. 
\end{align*}
Part (b) follows by the triangle inequality, noting that $\|(\wh{\mf G}^o)^{-1}\| \leq 2$ whenever $\|\wh{\mf G}^o - \mf I\| \leq \frac{1}{2}$. Substituting $(\wh{\mf G}^o)^{-1} = ( \mf I - (\wh{\mf G}^o)^{-1}(\wh{\mf G}^o - \mf I)) $ into the preceding display yields:
\begin{align*}
 (\wh{\mf G}^o)^{-1} \wh{\mf M}^o - \mf M^o 
 & = \wh{\mf M}^o - \mf M^o - ( \mf I - (\wh{\mf G}^o)^{-1}(\wh{\mf G}^o - \mf I))(\wh{\mf G}^o - \mf I) \mf M^o - (\wh{\mf G}^o)^{-1}(\wh{\mf G}^o - \mf I) (\wh{\mf M}^o - \mf M^o) \\
 & = \wh{\mf M}^o - \wh{\mf G}^o \mf M^o + (\wh{\mf G}^o)^{-1}(\wh{\mf G}^o - \mf I)^2 \mf M^o - (\wh{\mf G}^o)^{-1}(\wh{\mf G}^o - \mf I) (\wh{\mf M}^o - \mf M^o) \,. 
\end{align*}
as required.
\end{proof}

\begin{proof}[Proof of Lemma \ref{lem:exist:hat}]
Step 1: We show that:
\[
 \|\mc R(\Pi_k \mb M|_{B_k},z)\| \leq \|\mc R(\Pi_k \mb M,z)\|
\]
holds for all $z \in \mb C \setminus (\sigma(\Pi_k \mb M) \cup \sigma(\Pi_k \mb M|_{B_k}))$.
Fix any such $z$. For any $\psi_k \in B_k$ we have $\mc R(\Pi_k \mb M|_{B_k},z) \psi_k = \zeta_k$ where $\zeta_k = \zeta_k(\psi_k) \in B_k$ is given by $\psi_k = (\Pi_k \mb M - z I)\zeta_k$. For any $\psi \in L^2$ we have $\mc R(\Pi_k \mb M,z) \psi = \zeta$ where $\zeta = \zeta(\psi) \in L^2$ is given by $ \psi = (\Pi_k \mb M - z I) \zeta$. In particular, taking $\psi_k \in B_k$ we must have $\zeta_k(\psi_k) = \zeta(\psi_k)$. Therefore, $\mc R(\Pi_k \mb M|_{B_k},z) \psi_k = \mc R(\Pi_k \mb M,z) \psi_k$ holds for all $\psi_k \in B_k$. We now have:
\begin{align*}
 \|\mc R(\Pi_k \mb M|_{B_k},z)\| & =\sup\{  \|\mc R(\Pi_k \mb M|_{B_k},z) \psi_k\| :  \psi_k \in B_k , \|\psi_k\| = 1\}\\
 & =\sup\{  \|\mc R(\Pi_k \mb M,z) \psi_k\| :  \psi_k \in B_k , \|\psi_k\| = 1\}\\
 & \leq\sup\{  \|\mc R(\Pi_k \mb M,z) \psi\| :  \psi \in L^2 , \|\psi\| = 1\} =  \|\mc R(\Pi_k \mb M,z) \| \,.
\end{align*}

Step 2: We show that $(\wh{\mf M},\wh{\mf G})$ has a unique eigenvalue $\hat \rho$ inside $\Gamma$ wpa1, where $\Gamma$ is from the proof of Lemma \ref{lem:exist}. 

As the nonzero eigenvalues of $\Pi_k \mb M$, $\Pi_k \mb M|_{B_k}$, and $\mf G^{-1} \mf M$ are the same, it follows from the proof of Lemma \ref{lem:exist} that for all $k \geq K$ the curve $\Gamma$ encloses precisely one eigenvalue of $\mf G^{-1} \mf M$, namely $\rho_k$, and that $\rho_k$ is a simple eigenvalue of $\mf G^{-1} \mf M$.

Recall that $\mf G^{-1} \mf M^{\phantom 1}$ is isomorphic to $\Pi_k \mb M |_{B_k}$ on $(\mb R^k,\langle \cdot,\cdot\rangle_{\mf G})$. Let $\mc R(\mf G^{-1} \mf M,z)$ denote the resolvent of $\mf G^{-1} \mf M$ on\ $(\mb R^k,\langle \cdot,\cdot\rangle_{\mf G})$. By step 1, we then have:
\begin{equation} \label{e:resbd:k:1}
 \sup_{z \in \Gamma} \|\mc R(\mf G^{-1} \mf M,z)\|_{\mf G} 
 = \sup_{z \in \Gamma} \|\mc R(\Pi_k \mb M|_{B_k},z)\| 
 \leq \sup_{z \in \Gamma} \|\mc R(\Pi_k \mb M,z)\| \,.
\end{equation}
The second resolvent identity gives $\mc R(\Pi_k \mb M,z) = \mc R( \mb M,z) + \mc R(\Pi_k \mb M,z) (\mb M - \Pi_k \mb M) \mc R( \mb M,z)$. It follows that whenever (\ref{e:cineq}) holds (which it does for all $k \geq K$):
\begin{equation} \label{e:resbd:k:2}
 \sup_{z \in \Gamma} \|\mc R(\Pi_k \mb M,z)\| \leq \frac{C_{\mc R}}{1- C_{\mc R} \| \Pi_k \mb M - \mb M\|} = C_{\mc R} (1+o(1))
\end{equation}
by Assumption \ref{a:bias}. Combining (\ref{e:resbd:k:1}) and (\ref{e:resbd:k:2}), we obtain:
\begin{equation} \label{e:resbd:k} 
 \sup_{z \in \Gamma} \|\mc R(\mf G^{-1} \mf M,z)\|_{\mf G} = O(1)\,.
\end{equation}
By Lemma \ref{lem:matcgce}(b), Assumption \ref{a:var}, and boundedness of $\mb M$:
\[
 \| \wh {\mf G}^{-1} \wh{\mf M} - \mf G^{-1} \mf M \|_{\mf G} = \| (\wh {\mf G}^o)^{-1} \wh{\mf M}^o -   \mf M^o \|  = o_p(1) \,.
\]
It follows by (\ref{e:resbd:k}) that the inequality:
\begin{equation}\label{e:kato}
 \|\wh{\mf G}^{-1}\wh{\mf M} - {\mf G}^{-1}{\mf M}\|_{\mf G} \times \sup_{z \in \Gamma} \|\mc R(\mf G^{-1} \mf M,z)\|_{\mf G} < 1 
\end{equation}
holds wpa1. 

By Theorem IV.3.18 on p. 214 of \cite{Kato}, whenever (\ref{e:kato}) holds: $\wh{\mf G}^{-1}\wh{\mf M}$ has precisely one eigenvalue, say $\hat \rho$, inside $\Gamma$; $\hat \rho$ is simple, and; the remaining eigenvalues of $\wh{\mf G}^{-1}\wh{\mf M}$ are on the exterior of $\Gamma$. Note that $\hat \rho$ must necessarily be real whenever (\ref{e:kato}) holds (because complex eigenvalues come in conjugate pairs) hence the corresponding left- and right-eigenvectors $\hat c^*$ and $\hat c$ are also real and unique (up to scale).
\end{proof}

\begin{proof}[Proof of Lemma \ref{lem:var}]
Take $k \geq K$ from Lemma \ref{lem:exist} and work on the sequence of events upon which
\begin{equation}\label{e:kato:half}
 \|\wh{\mf G}^{-1}\wh{\mf M} - {\mf G}^{-1}{\mf M}\|_{\mf G} \times \sup_{z \in \Gamma} \|\mc R(\mf G^{-1} \mf M,z)\|_{\mf G} < \frac{1}{2} 
\end{equation}
holds. By the proof of Lemma \ref{lem:exist:hat}, this inequality holds wpa1 and $\hat \rho$, $\hat c$ and $\hat c^*$ to (\ref{e:est}) are unique on this sequence of events.

Step 1: Proof of part (b). Under the normalizations $\|\hat c\|_{\mf G} = 1$ and $ \|\hat c^*\|_{\mf G} = 1$, whenever (\ref{e:kato:half}) holds (which it does wpa1), we have
\[
 \|\hat \phi - \phi_k\|^2 = \|\hat c - c_k\|^2_{\mf G} \leq \sqrt 8  \sup_{z \in \Gamma} \|\mc R(\mf G^{-1} \mf M,z)\|_{\mf G} \times \| (\wh{\mf G}^{-1}\wh{\mf M} - {\mf G}^{-1}{\mf M})c_k \|_{\mf G}
\]
by Proposition 4.2 of \cite{Gobetetal} (setting $\wh{\mf G}^{-1}\wh{\mf M} = T_\varepsilon$, ${\mf G}^{-1}{\mf M} = T$ and $\Gamma = $ the boundary of $B(\kappa,\rho)$ in their notation). The result now follows by (\ref{e:resbd:k}) and the fact that 
\begin{equation} \label{e:mhat:cgce}
 \| (\wh{\mf G}^{-1}\wh{\mf M} - {\mf G}^{-1}{\mf M})c_k \|_{\mf G} = \|((\wh{\mf G}^o)^{-1}\wh{\mf M}^o - {\mf M}^o)\tilde c_k\| = O_p (\eta_{n,k})
\end{equation}
(cf. display (\ref{e:etas})).

Step 2: Proof of part (a). In view of (\ref{e:kato:half}), (\ref{e:resbd:k}) and the fact that $\|{\mf G}^{-1} {\mf M} \|_{\mf G} = \|\Pi_k \mb M|_{B_k}\| \leq \|\mb M\| < \infty$, by Corollary 4.3 of \cite{Gobetetal}, we have:
\[
  |\hat \rho - \rho_k| \leq O(1) \times \| (\wh{\mf G}^{-1}\wh{\mf M} - {\mf G}^{-1}{\mf M})c_k \|_{\mf G} \,.
\] 
The result follows by (\ref{e:mhat:cgce}).

Step 3: Proof of part (c). Identical arguments to the proof of part (b) yield: 
\[
  \|\hat \phi^* - \phi_k^* \| = \|\hat c^* - c_k^*\|_{\mf G} \leq \sqrt 8  \sup_{z \in \Gamma} \|\mc R(\mf G^{-1} \mf M',z)\|_{\mf G} \times \|(\wh{\mf G}^{-1}\wh{\mf M}' - {\mf G}^{-1}{\mf M}')c_k^*\|_{\mf G}
\]
under the normalization $\|\hat c^*\|_{\mf G} = \| c_k^*\|_{\mf G} = 1$. The result now follows by (\ref{e:resbd:k}), noting that $ \sup_{z \in \Gamma} \|\mc R(\mf G^{-1} \mf M',z)\|_{\mf G} =  \sup_{z \in \Gamma} \|\mc R(\mf G^{-1} \mf M,z)\|_{\mf G}$, and the fact that: 
\[
 \| (\wh{\mf G}^{-1}\wh{\mf M}' - {\mf G}^{-1}{\mf M}')c_k^* \|_{\mf G} = \|((\wh{\mf G}^o)^{-1}\wh{\mf M}^{o\prime} - {\mf M}^{o\prime})\tilde c_k^*\| = O_p (\eta_{n,k}^*)
\]
(cf. display (\ref{e:etas})).
\end{proof}

\subsection{Proofs for Appendix \ref{ax:est:fp}}

Some of the proofs in this subsection make use of properties of fixed point indices. We refer the reader to Section 19.5 of \cite{Kras} for details.

\begin{proof}[Proof of Lemma \ref{lem:fp:exist}]
By Assumption \ref{a:fp:exist} and Corollary \ref{c:local-id}, we may choose $\varepsilon > 0$ such that $\ol N = \{ \psi \in L^2 : \| \psi - h\| \leq \varepsilon\}$ contains only one fixed point of $\mb T$, namely $h$. We verify the conditions of Theorem 19.4 in \cite{Kras} where, in our notation, $\Omega = \ol N$, $E_n = B_k$, $P_n = \Pi_k$, $T = \mb T$, and $T_n = \Pi_k \mb T|_{B_k}$ (i.e. the restriction of $\Pi_k \mb T$ to $B_k$). The compactness condition is satisfied by Assumption \ref{a:fp:exist}(b) (recall that compactness of $\mb G$ implies compactness of $\mb T$). The fixed point $h$  has nonzero index by Assumption \ref{a:fp:exist}(c); see result (5) on p. 300 of \cite{Kras}. Finally, condition (19.28) in \cite{Kras} holds by Assumption \ref{a:fp:bias}(b) and their condition (19.29) is trivially satisfied.
\end{proof}

\begin{proof}[Proof of Remark \ref{rmk:nbhd}]
This follows by the proof of result (19.31) in Theorem 19.3 in \cite{Kras}.
\end{proof}

\begin{proof}[Proof of Remark \ref{rmk:fpunique}]
This follows by Theorem 19.7 in \cite{Kras}.
\end{proof}

\begin{proof}[Proof of Lemma \ref{lem:bias:fp}]
Part (c) follows by the proof of display (19.50) on p. 310 in \cite{Kras} where, in our notation, $x_0 = h$, $x_n = h_k$, $P_n = \Pi_k$, $P^{(n)} = I - \Pi_k$, $T = \mb T$, and $T'(x_0) = \mb D_h$. Note that Assumption \ref{a:fp:bias}(a) implies their condition $\| T'(x_0) - P_n T'(x_0)\| \to 0$ as $n \to \infty$.
Part (b) then follows from the inequality:
\[
  \left\| \frac{h}{\|h\|} - \frac{h_k}{\|h_k\|} \right\|  \leq \frac{2}{\|h\|} \|h - h_k\| \,.
\]
Finally, part (a) follows from the fact that $\big| \|h\| - \|h_k\| \big| = O(\tau_k)$ and continuous differentiability of $x \mapsto x^{1-\beta}$ at each $x > 0$.
\end{proof}

The next lemma presents some bounds on the estimators which are used in the proof of Lemmas \ref{lem:fphat:exist} and \ref{lem:var:fp}.

\begin{lemma}\label{lem:fp:matcgce}
(a) Let Assumptions \ref{a:fp:exist}(b) and \ref{a:fp:var} hold. Then:
\[
 \sup_{v \in \mb R^k : \|  v\|_{\mf G} \leq c} \| \wh{\mf G}^{-1}\wh{\mf T}v - \mf G^{-1} \mf T v\|_{\mf G}  = o_p(1) \,.
\]
(b) Moreover:
\[
 \sup_{v \in \mb R^k : \| v' b^k- h\| \leq \varepsilon} \|\wh{\mf G}^{-1} \wh{\mf T}  v - \mf G^{-1} \mf T v \|_{\mf G} = O_p(\nu_{n,k})
\]
where $\nu_{n,k}$ is from display (\ref{e:nudef}).
\end{lemma}

\begin{proof}[Proof of Lemma \ref{lem:fp:matcgce}]
By definition of $\wh{\mf G}^o$, $\wh{\mf T}^o$, and $\mf T^o$, we have 
\[
 \sup_{v \in \mb R^k : \|  v\|_{\mf G} \leq c} \| \wh{\mf G}^{-1}\wh{\mf T}v - \mf G^{-1} \mf T v\|_{\mf G}  = \sup_{v \in \mb R^k : \|  v\| \leq c} \| (\wh{\mf G}^o)^{-1}\wh{\mf T}^ov - \mf T^o v\|\,.
\]
Whenever $\|\mf I - \wh{\mf G}^o\| < 1$ (which it is wpa1 by Assumption \ref{a:fp:var}), for any $v \in \mb R^k$ we have:
\begin{align}
 (\wh{\mf G}^o)^{-1} \wh{\mf T}^ov - \mf T^ov  
 & = \wh{\mf T}^ov - \mf T^ov - (\wh{\mf G}^o)^{-1}(\wh{\mf G}^o - \mf I) \mf T^ov - (\wh{\mf G}^o)^{-1}(\wh{\mf G}^o - \mf I) (\wh{\mf T}^ov - \mf T^ov) \label{e:gtineq}
\end{align}
Part (a) follows by the triangle inequality and Assumption \ref{a:fp:var}, noting that $\sup_{v \in \mb R^k : \|v\| \leq c} \| \mf T^o v\|  \leq \sup_{\psi : \|\psi\| \leq c} \| \mb T \psi\| < \infty$ holds for each $c$ by Assumption \ref{a:fp:exist}(b).

Part (b) follows similarly by definition of $\wh{\mf G}^o$, $\wh{\mf T}^o$, $\mf T^o$, and $\nu_{n,k}$ in display (\ref{e:nudef}).
\end{proof}

\begin{proof}[Proof of Lemma \ref{lem:fphat:exist}]
Let $\varepsilon$, $K$ and $N_k$ be as in Lemma \ref{lem:fp:exist}. Also define the sets $N = \{ \psi \in L^2 : \|\psi - h\| < \varepsilon\}$, $\Gamma = \{ \psi \in L^2 : \| \psi - h\| = \varepsilon\}$, $\Gamma_k = \{ \psi \in B_k : \| \psi - h\| = \varepsilon\}$, $\mf N_k = \{ v \in \mb R^k : v'b^k(x) \in  N_k\}$, and $\boldsymbol \Gamma_k = \{ v \in \mb R^k : v'b^k(x) \in \Gamma_k\}$.

Let $\gamma(I - \mb T;\Gamma)$ denote the rotation of the field $(I - \mb T)\psi$ on $\Gamma$. Assumption \ref{a:fp:exist} implies that $|\gamma(I - \mb T;\Gamma)| = 1$; see result (5) on p. 300 of \cite{Kras}. Also notice that 
\begin{equation} \label{e:fpindex:ineq:1}
 \sup_{\psi \in \Gamma} \|\mb T \psi - \Pi_k \mb T \psi\| < \inf_{\psi \in \Gamma} \|\psi - \mb T \psi\|
\end{equation} holds for all $k$ sufficiently large by Assumption \ref{a:fp:bias}(b) (note that $\inf_{\psi \in \Gamma} \|\psi - \mb T \psi\| > 0$, otherwise $\mb T$ would have a fixed point on $\Gamma$, contradicting the definition of $\ol N$ in the proof of Lemma \ref{lem:fp:exist}). Result (2) on p. 299 of \cite{Kras} then implies that whenever (\ref{e:fpindex:ineq:1}) holds we have $|\gamma(I - \Pi_k \mb T;\Gamma)| = |\gamma(I - \mb T;\Gamma)|= 1$. Result (3) on p. 299 of \cite{Kras} then implies that $|\gamma(I - \Pi_k \mb T|_{B_k}; \Gamma_k)|= 1$ whenever (\ref{e:fpindex:ineq:1}) holds. Finally, by isomorphism, we have that $|\gamma( \mf I - {\mf G}^{-1} {\mf T} ; \boldsymbol \Gamma_k) | = 1$ whenever (\ref{e:fpindex:ineq:1}) holds.

We now show that the inequality: 
\begin{equation} \label{e:fpindex:ineq}
 \sup_{v \in \boldsymbol \Gamma_k} \|(\wh{\mf G}^{-1} \wh{\mf T} - {\mf G}^{-1} {\mf T})v\|_{\mf G}  <  \inf_{ \psi \in \Gamma_k} \| \psi - \Pi_k \mb T \psi\|
\end{equation}
holds wpa1. The left-hand side is $o_p(1)$ by Lemma \ref{lem:fp:matcgce}(a). For the right-hand side, we claim that  $\liminf_{k \to \infty} \inf_{ \psi \in \Gamma_k} \| \psi - \Pi_k \mb T \psi\| > 0$. Suppose the claim is false. Then there exists a subsequence $\{\psi_{k_l} : l \geq 1\}$ with $\psi_{k_l} \in \Gamma_{k_l}$ such that $\psi_{k_l} - \Pi_{k_l} \mb T \psi_{k_l} \to 0$. Since $\mb T$ is compact, there exists a convergent subsequence $\{ \mb T\psi_{k_{l_j}} : j \geq 1\}$. Let $\psi^* = \lim_{j \to \infty} \mb T \psi_{k_{l_j}}$. Then: 
\begin{align*}
 \| \psi_{k_{l_j}} - \psi^*\| & \leq \| \psi_{k_{l_j}} - \Pi_{k_{l_j}} \mb T \psi_{k_{l_j}} \| + \| \Pi_{k_{l_j}} \mb T \psi_{k_{l_j}} -\Pi_{k_{l_j}} \psi^*\| + \| \Pi_{k_{l_j}} \psi^* - \psi^*\|  \to 0
\end{align*}
as $j \to \infty$, where the first term vanishes by definition of $\psi_{k_l}$, the second vanishes by definition of $\psi^*$, and the third vanishes by Assumption \ref{a:fp:bias}(b). Therefore, $\psi^* \in \Gamma$. Moreover, by continuity of $\mb T$ and definition of $\psi^*$:
\[
 \| \mb T \psi^* - \psi^*\| \leq \| \mb T \psi^* - \mb T \psi_{k_{l_j}}\| + \| \mb T \psi_{k_{l_j}} - \psi^*\| \to 0
\]
as $j \to \infty$, hence $\psi^* \in \Gamma$ is a fixed point of $\mb T$. But this contradicts the fact that $h$ is the unique fixed point of $\mb T$ in $\ol N = N \cup \Gamma$ (cf. the proof of Lemma \ref{lem:fp:exist}). This proves the claim.

Result (2) on p. 299 of \cite{Kras} then implies that whenever (\ref{e:fpindex:ineq:1}) and (\ref{e:fpindex:ineq}) hold (which they do wpa1), we have $\gamma( \mf I - \wh{\mf G}^{-1} \wh{\mf T} ; \boldsymbol \Gamma_k) = \gamma( \mf I - {\mf G}^{-1} {\mf T} ; \boldsymbol \Gamma_k)$. Therefore, $|\gamma( \mf I - \wh{\mf G}^{-1} \wh{\mf T} ; \boldsymbol \Gamma_k)| = 1$ also holds wpa1 and hence, by result (1) on p. 299 of \cite{Kras}, $\wh{\mf G}^{-1} \wh{\mf T}$ has at least one fixed point $\hat v \in \mf N_k$. We have therefore shown that $\hat h(x) = b^k(x)'\hat v$ is well defined wpa1 and $\|\hat h - h\| < \varepsilon$ wpa1. Consistency of $\hat h$ follows by repeating the preceding argument with any positive $\varepsilon' < \varepsilon$.
\end{proof}

\begin{proof}[Proof of Remark \ref{rmk:nbhd:2}]
Fix any positive $\varepsilon' < \varepsilon$ and let $A = \{ \psi \in L^2: \varepsilon' \leq \| \psi - h\| \leq \varepsilon\}$, $A_k = \{ \psi \in B_k : \varepsilon' \leq \| \psi - h\| \leq \varepsilon\}$ and $\mf A_k = \{v \in \mb R^k :  v'b^k(x) \in A_k\}$. Clearly $\mb T$ has no fixed point in $A$. Moreover, similar arguments to the proof of result (19.31) in Theorem 19.3 in \cite{Kras} imply that $A_k$ contains no fixed points of $\Pi_k \mb T$ for all $k$ sufficiently large. By similar arguments to the proof of Lemma \ref{lem:fphat:exist} we may deduce that $\liminf_{k \to \infty} \inf_{ \psi \in A_k} \| \psi - \Pi_k \mb T \psi\| =: c^* > 0$. Then for any $v \in \mf A_k$, we have $\|v - \wh{\mf G}^{-1} \wh{\mf T}v\| \geq c^* - o_p(1)$ where the $o_p(1)$ term holds uniformly over $\mf A_k$ by Lemma \ref{lem:fp:matcgce}(a). Therefore, $\|v - \wh{\mf G}^{-1} \wh{\mf T}v\| \geq c^*/2$ holds for all $v \in \mf A_k$ wpa1. On the other hand, any fixed point $\hat v$ of $\wh{\mf G}^{-1} \wh{\mf T}$ with $b^k(x)'\hat v \in N_k$ necessarily has $ \|\hat v - \wh{\mf G}^{-1} \wh{\mf T}\hat v\| = 0$. Therefore, no such fixed point $\hat v$ belongs to $\mf A_k$ wpa1.
\end{proof}

\begin{proof}[Proof of Lemma \ref{lem:var:fp}]
We first prove part (c).  The Fr\'echet derivative of $\Pi_k \mb T|_{B_k}$ at $h$ is $\Pi_k \mb D_h |_{B_k}$. This may be represented on $(\mb R^k,\langle \cdot,\cdot\rangle_{\mf G})$ by the matrix $\mf G^{-1} \mf D_h$ where $\mf D_h = \mb E[b^k(X_t) \beta G_{t+1}^{1-\gamma} h(X_t)^{\beta-1} b^k(X_{t+1})']$. By Lemma \ref{lem:fphat:exist}, $\hat v$ (equivalently, $\hat h$) is well defined wpa1. Therefore, wpa1 we have:
\[
  (\mf I - \mf G^{-1} \mf D_{h})(v_k - \hat v)  = \mf G^{-1} \mf T \hat v  - \wh{\mf G}^{-1} \wh{\mf T} \hat v - \big( \mf G^{-1} \mf T \hat v  - \mf G^{-1} \mf T v_k - \mf G^{-1} \mf D_{h} (\hat v - v_k) \big) 
\]
Note that $\|\mf G^{-1} \mf T \hat v  - \wh{\mf G}^{-1} \wh{\mf T} \hat v\|_{\mf G} = O_p(\nu_{n,k})$ by Lemma \ref{lem:fp:matcgce}(b) and consistency of $\hat h$. Therefore,
\begin{align} \label{e:var:fp:0}
  \| (\mf I - \mf G^{-1} \mf D_{h})(v_k - \hat v)\|_{\mf G} 
 & \leq O_p(\nu_{n,k}) + \| \mf G^{-1} \mf T \hat v  - \mf G^{-1} \mf T v_k - \mf G^{-1} \mf D_{h} (\hat v - v_k) \|_{\mf G}\,.
\end{align}
By isomorphism, we have $\| (\mf I - \mf G^{-1} \mf D_{h})(v_k - \hat v)\|_{\mf G} = \|(I - \Pi_k \mb D_h)(h_k - \hat h)\|$. Assumptions \ref{a:fp:exist}(c) and \ref{a:fp:bias}(a) together imply that $(I - \Pi_k \mb D_h)^{-1}$ exists for all $k$ sufficiently large and the norms $\|(I - \Pi_k \mb D_h)^{-1}\|$ are uniformly bounded (for all $k$ sufficiently large). Therefore, 
\begin{equation} \label{e:var:fp:1}
 \| (\mf I - \mf G^{-1} \mf D_{h})(v_k - \hat v)\|_{\mf G} \geq \mr{const} \times \|h_k - \hat h\|
\end{equation}
holds  for all $k$ sufficiently large. Also notice that:
\begin{align}
 & \| \mf G^{-1} \mf T \hat v  - \mf G^{-1} \mf T v_k - \mf G^{-1} \mf D_{h} (\hat v - v_k) \|_{\mf G} \notag \\
 & = \| \Pi_k \mb T \hat h - \Pi_k \mb T h_k - \Pi_k \mb D_{h}(\hat h - h_k)\| \notag \\
 & \leq \| \mb T \hat h - \mb T h - \mb D_h(\hat h - h) - (\mb T h_k - \mb T h - \mb D_h( h_k - h)) \| \notag \\
 & \leq \| \mb T \hat h - \mb T h - \mb D_h(\hat h - h) \| +\|\mb T h_k - \mb T h - \mb D_h( h_k - h)\| \notag \\
 & =  o(1) \times ( \| \hat h - h_k\| + \|h_k - h\|) + o(1) \times \|h - h_k\|  \label{e:var:fp:2}
\end{align}
where the first inequality is because $\Pi_k$ is a (weak) contraction on $L^2$ and the final line is by Assumption \ref{a:fp:exist}(c).
Substituting (\ref{e:var:fp:1}) and (\ref{e:var:fp:2}) into (\ref{e:var:fp:0}) and rearranging, we obtain:
\[
 (1-o(1))\times  \| h_k - \hat h\| \leq O_p(\nu_{n,k}) + o_p(\tau_k) \,.
\] 
Parts (a) and (b) follow by similar arguments to the proof of Lemma \ref{lem:bias:fp}.
\end{proof}

\subsection{Proofs for Appendix \ref{ax:inf}}

\begin{proof}[Proof of Proposition \ref{p:asydist:L:1}]
First note that:
\begin{align*}
 \sqrt n (\hat L - L) & = \sqrt n \left( \log \hat \rho - \log \rho - \frac{1}{n} \sum_{t=0}^{n-1} \log m(X_t,X_{t+1}) + \mb E[\log m(X_t,X_{t+1})] \right) \\
 & = \frac{1}{\sqrt n} \sum_{t=0}^{n-1} ( \rho^{-1} \psi_{\rho,t} - \psi_{lm,t} ) + o_p(1)
\end{align*}
where the second line is by display (\ref{e:ale:1}) and a delta-method type argument. The result now follows from the joint convergence in the statement of the proposition.
\end{proof}

\begin{proof}[Proof of Proposition \ref{p:asydist:L:2a}]
Similar arguments to the proof of Proposition \ref{p:asydist:L:1} yield:
\begin{align*}
 \sqrt n (\hat L - L) & = \frac{1}{\sqrt n} \sum_{t=1}^n \Big( \rho^{-1} \psi_{\rho,t} + \rho^{-1} \phi^*_{k,t} \phi_{k,t+1} \big( m_t(\hat \alpha)  - m_t(\alpha_0) \big) \\
 & \quad \quad  - \big(\log m_t(\hat \alpha) - \log m_t(\alpha_0) \big)  - \psi_{lm,t} \Big) + o_p(1)\,.
\end{align*} 
By similar arguments to the proof of Theorem \ref{t:asydist:2a}, we may deduce:
\begin{align*}
 & \frac{1}{\sqrt n } \sum_{t=0}^{n-1} \Big( \rho^{-1} \phi^*_{k,t} \phi_{k,t+1} \big( m_t(\hat \alpha)  - m_t(\alpha_0) \big)   - \big(\log m_t(\hat \alpha) - \log m_t(\alpha_0) \big) \Big) \\
 & = D_{\alpha,lm} \sqrt n (\hat \alpha - \alpha_0) + o_p(1)
\end{align*}
where
\[
  D_{\alpha,lm}  = \mb E \left[\left( \frac{\phi^*(X_t) \phi(X_{t+1})}{\rho} - \frac{1}{m(X_t,X_{t+1},\alpha)} \right) \frac{\partial m(X_t,X_{t+1},\alpha)}{\partial \alpha'} \right] \,.
\]
Substituting into the expansion for $\hat L$ and using Assumption \ref{a:parametric}(a) yields:
\[
 \sqrt n (\hat L - L) = \frac{1}{\sqrt n} \sum_{t=1}^n \left( \rho^{-1} \psi_{\rho,t} + D_{\alpha,lm} \psi_{\alpha,t} - \psi_{lm,t} \right) + o_p(1)\,.
\]
The result follows by the joint CLT assumed in the statement of the proposition.
\end{proof}

\begin{proof}[Proof of Theorem \ref{t:eff:main}]
We prove part (1) first. We first characterize the tangent space as in pp. 878--880  of \cite{BickelKwon2001} (their arguments trivially extend to $\mb R^d$-valued Markov processes). Let $Q_2$ denote the stationary distribution of $(X_t,X_{t+1})$. Consider the tangent space $\mc H_0 = \{ h(X_t,X_{t+1}) : \mb E[h(X_t,X_{t+1})^2] < \infty$ and $\mb E[ h(X_t,X_{t+1})|X_t = x] = 0$ almost surely$\}$ endowed with the $L^2(Q_2)$ norm. Take any bounded  $h \in \mc H_0$ and consider the one-dimensional parametric model which we identify with the collection of transition probabilities $\{ P_1^{\tau,h} : |\tau| \leq 1\}$ where each transition probability $P_1^{\tau,h}$ is dominated by $P_1$ (the true transition probability) and is given by:
\[
 \frac{\mr d P_1^{\tau,h}(x_{t+1}|x_t)}{\mr d P_1(x_{t+1}|x_t)} = e^{\tau h(x_t,x_{t+1})-A(\tau,x_t)}
\]
where:
\[
 A(\tau,x_t) = \log\left( \int e^{\tau h(x_t,x_{t+1})} P_1(\mr d x_{t+1}|x_t) \right) \,.
\]

For each $\tau$ we define the linear operator $\mb M^{(\tau,h)}$ on $L^2$ by:
\[
 \mb M^{(\tau,h)} \psi(x_t) = \int m(x_t,x_{t+1}) \psi(x_{t+1})  P_1^{\tau,h}(\mr d x_{t+1}|x_t) \,.
\]
Observe that:
\begin{align}\label{e:eff:pf1}
 (\mb M^{(\tau,h)}-\mb M)\psi(x_t) = \int  m(x_t,x_{t+1}) \psi(x_{t+1}) \left( e^{\tau h(x_t,x_{t+1})-A(\tau,x_t)}-1 \right)P_1 (\mr d x_{t+1}|x_t)\,.
\end{align}
is a bounded linear operator on $L^2$ (since $\|\mb M\| < \infty$ and $h$ is bounded). By Taylor's theorem:
\begin{equation} \label{e:eff:pf2}
 e^{\tau h(x_t,x_{t+1})-A(\tau,x_t)}-1 = \tau h(x_t,x_{t+1}) + O(\tau^2)
\end{equation}
where the $O(\tau^2)$ term is uniform in $(x_t,x_{t+1})$. It now follows by boundedness of $h$ that $\|\mb M^{(\tau,h)}-\mb M\| = O(\tau)$. Similar arguments to the proof of Lemma \ref{lem:exist} imply that there exists $\epsilon > 0$  and $\bar \tau > 0$ such that the largest eigenvalue $\rho_{(\tau,h)}$ of $\mb M^{(\tau,h)}$ is simple and lies in the interval $(\rho - \epsilon,\rho + \epsilon)$ for each $\tau < \bar \tau$. Taking a perturbation expansion of $\rho_{(\tau,h)}$ about $\tau = 0$ (see, for example, equation (3.6) on p. 89 of \cite{Kato} which also applies in the infinite-dimensional case, as made clear in Section VII.1.5 of \cite{Kato}):
\begin{align}
 \rho_{(\tau,h)} - \rho & =  \langle (\mb M^{(\tau,h)}-\mb M) \phi, \phi^* \rangle + O(\tau^2) \notag \\
 & = \tau \mb E[ m(X_t,X_{t+1}) h(X_t,X_{t+1}) \phi(X_{t+1}) \phi^* (X_t)] + O(\tau^2) \notag \\
 & = \tau \int m(x_t,x_{t+1})\phi(x_{t+1})\phi^*(x_t) h(x_t,x_{t+1})\mr dQ_2(x_t,x_{t+1}) + O(\tau^2) \label{e:eff:pf3}
\end{align}
under the normalization $\langle \phi,\phi^*\rangle = 1$, where the second line is by (\ref{e:eff:pf1}) and (\ref{e:eff:pf2}). Expression (\ref{e:eff:pf3}) shows that the derivative of $\rho_{(\tau,h)}$ at $\tau = 0$ is $\tilde \psi_\rho = m(x_t,x_{t+1})\phi(x_{t+1})\phi^*(x_t)$. 

As bounded  functions are dense in $\mc H_0$, we have shown that $\rho$ is differentiable relative to $\mc H_0$ with derivative $\tilde \psi_\rho$. The efficient influence function for $\rho$ is the projection of $\tilde \psi_\rho$ onto $\mc H_0$, namely:
\[
 \tilde \psi_\rho(x_t,x_{t+1}) - \mb E[\tilde \psi_\rho(X_t,X_{t+1}) |X_t = x_t] = \psi_\rho(x_t,x_{t+1})
\]
because $\mb E[\tilde \psi_\rho(X_t,X_{t+1}) |X_t = x_t] = \phi^*(x_{t}) \mb M \phi(x_t) = \rho \phi(x_t) \phi^*(x_t)$. It follows that $V_\rho = \mb E[\psi_\rho(X_t,X_{t+1})^2]$ is the efficiency bound for $\rho$. A similar argument shows that $h'(\rho) \psi_\rho$ is the efficient influence function for $h(\rho)$.

We now prove part (2). By linearity, the efficient influence function for $L$ is:
\[
 \psi_L = \rho^{-1} \psi_\rho - \psi_{\log m}
\]
where $\psi_{\log m}$ is the efficient influence function for $\mb E[\log m(X_t,X_{t+1})]$. It is well known that:
\[
 \psi_{\log m}(x_0,x_1) = l(x_0,x_1) + \sum_{t=0}^\infty \Big( \mb  E[l(X_{t+1},X_{t+2})|X_1 = x_1] - \mb  E[l(X_t,X_{t+1})|X_0 = x_0]  \Big) \label{e-phim}
\]
where $l(x_t,x_{t+1}) =  \log m(x_t,x_{t+1})$
(see, e.g., \cite{GreenwoodWefelmeyer1995}). It may be verified using the telescoping property of the above sum that $V_L = \mb E[ \psi_L(X_0,X_1)^2]$.
\end{proof}

\begin{proof}[Proof of Lemma \ref{lem:expansion}]
Take $k \geq K$ from Lemma \ref{lem:exist} and work on the sequence of events upon which (\ref{e:kato:half}) holds, so that $\hat \rho$, $\hat c$ and $\hat c^*$ are uniquely defined by Lemma \ref{lem:exist:hat}.

Normalize  $\hat c$, $\hat c^*$, $c_k$, and $c_k^*$ so that $\| \hat c\|_{\mf G} = 1$, $\|c_k\|_{\mf G} = 1$, $\hat c ' \mf G \hat c^* = 1$ and $c_k' \mf G  c_k^* = 1$. Let $\mf P = c_k^{\phantom *} c_k^{* \prime} \mf G$ and $\wh{\mf P} = \hat c \hat c^{*\prime} \mf G$. We then have $\tr{\wh{\mf P}}=1$, $\tr{\mf P} = 1$, $\hat \rho = \tr{\wh{\mf P} \wh{\mf G}^{-1} \wh{\mf M}}$, $\rho_k = \tr{{\mf P} {\mf G}^{-1} {\mf M}}$, $\wh{\mf G}^{-1} \wh{\mf M}\wh{\mf P} = \hat \rho \wh{\mf P} $ and $\mf G^{-1} \mf M \mf P = \mf P \mf G^{-1} \mf M = \rho_k \mf P$. Now observe that:
\begin{align*}
 \hat \rho - \rho_k & = \tr{\wh{\mf P} \wh{\mf G}^{-1} \wh{\mf M}} - \tr{{\mf P} {\mf G}^{-1} {\mf M}} \\
 & = \tr{(\wh{\mf P} - \mf P) \wh{\mf G}^{-1} \wh{\mf M}} + \tr{{\mf P} (\wh{\mf G}^{-1} \wh{\mf M} - {\mf G}^{-1} {\mf M})} \,.
\end{align*}
By addition and subtraction of terms, we have:
\begin{align}
 & \tr{(\wh{\mf P} - \mf P) \wh{\mf G}^{-1} \wh{\mf M}} \notag \\
 & = \hat \rho - \hat \rho \tr{\mf P \wh{\mf P}} + \tr{\mf P \wh{\mf G}^{-1} \wh{\mf M} (\wh{\mf P} - \mf I)} \notag \\
 & = \hat \rho \tr{\mf P (\mf I -  \wh{\mf P})} + \tr{\mf P \wh{\mf G}^{-1} \wh{\mf M} (\wh{\mf P} - \mf I)} \notag \\
 & = (\hat \rho - \rho_k) \tr{\mf P (\mf I -  \wh{\mf P})} + \rho_k \tr{\mf P (\mf I -  \wh{\mf P})} + \tr{\mf P \wh{\mf G}^{-1} \wh{\mf M} (\wh{\mf P} - \mf I)} \notag \\
 & = (\hat \rho - \rho_k) \tr{\mf P (\mf I -  \wh{\mf P})} + \tr{\mf P \mf G^{-1} \mf M (\mf I -  \wh{\mf P})} + \tr{\mf P \wh{\mf G}^{-1} \wh{\mf M} (\wh{\mf P} - \mf I)} \notag \\
 & = (\hat \rho - \rho_k) \tr{\mf P (\mf I -  \wh{\mf P})} + \tr{\mf P (\wh{\mf G}^{-1} \wh{\mf M} - \mf G^{-1} \mf M)(\wh{\mf P} - \mf I)}  \label{e:pgbd1:00}
\end{align}
where:
\begin{equation} \label{e:pgbd1:0}
  |\tr{\mf P (\mf I -  \wh{\mf P})}| = |c_k^{*\prime} \mf G (c_k - \wh{\mf P} c_k) | \leq \|c_k^*\|_{\mf G} \| c_k - \wh{\mf P} c_k\|_{\mf G} \,.
\end{equation}
By the proof of Proposition 4.2 of \cite{Gobetetal} (setting $\wh{\mf P} = P_\varepsilon$, $\wh{\mf G}^{-1}\wh{\mf M} = T_\varepsilon$, ${\mf G}^{-1}{\mf M} = T$ and $\Gamma $ from the proof of Lemma \ref{lem:exist} as the boundary of $B(\kappa,\rho)$ in their notation) and similar arguments to the proof of Lemma \ref{lem:var}:
\begin{equation} \label{e:pgbd1}
 \| c_k - \wh{\mf P} c_k\|_{\mf G} \lesssim \|(\wh{\mf G}^{-1} \wh{\mf M} - \mf G^{-1} \mf M) c_k\|_{\mf G} = O_p(\eta_{n,k})\,.
\end{equation}
Moreover,
\[
 \| c_k^*\|_{\mf G} = \| \mf P c_k\|_{\mf G} \leq \| \mf P\|_{\mf G} \leq \left\| \frac{-1}{2 \pi \mr i} \int_\Gamma \frac{\mc R(\mf G^{-1} \mf M,z)}{\rho_k - z }\, \mr dz \right\|_{\mf G}
\]
\cite[expression (6.19), p. 178]{Kato} and which is $O(1)$ by display (\ref{e:resbd:k}) and the fact that $\rho_k \to \rho$. By displays (\ref{e:pgbd1:0}) and (\ref{e:pgbd1}) and the fact that $\hat \rho - \rho_k = O_p(\eta_{n,k})$ (by Lemma \ref{lem:var}), we obtain:
\begin{equation} \label{e:pgbd2}
 (\hat \rho - \rho_k) \tr{\mf P (\mf I -  \wh{\mf P})} = O_p(\eta_{n,k}^2) \,.
\end{equation}
Moreover:
\begin{align}
 |\tr{\mf P (\wh{\mf G}^{-1} \wh{\mf M} - \mf G^{-1} \mf M)(\wh{\mf P} - \mf I)} |   
 & = | c_k^{*\prime} \mf G(\wh{\mf G}^{-1} \wh{\mf M} - \mf G^{-1} \mf M)(\wh{\mf P} - \mf I) c_k| \notag \\
 & \leq \|c_k^*\|_{\mf G} \|\wh{\mf G}^{-1} \wh{\mf M} - \mf G^{-1} \mf M\|_{\mf G} \| c_k - \wh{\mf P} c_k\|_{\mf G} \notag \\
 & = O_p( \eta_{n,k,1} +  \eta_{n,k,2}) \times O_p(\eta_{n,k}) \label{e:pgbd3}
\end{align}
by Lemma \ref{lem:matcgce}(b) and display (\ref{e:pgbd1}). It follows by (\ref{e:pgbd1:00}), (\ref{e:pgbd2}) and (\ref{e:pgbd3}) that:
\[
 \hat \rho - \rho_k =  \tr{{\mf P} (\wh{\mf G}^{-1} \wh{\mf M} - {\mf G}^{-1} {\mf M})} + O_p( \eta_{n,k,1} +  \eta_{n,k,2} ) \times O_p(\eta_{n,k}) + O_p(\eta_{n,k}^2)\,.
\]
Finally,
\begin{align*}
 \tr{{\mf P} (\wh{\mf G}^{-1} \wh{\mf M} - {\mf G}^{-1} {\mf M})} 
 & = c_k^{*\prime} \mf G (\wh{\mf G}^{-1} \wh{\mf M} - \mf G^{-1} \mf M) c_k \\
 & = \tilde c_k^{*\prime}  ((\wh{\mf G}^o)^{-1} \wh{\mf M}^o - \mf M^o) \tilde c_k \\
 & = \tilde c_k^{*\prime}  ( \wh{\mf M}^o - \wh{\mf G}^o \mf M^o) \tilde c_k + O_p( \eta_{n,k,1} \times (\eta_{n,k,1} + \eta_{n,k,2})) 
\end{align*}
by Lemma \ref{lem:matcgce}(a) and the fact that $\| \tilde c_k^*\| = \|c_k^*\|_{\mf G} = O(1)$. The result follows by noting that 
\[
 \tilde c_k^{*\prime}  ( \wh{\mf M}^o - \wh{\mf G}^o \mf M^o ) \tilde c_k =  c_k^{*\prime}  ( \wh{\mf M} - \rho_k \wh{\mf G}  )  c_k 
\]
and that $\eta_{n,k}$ is of at least as small order as $\eta_{n,k,1}$ and $\eta_{n,k,2}$ (cf. Lemma \ref{lem:matcgce}(a)).
\end{proof}

{\small \singlespacing
\putbib
}
\end{bibunit}

\begin{bibunit}

\newpage
\clearpage
\pagenumbering{arabic}\renewcommand{\thepage}{\arabic{page}}
\setcounter{equation}{0}
\renewcommand{\theequation}{OA.\arabic{equation}}

\begin{center}
{\Large Online Appendix for 

Nonparametric Stochastic Discount Factor Decomposition}

\vskip 24pt
{\large Timothy M. Christensen}

\vskip 8pt
{\large May 19, 2017}

\end{center}

\vskip 8pt

This Online Appendix contains material to support the paper ``Nonparametric Stochastic Discount Factor Decomposition''. Appendix \ref{s:mc:supp} presents additional simulation evidence. Appendix \ref{ax:id} provides further details on the relation between the identification and existence conditions in Section \ref{s:id} and the identification and existence conditions in \cite{HS2009} and \cite{BHS}. Appendix \ref{ax:proofs:supp} presents proofs of results in Appendix \ref{ax:suff} of the supplementary material and this online appendix.

\section{Additional Monte Carlo evidence}\label{s:mc:supp}

This section presents additional simulation results using a cubic B-spline basis of dimension $k = 8$ for the Monte Carlo design described in Section \ref{s:mc} of the main text.  The knots of the B-splines were placed evenly at the empirical quantiles of the data. As with the results obtained using Hermite polynomials, the simulation results were reasonably insensitive to the dimension of the sieve space.

Tables \ref{tab:mc1:b} and \ref{tab:mc2:b} present bias and RMSE of the estimators across simulations. Figures \ref{fig:mc:sfig1:bs}--\ref{fig:mc:sfig5:bs} present (pointwise) confidence intervals for $\phi$, $\phi^*$ and $\chi$ computed across simulations of different sample sizes.

\begin{table}[p]
\small
\centering 
\begin{tabular}{cc|cc|ccc} 
\hline \hline
 & & \multicolumn{2}{c|}{Power Utility} & \multicolumn{3}{c}{Recursive Preferences} \\
 & $n$ &$\hat \phi$  &$\hat \phi^*$& $\hat \phi$& $\hat \phi^*$& $\hat \chi$ \\ \hline 
\multirow{4}{*}{Bias} 	& 400 & 0.0144 & 0.0141 & 0.0009 & 0.0241 & 0.0116 \\
 						& 800 & 0.0113 & 0.0132 & 0.0011 & 0.0190 & 0.0086 \\
 						& 1600& 0.0078 & 0.0101 & 0.0010 & 0.0145 & 0.0057 \\ 
 						& 3200& 0.0049 & 0.0068 & 0.0009 & 0.0128 & 0.0034 \\ \hline 
\multirow{4}{*}{RMSE} 	& 400 & 0.1106 & 0.1334 & 0.0283 & 0.3479 & 0.0988 \\ 
 						& 800 & 0.0851 & 0.1043 & 0.0270 & 0.3151 & 0.0734 \\ 
 						& 1600& 0.0650 & 0.0814 & 0.0235 & 0.2747 & 0.0547 \\ 
 						& 3200& 0.0500 & 0.0627 & 0.0222 & 0.1702 & 0.0414 \\   \hline \hline
\end{tabular} \vskip 4pt
\parbox{5.0in}{\caption{\label{tab:mc1:b} \small
Simulation results for $\hat \phi$, $\hat \phi^*$ and $\hat \chi$ with a cubic B-spline sieve of dimension $k =8$.}} 
\end{table}

\begin{table}[p]
\small
\centering
\begin{tabular}{cc|ccc|cccc}
\hline \hline
 & & \multicolumn{3}{c|}{Power Utility} & \multicolumn{4}{c}{Recursive Preferences} \\
 & $n$ & $\hat \rho$ & $\hat y$    & $\hat L$    & $\hat \rho$ & $\hat y$    & $\hat L$ & $\hat \lambda$ \\ \hline 
\multirow{4}{*}{Bias} 	& 400 & 0.0036 &-0.0030 & 0.0030 & 0.0010 &-0.0009 & 0.0031 & 0.0028  \\ 
						& 800 & 0.0027 &-0.0024 & 0.0024 & 0.0011 &-0.0011 & 0.0027 & 0.0019 \\
						& 1600& 0.0019 &-0.0017 & 0.0017 & 0.0010 &-0.0009 & 0.0020 & 0.0013 \\
						& 3200& 0.0012 &-0.0011 & 0.0011 & 0.0007 &-0.0006 & 0.0013 & 0.0008 \\ \hline 
\multirow{4}{*}{RMSE} 	& 400 & 0.0345 & 0.0330 & 0.0272 & 0.0154 & 0.0130 & 0.0305 & 0.0348 \\ 
 						& 800 & 0.0254 & 0.0244 & 0.0206 & 0.0155 & 0.0133 & 0.0244 & 0.0209 \\ 
 						& 1600& 0.0190 & 0.0182 & 0.0157 & 0.0163 & 0.0136 & 0.0208 & 0.0153 \\ 
 						& 3200& 0.0142 & 0.0135 & 0.0118 & 0.0148 & 0.0123 & 0.0165 & 0.0110 \\ \hline \hline
\end{tabular} \vskip 4pt
\parbox{5.0in}{\caption{\label{tab:mc2:b} \small
Simulation results for $\hat \rho$, $\hat y$, $\hat L$ and $\hat \lambda$ with a cubic B-spline sieve of dimension $k =8$.}} 
\end{table}

\begin{figure}[p]
\centering
\begin{subfigure}{.5\textwidth}
  \centering
  \includegraphics[width=\linewidth]{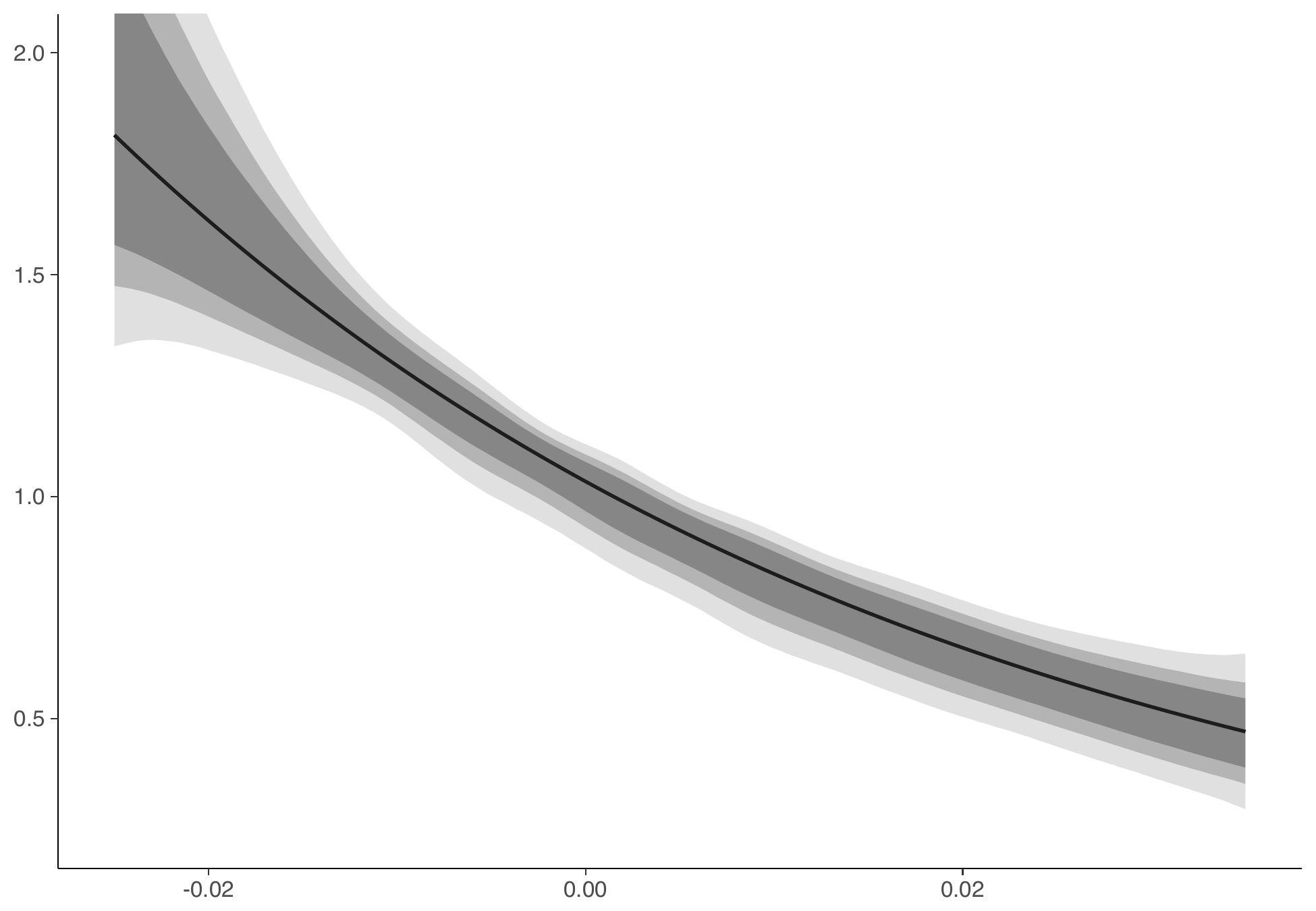}
  \caption{\footnotesize $\hat \phi(x)$ for power utility}
  \label{fig:mc:sfig1:bs}
\end{subfigure}%
\begin{subfigure}{.5\textwidth}
  \centering
  \includegraphics[width=\linewidth]{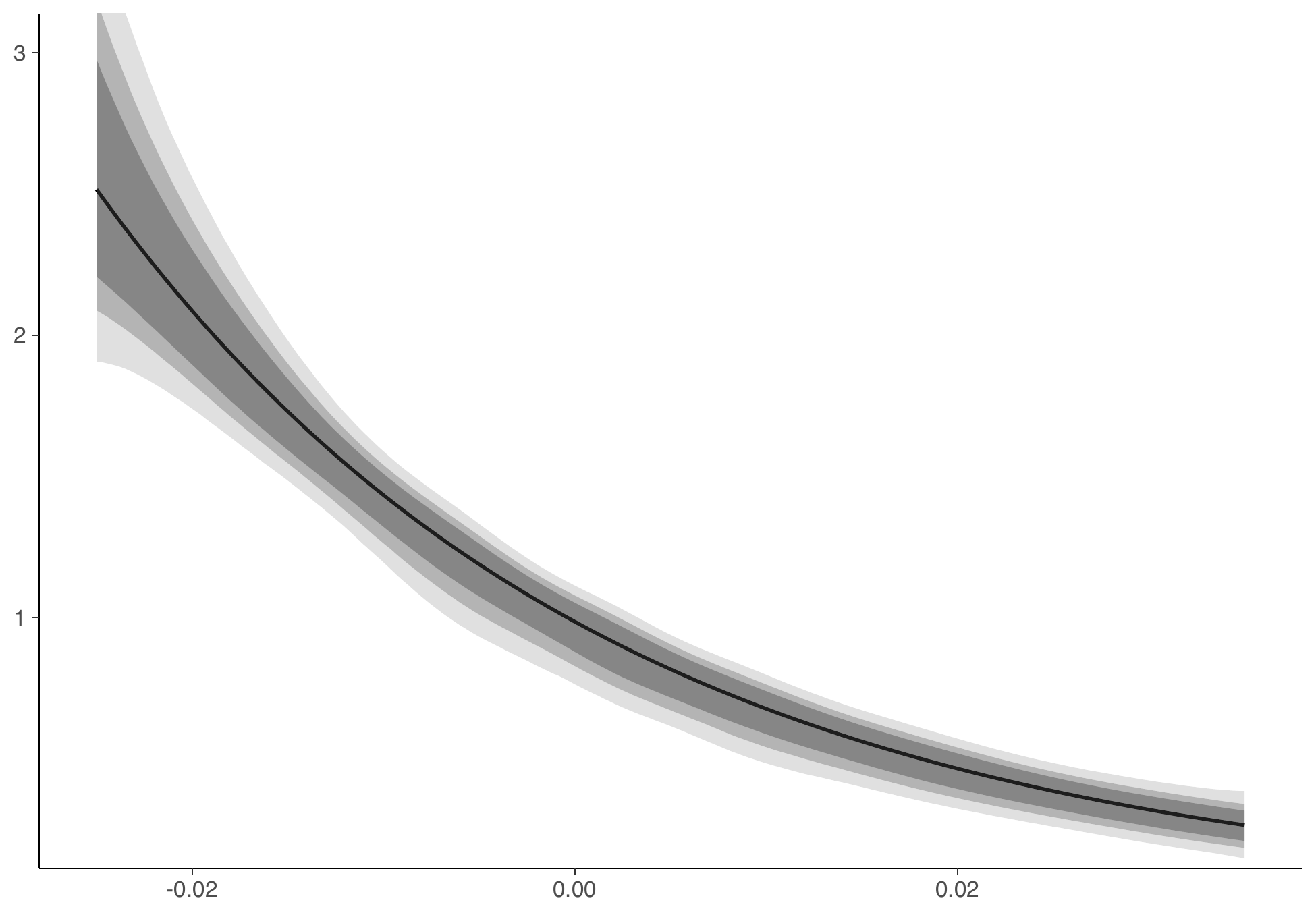}
  \caption{\footnotesize $\hat \phi^*(x)$ for power utility}
  \label{fig:mc:sfig2:bs}
\end{subfigure} \\[10pt]
\begin{subfigure}{.5\textwidth}
  \centering
  \includegraphics[width=\linewidth]{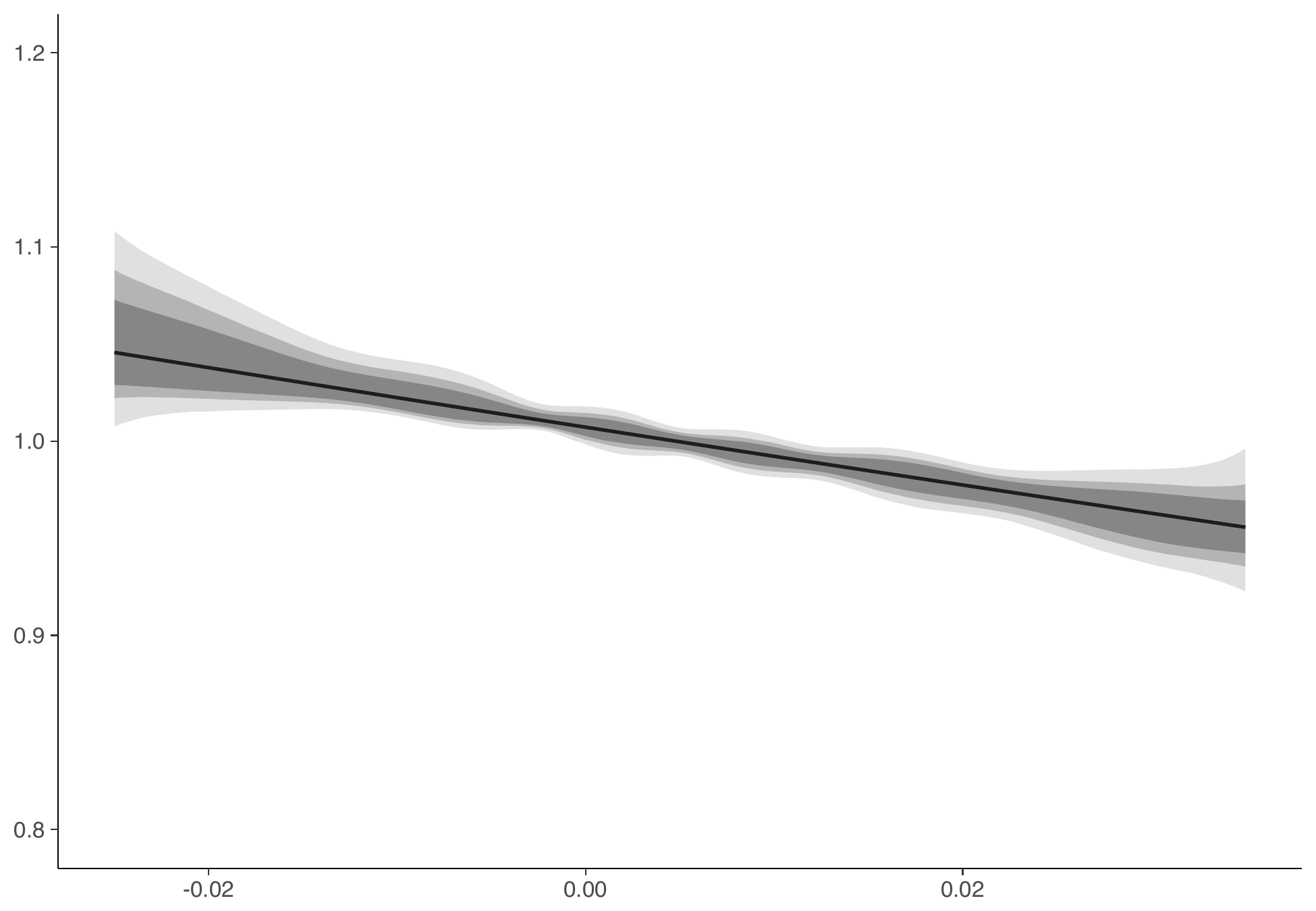}
  \caption{\footnotesize $\hat \phi(x)$ for recursive preferences}
  \label{fig:mc:sfig3:bs}
\end{subfigure}%
\begin{subfigure}{.5\textwidth}
  \centering
  \includegraphics[width=\linewidth]{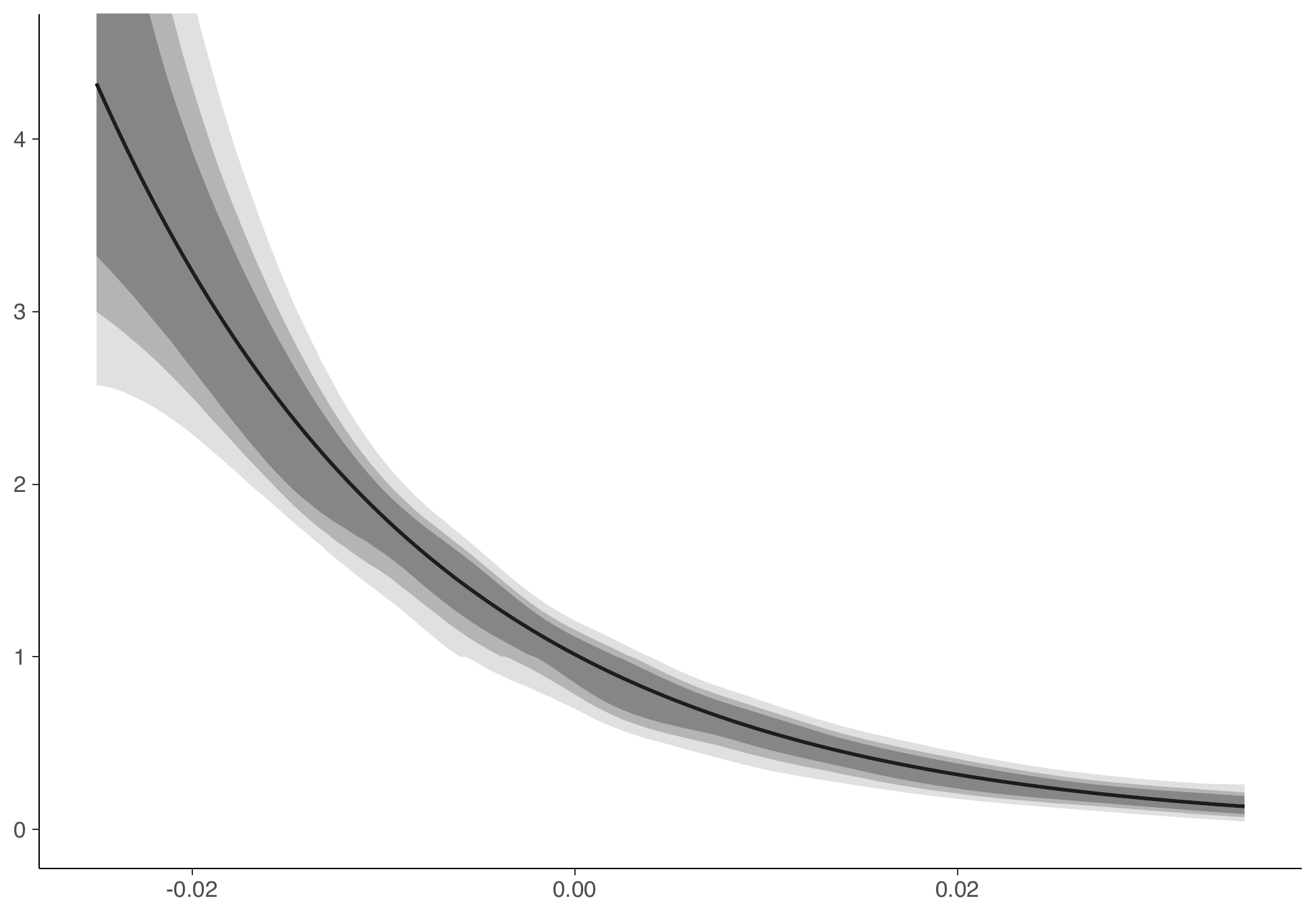}
  \caption{\footnotesize $\hat \phi^*(x)$ for recursive preferences}
  \label{fig:mc:sfig4:bs}
\end{subfigure} \\[10pt]
\begin{subfigure}{.5\textwidth}
  \centering
  \includegraphics[width=\linewidth]{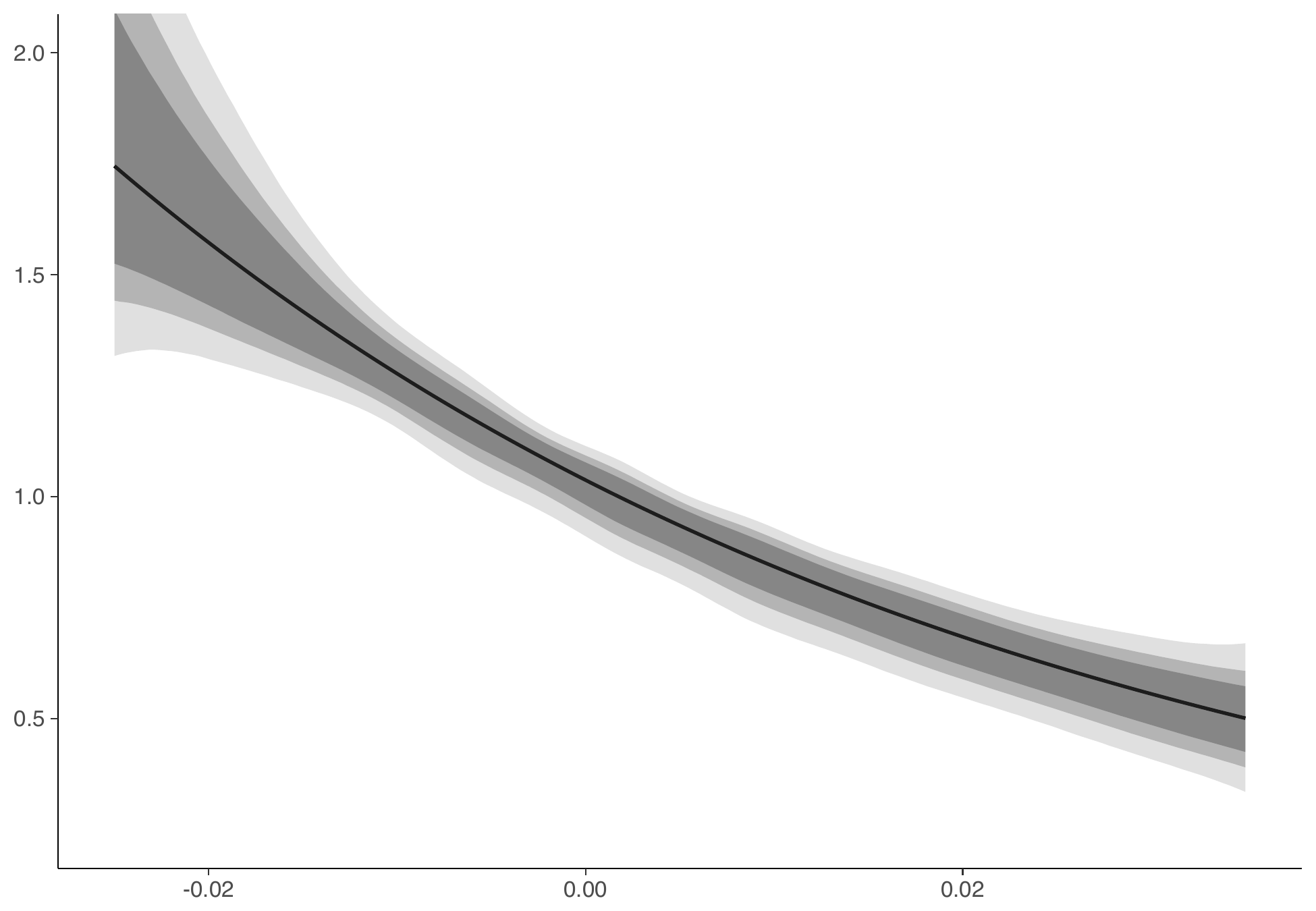}
  \caption{\footnotesize $\hat \chi(x)$ for recursive preferences}
  \label{fig:mc:sfig5:bs}
\end{subfigure}
\begin{center}
\parbox{5.0in}{\caption{ \small Simulation results for a cubic B-spline basis with $k=8$. Panels (a)--(d) display pointwise 90\% confidence intervals for $\phi$ and $\phi^*$ across simulations (light, medium and dark correspond to $n= 400$, $800$, and $1600$ respectively; the true $\phi$ and $\phi^*$ plotted as solid lines). Panel (e) displays results for the positive eigenfunction $\chi$ of the continuation value operator}}
\label{fig:mc:bs}
\end{center}
\end{figure}

\section{Additional results on identification} \label{ax:id}

In this appendix we discuss separately existence and identification, and compare the conditions in the present paper with the stochastic stability conditions in \cite{HS2009} (HS hereafter) and \cite{BHS} (BHS hereafter).

\subsection{Identification}

\begin{assumption} \label{a:id:1}
Let the following hold:
\begin{enumerate}
\item[(a)] $\mb M$ is bounded
\item[(b)] There exists positive functions $\phi,\phi^* \in L^2$ and a positive scalar $\rho$ such that $(\rho,\phi)$ solves (\ref{e:pev}) and $(\rho,\phi^*)$ solves (\ref{e:pev:star})
\item[(c)] $\mb M \psi$ is positive for each non-negative  $\psi \in L^2$ that is not identically zero.
\end{enumerate}
\end{assumption}

Note that no compactness or power-compactness condition appears in Assumption \ref{a:id:1}.

\begin{proposition}\label{p:a:id}
Let Assumption \ref{a:id:1} hold. Then: the functions $\phi$ and $\phi^*$ are the unique solutions (in $L^2$) to (\ref{e:pev}) and (\ref{e:pev:star}), respectively.
\end{proposition}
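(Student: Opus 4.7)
The plan is to establish uniqueness via a Krein–Rutman-type strict positivity argument. I first show that any positive eigenfunction of $\mb M$ must correspond to eigenvalue $\rho$. If $\zeta\in L^2$ is positive with $\mb M\zeta = \mu \zeta$, then adjointness combined with part (b) gives
\[
 \mu\,\langle \phi^{*},\zeta\rangle = \langle \phi^{*},\mb M\zeta\rangle = \langle \mb M^{*}\phi^{*},\zeta\rangle = \rho\,\langle \phi^{*},\zeta\rangle.
\]
Since $\phi^{*}$ and $\zeta$ are positive and $Q$ has full support on $\mc X$, the pairing $\langle \phi^{*},\zeta\rangle$ is strictly positive, forcing $\mu=\rho$. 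The symmetric argument (pairing with $\phi$) shows that any positive eigenfunction of $\mb M^{*}$ also has eigenvalue $\rho$.

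Second, suppose $\phi$ and $\zeta$ are both positive eigenfunctions of $\mb M$ at $\rho$. Set $c = \langle \phi^{*},\phi\rangle/\langle \phi^{*},\zeta\rangle > 0$ and $\psi = \phi - c\zeta$. By linearity $\mb M\psi = \rho\psi$, and by construction $\langle \phi^{*},\psi\rangle = 0$. If $\psi\not\equiv 0$, positivity of $\phi^{*}$ combined with $\langle \phi^{*},\psi\rangle=0$ forces $\psi$ to change sign, so its positive and negative parts $\psi^{+} := \max\{\psi,0\}$ and $\psi^{-} := \max\{-\psi,0\}$ are both non-negative and not identically zero. By Assumption \ref{a:id:1}(c), both $\mb M\psi^{+}$ and $\mb M\psi^{-}$ are strictly positive a.e., so the strict triangle inequality yields, for a.e.\ $x$,
\[
 \rho|\psi(x)| = |\mb M\psi(x)| = |\mb M\psi^{+}(x) - \mb M\psi^{-}(x)| < \mb M\psi^{+}(x) + \mb M\psi^{-}(x) = \mb M|\psi|(x).
\]
Integrating against $\phi^{*}>0$ and using adjointness together with $\mb M^{*}\phi^{*}=\rho\phi^{*}$ gives
\[
 \rho\,\langle \phi^{*},|\psi|\rangle < \langle \phi^{*},\mb M|\psi|\rangle = \langle \mb M^{*}\phi^{*},|\psi|\rangle = \rho\,\langle \phi^{*},|\psi|\rangle,
\]
a contradiction. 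Hence $\psi\equiv 0$ and $\phi = c\zeta$, proving uniqueness of $\phi$ up to scale.

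Uniqueness of $\phi^{*}$ follows from the same three-line argument applied to $\mb M^{*}$, with $\phi$ playing the role previously played by $\phi^{*}$; the positivity of $\mb M^{*}$ on non-negative non-zero $L^2$ functions is inherited from the positive-kernel structure underlying Assumption \ref{a:id:1}(c), since $\mb M$ and $\mb M^{*}$ share the same kernel up to swapping arguments (see the discussion following Assumption \ref{a:id:1}). The main obstacle in this strategy is establishing the strict inequality in the displayed triangle inequality: it is precisely the \emph{strict} positivity in (c), rather than mere non-negativity, that prevents the triangle inequality from being saturated by a sign-changing eigenfunction, and thus underpins the entire uniqueness conclusion.
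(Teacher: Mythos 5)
Your proof is correct, but it takes a genuinely different route from the paper's. Both arguments open identically, pairing a candidate eigenfunction with $\phi^*$ (resp.\ $\phi$) to force the eigenvalue to be $\rho$. For uniqueness, however, the paper works with the full eigenspace $F=\{\psi: \mb M\psi=\rho\psi\}$ and proceeds in three lattice-theoretic steps: it shows $|\psi|\in F$ whenever $\psi\in F$; it rules out sign-changing eigenfunctions via irreducibility, using the resolvent-type series $\sum_{n\ge1}\lambda^{-n}\mb M^n(|\psi|-\psi)>0$; and it pins down the scalar with an Archimedean-axiom argument, thereby proving that $F$ is one-dimensional. You instead run a Jentzsch-type contradiction: normalize the difference of two positive eigenfunctions to be $\phi^*$-orthogonal, observe it must change sign, and then exploit the fact that $\mb M\psi^+$ and $\mb M\psi^-$ are both strictly positive a.e.\ so the triangle inequality $\rho|\psi|=|\mb M\psi^+-\mb M\psi^-|<\mb M|\psi|$ is strict a.e., which contradicts $\mb M^*\phi^*=\rho\phi^*$ after pairing with $\phi^*$. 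Your version is shorter and more elementary, avoids the resolvent series and the Archimedean axiom entirely, and delivers exactly the stated uniqueness (and in fact would give one-dimensionality of $F$ too, by applying the same argument to $\psi-\langle\phi^*,\psi\rangle\langle\phi^*,\phi\rangle^{-1}\phi$ for arbitrary $\psi\in F$); the paper's version makes the one-dimensionality of the eigenspace explicit, which it connects to simplicity of $\rho$. One small repair: your appeal to a ``positive-kernel structure'' to get positivity of $\mb M^*$ is not warranted under Assumption \ref{a:id:1}, which posits no kernel. But the needed fact follows from part (c) by duality alone: if $\mb M^*\psi$ vanished on a set $A$ of positive $Q$-measure for some non-negative nonzero $\psi$, then $0=\langle\mb M^*\psi,\ind\{\cdot\in A\}\rangle=\langle\psi,\mb M\ind\{\cdot\in A\}\rangle>0$ since $\mb M\ind\{\cdot\in A\}$ is positive a.e., a contradiction. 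With that substitution the symmetric argument for $\phi^*$ goes through (the paper glosses over the same point).
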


We now compare the identification results with those in HS and BHS. Some of HS's conditions related to the generator of the semigroup of conditional expectation operators $\wt{\mb E}[\cdot|X_t = x]$ under the change of conditional probability induced by $M_t^P$, namely:
\begin{equation} \label{e:twist}
 \wt{\mb E} [\psi(X_{t+\tau})|X_t = x] := \mb E \bigg[ \frac{M_{t+ \tau}^P}{M_t^P} \psi(X_{t+\tau}) \bigg| X_t = x\bigg] \,.
\end{equation}
In discrete-time environments, both  multiplicative functionals and semigroups are indexed by non-negative integers. Therefore, the ``generator'' in discrete-time is just the single-period distorted conditional expectation operator $\psi \mapsto \wt{\mb E}[\psi(X_{t+1})|X_t = \cdot\,]$.

The following are discrete-time versions of Assumptions 6.1, 7.1, 7.2, 7.3, and 7.4 in HS.

\begin{condition}\label{c:HS} \begin{enumerate}
\item[(a)] $\{ M^P_t : t \in T\}$ is a positive multiplicative functional
\item[(b)] There exists a probability measure $\hat \varsigma$ such that 
\[
 \int \wt{\mb E}[\psi(X_{t+1})|X_t = x]\,\mr d \hat \varsigma(x) = \int \psi(x)\,\mr d \hat \varsigma(x)
\]
for all bounded measurable $\psi : \mathcal X \to \mb R$
\item[(c)] For any $\Lambda \in \mathscr X$ with $\hat \varsigma(\Lambda) > 0$, 
\[
 \wt{\mb E}\left[ \left. \sum_{t=1}^\infty \ind \{X_t \in \Lambda\} \right| X_0 = x \right] > 0
\]
for all $x \in \mathcal X$
\item[(d)] For any $\Lambda \in \mathscr X$ with $\hat \varsigma(\Lambda) > 0$, 
\[
 \wt{\mb P} \left( \left. \sum_{t=1}^\infty \ind \{X_t \in \Lambda\} = \infty \right| X_0 = x \right) = 1
\]
for all $x \in \mathcal X$, where $$\wt{\mb P} (\{X_s\}_{s=0}^t \in A | X_0 = x) = \int {\mb E}[(M_t^P/M_0^P) \ind \{\{X_s\}_{s=0}^t \in A\} |X_0 = x]\,\mr d \hat \varsigma(x)$$ for each $A \in \mathcal F_t$.
\end{enumerate}
\end{condition}

\bigskip

Condition \ref{c:HS}(a) is satisfied by construction of $M^P$ in (\ref{e:pctc}). For Condition \ref{c:HS}(b), let $\phi$ and $\phi^*$ be as in Assumption \ref{a:id:1}(b) and normalize $\phi^*$ such that $\mb E[\phi(X_t)\phi^*(X_t)] = 1$. Under this normalization we can define a probability measure $\hat \varsigma$ by $\hat \varsigma(A) = \mb E[\phi(X_t) \phi^*(X_t) \ind\{X_t \in A\}]$ for all $A \in \mathscr X$. Proposition \ref{p:lr} below shows that this probability measure is precisely the measure used to define the unconditional expectation $\wt{\mb E}$ in the long-run approximation (\ref{e:lrr}). Recall that $Q$ is the stationary distribution of $X$. We then have:
\begin{align*}
 & \int \wt{\mb E}[\psi(X_{t+1})|X_t = x]\,\mr d \hat \varsigma(x) \\
 & = \int \mb E \left[ \left. \rho^{-1} m(X_t,X_{t+1}) \frac{\phi(X_{t+1})}{\phi(X_t)} \psi(X_{t+1}) \right| X_t = x \right] \phi(x) \phi^*(x)\,\mr d Q(x) \\
 & = \rho^{-1} \mb E \left[ \phi^*(X_t) (\mb M(\phi \psi) ( X_t)) \right] \\
 & = \rho^{-1} \mb E \left[ ((\mb M^* \phi^*)(X_{t+1})) \phi(X_{t+1}) \psi(X_{t+1})  \right] \\
 & = \mb E[\phi^*(X_{t+1}) \phi(X_{t+1}) \psi(X_{t+1})]  = \int \psi(x) \,\mr d \hat \varsigma(x)\,.
\end{align*}
Therefore, Condition \ref{c:HS}(b) is satisfied. A similar derivation is reported for continuous-time semigroups in an preliminary 2005 draft of HS with $Q$ replaced by an arbitrary measure.

For Condition \ref{c:HS}(c), note that $\hat \varsigma(\Lambda) > 0$ implies $Q(\Lambda ) > 0$ under our construction of $\hat \varsigma$. Therefore, $\hat \varsigma(\Lambda) > 0$ implies $\phi(x) \ind\{x \in \Lambda\}$ is positive on a set of positive $Q$ measure. Moreover, by definition of $\wt{\mb E}$ we have:
\begin{eqnarray*}
 \wt{\mb E}\left[ \left. \sum_{t=1}^\infty \ind\{X_t \in \Lambda\} \right| X_0 = x \right] & = &  \frac{1}{\phi(x)} \sum_{t=1}^\infty \rho^{-t} \mathbb M_t (\phi(\cdot) \ind\{ \cdot \in \Lambda\})(x) \\
 & \geq & \frac{1}{\phi(x)} \sum_{t=1}^\infty \lambda^{-t} \mathbb M_t(\phi(\cdot) \ind\{ \cdot \in \Lambda\})(x) 
\end{eqnarray*}
for any $\lambda \geq r(\mb M)$ where $r(\mb M)$ denotes the spectral radius of $\mb M$. 
Assumption \ref{a:id:1}(c) implies $\mb M$ is irreducible and, by definition of irreducibility, $\sum_{t=1}^\infty \lambda^{-t} \mathbb M_t(\phi(\cdot) \ind\{ \cdot \in \Lambda\})(x)  > 0$ (almost everywhere) holds for $\lambda > r(\mb M)$. Therefore, Assumption \ref{a:id:1}(c) implies Condition \ref{c:HS}(c), up to the ``almost everywhere'' qualification.

Part (d) is a Harris recurrence condition which does not translate clearly in terms of the operator $\mathbb M$. When combined with existence of an invariant measure and irreducibility (Condition \ref{c:HS}(b) and (c), respectively), it ensures both uniqueness of $\hat \varsigma$ as the invariant measure for the distorted expectations as well as $\phi$-ergodicity, i.e., 
\begin{equation}\label{e:HScgce}
 \lim_{\tau \to \infty} \sup_{0 \leq \psi \leq \phi} \left| \wt {\mb E}\left[\left. \frac{\psi(X_{t+\tau})}{\phi(X_{t+\tau})}\right|X_t = x\right] - \int \frac{\psi(x)}{\phi(x)}\,\mr d \hat \varsigma(x) \right| = 0 
\end{equation}
(almost everywhere) where the supremum is taken over all measurable $\psi$ such that $0 \leq \psi \leq \phi$ \cite[Proposition 14.0.1]{MeynTweedie2009}. Result (\ref{e:HScgce}) is a discrete-time version of Proposition 7.1 in HS, which they use to establish identification of $\phi$. Assumption \ref{a:id:1} alone is not enough to obtain a convergence result like (\ref{e:HScgce}). On the other hand, the conditions in the present paper assume existence of $\phi^*$ whereas no positive eigenfunction of the adjoint of $\mb M$ is guaranteed under the conditions in HS. Indeed, for non-stationary environments it is not even clear how to restrict the class of functions appropriately to define an adjoint (for instance, HS do not appear to restrict $\phi$ to belong to a Banach space). This suggests the Harris recurrence condition is of a very different nature from Assumption \ref{a:id:1}.

BHS assume that $X$ is ergodic under the $\wt{\mb P}$ probability measure, for which Conditions \ref{c:HS}(b)--(d) are sufficient. Also notice that Condition \ref{c:HS}(a) is satisfied by construction in BHS.

The identification results in HS and the proof of proposition 3.3 in BHS shows that uniqueness is established in the space of functions $\psi$ for which $\wt{\mb E}[\psi(X_t)/\phi(X_t)]$ is finite, where $\wt{\mb E}$ denotes expectation under the stationary distribution corresponding to (\ref{e:twist}). Under Assumption \ref{a:id:1}, their result establishes identification in the space of functions $\psi$ for which 
\[
 \wt{\mb E}[\psi(X_t)/\phi(X_t)] = \mb E[ \psi(X_t) \phi^*(X_t)]
\]
is finite. The right-hand side is finite for all $\psi \in L^2$ (by Cauchy-Schwarz). So in this sense the identification result in HS and BHS applies to a larger class of functions than our result.

\subsection{Existence}

We obtain the following existence result by replacing Assumption \ref{a:id:1}(b)(c) by the slightly stronger quasi-compactness and positivity conditions in Assumption \ref{a:id:0}. The following result is essentially Theorems 6 and 7 of \cite{Sasser}.\footnote{I thank an anonymous referee for bringing Theorems 6 and 7 of \cite{Sasser} to my attention. Theorems 6 and 7 of \cite{Sasser} replace Assumption \ref{a:id:0}(a) in Proposition \ref{p:a:exist} by the condition that  $\mb M$ is \emph{quasi-positive}, i.e. for each non-negative $\psi$ and $\psi^*$ in $L^2$ that are not identically zero there exists $\tau \in T$ such that $\langle \psi^*, \mb M_\tau \psi \rangle > 0$. Notice that quasi-compactness also requires that $r(\mb M) > 0$. Assumption \ref{a:id:0}(a) is sufficient for these two conditions (i.e. quasi-positivity and $r(\mb M) > 0$). The condition $r(\mb M) > 0$ together with power-compactness of $\mb M$ (Assumption \ref{a:id:0}(b)) is sufficient for quasi-compactness.}  Say that $\mb M$ is \emph{quasi-compact} if $\mb M$ is bounded and there exists $\tau \in T$ and a bounded linear operator $\mb V$ such that $\mb M_\tau - \mb V$ is compact and $r(\mb V) < r(\mb M)^\tau$. Quasi-compactness of $\mb M$ is implied by Assumption \ref{a:id:0}.

\medskip

\begin{proposition}\label{p:a:exist}
Let Assumption \ref{a:id:0}(a) hold and let $\mb M$ be quasi-compact. Then:
\begin{enumerate}
\item[(a)] There exists positive functions $\phi,\phi^* \in L^2$ and a positive scalar $\rho$ such that $(\rho,\phi)$ solves (\ref{e:pev}) and $(\rho,\phi^*)$ solves (\ref{e:pev:star}).
\item[(b)] The functions $\phi$ and $\phi^*$ are the unique solutions (in $L^2$) to (\ref{e:pev}) and (\ref{e:pev:star}), respectively.
\item[(c)] The eigenvalue $\rho$ is simple and isolated and it is the largest eigenvalue of $\mb M$.
\end{enumerate}
\end{proposition}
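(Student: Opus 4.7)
The plan is to invoke a quasi-compact Krein--Rutman theorem. Three ingredients do the work: (i) quasi-compactness of $\mb M$ delivers an essential spectral radius strictly smaller than $r(\mb M)$, so every spectral value on the circle $|\lambda| = r(\mb M)$ is an isolated eigenvalue of finite algebraic multiplicity; (ii) non-negativity of the kernel $\mc K_m$ makes $\mb M$ leave invariant the cone of $Q$-a.e.~non-negative elements of $L^2$; (iii) strict positivity of $\mc K_m$ ($Q\otimes Q$-a.e.) upgrades $\mb M$ to an irreducible operator. I would treat $\mb M$ and $\mb M^*$ symmetrically; quasi-compactness of $\mb M^*$ follows from that of $\mb M$, because the spectrum is preserved under taking adjoints and compactness is preserved under adjunction.

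For part (a), I would first apply the Nussbaum--Browder characterization of the essential spectrum to the quasi-compact decomposition $\mb M_\tau = \mb K + \mb V$ (with $\mb K$ compact and $r(\mb V) < r(\mb M)^\tau$) to get $r_{\mathrm{ess}}(\mb M_\tau) \leq r(\mb V) < r(\mb M)^\tau = r(\mb M_\tau)$. Hence $r(\mb M) =: \rho > 0$ is an isolated eigenvalue of $\mb M$ of finite algebraic multiplicity. The classical Krein--Rutman theorem for quasi-compact positive operators (Sasser, Theorem~6) produces a non-negative eigenfunction $\phi \in L^2$ with $\mb M \phi = \rho \phi$. Writing $\rho\phi(x) = \int \mc K_m(x,y)\phi(y)\,\mr d Q(y)$ and using the strict positivity of $\mc K_m$, the integrand is strictly positive on a set of positive measure whenever $\phi\not\equiv 0$, so $\phi > 0$ a.e. Running the same argument with $\mb M^*$ in place of $\mb M$ yields $\phi^*$.

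For parts (b) and (c), I would use the standard cone argument. If $\zeta \in L^2$ is any other non-negative eigenfunction of $\mb M$ at eigenvalue $\rho$, then since $\phi > 0$ a.e.~the number $t^* = \inf\{t > 0 : \zeta \leq t\phi \text{ a.e.}\}$ is finite and positive, and $\xi := t^*\phi - \zeta$ is a non-negative eigenfunction. If $\xi \not\equiv 0$ then the kernel-positivity argument forces $\xi > 0$ a.e., contradicting the definition of $t^*$; hence $\zeta = t^*\phi$, proving geometric simplicity and uniqueness up to scale. To upgrade geometric simplicity to algebraic simplicity, I would assume for contradiction that $(\mb M - \rho \mathcal{I})\psi = \phi$ admits $\psi \in L^2$, pair with $\phi^*$, and obtain $\langle \phi^*, \phi \rangle = \langle (\mb M^* - \rho\mathcal{I})\phi^*, \psi\rangle = 0$, which is impossible because both $\phi$ and $\phi^*$ are strictly positive a.e. Isolation of $\rho$ is handed to us by quasi-compactness, and the fact that $\rho$ is the largest eigenvalue follows by construction, since $\rho = r(\mb M)$.

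The main obstacle is invoking the quasi-compact Krein--Rutman theorem in the correct functional-analytic form on $L^2$: the cone of non-negative functions in $L^2$ has empty interior, so results requiring a cone with non-empty interior (such as versions phrased in $C(\mc X)$) do not apply directly. One must rely on the formulations for ordered Banach spaces in which the cone is only assumed to be reproducing and normal, as in Sasser's Theorems~6 and~7. The remaining delicate point is that geometric simplicity must be promoted to algebraic simplicity; this is where the biorthogonality $\langle \phi^*, \phi \rangle > 0$, together with the strict positivity of both eigenfunctions, is indispensable.
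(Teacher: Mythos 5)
Your overall route is the same as the paper's: establish $r(\mb M)>0$, use quasi-compactness to isolate the peripheral spectrum, and invoke the quasi-compact Krein--Rutman theory of \cite{Sasser} (the paper's own proof simply derives $r(\mb M)>0$ from Assumption \ref{a:id:0}(a) and then cites Sasser's Theorems 6 and 7 for all three parts). Your existence step, the upgrade from non-negativity to strict positivity via the kernel, the passage to $\mb M^*$ by adjunction, and the algebraic-simplicity argument obtained by pairing $(\mb M - \rho I)\psi = \phi$ with $\phi^*$ are all sound.

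The gap is in your uniqueness argument for part (b). First, $t^* = \inf\{t>0 : \zeta \leq t\phi \text{ a.e.}\}$ need not be finite: $\phi>0$ a.e.\ does not make $\phi$ an order unit in $L^2$, so the set over which you take the infimum can be empty (e.g.\ whenever $\zeta/\phi$ is unbounded). Second, even granting $t^*<\infty$, the contradiction you want --- that $\xi := t^*\phi - \zeta>0$ a.e.\ violates minimality of $t^*$ --- requires $\xi \geq \epsilon\phi$ for some $\epsilon>0$, which is again exactly the order-unit/nonempty-interior property that fails for the positive cone of $L^2$. You name this obstacle in your closing paragraph, but the $t^*$ argument you sketch is precisely the one it defeats. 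The repair, which is what the paper does in Proposition \ref{p:a:id}, runs differently: first show every eigenfunction $\psi$ at $\rho$ has constant sign, by observing that $\mb M|\psi| - \rho|\psi| \geq 0$ while $\langle \phi^*, \mb M|\psi| - \rho|\psi|\rangle = 0$, so $|\psi|$ is itself an eigenfunction and irreducibility forces $\psi = |\psi|$ or $\psi = -|\psi|$ a.e.; then apply this to $\psi - s\phi$ for every real $s$ and run the Archimedean argument on $S_+ = \{s : \psi \geq s\phi\}$ and $S_- = \{s : \psi \leq s\phi\}$, using $\|\psi\| \geq s\|\phi\|$ to rule out $S_-$ being empty. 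Alternatively, defer part (b) entirely to Sasser's Theorem 7, as the paper does.
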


\bigskip

A similar existence result to part (a) was presented in a 2005 preliminary version of HS. For that result, HS assumed that $r(\mb M)$ was positive and that the (continuous-time) semigroup of operators had an element which was compact. The further properties of $\rho$ that we establish in part (c) of Proposition \ref{p:a:exist} are essential to our derivation of the large-sample theory. A similar proposition was derived under different conditions in \cite{Christensen-idpev}.

HS establish existence of $\phi$ in possibly non-stationary, continuous-time environments by appealing to the theory of ergodic Markov processes. Equivalent conditions for discrete-time environments are now presented and compared with our identification conditions. As with the identification conditions, we use analogues of generators and resolvents for discrete-time semigroups where appropriate.

\medskip

\begin{condition}\label{c:HS:ex}
\begin{enumerate}
\item[(a)] There exists a function $V : \mathcal X \to \mb R$ with $V \geq 1$ and a finite constant $\underline a> 0 $ such that $\mb M V(x) \leq \underline a V(x)$ for all $x \in \mathcal X$
\item[(b)] There exists a measure $\nu$ on $(\mathcal X, \mathscr X)$ such that $\mb J\ind\{ \cdot \in \Lambda\}(x) > 0$ for any $\Lambda \in \mathscr X$ with $\nu(\Lambda) > 0$, where $\mb J$ is given by
\[
 \mb J \psi(x) = \sum_{t=0}^\infty a^{-(t+1)} \frac{\mb M_t (V \psi)(x)}{V(x)}
\]
for $a > \underline a$
\item[(c)] The operators $\mb J$ and $\mb K$ are bounded, where $\mb K$ is given by
\[
 \mb K \psi(x) =  \sum_{t=0}^\infty \lambda^{-t}((\mb J - s \otimes \nu)^t \psi)(x)
\]
where $s: \mathcal X \to \mb R_+$ is such that $\int s \, \mr d \nu > 0$ and $\mb J \psi(x) \geq s(x) \int \psi\,\mr d \nu$ for all $\psi \geq 0$ ($s$ exists by part (b)), $(s \otimes \nu) \psi(x) := s(x) \int \psi \,\mathrm d \nu$, and $\lambda\in \sigma(\mb J)$. 
\end{enumerate}
\end{condition}

\medskip

HS show that $\mb K s$ is a positive eigenfunction of $\mb M$ under the preceding conditions (see their Lemma D.3). Condition \ref{c:HS:ex}(b) is satisfied under Assumption \ref{a:id:0} with $\nu = Q$ whenever $a > r(\mb M)$. To see this, take $\Lambda \in \mathscr X$ with $Q(\Lambda) > 0$ and observe that: 
\[
 \sum_{t=1}^\infty a^{-t} \mb M_t (V(\cdot) \ind\{ \cdot \in \Lambda\})(x) \geq \sum_{t=1}^\infty a^{-t} \mb M_t \ind\{ \cdot \in \Lambda\} > 0
\] 
(almost everywhere) where the first inequality is by positivity and the second is by irreducibility. It follows that $\mb J \ind\{ \cdot \in \Lambda\}(x) > 0$ (almost everywhere). This verifies part (b), up to the ``almost everywhere'' qualification. 

On the other hand, Conditions \ref{c:HS:ex}(a)(c)  seem quite different from the conditions of Proposition \ref{p:a:exist}. For instance, Assumption \ref{a:id:0} does not presume existence of the function $V$ but imposes a quasi-compactness condition. HS do not restrict the function space for $\mb M$ ex ante so there is no notion of a bounded or power-compact operator on the space to which $\phi$ belongs. The requirement that $\mb K$ be bounded (or the sufficient conditions for this provided in HS) do not seem to translate clearly in terms of the operator $\mb M$. 

\subsection{Long-run pricing}

We now present a version of the long-run pricing approximation of HS that holds under our existence and identification conditions. We impose the normalization $\mb E[\phi(X_t)\phi^*(X_t)] = 1$ and define the operator $(\phi \otimes \phi^*) : L^2 \to L^2$ by: 
\[
 (\phi \otimes \phi^*) \psi(x) = \phi(x) \int \phi^* \psi\,\mr d Q\,.
\]

\medskip

\begin{proposition}\label{p:lr}
Let Assumption \ref{a:id:0} hold. Then: there exists $c > 0$ such that:
\[
 \|\rho^{-\tau}\mb M_\tau - (\phi \otimes \phi^*)\| = O( e^{-c\tau})
\]
as $\tau \to \infty$.
\end{proposition}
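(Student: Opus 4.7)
My plan is to reduce Proposition \ref{p:lr} to a standard spectral-decomposition argument, exploiting the fact (already granted by Proposition \ref{p:id}) that $\rho$ is a simple, isolated eigenvalue of $\mb M$ which dominates the rest of the spectrum, with right-eigenfunction $\phi$ and left-eigenfunction $\phi^*$.

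First, I would verify that the rank-one operator $\mb P := \phi \otimes \phi^*$ is the spectral (Riesz) projection onto the eigenspace of $\rho$. Direct computation using $\mb M\phi=\rho\phi$ and $\mb M^*\phi^*=\rho\phi^*$ gives $\mb M\mb P = \mb P\mb M = \rho \mb P$, and under the normalization $\langle \phi,\phi^*\rangle = \mb E[\phi(X_t)\phi^*(X_t)] = 1$ one obtains $\mb P^2=\mb P$. Because $\mb P$ has rank one and commutes with $\mb M$, it equals the Riesz projection associated to $\rho$ (here the simplicity and isolation in Proposition \ref{p:id}(c) are essential, so the algebraic and geometric multiplicities both equal one and the spectral subspace of $\rho$ is exactly $\mr{span}(\phi)$).

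Next I would exploit the commutation and idempotency to split
\begin{equation*}
\mb M = \rho\mb P + \mb M(\mb I-\mb P),
\end{equation*}
and observe that $\mb P$ and $\mb M(\mb I-\mb P)$ annihilate each other ($\mb P(\mb I-\mb P)=0$ and $\mb P$ commutes with $\mb M$), so by induction
\begin{equation*}
\mb M^\tau = \rho^\tau \mb P + \bigl(\mb M(\mb I-\mb P)\bigr)^\tau \quad\text{for every } \tau\in T.
\end{equation*}
Dividing by $\rho^\tau$ yields $\rho^{-\tau}\mb M_\tau - \mb P = \rho^{-\tau}(\mb M(\mb I-\mb P))^\tau$, which reduces the claim to bounding the norm of $(\mb M(\mb I-\mb P))^\tau$.

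The key step is then to show that the restricted operator $\mb M_0:=\mb M(\mb I-\mb P)$, viewed as a bounded operator on $L^2$ (equivalently on the invariant subspace $(\mb I-\mb P)L^2$), has spectral radius $r(\mb M_0) < \rho$. Since $\mb P$ is the Riesz projection associated with the isolated point $\rho$, standard functional-calculus facts (e.g.\ Theorem VII.3.20 in \cite{DunfordSchwartz}) give $\sigma(\mb M_0) = \sigma(\mb M)\setminus\{\rho\}$. Because $\rho$ is isolated and is the largest element of $\sigma(\mb M)$ (Proposition \ref{p:id}(c)), the number $\rho':=\sup\{|\lambda|:\lambda\in\sigma(\mb M_0)\}$ satisfies $\rho'<\rho$. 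This isolation step, and the identification of $\mb P$ with the Riesz projection, is the main technical point; everything else is bookkeeping.

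Finally, fix any $c>0$ with $e^{-c}\rho\in(\rho',\rho)$. By Gelfand's spectral radius formula, $\|\mb M_0^\tau\|^{1/\tau}\to \rho' < e^{-c}\rho$, so there is a finite constant $C$ with $\|\mb M_0^\tau\|\le C(e^{-c}\rho)^\tau$ for all $\tau$. Combining with the display above gives
\begin{equation*}
\bigl\|\rho^{-\tau}\mb M_\tau - (\phi\otimes\phi^*)\bigr\|=\rho^{-\tau}\|\mb M_0^\tau\|\le C e^{-c\tau},
\end{equation*}
which is the claim.
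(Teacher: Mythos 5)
Your proof is correct and follows essentially the same route as the paper: both identify $\phi\otimes\phi^*$ with the rank-one spectral projection for the isolated simple eigenvalue $\rho$, express $\rho^{-\tau}\mb M_\tau - (\phi\otimes\phi^*)$ as the $\tau$-th power of an operator whose spectral radius is strictly below $\rho$ (your $\mb M(\mb I-\mb P)$ is exactly $\rho$ times the paper's operator $\mb V$), and conclude via Gelfand's formula. The only cosmetic difference is in how the spectral gap is justified: the paper invokes power-compactness (discrete spectrum) together with the fact that $1$ is the unique peripheral spectral point of $\rho^{-1}\mb M$, whereas you deduce it from isolation of $\rho$ plus compactness of the spectrum, which is equally valid provided "largest eigenvalue" in Proposition \ref{p:id}(c) is read, as intended, to mean that $\rho$ is the unique spectral point of modulus $r(\mb M)$.
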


\bigskip

Proposition \ref{p:lr} is similar to Proposition 7.4 in HS. Proposition \ref{p:lr} establishes convergence of $\rho^{-\tau} \mb M_\tau$ to $(\phi \otimes \phi^*)$, with the approximation error vanishing exponentially in the payoff horizon $n$. A similar proposition (without the rate of convergence) was reported in a 2005 draft of HS. There, HS assumed directly that the distorted conditional expectations converged to an unconditional expectation characterized by $\phi$, $\phi^*$, and an arbitrary measure. Proposition \ref{p:lr} shows that in stationary environments the unconditional expectation $\wt{\mb E}[\psi(X_t)/\phi(X_t)]$ appearing in the long-run approximation (\ref{e:lrr}) is characterized by $\phi$, $\phi^*$ and $Q$, namely:
\[
 \wt {\mb E} \left[ \frac{\psi(X_t)}{\phi(X_t)} \right] = \mb E[\psi(X_t) \phi^*(X_t)]\,.
\]

\section{Proofs of results in Appendices \ref{ax:suff} and \ref{ax:id}} \label{ax:proofs:supp}

\subsection{Proofs for Appendix \ref{ax:est:mat}}

\begin{proof}[Proof of Lemma \ref{lem:beta:1}]
Lemma 2.2 of \cite{ChenChristensen-reg} gives the bound $\|\wh{\mf G}^o - \mf I\| = O_p( \xi_k (\log n)/\sqrt n)$. We first prove that $\|\wh{\mf M}^o - \mf  M^o \| = O_p ( \xi_k^{1+2/r} (\log n)/\sqrt n )$. 

Let $\{T_n : n \geq 1\}$ be a sequence of positive constants to be defined below. Let  $\tilde b^k = \mf G^{-1/2} b^k$ be the orthogonalized basis functions and let $\Xi_{t,n} = n^{-1} \tilde b^k(X_t) m(X_t,X_{t+1}) \tilde b^k(X_{t+1})'$. Write:
\begin{align*}
 \wh{\mf M}^o - \mf M^o & = \sum_{t=0}^{n-1} \Xi_{t,n}^{trunc} + \sum_{t=0}^{n-1} \Xi_{t,n}^{tail} \quad \mbox{where} \\
 \Xi_{t,n}^{trunc} & = \Xi_{t,n} \ind\{ \|\Xi_{t,n}\| \leq T_n/n\}  - \mb E[\Xi_{t,n}\ind\{ \|\Xi_{t,n} \| \leq T_n/n\} ] \\
 \Xi_{t,n}^{tail}  & = \Xi_{t,n} \ind\{ \|\Xi_{t,n}\| > T_n/n\}  - \mb E[\Xi_{t,n}\ind\{ \|\Xi_{t,n} \| > T_n/n\} ] \,.
\end{align*}
Note $\mb E[\Xi_{t,n}^{trunc}] = 0$ and $\|\Xi_{t,n}^{trunc}\| \leq 2 n^{-1} T_n$ by construction. Let $S^{k-1} = \{u \in \mb R^k : \|u\| = 1\}$. For any $u,v \in S^{k-1}$ and any $0 \leq t,s \leq n-1$, we have:
\begin{align*}
  |u'\mb E[ \Xi_{t,n}^{trunc} (\Xi_{s,n}^{trunc})' ]v| 
 & \lesssim \frac{\xi_k^2}{n^2} \mb E[|u'\tilde b^k(X_t) m(X_t,X_{t+1}) m(X_s,X_{s+1}) \tilde b^k(X_s)'v|] \\
 & \leq \frac{\xi_k^2}{n^2} \mb E[|m(X_t,X_{t+1})|^r]^{2/r}  \times \mb E[|(u'\tilde b^k(X_t))|^q]^{1/q} \times \mb E[|(v'\tilde b^k(X_s))|^q]^{1/q} \\
 & \lesssim \frac{\xi_k^2}{n^2} \mb E[|(u'\tilde b^k(X_t))|^q]^{1/q} \times \mb E[|(v'\tilde b^k(X_s))|^q]^{1/q} 
\end{align*}
where the second line is by H\"older's inequality choosing $q$ such that $1 = \frac{2}{r} + \frac{2}{q}$ and the third is because $\mb E[|m(X_t,X_{t+1})|^r] < \infty$. Since $E[(\tilde b^k(X_0)'u)^2] = \|u\|^2 = 1$  for any $u \in S^{k-1}$, we have:
\[
  \mb E[|(u'\tilde b^k(X_t))|^q]^{1/q} \leq ( \xi_k^{q-2} \mb E[(u'\tilde b^k(X_t))^2])^{1/q} = \xi_k^{1-2/q}
\]
and so:
\[
 \|\mb E[ \Xi_{t,n}^{trunc} (\Xi_{s,n}^{trunc})' ]\| \lesssim  \sup_{u,v \in  S^{k-1}}|u'\mb E[ \Xi_{t,n}^{trunc} (\Xi_{s,n}^{trunc})' ]v|  = O(\xi_k^{2+4/r}/n^2)\,.
\]
The same argument gives $\|\mb E[ (\Xi_{t,n}^{trunc})' \Xi_{s,n}^{trunc} ]\| = O(\xi_k^{2+4/r}/n^2)$. By Corollary 4.2 of \cite{ChenChristensen-reg}:
\[
 \bigg\| \sum_{t=0}^{n-1} \Xi_{t,n}^{trunc}  \bigg\| = O_p ( \xi_k^{1+2/r}(\log n)/\sqrt n )
\]
provided $T_n (\log n)/n = o(\xi_k^{1+2/r}/\sqrt n)$.

Now consider the remaining term. If $m$ is bounded we can set $\Xi_{t,n}^{tail} \equiv 0$ by taking $T_n = C\xi_k^2$ for sufficiently large $C$. Otherwise, by the triangle and Jensen inequalities:
\begin{align*}
 \mb E \bigg[ \bigg\| \sum_{t=0}^{n-1} \Xi_{t,n}^{tail} \bigg\| \bigg] 
 & \leq 2n  \mb E[ \| \Xi_{t,n}\|\ind\{ \|\Xi_{t,n}\| > T_n/n \}  ] \\
 & \leq \frac{2n^r}{T_n^{r-1}}  \mb E[ \| \Xi_{t,n}\|^r \ind\{ \|\Xi_{t,n}\| > T_n/n \}  ]
  \leq \frac{2\xi_k^{2r}}{T_n^{r-1}} \mb E[|m(X_0,X_1)|^r]\,.
\end{align*}
By Markov's inequality:
\[
 \bigg\| \sum_{t=0}^{n-1} \Xi_{t,n}^{tail}  \bigg\| = O_p ( \xi_k^{2r}/T_n^{r-1} )\,.
\]
choosing $T_n$ so that $\xi_k^{2r}/T_n^{r-1} \asymp \xi_k^{1+2/r}(\log n)/\sqrt n$, we obtain:
\[
 \bigg\| \sum_{t=0}^{n-1} \Xi_{t,n}^{tail} \bigg\| = O_p ( \xi_k^{1+2/r}(\log n)/\sqrt n )\,.
\]
The condition $T_n (\log n)/n = o(\xi_k^{1+2/r}/\sqrt n)$ is, with this choice of $T_n$, equivalent to the condition $(\xi_k (\log n)/\sqrt n )^{(r-2)/(r-1)} = o(1)$, which holds because $\xi_k (\log n)/\sqrt n = o(1)$ and $r > 2$. We have therefore shown that $\|\wh{\mf M}^o - \mf  M^o \| = O_p ( \xi_k^{1+2/r} (\log n)/\sqrt n )$.

Result (1)  now follows from Lemma \ref{lem:matcgce}(b), noting that 
\[
 \|(\wh{\mf G}^o)^{-1} \wh{\mf M}^o - \mf M^o \| = O_p ( \xi_k^{1+2/r} (\log n)/\sqrt n )
\]
which is $o_p(1)$ under the condition $\xi_k^{1+2/r} (\log n)/\sqrt n = o(1)$. Result (2) follows from Result (1) and definition of the operator norm. Result (3) is immediate from the fact that $\|\wh{\mf G}^o - \mf I\| = O_p( \xi_k (\log n)/\sqrt n)$ and $\|\wh{\mf M}^o - \mf  M^o \| = O_p ( \xi_k^{1+2/r} (\log n)/\sqrt n )$.
\end{proof}

\begin{proof}[Proof of Lemma \ref{lem:rho:1}]
Similar arguments to the proof of Lemmas 4.8 and 4.12 of \cite{Gobetetal} give the bounds $\|\wh{\mf G}^o - \mf I\| = O_p ( \xi_k \sqrt{ k/n} )$,  $\|(\wh{\mf G}^o - \mf I)\tilde c_k\| = O_p ( \xi_k /\sqrt n )$, and $\|\tilde c_k^{* \prime}(\wh{\mf G}^o - I)\| = O_p ( \xi_k /\sqrt n )$. We first establish analogous bounds for $\wh{\mf M}^o$.

Let $u_1,\ldots,u_k$ be an orthonormal basis for $\mb R^k$. Then: 
\begin{align*}
 \mb E[ \|\wh{\mf M}^o - \mf M^o\|^2] & \leq  \sum_{l=1}^k \mb E[ \|(\wh{\mf M}^o - \mf M^o)u_l\|^2] \\
 & = \sum_{l=1}^k \sum_{j=1}^k \mr{Var} \left[ \frac{1}{n} \sum_{t=1}^n (\tilde b_{kj}(X_t)^2 m(X_t,X_{t+1})^2 (\tilde b^k(X_{t+1})' u_l) \right] \,.
\end{align*}
Now, by the covariance inequality for rho-mixing processes:
\begin{align*}
  \mb E[ \|\wh{\mf M}^o - \mf M^o\|^2] & \leq \frac{C}{n} \sum_{l=1}^k \sum_{j=1}^k  \mb E \left[ \tilde b_{kj}(X_t)^2 m(X_t,X_{t+1})^2 (\tilde b^k(X_{t+1})' u_l)^2 \right] \\
 & \leq \frac{C \xi_k^2}{n} \sum_{l=1}^k \mb E \left[ m(X_t,X_{t+1})^2 (\tilde b^k(X_{t+1})' u_l)^2 \right] 
\end{align*}
where the constant $C$ depends only on the rho-mixing coefficients. By H\"older's inequality:
\begin{align*}
 \mb E [ m(X_t,X_{t+1})^2 (\tilde b^k(X_{t+1})' u_l)^2 ] 
 & \leq \mb E[|m(X_0,X_1)|^r]^{2/r} \times \mb E[ (\tilde b^k(X_0)' u_l)^{\frac{2r}{r-2}}]^{\frac{r-2}{r}} \\
 & \leq \mb E[|m(X_0,X_1)|^r]^{2/r} \times \xi_k^{4/r} \times \mb E[ (\tilde b^k(X_0)' u_l)^2]^{\frac{r-2}{r}} \lesssim \xi_k^{4/r}
\end{align*}
since $\mb E[|m(X_0,X_1)|^r] < \infty$ and $\|u_l\| = 1$. Substituting into the above, we obtain
\[
 \mb E[ \|\wh{\mf M}^o - \mf M^o\|^2] \lesssim  \xi_k^{2+4/r} k / n
\]
which, by Markov's inequality, yields $\|\wh{\mf M}^o - \mf  M^o \| = O_p ( \xi_k^{1+2/r} \sqrt{k/n} ) $. Similar arguments give $\|(\wh{\mf M}^o - \mf M^o)\tilde c_k\| = O_p ( \xi_k^{1+2/r}/\sqrt n )$ and $\|\wt c_k^{* \prime} (\wh{\mf M}^o - \mf M^o)\| = O_p ( \xi_k^{1+2/r}/\sqrt n )$.

Result (1) now follows from Lemma \ref{lem:matcgce}(b), noting that 
\[
 \|(\wh{\mf G}^o)^{-1} \wh{\mf M}^o - \mf M^o \| = O_p ( \xi_k^{1+2/r}\sqrt{k/ n} )
\]
which is $o_p(1)$ under the condition $\xi_k^{1+2/r} \sqrt{k/ n} = o(1)$. 

For result (2), note that whenever $\| \wh{\mf G}^o - \mf I\| \leq \frac{1}{2}$, we have $\|(\wh{\mf G}^o)^{-1}\| \leq 2$ and hence:
\begin{align*}
 \| ((\wh{\mf G}^o)^{-1} \wh{\mf M}^o - \mf M^o) \tilde c_k\| 
 & \leq  \| (\wh{\mf G}^o)^{-1} (\wh{\mf M}^o - \mf M^o) \tilde c_k\| + \| ((\wh{\mf G}^o)^{-1} - \mf I)\mf M^o \tilde c_k\| \\
 & \leq 2 \|  (\wh{\mf M}^o - \mf M^o)\tilde c_k\| + 2 \rho_k \| (\wh{\mf G}^o - \mf I )\tilde c_k\| \,.
\end{align*} 
The result for $\tilde c_k$ follows form the bounds $\|(\wh{\mf G}^o - \mf I)\tilde c_k\| = O_p ( \xi_k /\sqrt n )$ and $\|(\wh{\mf M}^o - \mf M^o)\tilde c_k\| = O_p ( \xi_k^{1+2/r}/\sqrt n )$. The result for $\tilde c_k^*$ follows similarly.

Result (3) is immediate from the fact that $\|\wh{\mf G}^o - \mf I\| = O_p( \xi_k \sqrt{k/ n})$ and $\|\wh{\mf M}^o - \mf  M^o \| = O_p ( \xi_k^{1+2/r} \sqrt{k/ n} )$.
\end{proof}

\begin{proof}[Proof of Lemma \ref{lem:beta:2}]
The proof will follow by the same arguments as the proof of results (1)--(3) in Lemma \ref{lem:beta:1}, provided we show that $\|\wh{\mf M}^o - \mf M^o\| = O_p(\xi_k^{1+2/r} (\log n)/\sqrt n)$ also holds in this case. First write:
\begin{align*}
 \wh{\mf M}^o - \mf M^o  
  & = \left(  \frac{1}{n} \sum_{t=0}^{n-1} \tilde b^k(X_t) \Big(  m(X_t,X_{t+1};\hat \alpha) - m(X_t,X_{t+1};\alpha_0) \Big) \tilde b^{k}(X_{t+1}) \right) \\
  & \quad + \left( \frac{1}{n} \sum_{t=0}^{n-1} \tilde b^k(X_t) m(X_t,X_{t+1};\alpha_0) \tilde b^{k}(X_{t+1}) -  \mf M^o  \right)   
   =: \wh \Delta_{1,k} + \wh \Delta_{2,k}
\end{align*}
where $\|\wh \Delta_{2,k}\| = O_p ( \xi_k^{1+2/r} (\log n)/\sqrt n )$ by the proof of Lemma \ref{lem:beta:1}. For $\wh \Delta_{1,k}$, condition (a) implies that $\hat \alpha \in N$ wpa1. Whenever $\hat \alpha \in N$ we may take a mean value expansion (valid by condition (b)) to obtain:
\begin{align*}
 \|  \wh \Delta_{1,k} \| 
 & = \left\|  \frac{1}{n} \sum_{t=0}^{n-1} \tilde b^k(X_t)  \tilde b^{k}(X_{t+1})'  \left( \frac{\partial m(X_t,X_{t+1};\tilde \alpha)}{\partial \alpha'} (\hat \alpha - \alpha_0) \right) \right\|  \quad \mbox{wpa1}
\end{align*}
for $\tilde \alpha$ in the segment between $\hat \alpha$ and $\alpha_0$. Therefore, wpa1 we have:
\begin{align*}
 \|  \wh \Delta_{1,k} \| & = \sup_{u,v \in S^{k-1}} \left| \frac{1}{n} \sum_{t=0}^{n-1} (u'\tilde b^k(X_t) ) (v' \tilde b^{k}(X_{t+1})) \left( \frac{\partial m(X_t,X_{t+1};\tilde \alpha)}{\partial \alpha'} (\hat \alpha - \alpha_0) \right) \right| \\
 & \leq \xi_k \times \left( \sup_{u \in S^{k-1}} \frac{1}{n} \sum_{t=0}^{n-1} |u'\tilde b^k(X_t) | \times \bar m(X_t,X_{t+1}) \right) \times \|\hat \alpha - \alpha_0 \| \\
 & \leq \xi_k \times \left( \sup_{u \in S^{k-1}} u' \wh{\mf G}^o u \right)^{1/2} \times \left( \frac{1}{n} \sum_{t=0}^{n-1} \bar m(X_t,X_{t+1})^2 \right)^{1/2} \times \|\hat \alpha - \alpha_0 \| 
\end{align*}
where the first line is because $\|\mf A\| = \sup_{u,v \in S^{k-1}} |u'\mf A v|$ and the second and third lines are by condition (b) and the H\"older and Cauchy-Schwarz inequalities. Finally, notice that $\sup_{u \in S^{k-1}} u' \wh{\mf G}^o u = \| \wh{\mf G}^o \| = 1 + o_p(1)$ by the proof of Lemma \ref{lem:beta:1}, and $\frac{1}{n} \sum_{t=0}^{n-1} \bar m(X_t,X_{t+1})^2  = O_p(1)$ by the ergodic theorem and condition (b). Therefore $\|\wh \Delta_{1,k}\| = O_p( \xi_k/\sqrt n)$ and so $\|\wh{\mf M}^o - \mf M^o\| = O_p(\xi_k^{1+2/r} (\log n)/\sqrt n)$, as required.
\end{proof}

\begin{proof}[Proof of Lemma \ref{lem:beta:3}]
The proof will follow by the same arguments as the proof of results (1)--(3) in Lemma \ref{lem:beta:1}, provided we show that 
\[
 \|\wh{\mf M}^o - \mf M^o\| = O_p \left( \frac{\xi_k^{1+2/r}(\log n)}{\sqrt n} + \frac{ \xi_k^{2-\frac{2s-v}{2sv}} \sqrt{ k \log k} }{\sqrt n}  \right)\,.
\] As in the proof of Lemma \ref{lem:beta:2}, it suffices to bound:
\[
 \wh \Delta_{1,k} :=   \frac{1}{n} \sum_{t=0}^{n-1} \tilde b^k(X_t) \Big(  m(X_t,X_{t+1};\hat \alpha) - m(X_t,X_{t+1};\alpha_0) \Big) \tilde b^{k}(X_{t+1}) \,.
\]
Let $h_\alpha(x_0,x_1) =  m(x_0,x_1; \alpha) - m(x_0,x_1;\alpha_0)$ and let:
\begin{align*}
 h_\alpha^{trunc}(x_0,x_1) & = h_\alpha(x_0,x_1) \ind\{ \| \tilde b^k(x_0)\| \| \tilde b^k(x_1)\| E(x_0,x_1) \leq T_n\} \\
 h_\alpha^{tail}(x_0,x_1) & = h_\alpha(x_0,x_1) \ind\{ \| \tilde b^k(x_0)\| \| \tilde b^k(x_1)\| E(x_0,x_1) > T_n\} 
\end{align*}
where $\{T_n : n \geq 1\}$ be a sequence of positive constants to be defined below. Then:
\begin{align*}
 \| \wh \Delta_{1,k}\|  \leq & \,\sup_{\alpha \in \mc A} \left\| \frac{1}{n} \sum_{t=0}^{n-1} \tilde b^k(X_t) h^{trunc}_\alpha(X_t,X_{t+1}) \tilde b^{k}(X_{t+1}) - \mb E[ \tilde b^k(X_t) h^{trunc}_\alpha(X_t,X_{t+1}) \tilde b^{k}(X_{t+1})  ] \right\| \\
 & + \sup_{\alpha \in \mc A} \left\| \frac{1}{n} \sum_{t=0}^{n-1} \tilde b^k(X_t) h^{tail}_\alpha(X_t,X_{t+1})  \tilde b^{k}(X_{t+1})\right\| \\
 & + \sup_{\alpha \in \mc A} \left\| \mb E[ \tilde b^k(X_t) h^{tail}_\alpha(X_t,X_{t+1}) \tilde b^{k}(X_{t+1})  ] \right\|  + \left\| \mb E[ \tilde b^k(X_t) h_{\hat \alpha}(X_t,X_{t+1}) \tilde b^{k}(X_{t+1})] \right\| \\
  =: &\, \wh \Delta_{1,k,1} + \wh \Delta_{1,k,2} + \wh \Delta_{1,k,3} + \wh \Delta_{1,k,4} \,.
\end{align*} 
Let $\mc H_{n,k} = \{ (c_0'\tilde b^k(x_0))(c_1'\tilde b^k(x_1))h^{trunc}_\alpha(x_0,x_1) : c_0,c_1 \in S^{k-1}, \alpha \in \mc A\}$ where $S^{k-1}$ is the unit sphere in $\mb R^k$. Then:
\[
 \wh \Delta_{1,k,1} \leq n^{-1/2} \times {\textstyle \sup_{h \in \mc H_{n,k}} | \mc Z_n(h)|}
\]
by definition of the operator norm, where $\mc Z_n$ is the centered empirical process on $\mc H_{n,k}$. By Theorem 2 of \cite{DoukhanMassartRio}:
\begin{equation} \label{e:dmr:bd}
 \mb E[{\textstyle \sup_{h \in \mc H_{n,k}} | \mc Z_n(h)|}] = O  \left( \varphi(\sigma_{n,k}) + \frac{T_n q \varphi^2(\sigma_{n,k})}{\sigma_{n,k}^2 \sqrt n} + \sqrt n T_n \beta_q \right) 
\end{equation}
where $q \in \{1,2,\ldots\}$, $\sigma_{n,k} \geq \sup_{h \in \mc H_{n,k}} \|h\|_{2,\beta}$ for the norm $\|\cdot\|_{2,\beta}$ defined on p. 400 of \cite{DoukhanMassartRio}, and $\varphi(\sigma)$ is the bracketing entropy integral:
\[
 \varphi(\sigma) = \int_0^\sigma \sqrt{\log N_{[\,\,]}(u,\mc H_{n,k},\|\cdot\|_{2,\beta})} \, \mr d u \,.
\]
Exponential $\beta$-mixing and Lemma 2 of \cite{DoukhanMassartRio} (with $\phi(x) = x^{v}$) imply: 
\begin{equation} \label{e:beta:norm}
 \|\cdot\|_{2,\beta} \leq C  \|\cdot\|_{2v} \quad \mbox{on $L^{2v}$}
\end{equation}
for any $v > 1$, where the constant $C < \infty$  depends only on $v$ and the $\beta$-mixing coefficients. Taking $1 < v < 2s$, by H\"older's inequality and condition (a) we have:
\[
 \sup_{h \in \mc H_{n,k}} \|h\|_{2,\beta} \leq C  \sup_{h \in \mc H_{n,k}} \|h\|_{2v} \leq C \xi_k^{2-\frac{2s-v}{2sv}} \|E\|_{4s} \,.
\]
We therefore take $\sigma_{n,k} = C \xi_k^{2-\frac{2s-v}{2sv}} \|E\|_{4s}$.

To bound the bracketing entropy, define $\mc H_{n,k}^* = \{ b_0(x_0)b_1(x_1)h(x_0,x_1) : b_0,b_1 \in \mc B_k^*, h \in \mc H_n^*\}$ where $\mc B_k^* = \{ (c'\tilde b^k)/\xi_k : c \in S^{k-1}\}$ and $\mc H_{n}^* = \{ h^{trunc}_\alpha/E : \alpha \in \mc A\}$. For $\mc B_k^*$, note that $|c_0' \tilde b^k(x)/\xi_k - c_1' \tilde b^k(x)/\xi_k| \leq  (\xi_k^{-1} \|\tilde b^k(x)\|) \times  \|c_0 - c_1\|$ where $\| (\|\tilde b^k(x)\|/\xi_k) \|_p \leq (k/\xi_k^2)^{1/p}$ for any $p > 2$. By Theorem 2.7.11 of \cite{vdVW} and Lemma 2.5 of \cite{vdG}:
\[
 N_{[\,\,]}(u,\mc B_k^*,\|\cdot\|_p) \leq N\bigg( \frac{u}{2(k/\xi_k^2)^{1/p}},S^{k-1},\|\cdot\| \bigg) \leq \bigg( \frac{8 (k/\xi_k^2)^{1/p} }{u} + 1\bigg)^k \,.
\]
It follows by Lemma 9.25(ii) in \cite{Kosorok} that:
\begin{align}
 N_{[\,\,]}(3u,\mc H_{n,k}^*,\|\cdot\|_{p}) 
 & \leq  \bigg( \frac{8 (k/\xi_k^2)^{1/p} }{u} + 1\bigg)^{2k} N_{[\,\,]}(u,\mc H_n^*,\|\cdot\|_p)\,. \label{e:bracket:euclid}
\end{align}
Let $[f_l,f_u]$ be a $\varepsilon$-bracket for $\mc H_{n,k}^*$ under the $L^{\frac{4sv}{2s-v}}$ norm. Then $[ \xi_k^2 E f_l, \xi_k^2 E f_u]$ is a $\xi_k^2 \| E \|_{4s}\varepsilon$-bracket for $\mc H_{n,k}$ under the $L^{2v}$ norm, because $\| \xi_k^2 E (f_u - f_l)\|_{2v} \leq \xi_k^2 \| E \|_{4s} \| f_u - f_l\|_{\frac{4sv}{2s-v}}$. Taking $p = \frac{4sv}{2s-v}$ in display (\ref{e:bracket:euclid}) and using the fact that truncation of $\mc M^*$ doesn't increase its bracketing entropy, we obtain:
\begin{align}
  N_{[\,\,]} ( u ,\mc H_{n,k}, \|\cdot\|_{2v}) 
 & \leq N_{[\,\,]} \Big(  \frac{u}{\xi_k^2 \| E \|_{4s}} ,\mc H_{n,k}^*, \|\cdot\|_{\frac{4sv}{2s-v}} \Big) \notag \\
 & \leq \Big( \frac{24 \|E\|_{4s} \xi_k^{2-\frac{2s-v}{2sv}} k^{\frac{2s-v}{4sv}}}{u} + 1\Big)^{2k} N_{[\,\,]}\Big( \frac{u}{3 \xi_k^2 \|E\|_{4s}},\mc M^*,\|\cdot\|_{\frac{4vs}{2s-v}}\Big) \,. \label{e:entropy:Hnk}
\end{align} 
Now, by displays (\ref{e:beta:norm}) and (\ref{e:entropy:Hnk}) and condition (b):
\begin{align*}
 \varphi(\sigma) & = \int_0^\sigma \sqrt{\log N_{[\,\,]}(u,\mc H_{n,k},\|\cdot\|_{2,\beta})} \, \mr d u \\
 & \leq \int_0^\sigma \sqrt{\log N_{[\,\,]}(u/C,\mc H_{n,k},\|\cdot\|_{2v})} \, \mr d u \\
 & \lesssim k^{1/2} \int_0^\sigma \sqrt{\log \Big(1+ 24 C \|E\|_{4s} \xi_k^{2-\frac{2s-v}{2sv}} k^{\frac{2s-v}{4sv}} /u \Big)} \, \mr d u  +  ( \xi_k^2 \|E\|_{4s})^\zeta  \frac{\sigma^{1-\zeta}}{1-\zeta} \\
 & \lesssim  \|E\|_{4s} \xi_k^{2-\frac{2s-v}{2sv}} k^{\frac{1}{2} + \frac{2s-v}{4sv}}  \int_0^{\sigma/(24 C \|E\|_{4s} \xi_k^{2-\frac{2s-v}{2sv}} k^{\frac{2s-v}{4sv}})} \!\!\!\!\sqrt{ \log (1 + 1/u )} \, \mr d u + (\xi_k^2 \|E\|_{4s})^\zeta  \frac{\sigma^{1-\zeta}}{1-\zeta}  \,.
\end{align*}
Since $\sigma_{n,k} = C \xi_k^{2-\frac{2s-v}{2sv}} \|E\|_{4s}$, we obtain:
\begin{align*}
 \varphi( \sigma_{n,k} ) & \lesssim \|E\|_{4s} \xi_k^{2-\frac{2s-v}{2sv}} k^{\frac{1}{2} + \frac{2s-v}{4sv}}  \int_0^{\frac{1}{24} k^{-\frac{2s-v}{4sv}}} \!\!\!\! \sqrt{ \log (1 + 1/u )} \, \mr d u + (\xi_k^2 \|E\|_{4s})^\zeta  (\xi_k^{2-\frac{2s-v}{2sv}} \|E\|_{4s})^{1-\zeta} \\
 & \lesssim \|E\|_{4s} \xi_k^{2-\frac{2s-v}{2sv}} \sqrt{k \log k} + \|E\|_{4s} \xi_k^{2-\frac{2s-v}{2sv} + \zeta\frac{2s-v}{2sv}} 
\end{align*}
since $\int_0^\delta \sqrt{\log(1+1/u)}\, \mr du = O(\delta\sqrt{-\log \delta})$ as $\delta \to 0^+$. If $\xi_k^{\zeta \frac{2s-v}{2sv}} \lesssim \sqrt {k \log k}$ then the first term dominates and we obtain $\varphi(\sigma_{n,k}) = O(\xi_k^{2-\frac{2s-v}{2sv}} \sqrt{ k \log k})$. It follows by display (\ref{e:dmr:bd}) that:
\[
 \wh \Delta_{1,k,1} = O_p \bigg( \frac{ \xi_k^{2-\frac{2s-v}{2sv}} \sqrt{ k \log k} }{\sqrt n} + \frac{T_n q  k \log k}{ n} +  T_n \beta_q \bigg) \,.
\]

By Markov's inequality we may deduce $\wh \Delta_{1,k,2} = O_p ( \xi_k^{8s}/T_n^{4s-1}) $ and  $\wh \Delta_{1,k,3} = O ( \xi_k^{8s}/T_n^{4s-1})$. Choosing $T_n$ so that:
\[
 \frac{\xi_k^{8s}}{T_n^{4s-1}} \asymp \frac{ \xi_k^{2-\frac{2s-v}{2sv}} \sqrt{ k \log k} }{\sqrt n} 
\]
and $q = C_0 \log n $ for sufficiently large $C_0$ ensures, in view of the condition $\log n = O(\xi_k^{1/3})$, that $\wh \Delta_{1,k,1}$, $\wh \Delta_{1,k,2}$, and $\wh \Delta_{1,k,3}$ are all $O_p(\xi^{2-\frac{2s-v}{2sv}} \sqrt{(k \log k)/n})$.
For the remaining term, by condition (c) we have:
\begin{align*}
 \wh \Delta_{1,k,4} = \|\Pi_k ({\mb M}^{(\hat \alpha)} - {\mb M})|_{B_k}\|  \leq \ell^*(\hat \alpha) 
 & = \frac{1}{\sqrt n } \sqrt n \dot \ell_{\alpha_0}^*[\hat \alpha - \alpha_0] + O( \|\hat \alpha - \alpha_0\|_{\mc A}^2 )  = O_p(n^{-1/2}) 
\end{align*}
which is of smaller order. 
\end{proof}

\subsection{Proofs for Appendix \ref{ax:est:mat:fp}}

\begin{proof}[Proof of Lemma \ref{lem:beta:4}]
Lemma 2.2 of \cite{ChenChristensen-reg} gives the bound $\|\wh{\mf G}^o - \mf I\| = O_p( \xi_k (\log n)/\sqrt n)$. Let $\{T_n : n \geq 1\}$ be a sequence of positive constants to be defined and let: 
\begin{align*}
 G^{trunc}_{t+1} & = G_{t+1}^{1-\gamma} \ind\{ \| \tilde b^k(x_t) \| \|\tilde b^k(x_{t+1})\|^\beta |G_{t+1}^{1-\gamma}| \leq T_n\} \\
 G^{tail}_{t+1} & = G_{t+1}^{1-\gamma} \ind\{ \| \tilde b^k(x_t) \| \|\tilde b^k(x_{t+1})\|^\beta |G_{t+1}^{1-\gamma}| > T_n\}  \,.
\end{align*}
We then have:
\begin{align*}
  \sup_{v : \|  v\| \leq c} \| \wh{\mf T}^ov - \mf T^o v\|  
 &  \leq \sup_{v  : \|  v\| \leq c} \left\| \frac{1}{n} \sum_{t=0}^{n-1} \tilde b^k(X_t) G^{trunc}_{t+1} |\tilde b^{k}(X_{t+1})'v|^\beta - \mb E[ \tilde b^k(X_t) G^{trunc}_{t+1} |\tilde b^{k}(X_{t+1})'v|^\beta  ] \right\| \\
 & \quad + \sup_{v  : \|  v\| \leq c} \left\| \frac{1}{n} \sum_{t=0}^{n-1} \tilde b^k(X_t) G^{tail}_{t+1} |\tilde b^{k}(X_{t+1})'v|^\beta \right\| \\
 & \quad  + \sup_{v : \|  v\| \leq c} \left\| \mb E[ \tilde b^k(X_t) G^{tail}_{t+1} |\tilde b^{k}(X_{t+1})'v|^\beta  ] \right\| \quad =: \quad \wh T_1 + \wh T_2 + \wh T_3  \,.
\end{align*} 
Let $\mc H_{n,k} = \{ w' \tilde b^k(x_0) G_1^{trunc} |\tilde b^k(x_1)'v|^\beta : v \in \mb R^k, \|v\| \leq c, w \in S^{k-1} \}$. Then:
\[
 \wh T_1 \leq n^{-1/2} \times {\textstyle \sup_{h \in \mc H_{n,k}} | \mc Z_n(h) |}
\]
where $\mc Z_n$ is the centered empirical process on $\mc H_{n,k}$. Each $h \in \mc H_{n,k}$ is uniformly bounded by $c^\beta T_n$. Therefore, by Condition (a) and Theorem 2 of \cite{DoukhanMassartRio}:
\begin{equation} \label{e:dmr:bd:fp}
 \mb E[{\textstyle \sup_{h \in \mc H_{n,k}} | \mc Z_n(h)|}] = O  \left( \varphi(\sigma_{n,k}) + \frac{c^\beta T_n q \varphi^2(\sigma_{n,k})}{\sigma_{n,k}^2 \sqrt n} + \sqrt nc^\beta T_n \beta_q \right) 
\end{equation}
where $q \in \{1,2,\ldots\}$, $\sigma_{n,k} \geq \sup_{h \in \mc H_{n,k}} \|h\|_{2,\beta}$ for the norm $\|\cdot\|_{2,\beta}$ defined on p. 400 of \cite{DoukhanMassartRio}, and $\varphi(\sigma)$ is the bracketing entropy integral:
\[
 \varphi(\sigma) = \int_0^\sigma \sqrt{\log N_{[\,\,]}(u,\mc H_{n,k},\|\cdot\|_{2,\beta})} \, \mr d u \,.
\]
To calculate $\sigma_{n,k}$, by (\ref{e:beta:norm}) and H\"older's inequality we have:
\[
 \sup_{h \in \mc H_{n,k}} \|h\|_{2,\beta} \leq C  \sup_{h \in \mc H_{n,k}} \|h\|_{2s} \leq C c^\beta  \|G^{1-\gamma}\|_{2s} \xi_k^{1+\beta}
\]
where $\|G^{1-\gamma}\|_{2s}$ is finite by condition (b). Set $\sigma_{n,k} =  C c^\beta  \|G^{1-\gamma}\|_{2s} \xi_k^{1+\beta}$.

To bound the bracketing entropy, first fix $q > 2$ and let $w_1,\ldots,w_{N_1}$ be a $\varepsilon$-cover for $S^{k-1}$ and $v_1,\ldots,v_{N_2}$ be a $\varepsilon^{1/\beta}$-cover for $\{ v \in \mb R^k : \|v\| \leq c\}$. For any $ w \in S^{k-1}$ and $v \in \{ v \in \mb R^k : \|v\| \leq c\}$ there exist $v_i \in \{v_1,\ldots,v_{N_1}\}$ and $w_j \in \{w_1,\ldots,w_{N_2}\}$ such that:
\begin{align*}
 & w_j' \tilde b^k(x_0) G_1^{trunc} |\tilde b^k(x_1)'v_i|^\beta - \varepsilon \Big( (1+c^\beta)\| \tilde b^k(x_0) \| \| \tilde b^k(x_1)\|^\beta | G_1^{trunc}| \Big) \\
 & \leq w' \tilde b^k(x_0) G_1^{trunc} |\tilde b^k(x_1)'v|^\beta \\
 & \leq w_j' \tilde b^k(x_0) G_1^{trunc} |\tilde b^k(x_1)'v_i|^\beta + \varepsilon \Big( (1+c^\beta)\| \tilde b^k(x_0) \| \| \tilde b^k(x_1)\|^\beta | G_1^{trunc}| \Big) 
\end{align*}
where:
\[
 \left\|  2 \varepsilon \Big( (1+c^\beta)\| \tilde b^k(x_0) \| \| \tilde b^k(x_1)\|^\beta | G_1^{trunc}| \Big)\right\|_{2s} \leq 2 \varepsilon (1+c^\beta)\| G^{1-\gamma}\|_{2s} \xi_k^{1+\beta}  = \varepsilon C_0  \xi_k^{1+\beta}\,.
\]
where $C_0 = 2(1+c^\beta)\| G^{1-\gamma}\|_{2s}$. Therefore, given a $\varepsilon$-cover of $S^{k-1}$ and a $\varepsilon^{1/\beta}$-cover for $\{ v \in \mb R^k : \|v\| \leq c\}$ we can construct $\varepsilon C_0  \xi_k^{1+\beta}$-brackets for $\mc H_{n,k}$ under the $L^{2s}$ norm, and so by Lemma 2.5 of \cite{vdG}:
\[
 N_{[\,\,]}\big(u ,\mc H_{n,k}, \|\,\cdot\,\|_{2s} \big) \leq \Big(\frac{4C_0 \xi_k^{1+\beta} }{u} + 1\Big)^k \Big(\frac{4c(C_0 \xi_k^{1+\beta})^{1/\beta} }{u^{1/\beta}} + 1\Big)^k \,.
\]
By (\ref{e:beta:norm}) and the above display:
\begin{align*}
 \varphi(\sigma) & = \int_0^\sigma \sqrt{\log N_{[\,\,]}(u,\mc H_{n,k},\|\cdot\|_{2,\beta})} \, \mr d u \\
 & \leq \int_0^\sigma \sqrt{\log N_{[\,\,]}(u/C,\mc H_{n,k},\|\cdot\|_{2s})} \, \mr d u \\
 & \leq k^{1/2} \left( \int_0^\sigma \sqrt{\log \big(1+ 4 C C_0 \xi_k^{1+\beta}/u \big)} \, \mr d u  + \int_0^\sigma \sqrt{\log \big(1+ 4 c (C C_0 \xi_k^{1+\beta}/u)^{1/\beta} \big)}  \, \mr d u \right) \,.
\end{align*}
Since $\sigma_{n,k} = C c^\beta  \|G^{1-\gamma}\|_{2s} \xi_k^{1+\beta}$, by a change of variables we obtain $\varphi(\sigma_{n,k}) = O(\xi_k^{1+\beta} \sqrt k)$. Substituting into (\ref{e:dmr:bd:fp}):
\[
 \wh T_1 = O_p \bigg( \frac{ \xi_k^{1+\beta} \sqrt k }{\sqrt n} + \frac{T_n q  k }{ n} +  T_n \beta_q \bigg) \,.
\]

By Markov's inequality we may deduce $\wh T_2 = O_p ( \xi_k^{(1+\beta)2s}/T_n^{2s-1}) $ and  $\wh T_3 = O ( \xi_k^{(1+\beta)2s}/T_n^{2s-1})$. Choosing $T_n$ so that $\xi_k^{(1+\beta)2s}/{T_n^{2s-1}} \asymp \xi_k^{1+\beta} \sqrt{ k / n}$ and $q = C_0 \log n $ for large enough $C_0$ ensures, in view of the condition $(\log n)^{(2s-1)/(s-1)}k/n = o(1)$, that  $\wh T_1$, $\wh T_2$, and $\wh T_3$ are all $O_p(\xi_k^{1+\beta} \sqrt{k/n})$.

The expression for $\nu_{n,k}$ now follows from display (\ref{e:gtineq}) and the rates for $\wh{\mf G}^o$ and $\wh{\mf T}^o$.
\end{proof}

\subsection{Proofs for Appendix \ref{ax:id}}

\begin{proof}[Proof of Proposition \ref{p:a:id}]
We first show that any positive eigenfunction of $\mb M$ must have eigenvalue $\rho$. Suppose that there is some positive $\psi \in L^2$ and scalar $\lambda$ such that $\mb M \psi(x) = \lambda \psi(x)$. Then we obtain:
\[
 \lambda \langle \phi^*, \psi \rangle = \langle \phi^*, \mb M \psi \rangle = \langle \mb M^* \phi^*, \psi \rangle = \rho \langle \phi^*, \psi \rangle 
\]
with $\langle \phi^*, \psi \rangle  > 0$ because $\phi^*$ and $\psi$ are positive, hence $\lambda = \rho$. A similar argument shows that any positive eigenfunction of $\mb M^*$ must correspond to the eigenvalue $\rho$. 

It remains to show that $\phi$ and $\phi^*$ are the unique eigenfunctions (in $L^2$) of $\mb M$ and $\mb M^*$ with eigenvalue $\rho$. We do this in the following three steps. Let $F = \{ \psi \in L^2 : \mb M \psi = \rho \psi\}$. We first show that if $\psi \in F$ then the function $|\psi|$ given by $|\psi|(x) = |\psi(x)|$ also is in $F$. In the second step we show that $\psi \in F$ implies $\psi = |\psi|$ or $\psi = - |\psi|$. Finally, in the third step we show that $F = \{ s \phi : s \in \mb R\}$.

For the first step, first observe that $F \neq \{0\}$ because $\phi \in F$ by Assumption \ref{a:id:1}(b). Then by Assumption \ref{a:id:1}(c), for any $\psi \in F$ we have $\mb M |\psi| \geq |\mb M \psi| = \rho |\psi|$ and so $\mb M |\psi| - \rho |\psi| \geq 0$ (almost everywhere). On the other hand,
\[
 \langle \phi^*,\mb M |\psi| - \rho |\psi|\rangle =  \langle \mb M^* \phi^*,|\psi| \rangle - \rho \langle \phi^*,|\psi| \rangle = 0
\] 
which implies that $\mb M |\psi| = \rho |\psi|$ and hence $|\psi| \in F$. 

For the second step, take any $\psi \in F$ that is not identically zero. Suppose that $\psi = |\psi|$ on a set of positive $Q$ measure (otherwise we can take $- \psi$ in place of $\psi$). We will prove by contradiction that this implies $|\psi| = \psi$. Assume not, i.e. $|\psi| \neq \psi$ on a set of positive $Q$ measure. Then $|\psi| - \psi \geq 0$ (almost everywhere) and $|\psi| - \psi \neq 0$. But by step 1 we also have that $\mb M(|\psi| - \psi) = \rho(|\psi| - \psi)$. Then for any $\lambda > r(\mb M)$ we have
\[
 \frac{(\rho/\lambda)}{1-(\rho/\lambda)}(|\xi| - \xi) = \sum_{n\geq 1} \left(\frac{\rho}{\lambda}\right)^n (|\xi| - \xi) = \sum_{n \geq 1}\lambda^{-n} \mb M^n (|\xi| - \xi) > 0 
\]
(almost everywhere) by Assumption \ref{a:id:1}(c). Therefore, $|\psi| > \psi $ (almost everywhere). This contradicts the fact that $\psi = |\psi|$ on a set of positive $Q$ measure. A similar proof shows that if $-\psi = |\psi|$ holds on a set of positive $Q$ measure then $-\psi = |\psi|$.

For the third step we use an argument based on the Archimedean axiom (see, e.g., p. 66 of \cite{Schaefer1974}). Take any positive $\psi \in F$ and define the sets $S_+ = \{s \in \mb R : \psi \geq s \phi\}$ and $S_- = \{s \in \mb R : \zeta \leq s \phi\}$ (where the inequalities are understood to hold almost everywhere). It is easy to see that $S_+$ and $S_-$ are convex and closed. We also have  $(-\infty,0] \subseteq S_+$ so $S_+$ is nonempty. Suppose $S_-$ is empty. Then $\psi > s \phi$ on a set of positive measure for all $s \in (0,\infty)$. By step 2 we therefore have $\psi > s \phi$ (almost everywhere). But then because $L^2$ is a lattice we must have $\|\psi\| \geq s \|\phi\|$ for all $s \in (0,\infty)$ which is impossible because $\psi \in L^2$. Therefore $S_-$ is nonempty. Finally, we show that $\mb R = S_+ \cup S_-$. Take any $s \in \mb R$. Clearly $\psi - s \phi \in F$. By Claim 2 we know that either: $\psi - s \phi \geq 0$ (almost everywhere) which implies $s  \in S_+$ or $\psi - s \phi \leq 0$ (almost everywhere) which implies $s \in S_-$. Therefore $\mb R = S_+ \cup S_-$. The Archimedean axiom implies that the intersection $S_+ \cap S_-$ must be nonempty. Therefore $S_+ \cap S_- = \{s^*\}$  (the intersection must be a singleton else $\psi = s \phi$ and $\psi = s'\phi$ with $s \neq s'$) and so $\psi = s^* \phi$ (almost everywhere). This completes the proof of the third step.

A similar argument implies that $\phi^*$ is the unique positive eigenfunction of $\mb M^*$.
\end{proof}

\begin{proof}[Proof of Proposition \ref{p:a:exist}]
Assumption \ref{a:id:0}(a) implies that $r(\mb M) > 0$ (see Proposition IV.9.8 and Theorem V.6.5 of \cite{Schaefer1974}). The result now follows by Theorems 6 and 7 of \cite{Sasser} with $\rho = r(\mb M)$. That $\rho$ is isolated follows from the discussion on p. 1030 of \cite{Sasser}.
\end{proof}

\begin{proof}[Proof of Proposition \ref{p:lr}]
Consider the operator $\ol{\mb M} = \rho^{-1} \mb M$ with $\rho = r(\mb M)$. Proposition \ref{p:a:exist} implies that $\{1\} = \{  \lambda \in \sigma(\ol{\mb M}) : |\lambda| = 1\}$. Further, since $\mb M$ is power compact it has discrete spectrum \cite[Theorem 6, p. 579]{DunfordSchwartz}. We therefore have $\sup\{ |\lambda| : \lambda  \in \sigma(\ol{\mb M}) , \lambda \neq 1\} < 1$ and hence $\ol{\mb M} = (\phi \otimes \phi^*) + \mb V$ where $r(\mb V) < 1$ and $\ol{\mb M}$, $(\phi \otimes \phi^*)$ and $\mb V$ commute (see, e.g., p. 331 of \cite{Schaefer1974} or pp. 1034-1035 of \cite{Sasser}). Since these operators commute, a simple inductive argument yields:
\[
 \mb V^\tau = (\overline{\mb M} - (\phi \otimes \phi^*))^\tau 
 = \overline{\mb M}^\tau - (\phi \otimes \phi^*) = \rho^{-\tau} \mb M_\tau - (\phi \otimes \phi^*)
\]
for each $\tau \in T$. By the Gelfand formula, there exists $\epsilon > 0$ such that:
\begin{equation} \label{gelfand}
 \lim_{\tau \to \infty}\|\mb V^\tau \|^{1/\tau} = r(\mb V) \leq 1-\epsilon
\end{equation}
Let $\{\tau_k : k \geq 1\} \subseteq T$ be the maximal subset of $T$ for which $\|\mb V^{\tau_k}\| > 0$. If this subsequence is finite then the proof is complete. If this subsequence is infinite, then by expression (\ref{gelfand}),
\[
 \limsup_{\tau_k \to \infty} \frac{\log \|\mb V^{\tau_k}\|}{\tau_k} < 0\,.
\]
Therefore, there exists a finite positive constant $c$ such that for all $\tau_k$ large enough, we have:
\[
 \log \|\mb V^{\tau_k}\| \leq -c \tau_k
\]
and hence:
\[
 \| \rho^{-\tau_k} \mb M_{\tau_k} - (\phi \otimes \phi^*) \| \leq e^{-c \tau_k}
\]
as required.
\end{proof}

{\small \singlespacing
\putbib
}
\end{bibunit}

\end{document}